\documentclass[11pt]{article}

\usepackage{xcolor}
\usepackage{amsfonts,amsmath,amssymb,mathtools,amsthm}
\usepackage{graphicx}
\usepackage{stmaryrd}
\usepackage{thmtools}
\DeclareGraphicsExtensions{.eps}

\renewcommand{\theequation}{\thesection.\arabic{equation}}

\newcommand\encadremath[1]{\vbox{\hrule\hbox{\vrule\kern8pt
\vbox{\kern8pt \hbox{$\displaystyle #1$}\kern8pt}
\kern8pt\vrule}\hrule}}
\def\enca#1{\vbox{\hrule\hbox{
\vrule\kern8pt\vbox{\kern8pt \hbox{$\displaystyle #1$}
\kern8pt} \kern8pt\vrule}\hrule}}

\newcommand\framefig[1]{
\begin{figure}[bth]
\hrule\hbox{\vrule\kern8pt
\vbox{\kern8pt \vbox{
\begin{center}
{#1}
\end{center}
}\kern8pt}
\kern8pt\vrule}\hrule
\end{figure}
}

\newcommand\figureframex[3]{
\begin{figure}[bth]
\hrule\hbox{\vrule\kern8pt
\vbox{\kern8pt \vbox{
\begin{center}
{\mbox{\epsfxsize=#1.truecm\epsfbox{#2}}}
\end{center}
\caption{#3}
}\kern8pt}
\kern8pt\vrule}\hrule
\end{figure}
}
\newcommand\figureframey[3]{
\begin{figure}[bth]
\hrule\hbox{\vrule\kern8pt
\vbox{\kern8pt \vbox{
\begin{center}
{\mbox{\epsfysize=#1.truecm\epsfbox{#2}}}
\end{center}
\caption{#3}
}\kern8pt}
\kern8pt\vrule}\hrule
\end{figure}
}

\renewcommand{\thesection}{\arabic{section}}
\renewcommand{\theequation}{\arabic{section}-\arabic{equation}}
\makeatletter
\@addtoreset{equation}{section}
\makeatother
\newtheorem{theorem}{Theorem}[section]

\newtheorem{proposition}{Proposition}[section]
\newtheorem{lemma}{Lemma}[section]
\newtheorem{corollary}{Corollary}[section]

\theoremstyle{definition}
\newtheorem{remark}{Remark}[section]
\newtheorem{definition}{Definition}[section]

\def\br{\begin{remark}\rm\small}
\def\er{\end{remark}}
\def\bt{\begin{theorem}}
\def\et{\end{theorem}}
\def\bd{\begin{definition}}
\def\ed{\end{definition}}
\def\bp{\begin{proposition}}
\def\ep{\end{proposition}}
\def\bl{\begin{lemma}}
\def\el{\end{lemma}}
\def\bc{\begin{corollary}}
\def\ec{\end{corollary}}
\def\beaq{\begin{eqnarray}}
\def\eeaq{\end{eqnarray}}

\theoremstyle{definition}

\newcommand{\be}{\begin{equation}}
\newcommand{\ee}{\end{equation}}
\newcommand{\beq}{\begin{equation}}
\newcommand{\eeq}{\end{equation}}
\newcommand{\bea}{\begin{eqnarray}}
\newcommand{\eea}{\end{eqnarray}}
\newcommand{\beqq}{\begin{equation*}}
\newcommand{\eeqq}{\end{equation*}}
\newcommand{\beaa}{\begin{eqnarray*}}
\newcommand{\eeaa}{\end{eqnarray*}}

\newcommand{\Tr}{{\operatorname {Tr}}}

\newcommand{\ii}{{\rm i}\,}

\newcommand{\genus}{{\overline{ g}}}

\newcommand{\diag}{{\operatorname{diag}}}

\newcommand{\om}{\omega}

\newcommand{\td}{\tilde}

\newcommand\blfootnote[1]{%
  \begingroup
  \renewcommand\thefootnote{}\footnote{#1}%
  \addtocounter{footnote}{-1}%
  \endgroup
}

%

\newcommand{\Res}{\mathop{\,\rm Res\,}}
\textwidth 160mm
\textheight 235mm
\topmargin 0pt
\oddsidemargin 5pt
\headheight 0pt
\headsep 0pt
\topskip 9mm
%

\usepackage[pdftex]{hyperref}
\hypersetup{colorlinks,urlcolor=magenta,citecolor=red,linkcolor=blue,filecolor=black}

\title{\bf{Quantization of classical spectral curves \\ via topological recursion}}

\date{\vspace{-5ex}}
 
\author{$_{1}$Bertrand Eynard\footnote{Institut de Physique Th\'{e}orique-CEA Saclay, Orme des Merisiers, 91191 Gif-sur-Yvette, France.}\;\,\footnote{Institut des Hautes \'{E}tudes Scientifiques, le Bois-Marie, 35 route de Chartres, 91440 Bures-sur-Yvette, France.}\,\,, $_{2}$Elba Garcia-Failde\footnotemark[1]\;\,\footnotemark[2]\;\,\footnote{Universit\'{e} de Paris, b\^{a}timent Sophie Germain, 8 place Aur\'{e}lie Nemours, 75205 Paris Cedex 13, France.}, $_{3}$Olivier Marchal\footnote{Universit\'{e} Jean Monnet Saint-\'{E}tienne, CNRS UMR 5208, Institut Camille Jordan, Institut Universitaire de France, F-42023 Saint-Etienne, France.}\,\,, $_{4}$Nicolas Orantin\footnote{Universit\'{e} de Gen\`{e}ve, Section de math\'{e}matiques, 24 rue du G\'{e}n\'{e}ral Dufour, 1211 Gen\`{e}ve 4, Suisse}}

\begin{document}

\maketitle

\vspace{1.5cm}

\begin{abstract}

We prove that the topological recursion formalism can be used to quantize any generic classical spectral curve with smooth ramification points and simply ramified away from poles. For this purpose, we build both the associated quantum curve, i.e.~the differential operator quantizing the algebraic equation defining the classical spectral curve considered, and a basis of wave functions, that is to say a basis of solutions of the corresponding differential equation. We further build a Lax pair representing the resulting quantum curve and thus present it as a point in an associated space of meromorphic connections on the Riemann sphere, a first step towards isomonodromic deformations. We finally propose two examples: the derivation of a 2-parameter family of formal trans-series solutions to Painlev\'e 2 equation and the quantization of a degree three spectral curve with pole only at infinity.

\blfootnote{\textit{Email addresses:} $_{1}$\textsf{bertrand.eynard@ipht.fr}, $_{2}$\textsf{egarcia@irif.fr}, $_{3}$\textsf{olivier.marchal@univ-st-etienne.fr}, $_{4}$\textsf{nicolas.orantin@unige.ch} \\
\textit{2020 Mathematical Subject Classification.}   34M56, 34M55, 34E20, 14H70.}

\end{abstract}

\tableofcontents

\newpage

\section{Introduction}
\subsection{Topological recursion and quantum curves}

Witten's conjecture \cite{Witten}, proved by Kontsevich \cite{Kontsevich}, built a bridge between two seemingly different areas of mathematics: the theory of integrable systems and enumerative geometry. It states that a specific generating function of intersection numbers on the moduli space of Riemann surfaces is actually a tau function for the integrable KdV hierarchy. More precisely, the generating series 
\beqq
Z^{\text{Kont}}(\hbar,\mathbf{t})\coloneqq \exp \left[
\sum_{h=0}^\infty \hbar^{2h-2} \sum_{n=0}^\infty \frac{2^{2-2h-n}}{n!} \sum_{\mathbf{d} \in \mathbb{N}^n} \left<\tau_{d_1} \dots \tau_{d_n} \right>_{h,n}
 \prod_{i=1}^n (2d_i-1)!! \, t_{2d_i+1}
\right]
\eeqq
is a tau function provided that 
\beqq
\left<\tau_{d_1} \dots \tau_{d_n} \right>_{h,n} = \int_{\overline{\mathcal{M}}_{h,n}} \psi_1^{d_1} \dots \psi_n^{d_n}
\eeqq
denotes the intersection numbers of the first Chern classes  $\psi_i \coloneqq c_1(\mathbb{L}_i)$ of the cotangent bundle at the $i^{\text{th}}$ marked point over the moduli space of stable curves of genus $h$ with $n$ smooth marked points. Through the integrable side of this duality, this conjecture naturally leads to a third field of mathematics, the theory of differential equations. In order to see it, let us consider the Airy function $\text{Ai}(\lambda)$ solution to the Airy equation
\beqq
\left(\frac{d^2}{d\lambda^2} - \lambda \right) \text{Ai}(\lambda) = 0.
\eeqq
For $\lambda$ positive and real, it admits an asymptotic expansion, as $\lambda\rightarrow\infty$, of the form
\beqq
\log \text{Ai}(\lambda)  -S_0(\lambda) - S_1(\lambda) = \sum_{m=2}^\infty S_m(\lambda),
\eeqq
where $S_0(\lambda) \coloneqq  - \frac{2}{3} \lambda^{\frac{3}{2}}$, $S_1(\lambda) \coloneqq - \frac{1}{4} \log \lambda - \log (2 \sqrt{\pi})$ and
\beqq
\forall\, m \geq 2 \, , \;  S_m(\lambda) \coloneqq \frac{\lambda^{-\frac{3}{2}(m-1)}}{2^{m-1}} \sum_{\substack{h \geq 0 \, , \, n>0 \\ 2h-2+n = m-1}} \frac{(-1)^n}{n!} 
\sum_{\mathbf{d} \in \mathbb{N}^n} \left<\tau_{d_1} \dots \tau_{d_n} \right>_{h,n}
 \prod_{i=1}^n (2d_i-1)!!\,.
\eeqq
This means that the asymptotic expansion of the Airy function is itself a generating function of intersection numbers. We may even keep track of the Euler characteristics of the surfaces enumerated by introducing a formal parameter $\hbar$ through a rescaling of $\lambda$. The function $\psi^{\text{Kont}}(\lambda,\hbar) \coloneqq \text{Ai}(\hbar^{-\frac{2}{3}} \lambda)$ satisfies 
\beqq
\left(\hbar^2 \frac{d^2}{d\lambda^2} - \lambda \right) \psi^{\text{Kont}}(\lambda,\hbar) = 0 
\eeqq
and admits an asymptotic expansion of the form
\beqq
\log \psi^{\text{Kont}}(\lambda,\hbar)  - \hbar^{-1} S_0(\lambda) - S_1(\lambda) = \sum_{m=2}^\infty \hbar^{m-1} S_m(\lambda) .
\eeqq
In the spirit of mirror symmetry, this defines a map between a problem of enumerative geometry and the study of solutions to a differential equation. This same problem of computing Gromov--Witten invariants of the point can be considered from a last perspective. The intersection numbers $\left<\tau_{d_1} \dots \tau_{d_n} \right>_{h,n}$ as well as their generating series $Z^{\text{Kont}}(\hbar,\mathbf{t})$ and $\psi^{\text{Kont}}(\lambda,\hbar)$ can be computed by the topological recursion \cite{DumitrescuMulase2, EO07}. This formalism, originally developed in the context of matrix models, allows to associate such generating functions to any initial data given, in particular, by an algebraic curve called spectral curve. When considering the algebraic curve
\beqq
y^2-x = 0
\eeqq 
as initial data, the topological recursion gives  $Z^{\text{Kont}}(\hbar,\mathbf{t})$ and $\psi^{\text{Kont}}(\lambda,\hbar)$ as output. From this point of view, the topological recursion quantizes the classical spectral curve $y^2-x = 0$ into the differential equation $\big(\hbar^2 \frac{d^2}{d\lambda^2} - \lambda \big) \psi^{\text{Kont}}(\lambda,\hbar) = 0$, which is often referred to as the associated \emph{quantum curve}.

\medskip

One may naturally wonder if this is a general phenomenon. The topological recursion has been proved to solve many problems of enumerative geometry ranging from the enumeration of maps \cite{CEO06,Eynardbook,EO07,EORev} to the computation of Gromov--Witten invariants whenever the ambient space has a semi-simple cohomology \cite{DBOSS}. From its origin in random matrix theory, it is expected that the topological recursion can be used  as a quantization procedure for quantizing any algebraic curve.  In its simplest form, this conjecture can be summarized as follows. Given an algebraic equation $P(\lambda,y) = 0$, called the classical spectral curve,  the topological recursion produces a wave function $\psi(\lambda,\hbar)$ which is conjectured to be solution to a differential equation $P_{\hbar}\left(\lambda, \hbar \frac{\partial }{\partial \lambda}\right) \psi(\lambda,\hbar) = 0$, where $P_{\hbar}(\lambda,y) \to P(\lambda,y)$, as $\hbar \to 0$ and the pole structure of $P_{\hbar}(\lambda,y)$ is independent of $\hbar$. 
In the present paper, we prove this conjecture for meromorphic Higgs fields when the base curve is the Riemann sphere $\mathbb{P}^1$.

\medskip

This conjecture has been proved in many particular cases in the literature. Until recently, the conjecture had been proved only in examples where the classical spectral curve is a genus zero cover of the Riemann sphere \cite{DDM17,NormanManescu14,DN18,DM14,DM18,DMNPS17,GS12,IwakiSaenz,MS15,Norbury_survey,Safnuk16,zhou2012intersection}. These works culminated with the general proof of the conjecture by \cite{BouchardEynard_QC}, when the spectral curve has genus $0$. In this simpler setup, the wave function $\psi(\lambda,\hbar)$ is simply a WKB type formal series in $\hbar$.

\medskip

When the spectral curve does not have vanishing genus, it is expected that the wave function cannot be such a simple formal series anymore but rather a formal trans-series in $\hbar$ \cite{Eynard_2009,EMhol} involving so-called non-perturbative corrections as explained in Section~\ref{6.2}. This makes its study as well as the proof of the existence of a quantum curve annihilating it much more involved. However understanding this higher genus context is fundamental for some of the most important applications of this quantization procedure.
For example, it is conjectured that the asymptotics of the wave functions obtained by quantization of the A-polynomial associated to a knot recover some of its invariants \cite{BEInt,DijkFuji,DijkFujiMana,Gukov05,MM01,NZ85}. In general,  such an A-polynomial defines a spectral curve of non-vanishing genus and addressing the issue of its quantization is absolutely necessary. A first step in this direction has been achieved by K. Iwaki who proved in \cite{Iwaki-P1} that one can quantize an elliptic curve of the form $y^2  = \lambda^3 + t \lambda +c$, obtaining on the way a two-parameter trans-series solution to the Painlev\'e 1 equation. This result was then generalized to any hyper-elliptic curve by the authors in \cite{EGF19,MO19_hyper}, including the computation of similar two-parameter solutions to the six classical Painlev\'e equations.

\subsection{Main results}

The present article aims at \textbf{proving the quantum curve conjecture for any algebraic curve}. For this purpose, one considers any classical spectral curve defined by an equation of the type
\beqq
P(\lambda,y) = \sum_{l=0}^d (-1)^l y^{d-l} P_l(\lambda) = 0, \,\, \text{ with } P_0(\lambda)=1,
\eeqq
where $d$ is an arbitrary positive integer and $\left(P_l(\lambda)\right)_{l\in \llbracket 1,d\rrbracket}$ are arbitrary rational functions subject to some minor technical admissibility assumptions presented in Definition~\ref{DefAdmissibleSpectralCurve}.  This classical spectral curve shall always be considered as a cover of $\mathbb{P}^1$ through the map $x:(\lambda,y) \to \lambda$. 
From this classical spectral curve, a divisor $D = \underset{i=1}{\overset{s}{\sum}}\alpha_i [p_i]$ on the spectral curve and a choice of cycles (Definition \ref{DefAdmissibleInitialData}), we build a perturbative wave function 
\beqq
\psi(D,\hbar) \coloneqq \exp{\bigg( {\displaystyle  \sum_{h \geq 0} \sum_{n\geq 0} }\frac{\hbar^{2h-2+n}}{n!}\int_D\cdots\int_D \Big(\omega_{h,n}(z_1,\ldots,z_n) - \delta_{h,0} \delta_{n,2} \frac{dx(z_1) dx(z_2)}{(x(z_1)-x(z_2))^2} \Big)\bigg)}, 
\eeqq
where the differential forms $\{\om_{h,n}\}_{h\geq 0,n\geq 0}$ are obtained by the topological recursion (Definition~\ref{DefTopologicalRecursion}). 
Note that these differential forms and thus the perturbative wave functions implicitly depend on the filling fractions $\boldsymbol{\epsilon}$ (Definition \ref{FillingFractions}) associated to a choice of Torelli marking.
This perturbative wave function is complemented by an additional set of functions $\{\psi_{l,i}(D,\hbar)\}_{l\geq 1, \,i \in \llbracket 1,s\rrbracket}$ in Definition~\ref{DefPerturbativeWaveFunctions}.

\medskip

We then prove  in Theorem~\ref{systemPDE} that, for an arbitrary divisor $D$, the previous wave functions are solutions to a system of differential equations mimicking the Knizhnik--Zamolodchikov (KZ) equations of conformal field theories,
 \beaa
\frac{\hbar}{\alpha_i} \frac{d \psi_{l,i}(D,\hbar)}{dx(p_i)} &=& - \psi_{l+1,i}(D,\hbar) - \hbar \sum_{j \neq i} \alpha_j \frac{\psi_{l,i}(D,\hbar)-\psi_{l,j}(D,\hbar)}{x(p_i)-x(p_j)} \cr
&& + {\displaystyle \sum_{h\geq 0} \sum_{n \geq 0}}\frac{ \hbar^{2h+n}}{n!} \int_{z_1 \in D} \dots \int_{z_n \in D} \widetilde{Q}_{h,n+1}^{(l+1)}(x(p_i);\mathbf{z}) \; \psi(D,\hbar)  \cr
&&+ \left(\frac{1}{\alpha_i} - \alpha_i\right) \Bigg[{\displaystyle \sum_{(h,n) \in \mathbb{N}^2}} \frac{\hbar^{2h+n+1}}{n!}  \overbrace{\int_D\cdots\int_D}^{n} \frac{d}{dx(p_i)} \bigg(\frac{ \hat{Q}_{h,n+1}^{(l)}(p_i;\cdot)}{\left(dx(p_i)\right)^l}\bigg) \Bigg]  \; \psi(D,\hbar), \cr
\eeaa
where the differentials $ \widetilde{Q}_{h,n}^{(l)}$ and $ \hat{Q}_{h,n}^{(l)}$ are defined in Lemma~\ref{def-Q-tilde-lem} and Definition~\ref{def-hat-Q} respectively.
 
\medskip

With the aim of building a solution to a quantum curve, we then specialize the divisor to a two point divisor $D=[z]-[\infty^{(\alpha)}]$, where  $\infty^{(\alpha)}$ is any point of the classical spectral curve in the fiber above infinity. Since the $\left(\psi_{l,1}(D=[z]-[p_2],\hbar)\right)_{l\geq 0}$ have essential singularities when $x(p_2)\to \infty$, we regularize them and build regularized functions $\left(\psi_l^{\mathrm{reg}}(D=[z]-[\infty^{(\alpha)}],\hbar)\right)_{l\geq 0}$ that are solution to a simpler version of the KZ equations stated in Theorem~\ref{reg-KZ-eq}. We are able to express some of the terms of the resulting equation in terms of the action of a set of linear operators to get Theorem~\ref{final-KZ}
\beqq 
{\hbar} \frac{d}{dx(z)}  \psi_{l}^{\mathrm{reg}}([z]-[\infty^{(\alpha)}])   +\psi_{l+1}^{\mathrm{reg}}([z]-[\infty^{(\alpha)}])   =  \widetilde{\mathcal{L}}_l(x(z)) \left[\psi^{(\text{reg})}([z]-[\infty^{(\alpha)}]) \right], 
\eeqq
where $\widetilde{\mathcal{L}}_l(x(z))$ are multi-linear operators defined in Section~\ref{KZ-linear-op}. 

\medskip

In Definition~\ref{def-non-pert-wave-fct}, following \cite{Eynard_2009}, we introduce the non-perturbative partners of the regularized perturbative wave functions $\left(\psi_{l,\mathrm{NP}}^{\infty^{(\alpha)}}(z,\hbar,\boldsymbol{\epsilon},\boldsymbol{\rho})\right)_{l\geq 0}$ inspired by formal Fourier transforms
\beqq
 \sum_{\mathbf{n} \in \mathbb{Z}^g} \exp \left( \frac{2 \pi i}{\hbar}  {   \underset{i=1}{\overset{g}{\sum}}}  n_j \rho_j \right) \psi_{l}^{\mathrm{reg}}([z]-[\infty^{(\alpha)}],\hbar,\boldsymbol{\epsilon}+ \hbar \mathbf{n}).
\eeqq
Remark that $g$ is the genus of the classical spectral curve considered so that the perturbative and non-perturbative wave functions coincide when the genus is vanishing. On the other hand, if $g>0$, these non-perturbative wave functions are formal trans-series in $\hbar$ of the form  \eqref{eq-form-NP}, while the perturbative ones are only simple formal WKB series.

\medskip

These definitions allow to state the main result of this article. In Theorem~\ref{mainThm}, we prove that the non-perturbative wave functions $\left(\psi_{l,\mathrm{NP}}^{\infty^{(\alpha)}}(z,\hbar,\boldsymbol{\epsilon},\boldsymbol{\rho})\right)_{l\geq 0} $ can be used to define a $d\times d$ matrix $\check{\Psi}(\lambda,\hbar)$ (Definition \ref{DefPsi}) solution to a linear differential equation
\beqq
\hbar \frac{\partial \check{\Psi}(\lambda,\hbar)}{\partial \lambda} = \check{L}(\lambda,\hbar) \check{\Psi}(\lambda,\hbar),
\eeqq
where the Lax matrix $\check{L}(\lambda,\hbar)$ is a rational function of $\lambda$ with the same pole structure as the coefficients $\left(P_l(\lambda)\right)_{l\in \llbracket 1, d\rrbracket}$ of the classical spectral curve considered as input. Moreover, each entry of the first line of $\check{\Psi}(\lambda,\hbar)$ is directly connected to the non-perturbative wave functions by
\beqq
\check{\Psi}_{1,j}(\lambda,\hbar)\coloneqq\psi_{0,\mathrm{NP}}^{\infty^{(\alpha)}}(z^{(j)}(\lambda),\hbar,\boldsymbol{\epsilon},\boldsymbol{\rho}), \,\,\, \forall \,j \in\llbracket 1,d\rrbracket,
\eeqq
where $z^{(j)}(\lambda)$ denotes the preimage of $\lambda \in \mathbb{P}^1$ on the classical spectral curve such that $z^{(j)}(\lambda) \to \infty^{(j)}$ as $\lambda \to \infty$. These entries are thus solution to a ``quantum curve'' (Definition \ref{DefQuantumCurve}):
\beqq
\left[\left(\hbar \frac{d}{d\lambda}\right)^d + \sum_{l=1}^d  b_l(\lambda,\hbar) \left(\hbar \frac{d}{d\lambda}\right)^{d-l} \right] \psi(\lambda,\hbar) = 0.
\eeqq
We then prove in Theorem~\ref{deformed-curve} that the coefficients $\left( b_l(\lambda,\hbar)\right)_{l=1}^d$ of this quantum curve have the same pole structure as the coefficients $(-1)^l P_l(\lambda)$ of classical spectral curve in addition to apparent singularities. Moreover, we prove in Theorem~\ref{limit-curve} that these coefficients $b_l(\lambda,\hbar)$ have a well defined $\hbar\to 0$ limit that coincide with $(-1)^l P_l(\lambda)$, hence justifying the terminology ``quantum curve".
These coefficients can be computed explicitly through the asymptotics of the wave functions $\left(\psi_{l,\mathrm{NP}}^{\infty^{(\alpha)}}(z,\hbar,\boldsymbol{\epsilon},\boldsymbol{\rho})\right)_{l\geq 0} $ around its singularities.

\medskip 

We finally apply this quantization procedure to two genus 1 classical spectral curves of respective degrees 2 and 3. In particular, we recover in the first case a two-parameters solution of Painlev\'{e} $2$ equation and show how one can compute the coefficients of the quantum curve explicitly.

 \subsection{Organization of the article}
 This paper is organized as follows.
 \begin{itemize}
 
 \item In Section~\ref{sec-notations}, we introduce a few notations used throughout the paper.
 
 \item In Section~\ref{sec-sp-curve-1}, we define the set of classical spectral curves we shall consider in this paper as initial data for the topological recursion. We introduce very few admissibility conditions making the rest of the presentation less technical and easier to read. We introduce local coordinates on this classical spectral curve and associated spectral times that may be considered as deformation parameters of the classical spectral curve. This leads to the definition of admissible initial data for the topological recursion in Definition~\ref{DefAdmissibleInitialData}.
 
\item In Section~\ref{sec-tr}, we remind the reader of the definition of the topological recursion as well as a few properties that are used in the paper.
 
\item In Section~\ref{sec-loop-eq}, we derive a set of loop equations satisfied by the output of the topological recursion. These loop equations are one of the main ingredients to derive the quantum curve.
 
\item Section~\ref{KZ-eqs} is devoted to the derivation of a set of equations playing the role of Knizhnik--Zamolodchikov (KZ) equations in the context of two dimensional conformal field theories. For this purpose, we first define a set of perturbative wave functions for a generic divisor in Section~\ref{5.1} and then prove that they satisfy a set of KZ equations in Section \ref{5.2}. In Section~\ref{5.3}, we derive these KZ equations for a two point divisor based at infinity after a necessary regularization of the wave functions. In Section~\ref{KZ-linear-op}, we introduce a set of linear operators acting on the wave functions in order to write the KZ equations under a simpler form. In many cases, this leads to an expression of the KZ equations as PDE's involving evolutions with respect to the parameters defining the classical spectral curve. Section~\ref{5.5} presents the monodromy properties of the  regularized perturbative wave functions.

\item  In Section~\ref{S6}, we define the non-perturbative partners to our wave functions and prove that they satisfy an ODE by regrouping them to define a $d \times d$ matrix that is solution to a linear differential equation with rational functions as coefficients. We first define the non-perturbative wave functions making use of symbolic theta functions in Section~\ref{6.1}, provided the heuristic motivation that they correspond to formal Fourier transforms of the perturbative ones after exchanging two summations. We further study their properties as trans-series in $\hbar$ in Section~\ref{6.2}. In Section~\ref{6.3}, we use these properties in order to show that we can express the result of the action of the linear operator introduced in the preceding section in terms of derivatives with respect to a point in the base curve only. Theorem~\ref{th-C} thus shows that the non-perturbative wave functions are solutions to a system of ODE's with coefficients that are rational functions. We linearize this system in Section~\ref{6.4} and introduce a compatible system including the action of the linear operators introduced above. The compatibility of this system allows us to prove the non-existence of poles at ramification points of this system in Section~\ref{6.5}.
 
\item  In Section~\ref{S7}, we study the compatible Lax system. We first derive an associated quantum curve before emphasizing the possible existence of apparent singularities. After characterizing the properties of the quantum curve in Theorem~\ref{deformed-curve}, we apply a gauge transformation in order to obtain another linear Lax system without any apparent singularities and with poles only at the singularities of the initial classical spectral curve. We finally study the characteristic polynomial of this system and present it as a deformation of the classical spectral curve, allowing to interpret our result as a $\hbar$-family of connections on the base curve $\mathbb{P}^1$.

\item Section~\ref{sec-examples} presents two examples of respective degrees 2 and 3. In particular, the degree 2 example gives rise to a 2-parameters solution of the Painlev\'e 2 equation. We use these examples to show how one can compute the coefficients of the quantum curve in practice.
 
\item Section~\ref{S9} is a general conclusion presenting some of the possible generalizations of the present article that could lead to future works.
\end{itemize}

\subsection{General notations} \label{sec-notations}
In this short section, we review some general notations that will be used in the article.

\begin{itemize}\item $\mathbb{N}=\{0,1,\dots\}$ is the set of all non-negative integers. $\mathbb{N}^*$ is the set of positive integers: $\mathbb{N}^*=\{1,2,\dots\}$.
\item For any set $A$, we denote by $|A|$ the cardinality of $A$. In particular $|\emptyset|=0$.
\item For any $k\in \mathbb{N}$ and two sets $A$ and $B$: $A \underset{k}{\subseteq} B$ means that $A$ is a subset of $B$ of cardinality $|A| = k$. 
\item We denote by $\mathbb{P}^1$ the Riemann sphere.
\item For any set $A$, we denote $\mathcal{S}(A)$ the set of partitions of $A$. We shall denote the length (i.e.~the number of blocks) of a partition $\mu\in \mathcal{S}(A)$ as $l(\mu)$ in order to avoid confusion with~$|A|$. 
\item For any integer $n\geq 1$, we denote $\mathfrak{S}_n$ the symmetric group of $\llbracket 1,n\rrbracket$.
\item For any Riemann surface $\Sigma$, let $\mathcal{M}_0(\Sigma)$ and $\mathcal{M}_1(\Sigma)$ be the $\mathbb{C}$-vector spaces of meromorphic functions and $1$-forms on $\Sigma$. Let $\mathcal{M}_k(\Sigma^k)$ be the $\mathbb{C}$-vector space of meromorphic differentials on the $k$-fold product of the curve. We denote $\mathcal{M}(\Sigma)\coloneqq \underset{k\geq 0}{\bigoplus}\mathcal{M}_k(\Sigma^k)$ the induced graded algebra. For $\omega_1\in\mathcal{M}_n(\Sigma^n)$ and $\omega_2\in\mathcal{M}_m(\Sigma^m)$, we will often denote their product by simply $\omega_1\omega_2\coloneqq \omega_1\boxtimes\omega_2\in\mathcal{M}_{n+m}(\Sigma^{n+m})$ to lighten notation.
\item For any matrix $A$, we shall denote $A^t$ its transpose.
\end{itemize}

\section{Admissible spectral curves}
\label{sec-sp-curve-1}

In this section, we recall the concept of spectral curve as initial data for the topological recursion that we will recall in the next section. We assume some admissibility conditions that make the curve generic enough in order to proceed to a less technical quantization procedure. We introduce local coordinates on the space of curves that can be seen as deformation parameters of the spectral curve.

\subsection{Classical spectral curves}\label{sec-sp-curve}

Let $N\in \mathbb{N}$ be a given non-negative integer.

\bd[Classical spectral curves]\label{SpectralCurves} 
Let $\Lambda_1,\dots,\Lambda_N$ be $N$ distinct points on $\mathbb{P}^1\setminus\{\infty\}$ and let $\mathcal{H}_d(\Lambda_1,\dots,\Lambda_N,\infty)$ be the Hurwitz space of covers $x\colon\Sigma \to \mathbb{P}^1$ of degree $d$ defined as the Riemann surface
\beqq
\Sigma \coloneqq \overline{\big\{(\lambda,y)\mid P(\lambda,y)= 0\big\}},
\eeqq
where
$x(\lambda,y) \coloneqq \lambda$ and
\beq\label{def-eq-sp-curve}
P(\lambda,y)= \sum_{l=0}^d (-1)^l y^{d-l} P_l(\lambda)=0,
\eeq
with each coefficient $(P_l)_{l\in\llbracket 1,d\rrbracket}$ being a rational function with possible poles at $\lambda \in \mathcal{P}\coloneqq \{\Lambda_i\}_{i=1}^N \bigcup\,\{ \infty\}$ and $P_0 = 1$. 

We define a \textit{classical spectral curve} as the data of the Riemann surface $\Sigma$ and its realization as a Hurwitz cover of $\mathbb{P}^1$ and we shall denote it $(\Sigma, x)$.
\hfill $\star$
\ed

In this article, we wish to consider classical spectral curves where the rational functions $P_l(\lambda)$ have a fixed pole structure. For this purpose, 
let us define a subspace of this Hurwitz space obtained by fixing the degree of the rational functions  $P_l$ at their poles.

\bd[Classical spectral curves with fixed pole structure]\label{SpectralCurveFixedPole}\sloppy For $l\in\llbracket 1,d\rrbracket$, let $r_\infty^{(l)}$ and $r_{\Lambda_i}^{(l)}$, $i\in \llbracket 1,N\rrbracket$, be some non-negative integers. We consider the subspace 
\beqq
\mathcal{H}_d \,\,\Big( \big(\Lambda_1, (r_{\Lambda_1}^{(l)})_{l=1}^d\big),\dots,\big(\Lambda_N, (r_{\Lambda_N}^{(l)})_{l=1}^d\big), \big(\infty,(r_\infty^{(l)})_{l=1}^d\big)\Big)\subset \mathcal{H}_d(\Lambda_1,\dots,\Lambda_N,\infty)
\eeqq
of covers $x$ such that the rational functions $(P_l)_{l\in\llbracket 1,d\rrbracket}$ are of the form
\beq 
P_l(\lambda)\coloneqq \sum_{P \in \mathcal{P}} \sum_{k \in S_{P}^{(l)}} P_{P,k}^{(l)} \, \xi_P(\lambda)^{-k}, \text{ for } l\in\llbracket 1,d\rrbracket,
\eeq
where we have defined
\beq \label{eq-def-S}
\forall\, i\in \llbracket 1,N\rrbracket  \,:\,  S_{\Lambda_i}^{(l)}\coloneqq \llbracket 1, r_{\Lambda_i}^{(l)}\rrbracket \qquad \text{and} \qquad S_\infty^{(l)} \coloneqq\llbracket0, r_\infty^{(l)}\rrbracket,
\eeq
and the local coordinates $\{\xi_P(\lambda)\}_{P\in \mathcal{P}}$ around $P \in \mathcal{P}$ are defined by
\beq\label{eq-def-xi}
\forall\, i\in \llbracket 1,N\rrbracket  \,:\, \xi_{\Lambda_i}(\lambda)\coloneqq (\lambda- \Lambda_i) \qquad \text{and} \qquad \xi_{\infty}(\lambda)\coloneqq\lambda^{-1}.
\eeq
\hfill $\star$
\ed

\bigskip
\sloppy
In the rest of the article, we fix the tuples $(r_\infty^{(l)})_{l=1}^{d}$ and $(r_{\Lambda_i}^{(l)})_{l=1}^{d}$ of degrees once and for all and only consider classical spectral curves in 
\beqq
\mathcal{H}_d \,\,\Big( \big(\Lambda_1, (r_{\Lambda_1}^{(l)})_{l=1}^d\big),\dots,\big(\Lambda_N, (r_{\Lambda_N}^{(l)})_{l=1}^d\big), \big(\infty,(r_\infty^{(l)})_{l=1}^d\big)\Big).
\eeqq

Let us remark that these spaces have a very complicated topology. However, we shall consider only a formal neighborhood of a point in such a space so that we can avoid discussing its topology.

\bigskip

\bd[Curve punctured at the poles]
We define the classical spectral curve $\Sigma$ with poles removed as
\beq \Sigma_{\mathcal P}=\Sigma\setminus x^{-1}(\mathcal P).\eeq
\hfill $\star$
\ed

\bd[Ramification points and critical values]
We denote by $\mathcal{R}_0$ the set of all ramification points of the cover $x$, and by $\mathcal R$ the set of all ramification points  that are not poles (i.e. not in $x^{-1}(\mathcal P)$),
\beq
\mathcal{R}_0\coloneqq\big\{p \in \Sigma \mid 1+\operatorname{order}_p dx\neq  \pm 1 \big\},
\eeq
\beq
\mathcal{R}\coloneqq\big\{p \in \Sigma \mid dx(p) = 0  \ , \ \ x(p)\notin \mathcal P \big\} =\mathcal R_0\setminus x^{-1}(\mathcal P).
\eeq
 We shall refer to their images $x(\mathcal R)$ as the \emph{critical values} of $x$.
 \hfill $\star$
\ed

In general, the topological recursion could be applied to any classical spectral curve presented above. However, in this article we shall restrict to a simpler, yet very large, class of classical spectral curves. These are technical assumptions to avoid curves that are not generic enough. We believe that these additional assumptions can be lifted without changing the main structure of the upcoming proofs but would make computations more technical and eventually make the present article harder to read. Thus, we leave such non-generic cases for future works.

\bd[Admissible classical spectral curves]\label{DefAdmissibleSpectralCurve}
We say that a classical spectral curve $(\Sigma,x)$ is \textit{admissible} if it satisfies the following conditions.
\begin{itemize}
\item The Riemann surface $\Sigma$ defined by $P(\lambda,y)=0$ is an irreducible algebraic curve, i.e.~$P(\lambda,y)$ does not factorize.

\item All ramification points $a \in \mathcal{R}$ are simple, i.e.~$dx$ has only a simple zero at $a \in \mathcal{R}$.

\item Critical values are distinct: for any $(a_i,a_j) \in \mathcal{R}\times \mathcal{R}$ such that $a_i\neq a_j$ then $x(a_i) \neq x(a_j)$.

\item Smooth ramification points: for any $a \in \mathcal{R}$, $dy(a) \neq 0$ (i.e. the tangent vector $(dx(a),dy(a))$ to the immersed curve  $\{(\lambda,y) \mid P(\lambda,y)=0\} $ is not vanishing at $a$).

\item For any pole $p\in x^{-1}(\mathcal P)$ ramified, the 1-form $ydx$ has a pole of degree $r_p\geq 3$ at $p$, and the corresponding spectral times (defined below) satisfy $t_{p,r_p-2}\neq 0$. 
\end{itemize}
\hfill $\star$
\ed

\subsection{Local coordinates, spectral times and admissible initial data} 

We may consider $x$ and $y$ as two meromorphic functions $\Sigma \to \mathbb{P}^1$. From this perspective, $x$ has poles only in the fiber $x^{-1}\left(\infty\right)$ while the meromorphic one form $ydx$ has poles in $x^{-1}\left(\mathcal{P}\right)$.
Thanks to the covering $x:\Sigma\to \mathbb P^1$, we can define canonical coordinates on $\Sigma$.

\subsubsection{Canonical local coordinates}

\bd[Canonical local coordinates] 
Let $P\in \mathbb P^1$ and $p\in x^{-1}(P)$.
We first define a sign $\epsilon_P$ by
canonical coordinates on $\mathbb P^1$ near $P$ by,
\beq
 \text{if } P\neq\infty \, , \ \ \epsilon_P\coloneqq 1 \qquad \hbox{and} \qquad 
\text{if } P=\infty \, , \ \ \epsilon_P\coloneqq -1 .
\eeq
Then, we define the canonical local coordinate near any $p\in x^{-1}(P)$ as
\beq
\zeta_p(z) = \xi_P(x(z))^{\frac{1}{d_p}} \, , \ d_p=\operatorname{order}_p (\xi_P).
\eeq
The set $\{d_{p}\}_{p\in x^{-1}(P)}$ is called the ramification profile of $P$. We have
\beq
\sum_{p\in x^{-1}(P)} d_p=d.
\eeq
We call
\beq
\ell_P = \left|x^{-1}(P) \right|
\eeq
the length of the ramification profile of $P$. Generic unramified points $P\in \mathbb P^1$ have the ramification profile $\{\overbrace{1,1,\dots,1}^{d}\}$.
Generic critical points have a ramification profile $\{2,\overbrace{1,1,\dots,1}^{d-2}\}$.

\vspace{-0.5cm}
$\,$\hfill $\star$
\ed

\vspace{0.3cm}

Let us look at the canonical coordinates for poles $P\in \mathcal P$.
For $P=\infty$, we denote its preimages
\beq
x^{-1}(\infty) = \left\{\infty^{(\alpha)}\right\}_{\alpha\in \llbracket 1,\ell_\infty\rrbracket}
\, , \quad \xi_\infty=\frac{1}{x},
\eeq
\beq
\zeta_{\infty^{(\alpha)}} = x^{-\frac{1}{d_{\infty^{(\alpha)}}}} \, , \quad d_{\infty^{(\alpha)}}=\deg_{\infty^{(\alpha)}} x.
\eeq
For $ i\in \llbracket 1,N\rrbracket$, we denote the preimages of $\Lambda_i$ as
\beq
x^{-1}(\Lambda_i) = \left\{Z_i^{(\alpha)}\right\}_{\alpha \in \llbracket 1,\ell_{\Lambda_i}\rrbracket}
\, , \quad \xi_{\Lambda_i}=x-\Lambda_i,
\eeq
\beq
\zeta_{Z_i^{(\alpha)}} = (x-\Lambda_i)^{\frac{1}{d_{Z_i^{(\alpha)}}}} \, , \quad d_{Z_i^{(\alpha)}}=\operatorname{order}_{Z_i^{(\alpha)}} (x-\Lambda_i).
\eeq
The labeling of the points in each fiber can be defined independently for each pole in $\mathcal{P}$ and can be chosen arbitrarily once and for all.

\subsubsection{Spectral times}

Using these local coordinates, we may write the expansion of the 1-form $ydx$ around any pole $p \in x^{-1}\left(\mathcal{P}\right)$.
\bd[Spectral times]\label{DefinitionSpectralTimes}
\beq\label{ydxpolarpart}
ydx = \sum_{k=0}^{r_p-1} t_{p,k} \zeta_{p}^{-k-1} d\zeta_{p} + \text{analytic at } p.
\eeq
Our assumption of admissible classical spectral curves requires that
\beq
\forall\, p\in x^{-1}(\mathcal P), \ \quad  r_p\geq 3 ,
\quad \text{and}\, \ t_{p,r_p-2}\neq 0. 
\eeq
In the rest of the article, we shall refer to the coefficients $\left(t_{p,k}\right)_{p\in x^{-1}(\mathcal{P}), k\in\llbracket0,r_p-1\rrbracket}$, as  \textit{spectral times} (also called KP-times (Kadomtsev--Petiashvili) in the literature, cf.~\cite{GIS}).
\hfill $\star$
\ed

\bd[Local potentials]
It is convenient to define the negative part of the Laurent series expansion of $ydx$ at its poles, and integrate it. We thus define, in punctured neighborhoods of every pole $p\in x^{-1}(\mathcal P)$,
\beq\label{def-dVp}
 dV_p\coloneqq   \sum_{k=0}^{r_p-1} t_{p,k} \zeta_{p}^{-k-1}d\zeta_p,
\eeq
so that the singular part of $ydx$ at $p\in x^{-1}\left(\mathcal{P}\right)$ is given by $dV_p$.
We define an anti-derivative in the universal cover of $\Sigma\setminus x^{-1}(\mathcal P)$
\beq\label{def-Vp}
\forall\, p \in x^{-1}(\mathcal{P}) \, , \; V_p\coloneqq -  \sum_{k=1}^{r_p-1} \frac{t_{p,k}}{k} \zeta_{p}^{-k} + t_{p,0} \log (\zeta_{p}).
\eeq
Remark that because of the log, $V_p$ is defined only on a  universal cover of a neighborhood of~$p$.

\hfill $\star$
\ed

Note that the coefficients of the expansion of the function $y$ around different points in the same fiber $x^{-1}(\lambda)$ are not independent and are constrained by
\beq\label{PlConstraints}
\forall \,l \in \llbracket1, d\rrbracket \, : \, P_l(\lambda) = \sum_{\boldsymbol{\beta} \underset{l}{\subseteq}x^{-1}(\lambda)} \prod_{z\in\boldsymbol{\beta}} y(z).
\eeq
 
In particular, this implies that  $\forall\, l\in \llbracket1, d\rrbracket$,
\bea
\forall\, i\in \llbracket1,N\rrbracket  \, : \, r_{\Lambda_i}^{(l)} &\leq& {\displaystyle \max_{\boldsymbol{\beta} \underset{l}{\subseteq}x^{-1}(\Lambda_i)}}\bigg(\sum_{p\in\boldsymbol{\beta}}  r_p\bigg) ,
\cr
r_{\infty}^{(l)} &\leq& {\displaystyle \max_{\boldsymbol{\beta} \underset{l}{\subseteq}x^{-1}(\infty)}}\bigg(\sum_{p\in\boldsymbol{\beta}}  (r_p-2)\bigg) .
\eea

These upper bounds are far from optimal but sufficient for our purpose. We shall see in the examples of Section~\ref{sec-examples} how they can be improved case by case. An optimal bound can be written in general by referring to the Newton polytope of our polynomial $P(\lambda,y)$, but this would require the introduction of cumbersome notations that we choose to avoid in the present work.

\subsubsection{Bergman kernel}

A last crucial ingredient for the definition of the topological recursion is a differential form often referred to as a ``Bergman kernel''\footnote{This Bergman kernel mostly studied by Bergman and Schiffer \cite{BergmanKernel} should not be confused with the Bergman kernel classically used in operator theory.}, or also fundamental second kind differential.

\bd[Bergman kernel]\label{DefBergmanKernel}
Let $(\Sigma,x)$ be an admissible classical spectral curve of genus $g$. For any symplectic basis $\left(\mathcal{A}_i,\mathcal{B}_i\right)_{i=1}^g$ of $H_1(\Sigma,\mathbb{Z})$, let 
\beqq
B^{\left(\mathcal{A}_i,\mathcal{B}_i\right)_{i=1}^g} \in H^0(\Sigma^2, K_\Sigma^{\boxtimes 2}(2\Delta))^{\mathfrak{S}_2} \subset \mathcal{M}_2(\Sigma^2)
\eeqq
be the unique symmetric $(1\boxtimes 1)$-form on $\Sigma^2$ with a unique double pole on the diagonal $\Delta$, without residue, bi-residue equal to 1 and normalized on the $\mathcal{A}$-cycles by
\beq
\forall\, i\in\llbracket 1,g\rrbracket \, , \; \oint_{z_1 \in \mathcal{A}_i} B^{\left(\mathcal{A}_i,\mathcal{B}_i\right)_{i=1}^g}(z_1,z_2) = 0.
\eeq
\hfill $\star$
\ed

With the definitions above, we may finally define a set of initial data for the topological recursion.

\bd[Admissible initial data]\label{DefAdmissibleInitialData}

We call \emph{admissible initial data} $\left((\Sigma,x),(\mathcal{A}_i,\mathcal{B}_i)_{i=1}^g\right)$ a pair consisting of $(\Sigma,x)$ an admissible classical spectral curve of genus $g$ in the sense of Definition~\ref{DefAdmissibleSpectralCurve}, and a set of cycles, that must be chosen as follows.

\begin{itemize}

\item Let $o$ be a generic smooth point of $\Sigma\setminus x^{-1}(\mathcal P)$.

\item 
For each $p\in x^{-1}(\mathcal P)$,  let $\mathcal C_p$ be a small circle around $p$.

\item We choose a set of homology chains $\mathcal C_{o\to p}$ of boundary $\partial\mathcal C_{o\to p}=[p]-[o]$, which do not intersect each other, and such that 
\beq
 \mathcal C_{p'} \cap \mathcal C_{o\to p} =\delta_{p,p'}.
\eeq

\item A basis of $H_1(\Sigma\setminus x^{-1}(\mathcal P),\mathbb{Z})$, obtained by completing the set of $\{\mathcal C_p\}_{p\in x^{-1}(\mathcal P)}$ with $2g$ cycles denoted $\left\{\mathcal{A}_i,\mathcal{B}_i\right\}_{i=1}^g$, satisfying
\beqq
\forall\, p\in x^{-1}(\mathcal P), \ \forall\,  i \in\llbracket 1,g\rrbracket \ , \qquad
\mathcal A_i \cap \mathcal C_{o\to p}= 0= \mathcal B_i \cap \mathcal C_{o\to p},
\eeqq
\beqq
\forall \, (i,j)\in \llbracket 1,g\rrbracket^2\,:\, \mathcal A_i\cap \mathcal B_j = \delta_{i,j}
\, , \quad
\mathcal A_i\cap \mathcal A_j = 0 
\, , \quad
\mathcal B_i\cap \mathcal B_j = 0 .
\eeqq
\item The projection $H_1(\Sigma\setminus x^{-1}(\mathcal P),\mathbb{Z})\to H_1(\Sigma,\mathbb{Z}) $ sends $\left\{\mathcal{A}_i,\mathcal{B}_i\right\}_{i=1}^g$ to a symplectic basis of  $H_1(\Sigma,\mathbb{Z})$, which we shall call also $\left\{\mathcal{A}_i,\mathcal{B}_i\right\}_{i=1}^g$ by abuse of language. 
A symplectic basis of $H_1(\Sigma,\mathbb{Z})$ is called a  Torelli marking of $\Sigma$.
\end{itemize}
\hfill $\star$
\ed

\br
Admissible initial data consists of three different parts. The first part, an admissible classical spectral curve, allows in particular to define a set of ramification points together with a one form $ydx$. But it does not allow to define a Bergman kernel unambiguously. This is the reason why it has to be supplemented by a Torelli marking, i.e.~a symplectic basis $\left(\mathcal{A}_i,\mathcal{B}_i\right)_{i=1}^g$ of $H_1(\Sigma,\mathbb{Z})$, allowing to define $B^{\left(\mathcal{A}_i,\mathcal{B}_i\right)_{i=1}^g}$ without ambiguity. In our process of quantizing the classical spectral curve, this choice of Torelli marking can be thought of as a choice of polarization from a geometric quantization point of view.
The last part consisting in $\{\mathcal{C}_p,\mathcal{C}_{o\rightarrow p}\}$
leaves the $\left(\omega_{h,n}\right)_{h\geq 0,n\geq 0}$ generated by the topological recursion invariant except for $\omega_{0,0}$ fixing the normalization ambiguity in the definition of so-called wave functions giving rise to a solution of the quantum curve as we shall see later.
In order to avoid cumbersome notations, we shall note $\left((\Sigma,x),(\mathcal{A}_i,\mathcal{B}_i)_{i=1}^g\right)$ some admissible initial data, implicitly keeping in mind that they also come with a choice of contours $\{\mathcal{C}_p,\mathcal{C}_{o\rightarrow p}\}$.
\er

The choice of a Torelli marking provides natural coordinates to describe the holomorphic part of the meromorphic form $ydx$ (the polar parts of $ydx$ being described by the spectral times in Definition \ref{DefinitionSpectralTimes}).

\bd[Filling fractions]\label{FillingFractions} Let $\left((\Sigma,x),(\mathcal{A}_i,\mathcal{B}_i)_{i=1}^g\right)$ be some admissible initial data. We define $\boldsymbol{\epsilon}:=(\epsilon_i)_{i=1}^g$ the tuple of \emph{filling fractions} by
\beqq
\forall\, i\in \llbracket 1, g\rrbracket \, , \,\;\;  \epsilon_i\coloneqq\frac{1}{2 \pi i}\oint_{\mathcal{A}_i} y dx .
\eeqq
(This is well defined because $ydx$ is holomorphic on $\Sigma\setminus x^{-1}(\mathcal P)$, and $\mathcal A_i\in H_1(\Sigma\setminus x^{-1}(\mathcal P),\mathbb Z)$.)
\hfill $\star$
\ed

\br
In many cases, the set of spectral times (in fact a subset of independent spectral times not constrained by relations) together with filling fractions: $\{t_{p,k}\}_{p\in x^{-1}(\mathcal{P}), k\in\llbracket0,r_p-1\rrbracket}\bigcup \{\epsilon_i\}_{i=1}^g$ can be used to define local coordinates on the space of classical spectral curves seen as a Hurwitz space equipped with a Frobenius manifold structure (see \cite{Dubrovin-Hurwitz}). In particular, this allows to consider families of classical spectral curves with fixed pole structure and deformations relatively to these coordinates. Such point of view goes beyond the purpose of the present article in which we only consider some given and fixed admissible initial data. However, it would provide a natural framework to make the connection with isomonodromic deformations and integrable systems.
\er

\section{Topological recursion}\label{sec-tr}
In this section we recall the topological recursion construction of a family of differentials associated to some admissible initial data and give the properties that will be relevant for our purpose.
\subsection{Definition}

\bd[Topological Recursion]\label{DefTopologicalRecursion}
For any admissible initial data $\left((\Sigma,x),(\mathcal{A}_i,\mathcal{B}_i)_{i=1}^g\right)$, let us define the family of differential forms $\{\omega_{h,n}\}_{h\geq 0, n\geq 0}$ using the topological recursion \cite{EO07}
\beqq
\om_{0,1}\coloneqq ydx,
\qquad \, \qquad 
\om_{0,2} \coloneqq B^{\left(\mathcal{A}_i,\mathcal{B}_i\right)_{i=1}^g},
\eeqq
and, for $(h,n) \in\mathbb{N}\times\mathbb{N}^*$ such that $2h-2+n >0$, we define $\om_{h,n} \in H^0(\Sigma^n, K_\Sigma^{\boxtimes n}(- (6h-6+4n) \mathcal{R}_n))^{\mathfrak{S}_n} \subset \mathcal{M}_n(\Sigma^n)$\footnote{For $i\in\llbracket 1,n\rrbracket$, let $p_i:\Sigma^n \to \Sigma$ be the natural projection on the $i^{\text{th}}$ factor. We shall then denote $\mathcal{R}_n =\underset{i=1}{\overset{n}{\sqcup}} p_i^{-1}(\mathcal{R})$.} inductively by 
\beq \label{def-tr}
\om_{h,n+1}(z_0,\mathbf{z}) \coloneqq \sum_{a \in \mathcal{R}} \Res_{z \to a} \frac{1}{2} \frac{\int_{\sigma_a(z)}^z \om_{0,2}(z_0,\cdot)}{\om_{0,1}(z)-\sigma_a^* \om_{0,1}(z)} \widetilde{\mathcal{W}}_{h,n+1}^{(2)}(z,\sigma_a(z); \mathbf{z}),
\eeq
where $\sigma_a$ is the Galois involution permuting the two sheets of the cover $\Sigma$ meeting at a simple ramification point $a$ and 
\beqq
\widetilde{\mathcal{W}}_{h,n+1}^{(2)}(z,z'; \mathbf{z}) \coloneqq \om_{h-1,n+2}(z,z',\mathbf{z}) + \!\!\!\sum_{\begin{array}{c} A\sqcup B = \mathbf{z}, s\in\llbracket 0,h\rrbracket \cr (s,|A|) \notin \{(0,0),(h,n)\}\end{array}}\!\!\!\!\! \om_{s,|A|+1}(z,A) \, \om_{h-s,|B|+1}(z',B),
\eeqq
with $\mathbf{z}\coloneqq (z_1,\ldots,z_n)\in\Sigma^n$. \\
Finally, we define
\beaa
\omega_{0,0}&\coloneqq&\frac{1}{2}\left[\sum_{p\in x^{-1}(\mathcal{P})} t_{p,0} \upsilon_p- \sum_{p\in x^{-1}(\mathcal{P})} \Res_{z \to p}\hat V_p(z)y(z)dx(z)+ \sum_{i=1}^g \epsilon_i \oint_{\mathcal{B}_i} ydx  \right],\cr
\omega_{1,0}&\coloneqq&-\frac{1}{24}\log\left(\left(\tau_B(x(\mathcal{R}))\right)^{12}\prod_{a\in \mathcal{R}} Y(a)\right),\cr
\forall\, h\geq 2\,:\, \omega_{h,0}&\coloneqq&\frac{1}{2-2h}\sum_{a\in \mathcal{R}} \Res_{z \to a}  \omega_{h,1}(z)\Phi(z),
\eeaa
where $\Phi(z)$ is any anti-derivative of $\omega_{0,1}(z)$, i.e.~$d \Phi(z)=\omega_{0,1}(z)$ \footnote{Note that $(\omega_{h,0})_{h\geq 2}$, are independent of the choice of anti-derivative.}, while 
\beqq \forall\, a\in \mathcal{R}\,:\, Y(a)\coloneqq\lim_{z\to a}\frac{y(z)-y(a)}{\sqrt{x(z)-x(a)}},\eeqq
$\tau_B$ is the Bergman $\tau$-function \cite{KK} only depending on the critical values $\left(X_a=x(a)\right)_{a\in\mathcal{R}}$ defined by 
\beqq \forall\, a\in \mathcal{R}\,:\, \frac{\partial \tau_B}{\partial X_a}=\Res_{z\to a} \frac{B^{\left(\mathcal{A}_i,\mathcal{B}_i\right)_{i=1}^g}(z,\sigma_a(z))}{dx(z)},\eeqq
and
\beqq
\hat V_p(z)\coloneqq V_p(z)-t_{p,0}\log(\zeta_p) = - \sum_{k=1}^{r_p-1} \frac{t_{p,k}}{k}\zeta_p^{-k},
\eeqq
and eventually for any generic base point $o\in \Sigma$\footnote{Note that $\omega_{0,0}$ is independent of this base point.}, and $o_p$ an arbitrary point in a small disc neighborhood of $p$ in which is defined the local coordinate $\zeta_p$, 
\beqq \forall\, p\in x^{-1}(\mathcal{P})\,:\, 
\upsilon_p\coloneqq\int_{o_p}^p \left(ydx-dV_p\right)\,\, - V_p(o_p) + \int_o^{o_p} ydx, \eeqq
where the integral $o\to o_p$ is given by the homology chain $\mathcal C_{o\to p}$ chosen in admissible data. Note that $\upsilon_p$ is independent of $o_p$. See \cite{EO07, GIS} for details.
\hfill $\star$
\ed

\br To uniformize notations and names, we shall refer to the $(\omega_{h,0})_{h\geq 0}$ as 0-forms (indeed they are not differentials but rather complex numbers).
\er

\br There exists a generalization \cite{BouchardEynard} of the topological recursion that allows for ramification points of higher orders. However, since we chose to consider only admissible classical spectral curves in the present article, we do not need it in the present setup. 
\er

\subsection{Properties of differentials produced by the topological recursion}
The differentials $(\om_{h,n})_{h,n\geq 0}$ satisfy many properties \cite{EO07}. Let us review a few of them that are useful for the content of the article.
\begin{itemize}
\item For $h\geq 0, n\geq 2$, the differentials $\om_{h,n}$ are invariant under permutations of their $n$ arguments, i.e.~they are differentials on the $n$-fold symmetric product of the curve $\Sigma$ in $\mathcal{M}_n(\Sigma^n)$.

\item $\om_{0,1}(z_1)$ may only have poles at $x^{-1}(\mathcal{P})$. $\omega_{0,2}(z_1,z_2)$ may only have poles at $z_1=z_2$. For $(h,n)\in\mathbb{N}\times \mathbb{N}^* \setminus\{(0,1), (0,2)\}$, $\om_{h,n}(z_1,\ldots,z_n)$ may only have pole at $z_i\in\mathcal{R}$, for $i\in\llbracket 1,n\rrbracket$.
\item The differentials $(\om_{h,n})_{h,n\geq0}$ may also be considered as functions of the parameters defining the admissible classical spectral curve, namely as functions on the corresponding Hurwitz space (see \cite{DNOPS18,GIS} for example for a more detailed explanation). In particular, one may consider locally the space of classical spectral curves obtained by varying the values of periods $\{\epsilon_i\}_{i=1}^{g}$. To some extent, for any $(h,n)\in \mathbb{N}^2$, $\om_{h,n+1}$ can be understood as a generating series for the derivatives of $\om_{h,n}$ with respect to these parameters and one has the following result
\beq\label{eq-var-epsilon}
\forall\, (h,n)\in \mathbb{N}^2, \, \forall\, i\in \llbracket1, g\rrbracket \, : \, \frac{\partial}{\partial \epsilon_i}  \om_{h,n}(z_1,\dots,z_n) = \oint_{z \in \mathcal{B}_i} \om_{h,n+1}(z,z_1,\dots,z_n) .
\eeq
\end{itemize}

\subsubsection{Ramification points at poles}

In the definition of topological recursion, we take residues at $a\in\mathcal R$, i.e.~only at ramification points that are zeros of $dx$ and that are not in $x^{-1}(\mathcal P)$. However the points of $\mathcal P$ could also be ramified (and this is the case for many interesting examples of spectral curves, for instance the Airy curve $y^2=x$). In \cite{BouchardEynard_QC} it was noticed that to derive the quantum curve, one should also include residues at all ramification points, including those in $x^{-1}(\mathcal P)$.
However, thanks to our hypothesis on admissible spectral curve, and thanks to the following lemma, including or not such residues makes no difference.

\bl[Ramified points at poles]\label{lemRamptsPoles}
Let $(\omega_{h,n}')_{h,n\geq 0}$ be the topological recursion differential forms defined by taking residues at all $a\in\mathcal{R}_0$ (i.e.~all ramification points, so including $a\in x^{-1}(\mathcal P)$). If, for all ramification points $p\in x^{-1}(\mathcal P)$, we have $r_p\geq 3$ and  $t_{p,r_p-2}\neq 0$, then  $\omega_{h,n}'=\omega_{h,n}$ for all $(h,n) \in \mathbb{N}^2$.  
Moreover, $\omega_{h,n}$ with $(h,n)\neq(0,1),(0,2)$ have poles only at $\mathcal R=\mathcal{R}_0\setminus x^{-1}(\mathcal P)$.
\el

\begin{proof}
In Appendix \ref{Applemmaramificationpointspoles}. The proof relies on the fact that there is a denominator with $ydx$ in the residue formula, and if the pole of $ydx$ is of degree $\geq 3$, it cannot be compensated by other factors, and the residue vanishes because the integrand has no pole.
\end{proof}

\section{Loop equations}\label{sec-loop-eq}

As explained in equation \eqref{PlConstraints}, the rational functions $\left(P_l(\lambda)\right)_{l\in\llbracket 1,d\rrbracket}$ are the  $l^{\text th}$ elementary symmetric polynomials of the $\om_{0,1}(z)$ for $z\in x^{-1}(\lambda)$. There exist other symmetric algebraic combinations of the differentials $(\om_{h,n})_{h,n\geq 0}$ taken at all preimages $x^{-1}(\lambda)$ that give rise to interesting rational functions. This fundamental result is commonly referred to as a set of \emph{loop equations} to recall its origin from the study of Hermitian random matrices. This section is devoted to the derivation of a specific set of $d$ loop equations.

\subsection{Definitions}

In order to write down the loop equations, let us introduce a few convenient notations following the ones of \cite{BouchardEynard_QC}.

\bd \label{def-Q}
Given admissible initial data and the corresponding $\{\om_{h,n}\}_{h, n\geq 0}$ defined by the topological recursion, let us define, for any tuple of integers $(h,n,l)\in\mathbb{N}^3$,
\beaa
Q_{h,n+1}^{(0)}(\lambda;\mathbf{z})&\coloneqq&\delta_{h,0}\delta_{n,0},\cr
\forall\, l\geq 1\,:\, Q_{h,n+1}^{(l)}(\lambda;\mathbf{z})&\coloneqq& \sum_{\beta \underset{l}{\subseteq} x^{-1}(\lambda)} \sum_{\mu \in \mathcal{S}(\beta)} \, {\displaystyle \sum_{\underset{i=1}{\overset{l(\mu)}{\bigsqcup}} J_i = \mathbf{z}} }\,  \sum_{ \underset{i=1}{\overset{l(\mu)}{\sum}} g_i= h+l(\mu)-l}
\left[
\prod_{i=1}^{l(\mu)} \om_{g_i,|\mu_i|+|J_i|}(\mu_i,J_i) \right],
\eeaa
where $\lambda$ is a point in $\mathbb{P}^1$ and $\mathbf{z} \coloneqq (z_1,\dots, z_n)$ a vector of $n$ points in $\Sigma$.
We remind the reader that $\mathcal{S}(\beta)$ denotes the set of partitions of  the set $\beta$ while $l(\mu)$ denotes the number of blocks of the partition $\mu\in\mathcal{S}(\beta)$. 
The expression $Q_{h,n+1}^{(l)}(\lambda;\mathbf{z})$ is a differential in $\lambda$ and $(z_i)_{i\in \llbracket 1, n\rrbracket}$, with possible poles at $\lambda \in \mathcal{P} \cup x(\mathcal{R})$, $z_i \in \mathcal{R}$ and when $z_i \in x^{-1}(\lambda)$.  
\hfill $\star$
\ed

\br Observe that by definition, we have for all $(h,n)\in \mathbb{N}^2$,
\beq
Q_{h,n+1}^{(l)}(\lambda;\mathbf{z}) = 0, \text{ for } l \geq d+1.
\eeq
\er

\br
Observe that since we take a sum over all subsets of $l$ preimages $\beta\underset{l}{\subseteq} x^{-1}(\lambda)$, this is clearly a symmetric function of the preimages, and thus is obviously a rational fraction of $\lambda$ multiplied by $(d\lambda)^l$.
\er

\bd\label{def-hat-Q}
In the same way, given admissible initial data and the corresponding $\{\om_{h,n}\}_{h, n\geq 0}$ defined by the topological recursion, let us define, for any tuple $(h,n,l)\in\mathbb{N}^3$,
\beaa \hat{Q}_{h,n+1}^{(0)}(z;\mathbf{z})&\coloneqq&\delta_{h,0}\delta_{n,0},\cr
\forall\, l\geq 1\,:\, \hat{Q}_{h,n+1}^{(l)}(z;\mathbf{z})&\coloneqq& \sum_{\beta \underset{l}{\subseteq} \left(x^{-1}(x(z)) \setminus \{z\}\right)} \sum_{\mu \in \mathcal{S}(\beta)} \, \sum_{ \underset{i=1}{\overset{l(\mu)}{\bigsqcup}}  J_i = \mathbf{z}} \,  \sum_{ \underset{i=1}{\overset{l(\mu)}{\sum}} g_i = h+l(\mu)-l}
\left[
\prod_{i=1}^{l(\mu)} \om_{g_i,|\mu_i|+|J_i|}(\mu_i,J_i) \right],
\eeaa
where one only considers points in the fiber above $x(z)$ that are different from the point $z$ on $\Sigma$. The possible poles of these differential forms are at $z$ such that $x(z)\in x(\mathcal{R})$ and $z \in x^{-1}(\mathcal{P})$, and at $z_i \in \mathcal{R} \cup \left(x^{-1}\left(x(z)\right) \setminus \{z\}\right) $.
\hfill $\star$
\ed

\br Remark that we also have by definition that
\beq
\hat{Q}_{h,n+1}^{(l)}(z;\mathbf{z}) = 0, \text{ for } l \geq d.
\eeq
\er

\br\label{abuse_order}
In the previous definitions, note that, even if $\mu_i$ and $J_i$ are sets, the evaluation $\omega_{g_i,|\mu_i|+|J_i|}(\mu_i,J_i)$ is well-defined because the differentials $(\omega_{h,n})_{h,n\geq 0}$ are symmetric in their $n$ arguments.
\er

Notice that, with respect to their first variable, $\hat{Q}_{h,n+1}^{(l)}(z;\mathbf{z})$ is a differential on $\Sigma$ while $ Q_{h,n+1}^{(l)}(\lambda;\mathbf{z})$ is a differential on the base curve $\mathbb{P}^1$.
Let us finally define the generating functions of the previous differentials $\left\{Q_{h,n+1}^{(l)}\right\}_{(h,n,l)\in \mathbb{N}^3}$ and $\left\{\hat{Q}_{h,n+1}^{(l)}\right\}_{(h,n,l)\in \mathbb{N}^3}$.

\bd \label{def-Q-x-y} Given admissible initial data, we define, for any $(h,n)\in \mathbb{N}^2$,
\bea 
Q_{h,n+1}(\lambda,y;\mathbf{z})&\coloneqq& \sum_{l\geq 1} (-1)^l y^{d-l} \frac{Q_{h,n+1}^{(l)}(\lambda;\mathbf{z})}{(d\lambda)^l}=\sum_{l=1}^d (-1)^l y^{d-l} \frac{Q_{h,n+1}^{(l)}(\lambda;\mathbf{z})}{(d\lambda)^l},\cr
\hat{Q}_{h,n+1}(z,y;\mathbf{z})&\coloneqq& \sum_{l \geq 1} (-1)^l y^{d-l} \frac{\hat{Q}_{h,n+1}^{(l)}(z;\mathbf{z})}{(dx(z))^l}=\sum_{l =1}^d (-1)^l y^{d-l} \frac{\hat{Q}_{h,n+1}^{(l)}(z;\mathbf{z})}{(dx(z))^l} .
\eea
\hfill $\star$
\ed

\subsection{Examples}

In order to better understand the origin of these definitions, let us write down explicitly the expression of $Q_{h,n}^{(l)}$ for some low values of $h$, $n$ or $l$. 
\begin{itemize}\item For $(h,n) = (0,0)$, we recover the coefficients of the polynomial equation defining the classical spectral curve,
\beq\label{def-Q01}
\forall\, l \geq 1\, , \; Q_{0,1}^{(l)}(\lambda) =\sum_{\beta \underset{l}{\subseteq} x^{-1}(\lambda)}\prod_{z\in\beta}\omega_{0,1}(z)=  P_l(\lambda) \left(d\lambda\right)^l .
\eeq
$(P_l)_{l\in \llbracket 1,d\rrbracket}$ are the elementary symmetric functions of the roots of the algebraic equation $P(\lambda,y) = 0$ in $y$. Note that for all $l\geq 1$, $Q_{0,1}^{(l)}(\lambda)$ may only have poles at $\lambda\in\mathcal{P}$.

\item For $(h,n) = (0,1)$, we obtain
\beq\label{def-Q02}
\forall\, l \geq 1\, , \; Q_{0,2}^{(l)}(\lambda;z_1) =\sum_{\beta \underset{l}{\subseteq} x^{-1}(\lambda)} \sum_{z\in\beta}\omega_{0,2}(z,z_1)\prod_{\substack{\tilde{z}\in\beta\\ \tilde{z}\neq z}}\omega_{0,1}(\tilde{z}).
\eeq
Note that for all $l\geq 1$,  $Q_{0,2}^{(l)}(\lambda;z_1)$ may only have poles at $\lambda\in\mathcal{P}$ (because of the factors involving $\omega_{0,1}$) and when $z_1 \in x^{-1}(\lambda)$ (because of the factors involving $\omega_{0,2}$).

\item For $l=1$ and $(h,n)\in \mathbb{N}^2$, we obtain
\beq\label{l1LE}
Q_{h,n+1}^{(1)}(\lambda;\mathbf{z})=\sum_{z \in x^{-1}(\lambda)} \om_{h,n+1}(z,\mathbf{z}).
\eeq
\end{itemize}

\subsection{Loop equations}

Definition \ref{def-Q} and the properties of the $\{\om_{h,n}\}_{h, n\geq 0}$ allow to prove one of the fundamental properties of the objects built by topological recursion: the \emph{loop equations}.

\begin{theorem}[Loop equations~\cite{BouchardEynard_QC}]\label{loopeq}
For any $(h,n,l)\in \mathbb{N}^3$ and any $\mathbf{z}\in \left(\Sigma \setminus \mathcal{R}\right)^n$, the function $\lambda \mapsto \frac{Q_{h,n+1}^{(l)}(\lambda;\mathbf{z})}{(d\lambda)^l}$  has no poles at critical values.
\end{theorem}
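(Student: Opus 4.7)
The strategy is induction on $2h+n$, combined with the local Galois symmetry at each simple ramification point. Fix a critical value $\lambda_0 = x(a)$ with $a \in \mathcal{R}$. By admissibility, $a$ is a simple ramification point, and near $a$ the fiber decomposes as $x^{-1}(\lambda) = \{z_a(\lambda), \sigma_a z_a(\lambda), z_3(\lambda), \ldots, z_d(\lambda)\}$, where only $z_a(\lambda)$ and $\sigma_a z_a(\lambda)$ coalesce at $a$ as $\lambda \to \lambda_0$, while the remaining preimages stay bounded away from $\mathcal{R}$. Since $Q_{h,n+1}^{(l)}/(d\lambda)^l$ is by construction symmetric in all preimages of $\lambda$, it descends to a well-defined rational function of $\lambda$; it suffices to prove regularity at $\lambda_0$. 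The base case $(h,n)=(0,0)$ follows from~\eqref{def-Q01}: $Q_{0,1}^{(l)}/(d\lambda)^l = P_l(\lambda)$, whose poles sit only in $\mathcal{P}$.

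For the inductive step, decompose the defining sum over $\beta \underset{l}{\subseteq} x^{-1}(\lambda)$ according to $k \coloneqq |\beta \cap \{z_a, \sigma_a z_a\}| \in \{0,1,2\}$. The packet $k=0$ is manifestly regular: every $\omega_{g,m}$ factor is evaluated at the $z_j$ with $j \geq 3$ or at the entries of $\mathbf{z}$ (which lie in $\Sigma \setminus \mathcal{R}$ by hypothesis), and the pole structure of topological-recursion amplitudes recalled in Section~\ref{sec-tr} excludes singularities at $\lambda_0$. The packet $k=1$ is handled by pairing each $\beta$ containing $z_a$ but not $\sigma_a z_a$ with its Galois partner $\beta' = (\beta \setminus \{z_a\}) \cup \{\sigma_a z_a\}$: summing the two paired contributions symmetrizes the relevant $\omega$-factor under $\sigma_a$, and the inductive hypothesis applied at $l'=1$ (for smaller $2h'+n'$) guarantees that $\omega_{h',n'+1}(z_a,\cdot) + \omega_{h',n'+1}(\sigma_a z_a,\cdot)$ extends holomorphically across $a$, so this packet is regular at $\lambda_0$ as well.

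The packet $k=2$ is the heart of the argument. Grouping the partitions $\mu$ of $\beta$ by whether $\{z_a, \sigma_a z_a\}$ lie in the same block or in distinct blocks, and factoring out the (regular) contribution of the preimages $z_3,\dots,z_d$, what remains is precisely the combination
\beqq
\omega_{h'-1,n'+2}(z_a, \sigma_a z_a, \cdot) \;+\!\!\! \sum_{\substack{A \sqcup B = \mathbf{z},\, s\in\llbracket 0, h'\rrbracket \\ (s,|A|)\notin\{(0,0),(h',n')\}}} \!\!\! \omega_{s,|A|+1}(z_a, A)\, \omega_{h'-s,|B|+1}(\sigma_a z_a, B) \;=\; \widetilde{\mathcal{W}}^{(2)}_{h',n'+1}(z_a, \sigma_a z_a; \cdot)
\eeqq
appearing in the recursion~\eqref{def-tr}. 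By construction of the TR kernel, the Galois-antisymmetric (i.e.~$\sigma_a$-odd) singular part of this combination is exactly what the residue at $a$ captures when producing $\omega_{h',n'+1}$; the Galois-symmetric part, which is what $Q_{h,n+1}^{(l)}$ picks up since it is invariant under $z_a \leftrightarrow \sigma_a z_a$, is therefore regular at $a$.

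The main obstacle is the bookkeeping: the defining sum runs simultaneously over subsets $\beta$, set-partitions $\mu$, genus distributions $(g_i)$ with $\sum g_i = h+l(\mu)-l$, and partitions $(J_i)$ of $\mathbf{z}$, and one must verify that the refactorization in case $k=2$ reproduces $\widetilde{\mathcal{W}}^{(2)}$ with the correct multiplicities and genus balance. The cleanest way to handle this, following~\cite{BouchardEynard_QC}, is to pass to the generating series $Q_{h,n+1}(\lambda, y; \mathbf{z})$ of Definition~\ref{def-Q-x-y} and view it as a polynomial in $y$; its coefficients are elementary symmetric functions of "deformed roots", and these remain holomorphic at $\lambda_0$ as soon as the two merging roots are exchanged holomorphically by $\sigma_a$, which is exactly what the admissibility conditions (simple ramification, distinct critical values, smooth ramification) guarantee.
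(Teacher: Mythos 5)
Your architecture (induction on $2h+n$, splitting the sum over $\beta$ according to $k=|\beta\cap\{z_a,\sigma_a z_a\}|$, and symmetrizing under $\sigma_a$) is a reasonable way to organize the combinatorics, but its two load-bearing steps fail for the same reason: you never use the residue formula \eqref{def-tr} quantitatively, whereas the regularity you are proving is precisely the analytic content of that formula. Concretely, in the $k=1$ packet the regularity of $\omega_{g,m}(z_a,\cdot)+\omega_{g,m}(\sigma_a z_a,\cdot)$ is the $l=1$ case of the theorem (the linear loop equation), and the packet for $Q^{(l)}_{h,n+1}$ contains terms such as $\omega_{h,n+1}(z_a,\mathbf{z})\,\omega_{0,1}(z_3)\cdots$, for which you need that statement at the \emph{same} level $2h+n$, not a smaller one; the appeal to ``the inductive hypothesis at $l'=1$ for smaller $2h'+n'$'' is therefore circular exactly where it matters. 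Worse, your scheme offers no independent proof of the $l=1$ case for stable $(h,n)$: there the sum over $\beta$ is over singletons, there is no $k=2$ packet, and the $k=1$ packet \emph{is} the statement. Establishing the linear loop equation requires showing that the principal part of $\omega_{h,n+1}$ at $a$ produced by \eqref{def-tr} is $\sigma_a$-antisymmetric up to holomorphic terms, which is a computation with the recursion kernel, not a symmetry observation.

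The $k=2$ packet is also misidentified. Definition~\ref{def-Q} imposes no exclusion on the blocks of $\mu$, so the packet contains, in addition to what you wrote, the unstable cross terms $\omega_{0,1}(z_a)\,\omega_{h,n+1}(\sigma_a z_a,B)$ and their mirror images --- exactly the terms removed from $\widetilde{\mathcal{W}}^{(2)}_{h,n+1}$ by the constraint $(s,|A|)\notin\{(0,0),(h,n)\}$. This is not cosmetic: $\widetilde{\mathcal{W}}^{(2)}_{h,n+1}(z,\sigma_a z;\mathbf{z})$ is already invariant under $z\leftrightarrow\sigma_a z$ (it has no ``Galois-antisymmetric part'' to discard) and is genuinely singular at $z=a$; for instance $\widetilde{\mathcal{W}}^{(2)}_{1,1}(z,\sigma_a z)=\omega_{0,2}(z,\sigma_a z)$ has a double pole there. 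The regularity of $Q^{(l)}$ results from a cancellation between this singularity and that of the cross terms $\omega_{0,1}\cdot\omega_{h,n+1}$, and verifying the cancellation again requires the explicit kernel. The paper sidesteps both issues by passing to the generating polynomial $Q_{h,n+1}(\lambda,y;\mathbf{z})$ and the global form \eqref{eq-new-TR-2} of the recursion, proving regularity of $dx\,Q_{h,n+1}/\frac{\partial P}{\partial y}$ by a contradiction on the leading singular coefficient and then extracting each $Q^{(k)}$ by Lagrange interpolation; if you wish to keep your packet decomposition, you must at minimum supply the kernel computation that establishes the linear and quadratic loop equations at each level before the higher-$l$ cases.
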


\begin{proof}
The proof of this theorem was already given in \cite{BouchardEynard_QC} for the case when $\Sigma$ has genus 0. However, the proof does not use the vanishing genus assumption and can be directly transposed to the higher genus case. For the sake of completeness, we reproduce in Appendix~\ref{appendix-proof-loopeq} the main steps of this proof for the general case at hand.
\end{proof}

These loop equations take a particularly simple form for $l=1$, as we saw in equation \eqref{l1LE}. As a consequence of Theorem~\ref{loopeq}, they read
\beq\label{eq-lin-loop-eq}
\sum_{z \in x^{-1}(\lambda)} \om_{h,n+1}(z,\mathbf{z}) = \delta_{n,0} \delta_{h,0} P_1(\lambda) d\lambda + \delta_{n,1} \delta_{h,0} \frac{ d \lambda \, dx(z_1)}{(\lambda-x(z_1))^2}\,,
\eeq
and provide interesting properties for the $\{\om_{h,n}\}_{h\geq 0,n\geq 1}$.

Let us emphasize that, for any $(h,n,l)\in \mathbb{N}^2\times \mathbb{N}^*$, $ \frac{Q_{h,n+1}^{(l)}(\lambda;\mathbf{z})}{(d\lambda)^l}$ is a rational function of $\lambda$. From this perspective, it can be expressed through its partial fraction decomposition. We shall now describe its pole structure as a first step. A priori, from its definition, it could have poles at $\lambda \in x(\mathcal{R})$, $\lambda \in \mathcal{P}$ or $\lambda \in\underset{i=1}{\overset{n}{\bigcup}} \{x(z_i)\} $. The loop equations allow us to rule out the possibility of having any pole at $\lambda  \in x(\mathcal{R})$. Let us now study the behavior of this rational function as $\lambda \to x(z_i)$, for $i \in \llbracket 1,n \rrbracket$. It is governed by the following lemma.

\begin{lemma}[\cite{BouchardEynard_QC}]\label{def-Q-tilde-lem} For all $(h,n,l)\in \mathbb{N}^3$, and $\mathbf{z}\coloneqq(z_1,\dots,z_n) \in \Sigma^n$ such that $x(z_i) \neq x(z_j)$ for any $i \neq j$, the functions
\bea\label{def-Q-tilde}
\widetilde{Q}_{h,n+1}^{(0)}(\lambda;\mathbf{z})&\coloneqq&Q_{h,n+1}^{(0)}(\lambda;\mathbf{z})=\delta_{h,0}\delta_{n,0},\cr
\forall\, l\geq 1\,:\, \widetilde{Q}_{h,n+1}^{(l)}(\lambda;\mathbf{z})&\coloneqq& \frac{Q_{h,n+1}^{(l)}(\lambda;\mathbf{z})}{(d\lambda)^l}
 - \sum_{j=1}^n d_{z_j}\left(\frac{1}{\lambda-x(z_j)}\frac{\hat{Q}_{h,n}^{(l-1)}(z_j;\mathbf{z} \setminus \{z_j\})}{(dx(z_j)^{l-1}}\right)
\eea
are rational functions of $\lambda$ that have no poles at critical values and no poles at $\lambda\in \underset{i=1}{\overset{n}{\bigcup}} \{x(z_i)\} $. 
\end{lemma}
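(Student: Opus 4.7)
The plan is to handle the two claims — regularity at critical values and regularity at $\lambda\in\{x(z_i)\}_{i=1}^n$ — separately, the first being almost immediate and the second reducing to a single local expansion of $\omega_{0,2}$.

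For the critical-value statement, I would note that Theorem~\ref{loopeq} already delivers regularity of $Q_{h,n+1}^{(l)}(\lambda;\mathbf{z})/(d\lambda)^l$ at $\lambda\in x(\mathcal{R})$, and that the subtracted correction $d_{z_j}\bigl(\frac{1}{\lambda-x(z_j)}\cdot \hat{Q}_{h,n}^{(l-1)}(z_j;\mathbf{z}\setminus\{z_j\})/(dx(z_j))^{l-1}\bigr)$, viewed as a rational function of $\lambda$ with $z_j$ held fixed at a generic point (in particular with $x(z_j)\notin x(\mathcal R)$), has its unique $\lambda$-pole at $\lambda=x(z_j)$ and is therefore holomorphic at every point of $x(\mathcal{R})$. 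Summing over $j$ preserves this regularity, so $\widetilde{Q}_{h,n+1}^{(l)}$ inherits the absence of poles at critical values from $Q_{h,n+1}^{(l)}/(d\lambda)^l$.

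The bulk of the argument is the cancellation of the pole at $\lambda = x(z_i)$. The first step is to isolate the source of this pole inside the sum defining $Q_{h,n+1}^{(l)}$: among the topological recursion differentials only $\omega_{0,2}$ is singular when two arguments collide, so as $\lambda\to x(z_i)$ the singular contributions come precisely from terms where the preimage $z_*\in x^{-1}(\lambda)$ on the sheet of $z_i$ enters a factor $\omega_{0,2}(z_*,z_i)$. This forces a block $\mu_k=\{z_*\}$ paired with $J_k=\{z_i\}$ and $g_k=0$. The remaining product then runs over subsets of $x^{-1}(\lambda)\setminus\{z_*\}$ of size $l-1$, partitions thereof paired with partitions of $\mathbf{z}\setminus\{z_i\}$, with residual genus balance $\sum_{j\neq k}g_j = h + l(\mu)-l = h+(l(\mu)-1)-(l-1)$. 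The combinatorial check — that this is exactly the defining expansion of $\hat{Q}_{h,n}^{(l-1)}(z_*;\mathbf{z}\setminus\{z_i\})$ — is the most delicate step of the argument and the main bookkeeping effort; once written out, it drops out of the definitions after a relabeling of partitions and reindexing of the genus condition.

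Having reduced the singular part to $\operatorname{Sing}_{\lambda=x(z_i)}\bigl[\omega_{0,2}(z_*,z_i)\,\hat{Q}_{h,n}^{(l-1)}(z_*;\mathbf{z}\setminus\{z_i\})/(d\lambda)^l\bigr]$, I would finish with a local expansion. Since $z_i\notin\mathcal{R}$ generically, $x$ is a holomorphic chart near $z_i$ and $z_*(\lambda)$ is its local inverse on the sheet of $z_i$; with $F(z_*)\coloneqq\hat{Q}_{h,n}^{(l-1)}(z_*;\mathbf{z}\setminus\{z_i\})/(dx(z_*))^{l-1}$ and $\omega_{0,2}(z_*,z_i)/d\lambda = dx(z_i)/(\lambda-x(z_i))^2 + \text{holomorphic}$, a first-order Taylor expansion of $F(z_*(\lambda))$ at $\lambda=x(z_i)$ produces
\[
\frac{dx(z_i)\,F(z_i)}{(\lambda-x(z_i))^2}+\frac{d_{z_i}F(z_i)}{\lambda-x(z_i)}+\text{regular},
\]
which coincides with $d_{z_i}\!\bigl(F(z_i)/(\lambda-x(z_i))\bigr)$. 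This is exactly the $j=i$ summand subtracted in the definition of $\widetilde{Q}_{h,n+1}^{(l)}$; the other summands have poles only at $\lambda=x(z_j)\neq x(z_i)$ and cannot interfere. Hence $\widetilde{Q}_{h,n+1}^{(l)}$ is regular at $\lambda=x(z_i)$, and applying the same argument for each $i\in\llbracket 1,n\rrbracket$ yields the full statement.
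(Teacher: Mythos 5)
Your proposal is correct and follows essentially the same route as the paper: invoke Theorem~\ref{loopeq} for regularity at critical values, isolate the unique singular contribution at $\lambda\to x(z_i)$ as $\omega_{0,2}(z_*,z_i)\,\hat{Q}_{h,n}^{(l-1)}(z_*;\mathbf{z}\setminus\{z_i\})/(d\lambda)^l$, and show by local expansion that its singular part is exactly the subtracted total derivative $d_{z_i}\bigl(\frac{1}{\lambda-x(z_i)}\hat{Q}_{h,n}^{(l-1)}(z_i;\cdot)/(dx(z_i))^{l-1}\bigr)$. The only cosmetic difference is that you phrase the final step as a first-order Taylor expansion of $F(z_*(\lambda))$ while the paper performs the same expansion term by term.
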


\begin{proof}
Let us rewrite here the proof of \cite{BouchardEynard_QC}.
The proof is obvious for $l=0$. 
Let us now consider $l\geq 1$. 
$\widetilde{Q}_{h,n+1}^{(0)}(\lambda;\mathbf{z}) $ is obviously a rational function of $\lambda$.
From Theorem~\ref{loopeq}, $\frac{Q_{h;n+1}^{(l)}(\lambda;\mathbf{z})}{(d\lambda)^l}$ has no pole at critical values $\lambda \in x(\mathcal{R})$. Let us now study its behavior as $\lambda \to x(z_j)$, for some $j \in \llbracket 1, n \rrbracket$. 
If $z_j$ is generic,  $\lambda\to x(z_j)$ implies that exactly one of the preimages $x^{-1}(\lambda)=\{z^{(1)},z^{(2)},\dots,z^{(d)}\}$ of $\lambda$ tends to $z_j$, let us say it is $z^{(1)}\to z_j$.
In all the terms in \eqref{def-Q}, the only term that can have a pole at $z^{(1)}\to z_j$ is a factor $\omega_{0,2}(z^{(1)},z_j)$. In particular notice that $\omega_{0,2}(z^{(i)},z_j)$ with $i\neq 1$  has no pole at $z^{(1)}\to z_j$.
The only singular term at $z^{(1)}\to z_j$ in the sum of equation \eqref{def-Q} defining $Q_{h,n+1}^{(l)}(\lambda;z_1,\dots,z_n)$ is thus
\beq
\frac{Q_{h,n+1}^{(l)}(x(z^{(1)});z_1,\dots,z_n)}{(d\lambda)^l}  = \om_{0,2}(z^{(1)},z_j) \frac{\hat{Q}_{h,n}^{(l-1)}(z^{(1)};\mathbf{z} \setminus \{z_j\})}{(d\lambda)^l} + O(1).
\eeq
Let us write it as

\bea
 &&\om_{0,2}(z^{(1)},z_j) \frac{\hat{Q}_{h,n}^{(l-1)}(z^{(1)};\mathbf{z} \setminus \{z_j\})}{(d\lambda)^l} \cr
&=&  \frac{d\lambda dx(z_j)}{(\lambda-x(z_j))^2}  \frac{\hat{Q}_{h,n}^{(l-1)}(z^{(1)};\mathbf{z} \setminus \{z_j\})}{(d\lambda)^l} + \left(\om_{0,2}(z^{(1)},z_j)-\frac{d\lambda dx(z_j)}{(\lambda-x(z_j))^2}\right)  \frac{\hat{Q}_{h,n}^{(l-1)}(z^{(1)};\mathbf{z} \setminus \{z_j\})}{(d\lambda)^l} \cr
&=&  \frac{dx(z_j)}{(\lambda-x(z_j))^2}  \frac{\hat{Q}_{h,n}^{(l-1)}(z^{(1)};\mathbf{z} \setminus \{z_j\})}{(d\lambda)^{l-1}} +O(1) \cr
&=&  dz_j\left( \frac{1}{\lambda-x(z_j)}  \frac{\hat{Q}_{h,n}^{(l-1)}(z^{(1)};\mathbf{z} \setminus \{z_j\})}{(d\lambda)^{l-1}} \right) +O(1) \cr
&=& dz_j\left( \frac{1 }{\lambda-x(z_j)}  \frac{\hat{Q}_{h,n}^{(l-1)}(z_j;\mathbf{z} \setminus \{z_j\})}{(dx(z_1)^{l-1}} \right) +O(1) \cr
&& +  dz_j\left( \frac{1}{\lambda-x(z_j)}  \left( \frac{\hat{Q}_{h,n}^{(l-1)}(z^{(1)};\mathbf{z} \setminus \{z_j\})}{(d\lambda)^{l-1}} - \frac{\hat{Q}_{h,n}^{(l-1)}(z_j;\mathbf{z} \setminus \{z_j\})}{(dx(z_1)^{l-1}}\right) \right) +O(1) \cr
&=&  dz_j\left( \frac{1 }{\lambda-x(z_j)}  \frac{\hat{Q}_{h,n}^{(l-1)}(z_j;\mathbf{z} \setminus \{z_j\})}{(dx(z_1)^{l-1}} \right) +O(1) .
\eea
This proves that $\widetilde{Q}_{h,n+1}^{(0)}(\lambda;\mathbf{z}) =O(1) $, i.e. has no pole at $\lambda=x(z_j)$.
\end{proof}

For future reference, let us notice that for all $(h,n,l)\in \mathbb{N}^3$, $\widetilde{Q}_{h,n+1}^{(l)}(\lambda;\mathbf{z})$ is a rational function of $\lambda$, and a differential 1-form in $z_i$ for all $i\in \llbracket 1,n\rrbracket$.

\section{Perturbative wave functions and KZ equations}\label{KZ-eqs}

In this section we encode all the topological recursion differentials into objects called perturbative wave functions and we use the loop equations from the previous section to build a system of operators that annihilate these wave functions. These operators can often be considered as derivatives with respect to the moduli (spectral times and filling fractions) and hence provide a system of PDE's. However, in the present article we deal with the general case, for which this interpretation cannot be made straightforward without dealing with the issue of studying the geometry of the space of classical spectral curves. To avoid this technical difficulty, we prove that these general operators can be interpreted as integrals over generalized cycles of the classical spectral curve. 

\subsection{Perturbative wave functions and partition function}\label{5.1}

Let $\left((\Sigma,x),(\mathcal{A}_i,\mathcal{B}_i)_{i=1}^g\right)$ be some admissible initial data.

\subsubsection{Divisors and universal cover}

The perturbative wave functions that will be defined in this section will not be defined as functions of a point on $\Sigma$, but rather of a point on the universal cover of $\Sigma_{\mathcal P}=\Sigma\setminus x^{-1}(\mathcal P)$, and in fact, of a set of such points, encoded in a divisor. The differential forms that we will integrate are the TR differentials $(\omega_{h,n})_{h,n\geq 0}$, that are known to have vanishing residues at every pole $p\notin x^{-1}(\mathcal P)$. More concretely, $\omega_{0,1}$ only has poles at $\mathcal P$, $\omega_{0,2}$ has a pole with vanishing residue and all the other $(\omega_{h,n})_{h,n\geq 0}$ have poles only at ramification points, with vanishing residues \cite{EO07}.

\bd[Universal cover]
Let $\widetilde{\Sigma_{\mathcal P}}$ be the universal cover of $\Sigma_{\mathcal P}=\Sigma\setminus x^{-1}( \mathcal{P})$, i.e. the set of homotopy classes of Jordan arcs from a generic basepoint $o$ to $z\in\Sigma_\mathcal{P}$.
There is a projection $\pi:\widetilde{\Sigma_\mathcal{P}} \to \Sigma_\mathcal{P}$. Every differential form $\omega$ on $\Sigma$ has a pullback $\pi^*\omega$ on the universal cover, and local coordinates on $\Sigma$ can be used as local coordinates on the universal cover.
Therefore, in the purpose of lightening the notations of this article, and as long as there is no possible ambiguity, which will be mentioned if needed, we shall use the same notation $p$ for a point of the universal cover and its projection $\pi(p)$ on $\Sigma$, and we shall use the same notation $\omega$ for a form on $\Sigma$ and its pullback $\pi^*\omega$ on the universal cover.
\hfill $\star$
\ed

\bd[Divisors]\label{GenericDivisors}  Let $\Sigma$ be a  Riemann surface.
A divisor $D$ on $\Sigma$ is a tuple of pairs $((p_1,\alpha_1),\dots,(p_s,\alpha_s)) \in (\Sigma\times \mathbb C)^{s}$, for some $s\in\mathbb{N}$, and is denoted $D=\underset{i=1}{\overset{s}{\sum}}\alpha_i [p_i]$. We call $\alpha_i$ the \emph{charge} of the point $p_i$. 
The \emph{degree} of a divisor is defined as the sum of its charges
\beqq \deg D=\underset{i=1}{\overset{s}{\sum}} \alpha_i .\eeqq
The set of points $p_i$'s for which $\alpha_i\neq 0$ is called the \emph{support} of the divisor.
A divisor is said \emph{generic} if all points are pairwise distinct: $p_i\neq p_j$, for all $i\neq j$.
\hfill $\star$
\ed

\bd
Let $D$ be a divisor of $s$ points on the universal cover $(p_1,\dots,p_s)\in \widetilde{\Sigma_\mathcal{P}}^s$, of degree $0$.
If $\omega$ is a meromorphic 1-form on $\Sigma$ with possible poles at $\mathcal{P}$, the integration on the divisor $D$ is defined as
\beqq
\int_D \omega(z) \coloneqq \sum_{i=1}^{s} \alpha_i \int_o^{p_i} \omega(z),
\eeqq
where $o$ is the basepoint used for the universal cover. Since the degree of the divisor is zero, the integral does not depend on the choice of the base point $o$. 
\hfill $\star$
\ed

\subsubsection{Perturbative wave functions}

\bd[Perturbative wave functions]\label{DefPerturbativeWaveFunctions} Let $\left((\Sigma,x),(\mathcal{A}_i,\mathcal{B}_i)_{i=1}^g\right)$ be some admissible initial data and $D=\underset{i=1}{\overset{s}{\sum}}\alpha_i [p_i]$ a generic divisor on $\widetilde {\Sigma_{\mathcal P}}$. We define the \emph{perturbative wave functions} associated to $D$ by
\beq
\psi(D,\hbar) \coloneqq \exp{\bigg( {\displaystyle  \sum_{h \geq 0} \sum_{n\geq 0} }\frac{\hbar^{2h-2+n}}{n!}\int_D\cdots\int_D \Big(\omega_{h,n}(z_1,\ldots,z_n) - \delta_{h,0} \delta_{n,2} \frac{dx(z_1) dx(z_2)}{(x(z_1)-x(z_2))^2} \Big)\bigg)} .
\eeq
It is defined as a formal power series times exponential terms of the form $e^{-\hbar^{-k} w_k}$. Namely,
\beq
e^{-\hbar^{-2}\omega_{0,0}}e^{-\hbar^{-1}\int_D \omega_{0,1}}\psi(D,\hbar) \in \mathbb C[[\hbar]].
\eeq

More generally, for any $l\geq 1$,
\bea  \label{psilidef}\forall\, i\in \llbracket 1,s\rrbracket\,:\, \psi_{0,i}(D,\hbar)&\coloneqq& \psi(D,\hbar) ,\cr
\forall\, i\in\llbracket 1,s\rrbracket\,,\, l\geq 1\,:\, 
\psi_{l,i}(D,\hbar) &\coloneqq& \bigg[\sum_{h\geq 0} \sum_{n \geq 0} \frac{\hbar^{2h+n}}{n!}  \overbrace{\int_D\cdots\int_D}^n \frac{\hat{Q}_{h,n+1}^{(l)}(p_i;\cdot)}{(dx(p_i))^l} \bigg] \psi(D,\hbar) .
\eea
These are also formal power series in the same space of formal series
\beq
e^{-\hbar^{-2}\omega_{0,0}}e^{-\hbar^{-1}\int_D \omega_{0,1}}\psi_{l,i}(D,\hbar) \in \mathbb C[[\hbar]].
\eeq
\hfill $\star$
\ed

To make notations lighter, we will often omit the divisor $D$ and the formal parameter $\hbar$ as arguments of the perturbative wave functions in expressions where the dependence does not play any significant role.

The first few orders are
\beq
\psi(D,\hbar) = e^{\hbar^{-2}\omega_{0,0}+\omega_{1,0}} \ e^{\hbar^{-1}\int_D \omega_{0,1}} \ \prod_{i<j}\left(\frac{E(p_i,p_j)}{E_0(x(p_i),x(p_j))}\right)^{a_i a_j }
\left(1+O(\hbar)\right),
\eeq
where $E$ is the prime form on $\Sigma$ and $E_0$ is the prime form on the base $\mathbb P^1$,
\beq
E_0(x,x') = \frac{x-x'}{\sqrt{dx dx'}}.
\eeq

\br
For book-keeping, let us explain where the exponent of $\hbar$ comes from. First, the TR differentials are homogeneous under the rescaling $\omega_{0,1}\to \hbar\omega_{0,1}$ (i.e.~all spectral times $t_{p,k}\to \hbar t_{p,k}$ and filling fractions $\epsilon_i\to\hbar \epsilon_i$); they change to
\beq
\omega_{h,n} \to \hbar^{2h-2+n}\omega_{h,n}.
\eeq
This explains the power $\hbar^{2h-2+n}$ in the definition of $\psi(D,\hbar)$.

Our goal is then to find an ODE satisfied by $\psi(D,\hbar)$, and the purpose of $\psi_{l,i}(D,\hbar)$ is to play a role similar to $(\hbar d_{p_i})^l\psi$, i.e.~it is a differential of $\psi$ multiplied by the power $\hbar^l$.
$\hat Q_{h,n+1}^{(l)}$ is made from the product of $l$ TR differentials, whose total homogeneity weight is $\underset{i=1}{\overset{l}{\sum}}(2h_i-2+|\mu_i|+|J_i|) = 2h+n-l$.
Multiplying by $\hbar^l$ gives an order $\hbar^{2h+n}$ in \eqref{psilidef}.
\er

It will also be useful to introduce an element which is fundamental in the theory of topological recursion: the partition function, which is independent of any divisor.

\bd[Perturbative partition function] \label{def-pert-part-function}
 Let $\left((\Sigma,x),(\mathcal{A}_i,\mathcal{B}_i)_{i=1}^g\right)$ be some admissible initial data. We define the \emph{perturbative partition function} as the perturbative wave function of the empty divisor $ Z(\hbar)=\psi(D=\emptyset,\hbar)$, namely
 \beq
 Z(\hbar) \coloneqq \exp{\Bigg( {\displaystyle  \sum_{h \geq 0}}  \hbar^{2h-2} \om_{h,0} \Bigg)}.
 \eeq
Its behavior in $\hbar$ is such that
\beq
e^{-\hbar^{-2} \omega_{0,0}} Z(\hbar) \in \mathbb C[[\hbar]].
\eeq
 \hfill $\star$
\ed
 
While the wave functions are meant to be solutions to a differential equation, the partition function is expected to play the role of an associated tau function from the point of view of isomonodromic or integrable systems.

\subsection{KZ equations for generic divisors}\label{5.2}

The infinite set of loop equations (Theorem \ref{loopeq}) involving all values of  $(h,n,l)\in \mathbb{N}^3$ can be combined into a generating series to form a system of $d\times s$ differential equations satisfied by the wave functions. When interpreting the topological recursion from a conformal field theory perspective \cite{GIS,CFT_TR}, these equations play the role of Knizhnik--Zamolodchikov equations in this context. Hence, we refer to them as \emph{KZ equations} in the rest or the paper.

Using Definition~\ref{DefPerturbativeWaveFunctions} we deduce the following lemma. For completeness, and since this is a key step to get the KZ equations, we include below the idea of the proof, although it already appeared in \cite{BouchardEynard_QC}.

\begin{lemma} \label{lemma-Q-omega} For any $z\in \Sigma \setminus \left(\mathcal{R} \bigcup x^{-1}(\mathcal{P})\right)$ and $\mathbf{z} \in \left[\Sigma \setminus \left(\mathcal{R} \bigcup x^{-1}(x(z))\right)\right]^n$, we have, for all $(h,n,l)\in \mathbb{N}^3$,
\beq
Q_{h;n+1}^{(l)}(x(z);\mathbf{z})
= \hat{Q}_{h;n+1}^{(l)}(z;\mathbf{z})
+ \hat{Q}_{h-1;n+2}^{(l-1)}(z;z,\mathbf{z})
+\!\!\! \sum_{A \sqcup B= \mathbf{z}} \sum_{h_1+h_2=h} \hat{Q}_{h_1,|A|+1}^{(l-1)}(z;A) \om_{h_2,|B|+1}(z,B).
\eeq
\end{lemma}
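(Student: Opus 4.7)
My plan is a direct combinatorial expansion: start from the definition of $Q_{h,n+1}^{(l)}(x(z);\mathbf{z})$ and split the outer sum over subsets $\beta \underset{l}{\subseteq} x^{-1}(x(z))$ according to whether $z\in\beta$ or not. The case $z\notin\beta$ will match the first term on the right-hand side by the very definition of $\hat Q_{h,n+1}^{(l)}(z;\mathbf{z})$, so the work is to analyze the case $z\in \beta$ and show that it reproduces the two remaining terms.

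When $z\in \beta$, I would write $\beta = \{z\}\sqcup \beta'$ with $\beta' \underset{l-1}{\subseteq} x^{-1}(x(z))\setminus\{z\}$, and for each partition $\mu\in\mathcal{S}(\beta)$ single out the unique block $\mu_z$ containing $z$, writing $\mu_z = \{z\}\cup\gamma$ with $\gamma\subseteq\beta'$. This leads naturally to two sub-cases according to whether $\gamma=\emptyset$ or $\gamma\neq\emptyset$. In the sub-case $\gamma=\emptyset$, the block $\mu_z=\{z\}$ contributes a factor $\omega_{g_z,1+|J_z|}(z,J_z)$ with $J_z\subseteq\mathbf{z}$; setting $A=\mathbf{z}\setminus J_z$, $B=J_z$, $h_2=g_z$, $h_1=h-g_z$, the remaining blocks (forming a partition of $\beta'$ with $l-1$ elements) recombine — using $l(\mu)=l(\mu_{|\beta'}\sqcup\{\mu_z\})=l(\mu_{|\beta'})+1$ so that the genus constraint $\sum g_i = h+l(\mu)-l$ reads $\sum_{i\neq z} g_i = h_1 + l(\mu_{|\beta'}) - (l-1)$ — into exactly $\hat Q_{h_1,|A|+1}^{(l-1)}(z;A)$. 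Summing yields the third term on the right-hand side.

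For the sub-case $\gamma\neq\emptyset$, the block $\mu_z=\{z\}\cup\gamma$ contributes a factor $\omega_{g_z,|\gamma|+1+|J_z|}(z,\gamma,J_z)$. The idea is to reinterpret this as a configuration appearing in $\hat Q_{h-1,n+2}^{(l-1)}(z;z,\mathbf{z})$, where $z$ is now promoted to an external argument. Concretely, I would match $\gamma \leftrightarrow \mu''_z$ (a block of $\beta'' \underset{l-1}{\subseteq} x^{-1}(x(z))\setminus\{z\}$) and $J_z \leftrightarrow J''_z\setminus\{z\}$ (the cell of the partition of the external arguments to which the promoted $z$ gets assigned); the requirement $\gamma\neq\emptyset$ corresponds precisely to the fact that every block in a partition of $\beta''$ is non-empty. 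Under this bijection $l(\mu)=l(\mu'')$, while the genus condition $\sum g_i = h+l(\mu)-l$ becomes $\sum g''_i = (h-1)+l(\mu'')-(l-1)$, which is exactly the constraint defining $\hat Q_{h-1,n+2}^{(l-1)}(z;z,\mathbf{z})$. Using the symmetry of the $\omega_{h,n}$ in their arguments (as recorded in Remark~\ref{abuse_order}) to merge $(z,\gamma,J_z)$ with $(\mu''_z,J''_z)$ finishes the identification.

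The only real subtlety I anticipate is bookkeeping: making sure $l(\mu)$ in the two sub-cases is correctly tracked (in the $\gamma=\emptyset$ sub-case the block $\{z\}$ adds one unit to $l(\mu)$ relative to the partition of $\beta'$, while in the $\gamma\neq\emptyset$ sub-case the number of blocks of $\mu$ equals that of the induced partition $\mu''$ of $\beta''$), and that the genus balance in the definitions of $Q$ and $\hat Q$ — $h+l(\mu)-l$ vs.\ $(h-1)+l(\mu'')-(l-1)$ — matches. Once this bookkeeping is done, the identity is immediate from the disjoint union of the three cases: $z\notin\beta$, $z\in\beta$ with $\gamma=\emptyset$, $z\in\beta$ with $\gamma\neq\emptyset$.
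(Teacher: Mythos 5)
Your proposal is correct and follows essentially the same route as the paper: the paper's proof is exactly the three-way split of the sum over $\beta\underset{l}{\subseteq}x^{-1}(x(z))$ into $z\notin\beta$, $z\in\beta$ with the block containing $z$ of cardinality one, and $z\in\beta$ with that block of cardinality at least two, matched respectively to the first, third, and second terms of the right-hand side. Your bookkeeping of $l(\mu)$ and of the genus constraints in each sub-case is accurate and in fact more explicit than the paper's.
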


\begin{proof}
The LHS is defined as a sum over $\beta \subset x^{-1}(x(z))$ in Definition~\ref{def-Q}. This sum  can be split into three subsets giving rise to the three terms in the RHS as follows.
\begin{itemize}
\item The terms with $z \notin \beta$ give rise to the first term $ \hat{Q}_{h;n+1}^{(k)}(z;\mathbf{z})$ in the RHS.
\item The terms with $z \in \beta$ such that the block $\mu_i$ of the set partition $\mu\in\mathcal{S}(\beta)$ with $z\in\mu_i$  does not have cardinality equal to 1, $|\mu_i| \neq 1$, contribute to the second term $ \hat{Q}_{h-1;n+2}^{(k-1)}(z;z,\mathbf{z})$ of the RHS.
\item The terms  with $z \in \beta$ such that the block $\mu_i $ containing $z$ has cardinality $1$
give rise to the last terms.
\end{itemize}
\end{proof}

This lemma allows to prove the main property satisfied by the perturbative wave functions.

\begin{theorem}[General KZ equations]\label{systemPDE} Let $\left((\Sigma,x),(\mathcal{A}_i,\mathcal{B}_i)_{i=1}^g\right)$ be some admissible initial data and $D=\underset{i=1}{\overset{s}{\sum}}\alpha_i [p_i]$ a generic divisor. 
For $i\in\llbracket 1,s\rrbracket$ and $l\in \llbracket 0,d-1\rrbracket$, we have
\beaa
\frac{\hbar}{\alpha_i} \frac{d \psi_{l,i}(D,\hbar)}{dx(p_i)} &=& - \psi_{l+1,i}(D,\hbar) - \hbar \sum_{j \in \llbracket 1,s \rrbracket \setminus \{i\}} \alpha_j \,\frac{\psi_{l,i}(D,\hbar)-\psi_{l,j}(D,\hbar)}{x(p_i)-x(p_j)} \cr
&& + {\displaystyle \sum_{h\geq 0} \sum_{n \geq 0}}\frac{ \hbar^{2h+n}}{n!} \int_{z_1 \in D} \dots \int_{z_n \in D} \widetilde{Q}_{h,n+1}^{(l+1)}(x(p_i);\mathbf{z}) \; \psi(D,\hbar)  \cr
&&+ \left(\frac{1}{\alpha_i} - \alpha_i\right) \Bigg[\sum_{(h,n)\in \mathbb{N}^2} \frac{\hbar^{2h+n+1}}{n!}  \overbrace{\int_D\cdots\int_D}^{n} \frac{d}{dx(p_i)} \bigg(\frac{ \hat{Q}_{h,n+1}^{(l)}(p_i;\cdot)}{\left(dx(p_i)\right)^l}\bigg) \Bigg]  \; \psi(D,\hbar). \cr
\eeaa
Note that if $\alpha_i = \pm 1$, this reduces to
\bea\label{simple_systemPDE}
\frac{\hbar}{\alpha_i} \frac{d \psi_{l,i}(D,\hbar)}{dx(p_i)} &=& - \psi_{l+1,i}(D,\hbar) -  \hbar \sum_{j \in \llbracket 1,s \rrbracket \setminus \{i\}} \alpha_j \,\frac{\psi_{l,i}(D,\hbar)-\psi_{l,j}(D,\hbar)}{x(p_i)-x(p_j)} \cr
&& + {\displaystyle \sum_{h\geq 0} \sum_{n \geq 0}}\frac{ \hbar^{2h+n}}{n!} \int_{z_1 \in D} \dots \int_{z_n \in D} \widetilde{Q}_{h,n+1}^{(l+1)}(x(p_i);\mathbf{z}) \; \psi(D,\hbar)  .\cr
&&
\eea
\end{theorem}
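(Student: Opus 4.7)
My strategy is to differentiate $\psi_{l,i}=\psi \cdot T_{l,i}$, with $T_{l,i}=\sum_{h,n\geq 0}\tfrac{\hbar^{2h+n}}{n!}\int_D\cdots\int_D \tfrac{\hat Q^{(l)}_{h,n+1}(p_i;\cdot)}{(dx(p_i))^l}$ (and $T_{0,i}=1$), with respect to $x(p_i)$, and then to recognize in the result the loop equations encoded by Lemmas~\ref{lemma-Q-omega} and \ref{def-Q-tilde-lem}. The quantity $\hbar\, d\psi_{l,i}/dx(p_i)$ splits into three contributions: (A) endpoint differentiation inside the exponent of $\psi$, producing $\alpha_i\,\omega_{h,n+1}(p_i,\cdot)/dx(p_i)$ together with the regularizer $-\alpha_i\,dx(z_1)/(x(p_i)-x(z_1))^2$ coming from the $\omega_{0,2}$-subtraction; (B) endpoint differentiation inside the integrals of $T_{l,i}$, producing $\alpha_i\,\hat Q^{(l)}_{h,n+2}(p_i;p_i,\cdot)/(dx(p_i))^{l+1}$; and (C) direct differentiation with respect to $p_i$ of the factor $\hat Q^{(l)}_{h,n+1}(p_i;\cdot)/(dx(p_i))^l$, which carries no $\alpha_i$ weight.

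After multiplying by $\hbar/\alpha_i$, I would combine (A)+(B) using Lemma~\ref{lemma-Q-omega}. The convolution term $\sum \hat Q^{(l)} \omega$ on the RHS of that lemma reproduces (A) times $T_{l,i}$ via the standard Cauchy-product expansion of two power series in $\hbar$, while the diagonal term $\hat Q^{(l)}_{h-1,n+2}(p_i;p_i,\cdot)$ matches (B) after an index shift in $h$ by one. The remaining term, $\hat Q^{(l+1)}_{h,n+1}(p_i;\cdot)$, is not produced by either (A) or (B); it is precisely $T_{l+1,i}$, so writing everything in terms of $Q^{(l+1)}=\hat Q^{(l+1)}+\hat Q^{(l)}(p_i;p_i,\cdot)+\sum\hat Q^{(l)}\omega$ forces the appearance of $-\psi_{l+1,i}$ on the RHS. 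Next, I apply Lemma~\ref{def-Q-tilde-lem} to replace $Q^{(l+1)}/(dx(p_i))^{l+1}$ by $\widetilde{Q}^{(l+1)}_{h,n+1}(x(p_i);\cdot)$, yielding the third line of the theorem, modulo a sum of exact forms $d_{z_j}\bigl(\tfrac{1}{x(p_i)-x(z_j)}\tfrac{\hat Q^{(l)}_{h,n}(z_j;\cdot)}{(dx(z_j))^l}\bigr)$ indexed by $j\in\llbracket 1,n\rrbracket$.

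By Stokes, each such exact form integrated over $D$ evaluates to boundary contributions $\sum_{k=1}^{s}\alpha_k[\,\cdot\,]_{z_j=p_k}$. For $k\ne i$, after reorganizing the $\hbar$- and $n$-sums, these boundary terms reassemble into $\hbar\sum_{j\ne i}\alpha_j\psi_{l,j}/(x(p_i)-x(p_j))$; paired with the analogous Stokes computation applied to the $\omega_{0,2}$-regularizer of (A), which contributes $-\hbar\sum_{j\ne i}\alpha_j\psi_{l,i}/(x(p_i)-x(p_j))$, we obtain exactly $-\hbar\sum_{j\ne i}\alpha_j(\psi_{l,i}-\psi_{l,j})/(x(p_i)-x(p_j))$, i.e.~the second line of the theorem. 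The $k=i$ summands are initially singular because of the factor $1/(x(p_i)-x(p_i))$ and must be regularized: subtracting the polar part of $\tfrac{1}{x(p_i)-x(z)}\tfrac{\hat Q^{(l)}(z;\cdot)}{(dx(z))^l}$ at $z=p_i$, the finite part equals $-\tfrac{d}{dx(p_i)}\bigl(\hat Q^{(l)}(p_i;\cdot)/(dx(p_i))^l\bigr)$ and enters with weight $\alpha_i$. Combined with piece (C), which enters with weight $1/\alpha_i$ after the overall division by $\alpha_i$, this produces the prefactor $1/\alpha_i-\alpha_i$ of the last line, which vanishes precisely when $\alpha_i=\pm 1$ as stated.

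The main obstacle is this last regularization step: correctly identifying the finite part of the $k=i$ boundary contribution in Stokes so that it matches, with the correct sign and coefficient, the direct-derivative piece (C). A parallel regularization must check that the $z=p_i$ contribution from the $\omega_{0,2}$-subtraction vanishes (since its ``numerator'' is the constant $1$), so only the $\hat Q^{(l)}$-derivative survives in the combination with (C). Finally, the base-point contributions at $o$ generated by Stokes on both total-derivative terms must be shown to cancel; this follows by collecting them into a coefficient proportional to $\sum_k\alpha_k=\deg D$ and either invoking the genericity of $o$ relative to the support of the integrands, or restricting to divisors of vanishing degree, which is the natural setting for the later application of the theorem to two-point divisors $[z]-[\infty^{(\alpha)}]$.
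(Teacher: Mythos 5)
Your proposal is correct and follows essentially the same route as the paper's proof in Appendix~\ref{proof_systemPDE}: the same three-term decomposition of $\hbar\, d_{p_i}\psi_{l,i}$, recombination via Lemma~\ref{lemma-Q-omega} to extract $-\psi_{l+1,i}$ and the $Q^{(l+1)}$ term, substitution of $\widetilde{Q}^{(l+1)}$ via Lemma~\ref{def-Q-tilde-lem}, and evaluation of the resulting exact forms on the divisor, with the $k=i$ boundary term combining with the direct-derivative piece to produce the $(1/\alpha_i-\alpha_i)$ prefactor exactly as in the paper. The only point worth making explicit is that the paper's divisor integration is defined for degree-zero divisors (so the base-point contributions cancel automatically), which is the second of the two resolutions you offer.
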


\begin{proof} 
See Appendix~\ref{proof_systemPDE}.
\end{proof}

\subsection{KZ equations for $D=[z]-[\infty^{(\alpha)}]$}\label{5.3}

For some special choices of divisor $D$, one can recombine these equations in order to obtain interesting PDE's satisfied by the perturbative wave function $\psi(D,\hbar)$. For a generic divisor $D$, these PDE's include differential operators $\frac{\partial^k}{\partial \lambda^k}$ with $k > d$ (generically up to $d^2$). These equations are interesting on their own and discussed in \cite{BouchardEynard_QC} for a genus 0 curve and in \cite{GIS} for arbitrary genus. However, our aim here is to build a quantum curve. For this purpose, one is looking for divisors such that the PDE obtained involves only $\frac{\partial^k}{\partial \lambda^k}$ with $k \leq d$. There exist different ways to obtain such a divisor $D$\footnote{See for example \cite{MO19_hyper} for a choice of divisor with two points in the fiber above a common $\lambda$ in the case $d=2$.}. In the present paper, we shall consider a divisor of the type $D = [z]-[\infty^{(\alpha)}]$, with $\alpha\in \llbracket 1,d\rrbracket$. 
However, a divisor $D = [z]-[\infty^{(\alpha)}]$ with $\alpha\in \llbracket 1,d\rrbracket$ is not generic because $\infty^{(\alpha)}\in x^{-1}(\mathcal{P})$, and the purpose of this section is to prove how we may extend the previous generic results to this specific case by considering the limit of generic divisors $D=[z]-[p_2]$, when $p_2\to \infty^{(\alpha)}$. 

\bigskip

Let $z \in \widetilde{\Sigma_\mathcal{P}}$ be a generic point. Let $\alpha\in \llbracket 1,\ell_\infty\rrbracket$. When $D = [z]-[p_2]$, the definition of $\psi(D=[z]-[p_2],\hbar)$ involves an essential singularity as $p_2 \to \infty^{(\alpha)}$. One thus needs to regularize it by the procedure introduced in \cite{EO07}. We remind the reader of the functions introduced in equation \eqref{def-Vp} with $p=\infty^{(\alpha)}$,
\beq
V_{\infty^{(\alpha)}}(z') =-  \sum_{k=1}^{r_{\infty^{(\alpha)}}-1} \frac{t_{\infty^{(\alpha)},k}}{k} \zeta_{\infty^{(\alpha)}}^{-k}(z') + t_{\infty^{(\alpha)},0} \log (\zeta_{\infty^{(\alpha)}}(z')) .
\eeq
The definition is such that the singular part of $ydx$ at $\infty^{(\alpha)}$ is precisely given by $dV_{\infty^{(\alpha)}}$. In order to remove the essential singularity of $\psi(D=[z]-[p_2],\hbar)$ when $p_2\to \infty^{(\alpha)}$, we multiply $\psi(D=[z]-[p_2],\hbar)$ by $\exp\big(\frac{1}{\hbar}\big(V_{\infty^{(\alpha)}}(z)- \int_{p_2}^z dV_{\infty^{(\alpha)}}\big)\big)$ before considering the limit $p_2\to\infty^{(\alpha)}$. This motivates the following definition.

\bd[Regularization of the perturbative wave functions \cite{EO07}]\label{defpsireg}
For any $\alpha \in\llbracket 1,\ell_\infty\rrbracket$, let us define
\beaa
\psi^{\mathrm{reg}}(D=[z]-[\infty^{(\alpha)}],\hbar) 
&\coloneqq & \lim_{p_2\to \infty^{(\alpha)}}
\psi(D=[z]-[p_2],\hbar) \frac{e^{\hbar^{-1}V_{\infty^{(\alpha)}}(p_2)}}{x(p_2)-x(z)}  \sqrt{\frac{dx(p_2)}{d\zeta_{\infty^{(\alpha)}}(p_2)}},
\eeaa
which is worth
\beaa
\psi^{\mathrm{reg}}(D=[z]-[\infty^{(\alpha)}],\hbar) 
&=&  \exp \Bigg(\hbar^{-1}\bigg( V_{\infty^{(\alpha)}}(z)+ \int_{\infty^{(\alpha)}}^z (ydx-dV_{\infty^{(\alpha)}})\bigg)\Bigg) \cr
&& \frac{1}{E(z,\infty^{(\alpha)})\sqrt{dx(z)d\zeta_{\infty^{(\alpha)}}(\infty^{(\alpha)}) }} \cr
&& 
\exp\Bigg(  {\displaystyle  \sum_{h \geq 0} \sum_{n\geq 3\delta_{h,0}} }\frac{\hbar^{2h-2+n}}{n!}\int_{\infty^{(\alpha)}}^z\cdots\int_{\infty^{(\alpha)}}^z \omega_{h,n}  \Bigg) ,
\eeaa
where $E$ is the prime form on $\Sigma$.
Similarly, for $l\geq 0$, we define
\beaa
\psi_l^{\mathrm{reg}}(D=[z]-[\infty^{(\alpha)}],\hbar) 
&\coloneqq & \lim_{p_2\to\infty} \psi_{l,1}^{\mathrm{reg}}(D=[z]-[p_2],\hbar)  \frac{e^{\hbar^{-1}V_{\infty^{(\alpha)}}(p_2)}}{x(p_2)-x(z)}  \sqrt{\frac{dx(p_2)}{d\zeta_{\infty^{(\alpha)}}(p_2)}} \cr
&=& \Bigg( \sum_{h\geq0} \sum_{n\geq 3\delta_{h,0}} \frac{\hbar^{2h+n}}{n!} \int_{\infty^{(\alpha)}}^z\dots \int_{\infty^{(\alpha)}}^z \frac{\hat Q^{(l)}_{h,n+1}(z;z_1,\dots,z_n)}{dx(z)^l}\Bigg) \cr 
&&\psi^{\mathrm{reg}}(D=[z]-[\infty^{(\alpha)}],\hbar) ,
\eeaa
where the integrals of $\hat Q$ are convergent because the $(z_i)_{i\geq 1}$'s that are integrated from $p_2\to\infty^{(\alpha)}$ never belong to a factor $\omega_{0,1}$.
\hfill $\star$
\ed

\medskip

These objects are defined as functions of $z$ on the universal cover of $ \Sigma\setminus x^{-1}(\mathcal P) $, and are WKB formal series of the formal parameter $\hbar$ but we will often omit them in the notations in order to make the reading easier.

These regularized functions are solutions to a simpler version of the system of KZ equations given by equation \eqref{simple_systemPDE}.
\bt [KZ equations for regularized wave functions]\label{reg-KZ-eq} For any $\alpha \in\llbracket 1,\ell_\infty\rrbracket$, the regularized wave functions satisfy, for all $l \in \llbracket 0,d-1\rrbracket$ ,
\bea\label{reg-KZ-eqeq}
&&\!\!\!\!{\hbar} \frac{d}{dx(z)}  \psi_{l}^{\mathrm{reg}}(D=[z]-[\infty^{(\alpha)}],\hbar)  + \psi^{\mathrm{reg}}_{l+1}(D=[z]-[\infty^{(\alpha)}],\hbar) \cr
&&\!\!\!\!= \bigg[ {\displaystyle \sum_{h\geq 0} \sum_{n \geq 0}}\frac{ \hbar^{2h+n} }{n!} \sum_{P \in \mathcal{P}} \sum_{k\in S_P^{(l+1)}} \cr
&&\!\!\!\!\xi_P(x(z))^{-k} \Res_{\lambda \to P}    \xi_P(\lambda)^{k-1} \, d  \xi_P(\lambda)   \int_{z_1 =\infty^{(\alpha)}}^{z_1=z} \dots \int_{z_n  = \infty^{(\alpha)}}^{z_n=z} \frac{{Q}_{h,n+1}^{(l+1)}(\lambda;\mathbf{z})}{(d\lambda)^{l+1}} \bigg]  \psi^{\mathrm{reg}}(D=[z]-[\infty^{(\alpha)}],\hbar),\cr
&& 
\eea
where  the sets $\left(S_P^{(l)}\right)_{P\in \mathcal{P}, l\in\llbracket 1, l\rrbracket} $ are defined in equation \eqref{eq-def-S}.
 \et

\begin{proof}
See Appendix~\ref{proof_reg-KZ-eq}.
\end{proof}

\subsection{Rewriting KZ equations with linear operators}\label{KZ-linear-op}

It was noticed from the beginning of TR in \cite{CEO06, EO07} that TR differentials have the properties that their derivatives with respect to the moduli of the classical spectral curve are integrals similar to \eqref{eq-var-epsilon}. In other words we can trade derivatives with respect to moduli, for integrals on $\Sigma$. 
Here, the right hand side of KZ equations \eqref{reg-KZ-eqeq} contains an integral (a residue), and very often this can be reformulated as a derivative with respect to moduli of the classical spectral curve.
This route would allow to turn the KZ equation into a PDE involving derivatives with respect to the moduli which are familiar in the world of integrable systems. 
Such a PDE could be interesting by itself, however, our goal is to get a quantum curve, i.e.~an ODE. Hence, our aim is to get rid of the derivatives with respect to moduli or, in our case, to be able to remove the integrals of the RHS of the KZ equations.

Let us develop here integral operators that mimic the role of derivatives with respect to moduli.
This is based on the theory of generalized cycles introduced in \cite{GIS} but here we present a standalone shortened version which is only of notational nature. In the present context, we could avoid the introduction of these generalized cycles but this would lead to very long  expressions which we believe would be harder to read.

\subsubsection{Generalized integrals}
Let us now study the RHS of the KZ equations \eqref{reg-KZ-eqeq} for regularized perturbative wave functions.  
For this purpose, observe that the residue in \eqref{reg-KZ-eqeq} extracts the Laurent coefficient of  $\xi_P(\lambda)^{-k}$ in the expansion of the function $\frac{{Q}_{h,n+1}^{(l+1)}(\lambda;\mathbf{z})}{(d\lambda)^{l+1}}$ around $\lambda \to P$,
and since ${Q}_{h,n+1}^{(l+1)}(\lambda;\mathbf{z})$ is an algebraic expression of the $\omega_{h_i,n_i}$, the Laurent expansion coefficients of ${Q}_{h,n+1}^{(l+1)}(\lambda;\mathbf{z})$ are algebraic combinations of the Taylor expansion coefficients of the $\omega_{h_i,n_i}$.
The Taylor expansion coefficients of the $\omega_{h_i,n_i}$, can themselves be obtained as residues.
Let us introduce a few notations in order to define a set of useful residues following \cite{GIS}.

\bd [Generalized cycles]\label{evaluation_idea}
Let $o$ be a reference point in the universal cover of $\Sigma$, we define the set of generalized cycles as
\beq
\mathcal{E}\coloneqq\left\{\mathcal C_{p,k}\right\}_{p \in \Sigma, k \in \mathbb{Z}}\cup  \left\{\mathcal{C}_{o}^p \right\}_{p \in \Sigma} \cup \{\mathcal{A}_i ,\mathcal{B}_i\}_{i=1}^g,
\eeq
where the integration of a meromorphic form $\om$ along such cycles is defined as follows.
\begin{itemize}

\item  $\forall \ p\in \Sigma$, and $\forall  \ k\in \mathbb Z$,
one defines
\beq
\int_{\mathcal C_{p,k}}  \quad : \quad \omega \mapsto \Res_{p} \zeta_p^{-k} \ \omega\,.
\eeq

\item Let $\gamma$ be a Jordan arc from a point $o\in\Sigma$ to a  point $p\in\Sigma$. The integration
\beq
\int_{\mathcal C_{o}^p}  \quad : \quad \omega \mapsto \int_\gamma  \omega
\eeq
 is ill defined on forms $\omega$ that have poles in $\gamma$. One extends the definition to such forms as follows.
\begin{itemize}
\item If $\omega$ has no pole along $\gamma$, we define $\int_{\mathcal{C}_{o,p}}\omega=\int_\gamma \omega$.

\item If $\omega$ has poles along $\gamma$, since a meromorphic 1-form can have only discrete poles, we decompose $\gamma$ as a sum of arcs such that the poles of $\omega$ are at the extremities of the subarcs, in other words we only need to define the action when $p$ is a pole of $\omega$.
For $k\in \llbracket 0,1+\deg_p\omega\rrbracket$, we define $t_{p,k}=\underset{p}{\Res} \zeta_p^{k-1}\omega$ , and $V_p = -\underset{k=1}{\overset{\deg_p\omega}{\sum}} \frac{t_{p,k}}{k} \zeta_p^{-k}$, and we define
\beq
\int_{\mathcal{C}_o^p} \omega = \int_o^p \left(\omega-dV_p-t_{p,0}\frac{d\zeta_p}{\zeta_p}\right) - V_p(o) - t_{p,0} \ln\zeta_p(o),
\eeq
where we choose the branch of the logarithm such that the cut is in the direction opposite to the Jordan arc at $o$. 

\item The result is independent of how $\gamma$ is decomposed into subarcs.

\item This definition is additive under concatenation of Jordan arcs.
\end{itemize}
\item If $\gamma$ is a closed Jordan arc on $\Sigma$, we define the integral of $\omega$ on $\gamma$ by writing $\gamma$ as a sum of open arcs. The result is independent of how we decompose it into open arcs.
\end{itemize}
\hfill $\star$
\ed

\br
Following \cite{GIS}, one can consider the generalized cycles as elements of the algebraic dual of the infinite dimensional space of meromorphic forms on $\Sigma$. From this perspective, we sometimes use the following notation
\beq
\forall \, C \in \mathcal{E} \, , \; \langle C,\omega\rangle \stackrel{{\text{notation}}}{=} \int_C \omega.
\eeq
\er

Our purpose is to use generalized cycles acting only on a small subspace of meromorphic forms, those generated by the $(\omega_{h,n})_{h,n\geq 0}$. In the following we shall need to consider only generalized cycles of type $\mathcal C_{p,k}$, where $p\in x^{-1}(\mathcal P)$, open Jordan arcs ending at $p\in x^{-1}(\mathcal P)$, and closed non-contractible Jordan arcs (thus homologically equivalent to combinations of $\mathcal A_i$ and $\mathcal B_i$ cycles).
We remark the following: let $\omega=\omega_{h,n+1}(z,z_1,\dots,z_n)$ be a TR differential.
Considering its dependence on the first variable $\omega=\omega_{h,n+1}(\cdot,z_1,\dots,z_n)$, it has poles only at $z\in x^{-1}(\mathcal P)$ if $(h,n)=(0,0)$ or at $z_1$ if $(h,n)=(0,1)$ with vanishing residue, or at $\mathcal R$ if $2h+n>1$, with vanishing residues at $\mathcal R$.

Due to vanishing residues, integrals over Jordan arcs $\gamma$
\beq
\langle\gamma,\omega\rangle= \int_\gamma \omega
\eeq
actually depend only on the homotopy class of $\gamma$ on $\Sigma\setminus x^{-1}(\mathcal P)$.
On this subspace we may thus consider that cycles $\mathcal A_i$ or $\mathcal B_i$ in $H_1(\Sigma\setminus x^{-1}(\mathcal P),\mathbb Z)$  define generalized cycles.

In practice, for $p\in x^{-1}(\mathcal{P})$ and $k \in \mathbb{Z}$, $\int_{\mathcal{C}_{p,k}} \om$  extracts the coefficient of $(\zeta_{p})^{k-1} $ in the Laurent expansion of $\om$ at $p$. Another important remark is that only $\om_{0,1}$ has poles in $x^{-1}(\mathcal{P})$ and behaves as
\beq
\begin{cases}
\frac{\om_{0,1}}{d\zeta_{Z_i^{(\alpha)}}} = {\displaystyle \sum_{k=0}^{r_{Z_i^{(\alpha)}}-1} } t_{Z_i^{(\alpha)},k} \,\,\zeta_{Z_i^{(\alpha)}}^{-k-1} + O(1), & \;\; \hbox{around }  Z_i^{(\alpha)}, \cr
\frac{\om_{0,1}}{d\zeta_{\infty^{(\alpha)}}} =   {\displaystyle  \sum_{k=0}^{r_{\infty^{(\alpha)}}-1}} t_{\infty^{(\alpha)},k}\,\, \zeta_{\infty^{(\alpha)}}^{-k-1} + O(1), &  \;\;\hbox{around }  \infty^{(\alpha)}, \cr
\end{cases}
\eeq
so that, for all $(h,n)\in \mathbb{N}^2$,
\bea\label{eq-t-cycles-1}
\forall\, i\in\llbracket1,N\rrbracket \, , \, \forall\, \alpha\in \llbracket 1 , d\rrbracket  \, , \, \forall\, k \leq -r_{Z_i^{(\alpha)}} \, : \; \int_{\mathcal{C}_{Z_i^{(\alpha)},k}} \om_{h,n+1}(\cdot,\mathbf{z}) &=& 0\,,\cr
\forall\, i\in\llbracket1,N\rrbracket \, , \, \forall\, \alpha\in \llbracket 1 , d\rrbracket  \, , \, \forall\, k\in \llbracket -r_{Z_i^{(\alpha)}}+1,0\rrbracket \, : \; \int_{\mathcal{C}_{Z_i^{(\alpha)},k}} \om_{h,n+1}(\cdot,\mathbf{z}) &=& \delta_{h,0} \delta_{n,0} t_{Z_i^{(\alpha)},-k}\,,\cr
 \forall\, \alpha\in \llbracket 1 , d\rrbracket \, , \, \forall\,  k \leq -r_{\infty^{(\alpha)}} \, : \; \int_{\mathcal{C}_{\infty^{(\alpha)},k}} \om_{h,n+1}(\cdot,\mathbf{z}) &=& 0\,,\cr
 \forall\, \alpha\in \llbracket 1 , d\rrbracket \, , \, \forall\, k\in\llbracket -r_{\infty^{(\alpha)}}+1,0\rrbracket \, : \; \int_{\mathcal{C}_{\infty^{(\alpha)},k}} \om_{h,n+1}(\cdot,\mathbf{z}) &=& \delta_{h,0} \delta_{n,0} t_{\infty^{(\alpha)},-k}\,.\cr
&&
\eea

\subsubsection{Definition of the linear operators acting on TR differentials}

The purpose of this section is to rewrite the operators acting on the RHS of the KZ equations \ref{reg-KZ-eq} using the generalized integrals defined in the previous section. In order to be consistent, we need to define the action of the operators only on a subset of differentials forms that includes the TR differentials but also linear combinations, products, exponentials, inverses, etc.~of such quantities. To do so, we first introduce a formal algebra of symbols generated by the TR differentials and explain how the operators act in this algebra. Then we extend the definitions of the action similarly to what we would expect for the action of a derivation operator with respect to spectral times for quantities like exponentials, inverses, etc. that we shall require in the rest of the article. The main advantage of the symbols formalism is to avoid the definition of the operators on the whole set of meromorphic differentials that would be challenging.

\bd[Algebra of generalized integrals]
We define the graded commutative algebra $\check{\mathcal W}$  over $\mathbb C$, as the algebra freely generated 
 by a set of symbols consisting of a pair $(h,n)$ and a symbol $\int_{C_1}\dots\int_{C_n}$, labeled by generalized cycles $C_i\in \mathcal E$, which we denote
\beq
\int_{C_1}\dots\int_{C_n}\omega_{h,n}\,,
\eeq
and quotiented by the linearity relations for cycles $\int_{a_1 C_1+a_2 C_2}\omega_{h,n}=a_1\int_{C_1}\omega_{h,n}+a_2\int_{C_2}\omega_{h,n}$ and distributed in the product
\beq
\check{\mathcal W} = \mathbb C\left[\left\{\int_{C_1}\dots\int_{C_n}\omega_{h,n}\right\}_{h,n\geq 0} \right]_{} \quad / \ (\text{cycles linearity relations}).
\eeq

We define a linear map, called \emph{evaluation map}, that associates to a symbol its value in $\mathbb{C}$ following Definition~\ref{evaluation_idea},
$$
\begin{array}{rrcl}
\text{ev}: & \check{\mathcal W}  & \to & \mathbb{C} \\
&\int_{C_1}\dots\int_{C_n}\omega_{h,n} & \mapsto & \int_{z_1 \in C_1}\dots\int_{z_n \in C_n}\omega_{h,n}(z_1,\dots,z_n ),
\end{array}
$$
where the order of integration is from the rightmost to leftmost.
\hfill $\star$
\ed

Let us make some remarks.
\begin{itemize}
\item Notice that for all $h\geq 0$: $\omega_{h,0}$ is a symbol but is not an element of $\mathbb C$.
\item The algebra is commutative with respect to addition and multiplication, but the integrals themselves are not assumed commutative, i.e. $\int_{C_1}\int_{C_2} \neq \int_{C_2}\int_{C_1}$.
\item Since a generalized cycle $C$ might depend on a point $z$ on the universal cover of the classical spectral curve, the result of the evaluation map can be considered as a, possibly multi-valued, function on $\Sigma$.
\end{itemize}

\bd[Formal algebra]
We define the graded ring of formal Laurent power series, denoted $\check{\mathcal W}((\hbar))$, whose elements are sequences of elements of $\check{\mathcal W}$:
\beq
\check{\mathcal W}((\hbar))=\left\{ w=\sum_{k={k_{\text{min}}}}^{\infty} \hbar^k w_k\,, \ \qquad w_k\in \check{\mathcal W} \ , \ k_{\text{min}}\in \mathbb{Z} \right\}.
\eeq
The degree of $w\in\check{\mathcal W}((\hbar))$ is defined as the smallest non-zero power of $\hbar$,
\beq
 w=\sum_{k={k_{\text{min}}}}^{\infty} \hbar^k w_k \text{ with } k_{\text{min}}\neq 0\,\, \Rightarrow\,\, \deg w= k_{\text{min}}.
\eeq
We define the evaluation by acting term-wise, which results in formal power series of multilinear differential forms.
\hfill $\star$
\ed

\bd[Exponential symbols]
We enlarge our algebra with exponential symbols
\beq
e^{w} \, , \ \text{with} \ 
w\in \check{\mathcal W}[\hbar^{-1}],
\eeq
(i.e.~only negative or zero powers of $\hbar$), and we quotient by relations $e^{w_1+w_2}=e^{w_1}e^{w_2}$,
\beq
\tilde{\mathcal W}= \check{\mathcal W}((\hbar))\Big[\{e^w\}_{w\in \check{\mathcal W}[\hbar^{-1}]}\Big]/(e^{w_1+w_2}=e^{w_1}e^{w_2}).
\eeq
The evaluation map is defined by acting term-wise. 
\hfill $\star$
\ed

\br
Observe that in particular $0\in \check{\mathcal W}[\hbar^{-1}]$, so $e^0\in\tilde{\mathcal W}$, which we can call the identify element $1_{\tilde{\mathcal W}}=e^0$ of the product in $\tilde{\mathcal W}$. For every $w\in \tilde{\mathcal W}$, we identify $w^0=1_{\tilde{\mathcal W}}$.
\er

\br Note that if $w={\displaystyle \sum_{k={k_{\text{min}}}}^{\infty}} \hbar^k w_k$, with strictly positive $k_{\text{min}}>0$, then the exponential 
\beq
e^{w} = \sum_{n=0}^{\infty} \frac{1}{n!} w^n \in \check{\mathcal W}[\hbar] \subset \check{\mathcal W}((\hbar))\Big[\{e^w\}_{w\in \check{\mathcal W}[\hbar^{-1}]}\Big]
\eeq
is already defined in $\check{\mathcal W}((\hbar))\big[\{e^w\}_{w\in \check{\mathcal W}[\hbar^{-1}]}\big]$ and satisfies the quotient relation. That is why we defined new symbols only for negative powers of~$\hbar$.
\er

While $\tilde{\mathcal W}$ is large enough to define a symbolic version of the perturbative wave functions, it does not allow for the definition of the inverse of an element. The existence of the latter is useful for having a shorter presentation of the derivation of the quantum curve and thus we extend $\tilde{\mathcal W}$ by defining inverses of its elements.

\bd[Inverse symbols]
For any symbol $w \in \check{\mathcal W}\setminus\{0\}$, we define a symbol $\frac{1}{w}$ (also denoted $w^{-1}$) modding out by the relation
\beq
\forall \, w \in \check{\mathcal W}\setminus\{0\},\;\; w \, \frac{1}{w} = 1_{\tilde{\mathcal W}}.
\eeq
Finally, we denote 
\beq
\mathcal{W}\coloneqq\frac{\mathbb{C}\left[\left\{\int_{C_1}\dots\int_{C_n}\omega_{h,n}\right\}_{h,n\geq 0} , \left\{\frac{1}{w}\right\}_{w \in \check{\mathcal{W}}\setminus \{0\}}
\right]
((\hbar))\big[\{e^w\}_{w\in \check{\mathcal W}[\hbar^{-1}]}\big]}{ (\text{cycle linearity relations}\, , \,w\, \frac{1}{w} = 1_{\tilde{\mathcal W}} \, , \, e^{w_1+w_2}=e^{w_1}e^{w_2} )}.
\eeq
\hfill $\star$
\ed

Let us emphasize that the inverse of an element in $\mathcal{W}$ is understood from the perspective of WKB series in $\hbar$. This means that, for $w \in \check{\mathcal W}[\hbar^{-1}]$ and $w_k \in  \check{\mathcal W}$, for $k \geq k_{\text{min}}$, one has
\beq
\frac{1}{e^{w} \underset{k\geq k_{\text{min}}}{\sum} \hbar^k w_k} = e^{-w} \hbar^{-k_{\text{min}}} \sum_{l \geq 0} (-1)^l \left(\frac{1}{w_{k_{\text{min}}}}\right)^{l+1} \left(\sum_{k > k_{\text{min}}} w_k \hbar^{k-k_{\text{min}}}\right)^l.
\eeq

\bd[Operators $\left(\mathcal{I}_{C}\right)_{C\in \mathcal{E}}$ acting in $\mathcal W$]\label{DefActionW} 
For a generalized cycle $C\in\mathcal E$, we define the linear operator $\mathcal I_C:\mathcal W\to\mathcal W$ by its action on the symbols,
\bea
\forall\, (h,n)\in \mathbb{N}^2\,:\, \mathcal I_C \left[ \int_{C_1}\dots\int_{C_n}\omega_{h,n}\right] \coloneqq \int_{C_1}\dots\int_{C_n} \int_{C}\omega_{h,n+1}.
\eea
Notice that $\int_C$ is the rightmost.
We extend the action of the operator to any element of $\check{\mathcal{W}}$  by linearity and by the Leibniz rules for products of symbols,
\bea
\mathcal I_C\left[ \text{symbol 1}\times \text{symbol 2}\right] = (\mathcal I_C\left[ \text{symbol 1}\right])\times \text{symbol 2}+\text{symbol 1} \times (\mathcal I_C\left[\text{symbol 2}\right]).
\eea
For exponentials, we define for any  $v\in \check{\mathcal W}[\hbar^{-1}]$ and any operator $\mathcal{I}_{C}$, with $C\in \mathcal{E}$,
\beq \mathcal{I}_C \left[\exp(v)\right]\coloneqq \left(\mathcal{I}_C[v]\right) \,\, \exp(v) .
\eeq
It is easy to check that these operators are compatible with quotients and that they satisfy, for any $w \in \mathcal{W}$,
\beq
\mathcal I_C \left[\frac{1}{w}\right] = - \mathcal I_C \left[w\right] \frac{1}{w^2}.
\eeq
\hfill $\star$
\ed

\medskip

It is worth noticing the action of these operators on some specific elements of $\mathcal W$. Some simple computations lead to the following results.

\bl[Action on spectral times]
Let  the symbolic spectral times be defined by 
\beq
\forall\, p \in x^{-1}(\mathcal{P}) \, , \, \forall k \in \mathbb{N} \, , \; t_{p,k}^{\mathrm{symbol}} \coloneqq \int_{\mathcal C_{p,-k}} \omega_{0,1},
\eeq
where we use the same notation as the spectral time, thanks to
\beq
\forall\, p \in x^{-1}(\mathcal{P}) \, , \, \forall k \in \mathbb{N} \, , \; \mathrm{ev}. t_{p,k}^{\mathrm{symbol}} = t_{p,k} .
\eeq
We have
\beq
\forall\, (p,p') \in (x^{-1}(\mathcal{P}))^2 \, , \, \forall (k,k') \in \mathbb{N}^2 \, , \; \mathrm{ev}.\left(\mathcal I_{\mathcal C_{p',k'}} t^{\mathrm{symbol}}_{p,k} \right)
= \int_{\mathcal C_{p,-k}} \int_{\mathcal C_{p',k'}} \omega_{0,2}
=k \delta_{p,p'} \delta_{k,k'}.
\eeq
From now on we shall drop the superscript ${}^{\mathrm{symbol}}$ since the evaluation indeed gives the spectral times.
\el

\br
This lemma could give the impression that the operator $\mathcal I_{\mathcal C_{p,k}}$ acts as the differential operator $k\frac{\partial}{\partial t_{p,k}}$, but as we already mentioned, the differential operator $\frac{\partial}{\partial t_{p,k}}$ is ill-defined because the spectral times are not independent. 
The operator $\mathcal I_{\mathcal C_{p,k}} $ acting in $\mathcal W$ however is well defined, and is all that we shall need. 
But it may be helpful for the reader to keep in mind that the operators $\mathcal I_C$ mimic some differential operators with respect to moduli.
\er

For $k\geq 1$, we can express our canonical local coordinates as evaluations of symbols.
\beq
x(z)^{k} = -\sum_{\alpha=1}^d  \int_{z''=o^{(\alpha)}}^{z} \ \Res_{z'\to \infty^{(\alpha)}} \zeta_{\infty^{(\alpha)}}^{-k}(z') \omega_{0,2}(z',z'')
= - \mathrm{ev}.\left( 
\sum_{\alpha=1}^d \int_{\mathcal{C}_{o^{(\alpha)}}^z} \ \int_{\mathcal C_{\infty^{(\alpha)},k}}\omega_{0,2}\right)
\eeq
and 
\beq
\ln x(z)  
= \mathrm{ev}.\Bigg( -
\sum_{\alpha=1}^d \int_{\mathcal{C}_{o^{(\alpha)}}^z} \int_{\mathcal{C}_{o^{(\alpha)}}^{\infty^{(\alpha)}}} \omega_{0,2}\Bigg),
\eeq
where $\{ o^{(1)},\dots,o^{(d)}\}=x^{-1}(0)$.
Similarly, for any $i \in \llbracket1,N\rrbracket$,
\beq
\ln (x(z)-\Lambda_i)^{-1} =  \mathrm{ev}.\Bigg( -
\sum_{\alpha=1}^d \int_{\mathcal{C}_{\infty^{(\alpha)}}^z} \int_{\mathcal{C}_{Z_i^{(\alpha)}}^{\infty^{(\alpha)}}} \omega_{0,2}\Bigg),
\eeq
\beq
\forall k \geq 1 \, , \; (x(z)-\Lambda_i)^{-k} =  \mathrm{ev}.\Bigg( 
\sum_{\alpha=1}^d \int_{\mathcal{C}_{\infty^{(\alpha)}}^z} \ \int_{\mathcal C_{Z_i^{(\alpha)},k}}\omega_{0,2}\Bigg).
\eeq
This allows to define these as symbols, and define a symbolic potential.

Keeping in mind that the $\mathcal I_C$ should mimic derivatives, we have the important lemma that canonical coordinates are ``constant''.
\bl
One has, for $k \geq 1$,
\beq
 \mathrm{ev}.\ \mathcal I_C \left(x(z)^{k}\right)^{\mathrm{symbol}} =  \mathrm{ev}.\ \mathcal I_C \ln x(z)
= 0,
\eeq
and
\beq
\forall i \in \llbracket 1,N\rrbracket \, , \;  \mathrm{ev}.\ \mathcal I_C \left((x(z)-\Lambda_i)^{-k}\right)^{\mathrm{symbol}} =  \mathrm{ev}.\ \mathcal I_C \ln  \left(x(z)-\Lambda_i\right)
= 0.
\eeq
\el
\begin{proof}
Let us consider $k>0$, and write
\beq
\mathcal I_C \left(x(z)^{k}\right)^{\mathrm{symbol}} =  
-
\sum_{\alpha=1}^d \int_{\mathcal{C}_{o^{(\alpha)}}^z} \int_{\mathcal C_{\infty^{(\alpha)},k}}  \int_{C}\omega_{0,3},
\eeq
whose evaluation vanishes thanks to the linear loop equation from Theorem~\ref{loopeq}.
The proof is similar for the other cases.
\end{proof}

Further straightforward computations lead to the following properties.
\bl[Operator acting on potentials]
Define the symbolic local potentials for $\alpha \in \llbracket 1,\ell_\infty \rrbracket$
\beq
V_{\infty^{(\alpha)}}^{ \mathrm{symbol}}(z) = \sum_{\beta=1}^d \left(
 \int_{\mathcal C_{\infty^{(\alpha)},0}}\omega_{0,1} \ \int_{\mathcal{C}_{o^{(\beta)}}^z} \int_{\mathcal C_o^{\infty^{(\beta)}}} \omega_{0,2}
-\sum_{k\geq 1} \int_{\mathcal C_{\infty^{(\alpha)},-k}}\omega_{0,1} \ \int_{\mathcal{C}_{o^{(\alpha)}}^z} \int_{\mathcal C_{\infty^{(\beta)},k}} \omega_{0,2} 
 \right)
\eeq
and for $i \in \llbracket 1,N\rrbracket$ and $\alpha \in \llbracket 1,\ell_{\Lambda_i}\rrbracket$,
\beq
V_{Z_i^{(\alpha)}}^{ \mathrm{symbol}}(z) = \sum_{\beta=1}^d \left(
 \int_{\mathcal C_{Z_i^{(\alpha)},0}}\omega_{0,1} \  \int_{\mathcal{C}_{\infty^{(\beta)}}^z} \int_{\mathcal C_o^{Z_i^{(\beta)}}} \omega_{0,2}
-\sum_{k\geq 1} \int_{\mathcal C_{Z_i^{(\alpha)},-k}}\omega_{0,1} \  \int_{\mathcal{C}_{\infty^{(\beta)}}^z} \int_{\mathcal C_{Z_i^{(\beta)},k}} \omega_{0,2} 
 \right).
\eeq
We have, for any $p\in x^{-1}(\mathcal{P})$,
\beq
\mathrm{ev}. V_p^{ \mathrm{symbol}}(z) = V_p(z),
\eeq
 for $k\geq 1$,
\beq
\mathrm{ev}.\left(\mathcal I_{\mathcal C_{p',k}} V^{\mathrm{symbol}}_p(z) \right)=\delta_{p,p'} \zeta_{p}(z)^{-k}
\eeq
and, for $k=0$,
\beq
\mathrm{ev}.\left(\mathcal I_{\mathcal C^{p'}_o} dV^{\mathrm{symbol}}_p \right)= \delta_{p,p'} \zeta_{p}(z)^{-1}.
\eeq
\el

\begin{proposition}[Commutation of the operators] \label{prop-commutator}
For any $n \geq 2$, for any permutation $\sigma \in \mathfrak{S}_n$, and any element $(C_1,\dots,C_n) \in \left(\mathcal{E}\right)^n$, one has for any $ (h,m) \in \mathbb{N}^2$, $(\td{C}_1,\dots,\td{C}_m) \in \left(\mathcal{E}\right)^m$,
\beq
\mathrm{ev}.\mathcal{I}_{C_1}  \mathcal{I}_{C_2}  \dots  \mathcal{I}_{C_n}\int_{\td{C}_1}\cdots\int_{\td{C}_m} {\om}_{h,m} = \mathrm{ev}.\mathcal{I}_{C_{\sigma(1)}}   \mathcal{I}_{C_{\sigma(2)}}  \dots  \mathcal{I}_{C_{\sigma(n)}}\int_{\td{C}_1}\cdots\int_{\td{C}_m} \om_{h,m},
\eeq
if $2h-2+n+m\neq 0$. When $2h-2+n+m=0$, this implies $h=0$, $n=2$, $m=0$, and we have 
the following commutators
\bea
\forall\, p \in x^{-1}(\mathcal{P}) \, , \, \forall\, k\geq 1 \, : \; \mathrm{ev}. \left[\mathcal{I}_{\mathcal{C}_{p,k}} , \mathcal{I}_{\mathcal{C}_{p,-k}}\right] \om_{0,0} &=& k,\cr
\forall\, p \in x^{-1}(\mathcal{P}) \, , \,  \, : \; \mathrm{ev}. \left[\mathcal{I}_{\mathcal{C}_o^{p}},\mathcal{I}_{\mathcal{C}_{p,0}} \right] \om_{0,0} &=& 1,\cr
\forall\, 1=1,\dots, \genus \, , \,  \, : \; \mathrm{ev}. \left[\mathcal{I}_{\mathcal{B}_i} , \mathcal{I}_{\mathcal{A}_j}\right] \om_{0,0} &=& 2\pi\ii \ \delta_{i,j}.
\eea
\end{proposition}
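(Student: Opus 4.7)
The plan is to unfold both sides of the claimed equality into iterated integrals of the same topological recursion differential $\omega_{h,m+n}$ and then use its $\mathfrak{S}_{m+n}$-symmetry together with Fubini-type swapping of integrations.

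By iterating Definition~\ref{DefActionW}, one gets in $\mathcal{W}$
$$\mathcal{I}_{C_1}\mathcal{I}_{C_2}\cdots\mathcal{I}_{C_n}\int_{\tilde C_1}\cdots\int_{\tilde C_m}\omega_{h,m}=\int_{\tilde C_1}\cdots\int_{\tilde C_m}\int_{C_n}\int_{C_{n-1}}\cdots\int_{C_1}\omega_{h,m+n},$$
since each $\mathcal{I}_{C_i}$ inserts an innermost integration while incrementing the second index of $\omega$ by one. Evaluating this symbol produces a multiple contour/residue integral of the single $\mathfrak{S}_{m+n}$-symmetric differential $\omega_{h,m+n}$.

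When $n\geq 2$ and $(h,n,m)\neq(0,2,0)$ one has $2h-2+n+m>0$, so $\omega_{h,m+n}$ is neither $\omega_{0,1}$ nor $\omega_{0,2}$. By the properties recalled in Section~\ref{sec-tr} and Lemma~\ref{lemRamptsPoles}, $\omega_{h,m+n}$ is fully symmetric, holomorphic at $x^{-1}(\mathcal{P})$ and on every diagonal $\{z_i=z_j\}$, with poles only at $\mathcal{R}$ carrying vanishing residues. This suffices to interchange any two integrations over the $C_i$'s: for two closed cycles or open arcs this is the usual Fubini theorem, while for two residue-type cycles $\mathcal{C}_{p,k}$, $\mathcal{C}_{p',k'}$ one expands $\omega_{h,m+n}$ as a bi-Laurent series in the local coordinates $\zeta_p,\zeta_{p'}$, which is well-defined precisely because there is no diagonal singularity even when $p=p'$, and commutes the two residue extractions termwise. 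Combined with the permutation symmetry of $\omega_{h,m+n}$, this proves the first part of the statement for any $\sigma\in\mathfrak{S}_n$.

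In the exceptional case $(h,n,m)=(0,2,0)$, non-commutativity arises entirely from the double pole of the Bergman kernel $\omega_{0,2}$ on the diagonal, which is exactly the obstruction to the bi-Laurent argument above. The commutator reads
$$\mathrm{ev}.\left[\mathcal{I}_{C_1},\mathcal{I}_{C_2}\right]\omega_{0,0}=\mathrm{ev}.\left(\int_{z_2\in C_2}\int_{z_1\in C_1}\omega_{0,2}(z_1,z_2)-(C_1\leftrightarrow C_2)\right),$$
and each of the three stated identities reduces to a short local computation: for $(\mathcal{C}_{p,k},\mathcal{C}_{p,-k})$ with $k\geq 1$ one uses the expansion $\omega_{0,2}=\frac{d\zeta_p(z_1)\,d\zeta_p(z_2)}{(\zeta_p(z_1)-\zeta_p(z_2))^2}+\text{holomorphic}$ near $(p,p)$ and extracts residues in both orders; for $(\mathcal{C}_o^p,\mathcal{C}_{p,0})$ the inner integration of $\omega_{0,2}(\cdot,z_2)$ along a path from $o$ to $p$ produces a third-kind differential in $z_2$ with simple pole and residue $+1$ at $p$; for $(\mathcal{B}_i,\mathcal{A}_j)$ one invokes the defining properties of the normalized Bergman kernel, $\oint_{\mathcal{A}_i(z_1)}\omega_{0,2}(z_1,z_2)=0$ and $\oint_{\mathcal{B}_i(z_1)}\omega_{0,2}(z_1,z_2)=2\pi i\,\omega_i(z_2)$. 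In each case, one ordering picks up the diagonal contribution and yields the stated constant, while the other vanishes because the inner integrand is then regular at the outer integration locus. The only remaining subtlety is to check that the regularizations in Definition~\ref{evaluation_idea} for open arcs ending at poles of $\omega_{0,1}$ do not interfere with the swap argument, which is automatic since these regularizations are supported on the polar part of $\omega_{0,1}$, a differential that does not appear for $m+n\geq 2$.
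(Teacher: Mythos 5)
Your proposal is correct. The paper itself supplies no proof of Proposition~\ref{prop-commutator} (it is presented as one of the ``straightforward computations'' following the surrounding lemmas), and your argument --- unfolding both sides into iterated generalized-cycle integrals of the single symmetric differential $\om_{h,m+n}$, observing that for $2h-2+n+m\geq 1$ this differential has no diagonal or $x^{-1}(\mathcal P)$ singularities so all integrations commute, and then isolating the double pole of $\om_{0,2}$ on the diagonal as the sole source of the three exceptional commutators --- is precisely the computation the authors intend.
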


Let us now define the symbolic partner to the prime form.

\bd[Symbolic prime form]
Let $D = {\displaystyle \sum_{i=1}^s} a_i[p_i]$ be a divisor of degree 0 on the universal cover. We define in $\mathcal W$
\beq
E^{\mathrm{symbol}}(D) \coloneqq \exp{\Big(-\frac12 \int_D\int_D \omega_{0,2}\Big)}
\eeq
and the base prime form $E_0(D)$ as
\beq
E_0^{\mathrm{symbol}}(D) \coloneqq \exp{\Big(-\sum_{i=1}^{s-1} \sum_{j=i+1}^s a_i a_j  \sum_{\alpha=1}^d  \int_D\int_{D+[p_i^{(\alpha)}]-[p_i]} \omega_{0,2}\Big)},
\eeq
where we choose $d$ Jordan arcs $[p_i^{(\alpha)}]-[p_i]$ starting at $p_i$ and ending at all preimages $x^{-1}(x(p_i))=\{p_i^{(1)},\dots,p_i^{(d)}\}$.
Their evaluations are linked to the usual prime forms as follows.
\beq
E(D)\coloneqq\mathrm{ev}.E^{\mathrm{symbol}}(D) = \prod_{i=1}^{s-1} \prod_{j=i+1}^s E(p_i,p_j)^{-a_i a_j}  \prod_{i=1}^s d\zeta_{p_i}(p_i)^{\frac12 a_i^2},
\eeq
with $E$ the prime form on $\Sigma$, and
\begin{align}
E_0(D)&\coloneqq\mathrm{ev}.E^{\mathrm{symbol}}_0(D) =  \prod_{i=1}^{s-1} \prod_{j=i+1}^s (x(p_i)-x(p_j))^{-a_i a_j} \nonumber \\
&=  \prod_{i=1}^{s-1} \prod_{j=i+1}^s E_0(x(p_i),x(p_j))^{-a_i a_j}   \prod_{i=1}^s dx(p_i)^{\frac12 a_i^2},
\end{align}
where $E_0(x,x')=(x-x')/\sqrt{dx dx'}$ is the prime form on $\mathbb P^1$.
\hfill $\star$
\ed

Notice that $\mathrm{ev}.\mathcal I_{C} E^{\mathrm{symbol}}_0(D)=0$, i.e.~the base prime form is ``constant'', while  the prime form on $\Sigma$ is not: $\mathrm{ev}.\mathcal I_{C} E^{\mathrm{symbol}}(D)\neq 0$.
From now on, we shall use the same notation for the base prime form and its evaluation, and treat it as a scalar in $\mathcal W$.

\subsubsection{Action on regularized wave functions and rewriting of the KZ equations}
In order to act with the operators $\mathcal{I}_{{C}}$ on the wave functions, one needs to define a symbolic version of the latter.

\bd[Symbolic perturbative wave function]
If $D = {\displaystyle \sum_{i=1}^s} a_i[p_i]$ is a divisor on the universal cover, we define
\beq
\psi^{\mathrm{symbol}}(D,\hbar)\coloneqq  \exp{\left(\sum_{h\geq 0}\sum_{n\geq 0} \frac{\hbar^{2h-2+n}}{n!} \int_D\dots\int_D \omega_{h,n}\right)} \ \ \prod_{i=1}^{s-1} \prod_{j=i+1}^s (x(p_i)-x(p_j))^{-a_i a_j}\quad \in \mathcal W.
\eeq
Equivalently we have,
\beq \label{def-psi-symb}
\psi^{\mathrm{symbol}}(D,\hbar)
= \exp{\left(\sum_{(h,n)\neq(0,2)} \frac{\hbar^{2h-2+n}}{n!} \int_D\dots\int_D \omega_{h,n}\right)} \ \ E^{\mathrm{symbol}}_0(D) \frac{1}{E^{\mathrm{symbol}}(D)}
\quad \in \mathcal W.
\eeq

In the case $D=[z]-[p_2]$, and if we want to set $p_2=\infty^{(\alpha)}$, we define the regularized wave function by just removing the $E_0$ symbol (the only one that is ill-defined at $x(p_2)=\infty$)
\bea
\psi^{\mathrm{reg \; symbol}}([z]-[\infty^{(\alpha)}],\hbar)
= \left[ \psi^{\mathrm{symbol}}([z]-[p_2],\hbar) \, \frac{1}{E_0^{\mathrm{symbol}}([z]-[p_2])} \right]_{p_2=\infty^{(\alpha)}}.
\eea
Its evaluation is
\bea
\mathrm{ev}.\psi^{\mathrm{reg\; symbol}}([z]-[\infty^{(\alpha)}],\hbar)
&=& \exp{\left(\sum_{(h,n)\neq (0,1),(0,2)} \frac{\hbar^{2h-2+n}}{n!} \int_{\infty^{(\alpha)}}^z\dots\int_{\infty^{(\alpha)}}^z \omega_{h,n}\right)}  \cr
&& \exp\left(\hbar^{-1}\left(V_{\infty^{(\alpha)}}(z) + \int_{\infty^{(\alpha)}}^z (\omega_{0,1}-dV_{\infty^{(\alpha)}}) \right)\right) \cr
&& \frac{1}{E(z,\infty^{(\alpha)}) \sqrt{dx(z) d\zeta_{\infty^{(\alpha)}}(\infty^{(\alpha)})}} \cr
&=& \psi^{\mathrm{reg}}([z]-[\infty^{(\alpha)}],\hbar).
\eea
Let us finally define the symbols, for all $l \geq 0$,
\beq
\psi_l^{\mathrm{reg\;  symbol}}([z]-[\infty^{(\alpha)}],\hbar)\coloneqq 
\sum_{\boldsymbol{\beta} \underset{l}{\subseteq} \left(x^{-1}(x(z))\setminus\{z\}\right)} \frac{1 }{l!} \left( \prod_{j=1}^l \mathcal{I}_{\mathcal{C}_{\beta_j,1}} \right) \psi^{\mathrm{reg \; symbol}}([z]-[\infty^{(\alpha)}],\hbar).
\eeq
\hfill $\star$
\ed

Notice that $\mathcal I_C$ acts on the coordinates, and thus on $E_0$ as if they were scalar.
Hence acting with $\mathcal I_C$, $C\in \mathcal E$, we get
\bea\label{ActionOnPsireg} 
\mathcal{I}_C \left[\psi^{\mathrm{reg\; symbol}}(D=[z]-[\infty^{(\alpha)}],\hbar)\right]
&=&\Bigg({\displaystyle  \sum_{h \geq 0} \sum_{n\geq 0} }\frac{\hbar^{2h-2+n}}{n!}\mathcal{I}_C \left[\int_{\infty^{(\alpha)}}^z\cdots\int_{\infty^{(\alpha)}}^z {\omega}_{h,n} \right]\Bigg) \cr
&&  \psi^{\mathrm{reg\; symbol}}(D=[z]-[\infty^{(\alpha)}],\hbar).
\eea  

\br\label{WKBStable} Note that in terms of formal expansions in $\hbar$, we get that 
\beqq\mathrm{ev}.\left(\hbar^2 \mathcal{I}_C \left[\psi^{\mathrm{reg\; symbol}}(D=[z]-[\infty^{(\alpha)}],\hbar)\right]\right)\eeqq 
remains of the same form as the initial perturbative wave functions, i.e.~a WKB series expansion in $\hbar$ of the form $\exp\left(\hbar^{-2}f_{-2}+\hbar^{-1} f_{-1}\right) \underset{k=0}{\overset{\infty}{\sum}} f_k \hbar^k$.   
\er

It is also important to notice that the symbolic wave functions $\psi_l^{\mathrm{reg \; symbol}}([z]-[\infty^{(\alpha)}],\hbar)$ depend on a point $z$ on the universal cover of $\Sigma$ because it involves symbols of the form $\int_{\infty^{(\alpha)}}^z \om$. The latter can be interpreted locally around any point $z_0$ through their series expansion
\beq
\int_{\infty^{(\alpha)}}^z \om = \sum_{k=-\infty}^\infty \zeta_{z_0}(z)^k \, \om_k + \ln(\zeta_{z_0}(z)) \, \om_{\ln},
\eeq
where the coefficients $(\om_k)_{k\in \mathbb{Z}}$ and $\om_{\ln}$ are symbols of the form $\int_{\mathcal{C}_{z_0,l}} \om$. In particular, this allows to consider the action of $\frac{d}{dx(z)}$ on such symbols acting only on the coefficients $\zeta_{z_0}(z)^k$ of this expansion. Since these local coordinates are constants with respect to the action of the operators $\mathcal{I}_{C}$, the action of $\mathcal{I}_{C}$ on such a Laurent series expansion is interpreted as the action on the coefficients $\om_k$.

Using the previous definitions, we may use the operators $\{\mathcal{I}_{C}\}_{C\in \mathcal{E}_\alpha}$ to write the regularized KZ equations \eqref{reg-KZ-eq} in a simpler form. 

\bt \label{final-KZ}For any $\alpha\in\llbracket 1,\ell_\infty \rrbracket$, the regularized wave functions satisfy the KZ equations for all $l\in \llbracket 0,d-1\rrbracket$,
\beq \label{eq-final-KZ}
{\hbar} \frac{d}{dx(z)}  \psi_{l}^{\mathrm{reg \; symbol}}([z]-[\infty^{(\alpha)}])   +\psi_{l+1}^{\mathrm{reg \; symbol}}([z]-[\infty^{(\alpha)}])   
=  \widetilde{\mathcal{L}}_l(x(z)) \left[\psi^{\mathrm{reg \; symbol}}([z]-[\infty^{(\alpha)}]) \right], 
\eeq
where 
\beq
\widetilde{\mathcal{L}}_l(x(z)) = \sum_{P \in \mathcal{P}} \sum_{k \in S_P^{(l+1)}}  \xi_P(x(z))^{-k} \widetilde{\mathcal{L}}_{P,k,l} 
\eeq
 are the multi-linear operators with $\td{\mathcal L}_{P,k,l}$ defined by 
\bea
\widetilde{\mathcal L}_{P,k,l}
&\coloneqq &   \epsilon_P^{l+1}
\Bigg[
 \xi_P(x(z))^{-(l+1)\epsilon_P}   
\sum_{\ell'=0}^{l+1}  \ \sum_{\nu'\subset_{\ell'} \llbracket 1,d \rrbracket} \ \ 
\prod_{j\in \nu'}
\bigg(
\sum_{m=0}^{r_{P^{(j)}}-1}  
\frac{t_{P^{(j)},m}}{d_{P^{(j)}}} \ \xi_P^{-\frac{m}{d_{P^{(j)}}}}   \bigg)
\quad \cr
&& 
\sum_{0\leq \ell'' \leq \frac{l+1-\ell'}{2}}\;\sum_{\substack{\nu''\in\mathcal{S}^{(2)}( \llbracket 1,d \rrbracket\setminus \nu') \\ l(\nu'')=\ell''}}
\ \ 
\prod_{i=1}^{\ell''}  \frac{\hbar^2R(P)_{\nu''_i}}{d_{P^{(\nu_{i,+}'')}} d_{P^{(\nu_{i,-}'')}} } \cr
&&
\sum_{\nu \underset{l+1-\ell'-2\ell''}{\subseteq} \llbracket 1,d \rrbracket\setminus (\nu'\cup\nu'')}
\;\;\prod_{j\in \nu} 
\bigg(\hbar^2\sum_{m=1}^\infty \frac{ \xi_P^{\frac{m}{d_{P^{(j)}}}}}{d_{P^{(j)}}}   \mathcal I_{\mathcal C_{P^{(j)},m}} \bigg)  \Bigg]_{-k}  \cr
&&
-\hbar \delta_{P,\infty} \frac{{\epsilon_\infty^{l+1}}}{d_{\infty^{(\alpha)}}}
\Bigg[
 \xi_\infty(x(z))^{-(l+1)\epsilon_\infty}    
\!\sum_{\ell'=0}^{l} \, \sum_{\nu'\subset_{\ell'} \llbracket 1,d \rrbracket \setminus \{\alpha\}} \,
\prod_{j\in \nu'}
\bigg(
\sum_{m=0}^{r_{\infty^{(j)}}-1}  
\frac{t_{\infty^{(j)},m}}{d_{\infty^{(j)}}} \ \xi_\infty^{-\frac{m}{d_{\infty^{(j)}}}}   \bigg)
\quad \cr
&& 
\sum_{0\leq \ell'' \leq \frac{l+1-\ell'}{2}}\;\sum_{\substack{\nu''\in\mathcal{S}^{(2)}( \llbracket 1,d \rrbracket\setminus (\nu' \cup \{\alpha\}))\\ l(\nu'')=\ell''}}
\ \ 
\prod_{i=1}^{\ell''}  \frac{\hbar^2R(\infty)_{\nu''_i}}{d_{\infty^{(\nu_{i,+}'')}} d_{\infty^{(\nu_{i,-}'')}} } \cr
&&
\sum_{\nu \underset{l-\ell'-2\ell''}{\subseteq} \llbracket 1,d \rrbracket\setminus (\nu'\cup\nu''\cup \{\alpha\})}\;\;
\prod_{j\in \nu} 
\bigg(\hbar^2\sum_{m=1}^\infty \frac{ \xi_\infty^{\frac{m}{d_{\infty^{(j)}}}}}{d_{\infty^{(j)}}}   \mathcal I_{\mathcal C_{\infty^{(j)},m}} \bigg)  \Bigg]_{-k}, 
\eea
where we have denoted $\mathcal{S}^{(2)}(A)\subset \mathcal{S}(A)$ the set of partitions of a set $A$ where all blocks have size~$2$ and the sets $\left(S_P^{(l)}\right)_{P\in \mathcal{P}, l\in\llbracket 1, l\rrbracket} $ are defined in equation \eqref{eq-def-S}.
For every $\nu''_i=\{\pi_{+},\pi_{-}\}$ a pair of distinct indices  $\pi_+\neq\pi_- \in \llbracket 1,d \rrbracket$, we defined
\beq
R(P)_{(\pi_+,\pi_-)} \coloneqq \delta_{P^{(\pi_+)},P^{(\pi_-)}}  \ \frac{\rho_{P^{(\pi_+)}}^{\pi+-\pi_-}}{\left(1-\rho_{P^{(\pi_+)}}^{\pi_+-\pi_-}\right)^2}
\eeq
and $\rho_p$ is a root of unity
\beq
\rho_p = e^{\frac{2\pi\ii}{d_p}}.
\eeq
The subscript $[\cdot]_{-k}$ means keeping only the powers of $\xi_P(x(z))^{-k}$ in the sum, which implies that the sums $\underset{m=1}{\overset{\infty}{\sum}}$  are in fact reduced to finite sums, since
\beq
\frac{m}{d_{P^{(j)}}}\leq -k+(l+1)\epsilon_P + \sum_{p\in x^{-1}(P)} (r_p-1).
\eeq
After evaluation, these KZ equations read
\beq \label{eq-final-KZ-eval}
{\hbar} \frac{d}{dx(z)}  \psi_{l}^{\mathrm{reg}}([z]-[\infty^{(\alpha)}])   +\psi_{l+1}^{\mathrm{reg}}([z]-[\infty^{(\alpha)}])   
= \mathrm{ev} . \ \widetilde{\mathcal{L}}_l(x(z)) \left[\psi^{\mathrm{reg \; symbol}}([z]-[\infty^{(\alpha)}]) \right].
\eeq
\et

\begin{proof}In Appendix~\ref{appB3}. The proof is based on the idea that the residue in Theorem~\ref{reg-KZ-eq} extracts Taylor series coefficients, which can themselves be written as generalized cycle integrals, i.e.~actions of operators $\mathcal I_C$. The Taylor coefficients of $\omega_{0,1}$ and $\omega_{0,2}$ require special treatments and give the spectral times and the $R(P)$ factors.
\end{proof}

\subsection{Monodromies of the perturbative solutions}\label{5.5}

Some important properties of the regularized perturbative wave functions $\psi^{\mathrm{reg}}_{l}([z]-[\infty^{(\alpha)}],\hbar)$ are given by their behavior when $z$ goes around a cycle $\gamma \in \pi_1(\Sigma\setminus x^{-1}(\mathcal P))$. Remember that $\psi(D,\hbar)$ is defined for a divisor on the universal cover, i.e.~$D$ is a Jordan arc, starting at $\infty^{(\alpha)}$ and ending at $z\in\Sigma\setminus\mathcal P$. If $\gamma$ is a closed Jordan arc, starting at $z$ and ending at $z$, $D+\gamma$ denotes another Jordan arc, obtained by concatenation, also starting at $\infty^{(\alpha)}$ and ending at $z$.

We remind that admissible data consists of the classical spectral curve together with a choice of a basis of $H_1(\Sigma\setminus x^{-1}(\mathcal P),\mathbb Z)$, made of small circles $\mathcal C_{p}$ around the poles $p\in x^{-1}(\mathcal P)$ (they are equivalent to the generalized cycles  $2\pi\ii\mathcal C_{p,0}$), and completed by $2g$ Jordan loops
 $(\mathcal{A}_i,\mathcal{B}_i)_{i=1}^g$, such that on $\Sigma$ these $2g$ Jordan loops are representative of the homology basis cycles  $(\mathcal{A}_i,\mathcal{B}_i)_{i=1}^g$. We have the following monodromies of the perturbative wave functions along the elements of this basis.

\bl \label{lemma-mono-pert}
Let $\alpha\in \llbracket 1,d\rrbracket$. One has, for any $l\in \llbracket 0, d-1\rrbracket$,
\beq\label{eq-monodromyCp}
 \forall\, p\in x^{-1}(\mathcal P)\,:\, \psi_{l}([z+\mathcal{C}_p]-[\infty^{(\alpha)}],\hbar) = (-1)^{\delta_{p,\infty^{(\alpha)}}} e^{\frac{2 \pi i t_{p,0}}{\hbar}}\psi_{l}([z]-[\infty^{(\alpha)}],\hbar),
\eeq
\beq\label{eq-monodromyA}
 \forall\, j\in\llbracket 1,g\rrbracket\,:\, \psi_{l}([z+\mathcal{A}_j]-[\infty^{(\alpha)}],\hbar) = e^{\frac{2 \pi i \epsilon_j}{\hbar}}\psi_{l}([z]-[\infty^{(\alpha)}],\hbar)
\eeq
and
\beq\label{eq-monodromyB}
\psi(D+\mathcal B_i,\hbar) 
= \exp\Bigg( {\displaystyle \sum_{(h,n,m) \in \mathbb{N}^3}} \frac{\hbar^{2h-2+n+m}}{n!m!} \overbrace{\int_D\dots\int_D}^n \overbrace{\int_{\mathcal B_i}\dots\int_{\mathcal B_i}}^m \omega_{h,n+m}\Bigg) .
\eeq
\el

\begin{proof}
Let $z\in \Sigma\setminus x^{-1}(\mathcal P)$, $D = [z]-[\infty^{(\alpha)}]$ and let $\{z,z^{(2)},\dots,z^{(d)}\}=x^{-1}(x(z))$.
Using that $[\mathcal I_{\mathcal C{z^{(i_j)},1}},\mathcal I_D]=0$, we have
\beq\label{psilfromI1psi}
\psi_{l,1}(z,D,\hbar) 
= \sum_{2\leq i_1<\dots < i_l\leq d } \prod_{j=1}^l d\zeta_{z^{(i_j)}}(z^{(i_j)}) \ \left( \prod_{j=1}^l  \mathcal I_{\mathcal C_{z^{(i_j)},1}} \right) \  \psi(D,\hbar).
\eeq
Notice also that $[\mathcal I_{\mathcal C{z^{(i_j)},1}},\mathcal I_\gamma]=0$, for $\gamma=\mathcal A_i$, $\gamma=\mathcal B_i$ or $\gamma=\mathcal C_p$.
In other words we need to prove the lemma only for $l=0$.

For $\gamma=\mathcal A_i$, we  have $D\cap \mathcal A_i=0$, so we get
\beq
\psi(D+\mathcal A_i,\hbar) 
= \exp\Bigg({\displaystyle \sum_{(h,n,m) \in \mathbb{N}^3}} \frac{\hbar^{2h-2+n+m}}{n!m!} \overbrace{\int_D\dots\int_D}^n \overbrace{\int_{\mathcal A_i}\dots\int_{\mathcal A_i}}^m \omega_{h,n+m}\Bigg).
\eeq
Notice that if $\gamma=\mathcal A_i$, we have $\int_\gamma \omega_{0,2}(z,\cdot)=0$, which implies (general property of TR invariants in  \cite{EO07}) that $\int_\gamma \omega_{h,n}(z_1,\dots,z_{n-1},\cdot)=0$, for all $(h,n)\neq (0,1)$. For $(h,n)=(0,1)$, we have by definition
\beq
\int_{\mathcal A_i} \omega_{0,1} = 2\pi\ii \epsilon_i.
\eeq
This implies that the sum over $m$ has only 2 non vanishing terms: $m=0$ and $m=1$, which gives
\beq
\psi(D+\mathcal A_i,\hbar) 
= e^{\hbar^{-1} \int_{\mathcal A_i} \omega_{0,1}} \psi(D,\hbar)
= e^{2\pi\ii \hbar^{-1} \epsilon_i} \psi(D,\hbar).
\eeq
The proof for $\gamma=\mathcal C_p$ is similar, the only difference being that
\beq
\mathcal C_p \cap D = \frac{1}{2\pi\ii} \delta_{p,\infty^{(\alpha)}},
\eeq
so that
\beq
\psi(D+\mathcal C_p,\hbar) 
= e^{\pi\ii \delta_{p,\infty^{(\alpha)}}} \  e^{2\pi\ii \hbar^{-1} t_{p,0}} \psi(D,\hbar),
\eeq
leading to the result.

For $\gamma=\mathcal B_i$, we  have $D\cap \mathcal B_i=0$, so we get the result.
\end{proof}

Let us now recall that the $\mathcal B_i$ period of $\om_{h,n+1}$ is equal to the variation of $\om_{h,n}$ with respect to $\epsilon_i$ thanks to \eqref{eq-var-epsilon}. Hence, one can formally interpret \eqref{eq-monodromyB} as
\beq
\psi(D+\mathcal B_i,\hbar) 
= \exp\Bigg({\displaystyle \sum_{(h,n) \in \mathbb{N}^2}} \frac{\hbar^{2h-2+n}}{n!} \overbrace{\int_D\dots\int_D}^n {\displaystyle \sum_{m \geq 0}} \frac{1}{m!} \left(\hbar \frac{\partial}{\partial \epsilon_i} \right)^m \omega_{h,n}\Bigg), 
\eeq
where we recognize the action of the shift operator $e^{ \hbar \frac{\partial}{\partial \epsilon_i}}$ leading to the notation
\beq
\psi_{l}([z+\mathcal{B}_j]-[\infty^{(\alpha)}],\hbar) =   e^{ \hbar \frac{\partial}{\partial \epsilon_j}} \psi_{l}([z]-[\infty^{(\alpha)}],\hbar) =  \psi_{l}([z]-[\infty^{(\alpha)}],\hbar,\epsilon_j\to\epsilon_j+\hbar), 
\eeq
for all $j\in\llbracket 1,g\rrbracket$, where the last equation is to be understood from the point of view of WKB series.

\br
Remark that the KZ equation itself does not depend on $z$ as a point in the universal cover of $\Sigma$ but only on its image $x(z)$. This simple observation implies that, for any 
$ \gamma \in H_1(\Sigma_{\mathcal P},\mathbb Z)$, 
\beq
 \psi_{l}([z+\gamma]-[\infty^{(\alpha)}],\hbar) 
\eeq
is a solution of the same KZ equation as $\psi^{\mathrm{reg}}_{l}([z]-[\infty^{(\alpha)}],\hbar)$, for every $l\in \llbracket0,d\rrbracket$ .

More generally, any linear combination of perturbative wave functions shifted by closed cycles obey the same KZ equation. For any finite family of $c_\gamma$, the following  finite sum satisfies the KZ equations
\beq
\psi_l([z]-[\infty^{(\alpha)}],\hbar,\{c_\gamma\}) \coloneqq \sum_{\gamma\in \pi_1(\Sigma\setminus x^{-1}(\mathcal P))}
c_\gamma \ \psi_l([z]+\gamma-[{\infty^{(\alpha)}}],\hbar).
\eeq
\er

\section{Non-Perturbative KZ equations and Lax system}\label{S6}

We have proved that the topological recursion allows to reconstruct solutions of the associated KZ type equations. We now wish to build solutions to the same equations but having better monodromy properties along the $(\mathcal{B}_i)_{i=1}^g$ cycles. For this purpose, let us consider the non-perturbative partners, introduced in \cite{Eynard_2009, EMhol}, to the above perturbative wave functions.

\subsection{Definitions and properties}\label{6.1}

\subsubsection{Heuristic motivation}

So far, the perturbative wave functions are defined on the universal cover of $\Sigma\setminus x^{-1}(\mathcal P)$, and have non-trivial monodromies. In order to find the quantum curve, we need to find a wave function  with ``trivial'' monodromies given by a simple phase along any element of $\pi_1(\Sigma\setminus x^{-1}(\mathcal P))$. Naively, this could be achieved by considering discrete Fourier transform of the perturbative wave functions.

We have noticed that any linear combination 
\beq
\psi_l([z]-[\infty^{(\alpha)}],\hbar,\{c_\gamma\}) \coloneqq \sum_{\gamma\in \pi_1(\Sigma\setminus x^{-1}(\mathcal P))}
c_\gamma \ \psi_l([z]+\gamma-[{\infty^{(\alpha)}}],\hbar)
\eeq
satisfies the KZ equations.
After a monodromy along $\gamma'$, it is worth
\beq
\psi_l([z]+\gamma'-[\infty^{(\alpha)}],\hbar,\{c_\gamma\}) 
=\psi_l([z]-[\infty^{(\alpha)}],\hbar,\{c_{\gamma-\gamma'}\}) .
\eeq
It has a trivial monodromy along $\gamma'$ if $c_{\gamma-\gamma'}$ is proportional to $c_\gamma$, for example
if there exists $\mathbf\rho$ in the dual of $H_1(\Sigma\setminus x^{-1}(\mathcal P),\mathbb Z)$, such that 
\beq
c_\gamma = e^{\frac{2\pi\ii}{\hbar} \langle\gamma,\mathbf\rho\rangle}.
\eeq
We thus would like to define
\beq
\psi_l^{\infty^{(\alpha)}} (z,\hbar,\mathbf\rho) \coloneqq \sum_{\gamma\in H_1(\Sigma\setminus x^{-1}(\mathcal P),\mathbb Z)}
e^{\frac{2\pi\ii}{ \hbar} \langle\gamma,\mathbf \rho\rangle} \ \psi_l([z]+\gamma-[{\infty^{(\alpha)}}],\hbar).
\eeq
Shifts by cycles $\mathcal A_i$ or $\mathcal C_p$ are already trivial in the sense that they consist in the multiplication by a simple phase factor. Thus we only need to take the sum over $\gamma =\underset{i=1}{\overset{g}{\sum}}n_i \mathcal B_i$, which amounts to $\epsilon_i\to \epsilon_i+\hbar$. 
We see that formally, this wave function would be a sort of Fourier transform of the perturbative wave function, i.e.~if the perturbative wave function were a function of $\boldsymbol{\epsilon} = (\epsilon_1,\dots,\epsilon_g)$, we would write
\beq
\psi_l^{\infty^{(\alpha)}} (z,\hbar;\boldsymbol{\epsilon},\boldsymbol{\rho}) \coloneqq \sum_{\mathbf n\in \mathbb Z^g}
e^{\frac{2\pi\ii}{ \hbar} \underset{j=1}{\overset{g}{\sum}} \rho_j n_j} \ \psi_l([z]-[{\infty^{(\alpha)}}],\hbar,\mathbf \epsilon+\hbar \mathbf n).
\eeq

\begin{remark}[Limitations of the Heuristic definition]
We recall that
\begin{itemize}
\item The filling fraction $\boldsymbol{\epsilon}=(\epsilon_1,\dots,\epsilon_g)$ is not a global coordinate on the space of classical spectral curves with fixed spectral times, it is only a local coordinate.
Hence the set of classical spectral curves with filling fractions  $\mathbf \epsilon+\hbar\mathbb Z$ is not well defined for finite $\hbar$. It makes sense only from a formal WKB series point of view.
\item The sum over $\gamma\in H_1(\Sigma\setminus x^{-1}(\mathcal P),\mathbb Z) $, is not a finite sum, it is not necessarily defined in $\mathcal W$. 
\item All this implies that we shall have to enlarge $\mathcal W$ to contain such sums. We need to consider trans-series in $\hbar$ of the form
\beq \label{trans-series1}
\sum_{\mathbf{n} \in \mathbb{Z}^g} \sum_{r \geq 0} F_{\mathbf{n},r} \hbar^r e^{\frac{1}{\hbar}  {\displaystyle \sum_{j=1}^g} n_j v_j},
\eeq
using the usual ordering of trans-monomials which can be seen from the fact that the sum over $r$ is performed first.
\end{itemize}

Unfortunately, working with trans-series of the form \eqref{trans-series1} would not allow to derive the  quantum curve directly and we shall consider another ordering of the trans-monomials by first summing over $\mathbf{n} \in \mathbb{Z}^g$ and considering series of the form 
\beq \label{trans-series2}
 \sum_{r \geq 0} \sum_{\mathbf{n} \in \mathbb{Z}^g} F_{\mathbf{n},r} \hbar^r e^{\frac{1}{\hbar}  \underset{j=1}{\overset{g}{\sum}} n_j v_j}.
\eeq
In our case, the partial sums $\underset{\mathbf{n} \in \mathbb{Z}^g}{\sum} F_{\mathbf{n},r}  e^{\frac{1}{\hbar}   \underset{j=1}{\overset{g}{\sum}} n_j v_j}$ will give rise to Theta functions through convergent series in the spirit of the trans-asymptotics of \cite{CostinCostin01}.
\end{remark}

\subsubsection{Shifts of the divisors and action on $\om_{1,0}$ and $\om_{2,0}$}

Before going further, let us study how the shifts of the divisor act on the WKB series defining the perturbative wave functions. As a formal series in powers of $\hbar$, with $D=[z]-[\infty^{(\alpha)}]$ we have

\bea
\psi^{\mathrm{reg}}([z]+\gamma-[{\infty^{(\alpha)}}],\hbar)
&=& e^{\underset{h\geq 0}{\sum}\underset{n\geq 0}{\sum} \frac{\hbar^{2h-2+n}}{n!}\int_{\gamma+D} \dots\int_{\gamma+D} \omega_{h,n} } \cr
&=& e^{\hbar^{-2}\omega_{0,0}+\omega_{1,0}} 
e^{\hbar^{-1} \int_D \omega_{0,1}} 
e^{\frac{1}{2}\int_D\int_D \omega_{0,2}} \cr
&& e^{\hbar^{-1} \int_\gamma \omega_{0,1}}
e^{\frac{1}{2} \int_\gamma\int_D \omega_{0,2}}
e^{\frac{1}{2} \int_D\int_\gamma \omega_{0,2}}
e^{\frac{1}{2} \int_\gamma\int_\gamma \omega_{0,2}} \left( 1+\hbar \mathbb C[[\hbar]] \right) .
\eea
The admissible data fixes a basis of $H_1(\Sigma\setminus x^{-1}(\mathcal P),\mathbb Z)$ in which one can decompose $\gamma$ as
\beq
\gamma= \sum_{p\in x^{-1}(\mathcal P)} m_p \mathcal C_p + \sum_{i=1}^g m_i \mathcal A_i + \sum_{i=1}^g n_i \mathcal B_i.
\eeq

We have
\beq
\int_\gamma \omega_{0,1} = 2\pi\ii \left(\sum_{p\in x^{-1}(\mathcal P)} m_p t_{p,0} + \sum_{i=1}^g m_i \epsilon_i \right) + \sum_{i=1}^g n_i \int_{\mathcal B_i}\omega_{0,1},
\eeq
\bea
\frac{1}{2} \int_\gamma\int_\gamma \omega_{0,2} 
&=& \pi\ii \left( \sum_{j=1}^g n_j m_j  + \sum_{i=1}^g \sum_{j=1}^g n_i \tau_{i,j} n_j \right),
\eea
where $\tau$ is the Riemann matrix of periods of $\Sigma$ and 
\beq
\frac{1}{2} \int_D\int_\gamma \omega_{0,2} 
= \pi\ii \sum_{j=1}^g n_j u_j(D)  = \frac12 \int_\gamma\int_D \omega_{0,2} + \pi\ii m_{\infty^{(\alpha)}},
\eeq
where $u_j(D) = \int_D w_j$ with $(w_j)_{j=1}^g$ forming a basis of holomorphic differentials on $\Sigma$ normalized by $\frac{1}{2 \pi \ii} \int_{\mathcal{A}_i} w_j = \delta_{i,j}$.

\subsubsection{Definitions}

As explained above, only the $(\mathcal{B}_i)_{i=1}^g$ cycles give interesting contributions. We shall thus only consider shifted wave functions of the form
\bea
 \psi^{\mathrm{reg\; symbol}}(D+ \mathcal{B}_{i_1}+\dots + \mathcal{B}_{i_k},\hbar)
&=& e^{\hbar^{-2}\omega_{0,0}+\omega_{1,0}}
e^{\hbar^{-1}(\int_D \omega_{0,1}+\underset{j=1}{\overset{k}{\sum}} \oint_{\mathcal B_{i_j}}\omega_{0,1})}\cr
&& e^{2\pi\ii \underset{j=1}{\overset{k}{\sum}} u_{i_j}(D)} e^{\pi\ii  \underset{j,j' =1}{\overset{k}{\sum}} \tau_{i_j,i_{j'}}}  \cr
&& \frac{1}{E^{\mathrm{symbol}}(D)}
 \left( 1 + \sum_{r=1}^\infty \hbar^r G^{(r)}_{(i_1,\dots,i_k)}(D) \right)
\; \in \mathcal W,\cr
&&
\eea
where the coefficients of the $\hbar$ expansion can be explicitly computed. As symbols, a simple computation shows that they read as follow.

\bl
Let $r\geq 0$ and $k\geq 0$.
For any tuple $(i_1,\dots,i_k) \in \llbracket 1, g \rrbracket$ (empty if $k=0$) we define the following symbols in $\mathcal W$.

\begin{itemize}

\item For $r=0$, we define
\beq
G^{(0)}_{(i_1,\dots,i_k)}(D)=\delta_{k,0}.
\eeq
\item For $r\geq 1$, we define
\bea
G^{(r)}_{(i_1,i_2,\dots,i_k)}(D)
&=& \sum_{\ell=1}^{r} \frac{1}{\ell!} \sum_{(h_1,n_1),\dots,(h_\ell,n_\ell)}
\delta\bigg(r=k+\sum_{j=1}^\ell 2h_j-2+n_j\bigg) \cr
&& 
\Bigg( \mathcal I_{\mathcal B_{i_1}}\dots \mathcal I_{\mathcal B_{i_k}} 
\prod_{j=1}^{\ell} \Bigg(\frac{1}{n_j !} \overbrace{\int_D \dots \int_D}^{n_j}   \omega_{h_j,n_j}\Bigg)
\Bigg)_{\text{stable}}  ,
\eea
where the subscript ``stable'' means that after acting with the operators $\mathcal I_{\mathcal B_j}$ with Leibniz rules, we exclude from the final sum all symbols that contain a factor $\overbrace{\int_D \dots \int_D}^{n_j} \overbrace{\int_{\mathcal{B_{\alpha_1}}} \dots \int_{\mathcal{B}_{\alpha_{m_j}}}}^{m_j} \omega_{h_j,n_j+m_j}$ such that $2h_j-2+n_j+m_j\leq 0$.
\end{itemize} 

One has, for any $(i_1,\dots,i_k) \in \llbracket 1,g \rrbracket^k$,
\bea
 \psi^{\mathrm{reg\; symbol}}(D+ \mathcal{B}_{i_1}+\dots + \mathcal{B}_{i_k},\hbar)
&=& e^{\hbar^{-2}\omega_{0,0}+\omega_{1,0}}
e^{\hbar^{-1}(\int_D \omega_{0,1}+\underset{j=1}{\overset{k}{\sum}} \oint_{\mathcal B_{i_j}}\omega_{0,1})}\cr
&& e^{2\pi\ii \underset{j=1}{\overset{k}{\sum}}  u_{i_j}(D)} e^{\pi\ii\underset{j,j'=1}{\overset{k}{\sum}}  \tau_{i_j,i_{j'}}}  \cr
&&  \frac{1}{E^{\mathrm{symbol}}(D)}
 \left( 1 + \sum_{r=1}^\infty \hbar^r G^{(r)}_{(i_1,\dots,i_k)}(D) \right)
\; \in \mathcal W .\cr
&&
\eea
\el

\br
Notice that $G^{(r)}_{(i_1,\dots,i_k)}(D)$ is a finite sum and $G^{(r)}_{(i_1,\dots,i_k)}(D)=0$, if $k>3r$.
Indeed  if we want to satisfy stability we must have $r=\underset{j=1}{\overset{\ell }{\sum}} (2h_j-2+n_j+m_j)$ with $m_1+\dots+m_\ell=k$ and each term is $\geq 1$, therefore $r\geq \ell$. In  $r-k=\underset{j=1}{\overset{\ell }{\sum}} 2h_j-2+n_j$, each term is $\geq -2$, therefore $r-k\geq -2\ell$, which implies $k\leq r+2\ell\leq 3r$.
\er 

These coefficients $G^{(r)}_{(i_1,i_2,\dots,i_k)}(D)$ are easily computed. Let us present the first few ones.
\beq
G^{(1)}_{\emptyset}(D) = \int_{D} \omega_{1,1} + \frac16   \int_D \int_D \int_D  \omega_{0,3}
\, , \qquad 
G^{(1)}_{(i_1)}(D) = \int_{\mathcal B_{i_1}} \omega_{1,1} + \frac12   \int_D \int_D \int_{\mathcal B_{i_1}}  \omega_{0,3},
\eeq
\beq
G^{(1)}_{(i_1,i_2)}(D) = \int_D \int_{\mathcal B_{i_2}}  \int_{\mathcal B_{i_1}}  \omega_{0,3}
\, ,
\qquad
G^{(1)}_{(i_1,i_2,i_3)}(D) = \int_{\mathcal B_{i_3}} \int_{\mathcal B_{i_2}}  \int_{\mathcal B_{i_1}}  \omega_{0,3}
\eeq
and
\bea
G^{(2)}_{(i_1,i_2)}(D) &=& \frac12 \int_{\mathcal B_{i_1}} \omega_{1,1} \int_{\mathcal B_{i_2}} \omega_{1,1} \cr
&& +  \frac12 \int_{\mathcal B_{i_1}} \omega_{1,1}\int_D \int_D \int_{\mathcal B_{i_2}}  \omega_{0,3}+  \frac12 \int_{\mathcal B_{i_2}} \omega_{1,1}\int_D \int_D \int_{\mathcal B_{i_1}}  \omega_{0,3}\cr
&& +  \frac18 \int_D \int_D \int_{\mathcal B_{i_1}}  \omega_{0,3}\int_D \int_D \int_{\mathcal B_{i_2}}  \omega_{0,3} \cr
&& + \frac12 \int_D \int_D \int_{\mathcal B_{i_2}}  \int_{\mathcal B_{i_1}}\omega_{0,4}\,.
\eea

In the following, we use the same notation $G^{(r)}_{(i_1,\dots,i_k)}(D)$ for these coefficients both as symbols and after evaluation, when the context makes clear which one is considered.

Then the method of \cite{Eynard_2009, EMhol} consists in exchanging the formal summation of powers of $\hbar$ with the summation over cycles. 
Remarking that the exponential part with negative powers of $\hbar$ is at most quadratic in the cycles, and the positive powers of $\hbar$ can be expanded as polynomials outside of the exponentials, our heuristic motivation would involve the definition  of theta functions. Let us now introduce them as symbols.

\bd[Theta symbols]
We enlarge the symbol algebra $\mathcal W$ to an algebra $\hat{\mathcal W}$  by introducing the following new symbols.
Let  $k\geq 0$ and $\tau$ the $g\times g$ matrix of symbols with entries defined by
\beq
\forall \, (i,j)\in \llbracket 1,g\rrbracket^2\,:\, \tau_{i,j} = \frac{1}{2\pi\ii} \int_{\mathcal B_{i}}\int_{\mathcal B_j} \omega_{0,2}.
\eeq
For any tuple $(i_1,\dots,i_k) \in \llbracket 1 , g \rrbracket^k$ (empty if $k=0$), and any vector $\mathbf v=(v_1,\dots ,v_g)\in \check{\mathcal W}_0[[\hbar]]^g$,
we introduce the following symbol 
\beq\label{eq:defTheta}
\Theta^{(i_1,\dots,i_k)}(\mathbf v,\tau)
= \sum_{(n_1,\dots,n_g)\in \mathbb Z^g} 
e^{2 \pi \ii \underset{i=1}{\overset{g }{\sum}} n_i v_i}
e^{\pi\ii \underset{i,j=1}{\overset{g }{\sum}} n_i \tau_{i,j} n_j} 
\prod_{r=1}^k n_{i_r} .
\eeq
If $k=0$, we just write it $\Theta^{\emptyset}(v,\tau)=\Theta(v,\tau)$.
We define the degree of the symbol $\Theta$ to be zero.
We shall quotient by the following relations. For $j\in \llbracket 1,g\rrbracket$, 
\beq\label{eqThetaperiod1}
\forall \ j\in \llbracket 1,g\rrbracket, \quad \Theta^{(i_1,\dots,i_k)}(\mathbf v+\mathbf e_j,\tau) \equiv \Theta^{(i_1,\dots,i_k)}(\mathbf v,\tau) ,
\eeq
where
\beq
\mathbf e_j=\left(\int_{\mathcal A_j} \int_{\mathcal B_1} \omega_{0,2} , \dots,\int_{\mathcal A_j} \int_{\mathcal B_g} \omega_{0,2}\right) \in \check{\mathcal W}_0[[\hbar]]^g .
\eeq
For $j\in \llbracket 1,g\rrbracket$, 
\bea\label{eqThetaperiodtau}
\Theta^{(\overbrace{1,\dots,1}^{k_1},\overbrace{2,\dots,2}^{k_2},\dots,\overbrace{g,\dots,g}^{k_g})}(\mathbf v+\td{\mathbf e}_j,\tau)
&\equiv & e^{-v_j} \ e^{-\pi\ii \tau_{j,j}}  \sum_{m=0}^{k_j} (-1)^m \frac{k_j!}{m! (k_j-m)!}  \ \cr
&& \Theta^{(\overbrace{1,\dots,1}^{k_1},\overbrace{2,\dots,2}^{k_2},\dots,\overbrace{j,\dots,j}^{k_j-m},\dots,\overbrace{g,\dots,g}^{k_g})}(\mathbf v,\tau),
\cr&&
\eea
where
\beq
\td{\mathbf e}_j=\left(\int_{\mathcal B_j} \int_{\mathcal B_1} \omega_{0,2} , \dots,\int_{\mathcal B_j} \int_{\mathcal B_g} \omega_{0,2}\right) \in \check{\mathcal W}_0[[\hbar]]^g .
\eeq

We denote this graded ring
\beq
\hat{\mathcal W} = \mathcal W[\Theta]\ /\equiv,
\eeq
where the grading is the power of $\hbar$ (we recall that $\deg \Theta^{(\mathbf k)}=0$).

We define the evaluation of $\Theta$ symbols below in Section~\ref{sec-eval-theta}.
\hfill $\star$
\ed

The operators $\mathcal I_C$ act on elements of $\hat{\mathcal W}$ term by term through their action on elements of $\mathcal{W}$ by using linearity and Leibniz rules. In particular, they act on the symbol $\Theta$ as follows:
\beq\label{defThetaIC}
\mathcal I_C \left[\Theta^{(i_1,\dots,i_k)}(\mathbf v,\tau)\right]
\coloneqq 2 \pi \ii  \sum_{i=1}^g (\mathcal I_C.v_i) \ \Theta^{(i,i_1,\dots,i_k)}(\mathbf v,\tau)
-\pi^2 \sum_{i,j=1}^{g} (\mathcal I_C.\tau_{i,j}) \ \Theta^{(i,j,i_1,\dots,i_k)}(\mathbf v,\tau).
\eeq

\medskip

With these definitions, one can define the non-perturbative partner to the wave functions.

\bd[Non perturbative wave function] \label{def-non-pert-wave-fct}
Let $\boldsymbol{\rho}=(\rho_1,\dots,\rho_g)$ be a constant formal element of $\check{\mathcal{W}}^g$ and $D=[z]-[\infty^{(\alpha)}]$.
We define the symbolic non-perturbative wave function
\beq
\psi_{\mathrm{NP}}^{\mathrm{symbol}}(D;\hbar,\boldsymbol{\rho}) \coloneqq e^{\hbar^{-2}\omega_{0,0}+\omega_{1,0}}  e^{\hbar^{-1}\int_D \omega_{0,1}}\frac{1}{E^{\mathrm{symbol}}(D)} \quad \sum_{r=0}^\infty \hbar^r G^{(r)}(D;\boldsymbol{\rho}) \quad \in \hat{\mathcal W},
\eeq
where
\beq
G^{(r)}(D;\boldsymbol{\rho}) \coloneqq \sum_{k=0}^{3r} \quad
\sum_{i_1,\dots,i_k\in \llbracket 1,g\rrbracket^k}
\Theta^{(i_1,\dots,i_k)}(\mathbf v,\tau) G^{(r)}_{(i_1,\dots,i_k)} (D)
\eeq
and where $\mathbf v=(v_1,\dots,v_g)$ is the vector in $\mathcal W^g$,
\beq
v_j \coloneqq \frac{\rho_j+\phi_j}{\hbar} + \mu_j^{(\alpha)}(z),
\eeq
where
\beq
\phi_j \coloneqq \frac{1}{2 \pi \ii} \oint_{\mathcal{B}_j} \om_{0,1} 
\qquad 
\text{and}
\qquad
\mu_j^{(\alpha)}(z) \coloneqq  \frac{1}{2 \pi \ii} \int_D \oint_{\mathcal{B}_j} \om_{0,2}.
\eeq

For $l\geq 0$, we define 
\beq\label{psilfromI1psiNPalpha}
\psi_{l,\mathrm{NP}}^{\infty^{(\alpha)},\,\mathrm{symbol}}(z,\hbar,\boldsymbol{\rho}) 
\coloneqq \sum_{\mathbf{\beta} \underset{l}{\subseteq} \left(x^{-1}(x(z)) \setminus \{z\}\right)} \frac{1}{l!} \left( \prod_{j=1}^l  \mathcal I_{\mathcal C_{\beta_j,1}} \right) \  \psi_{\mathrm{NP}}^{\mathrm{symbol}}(D;\hbar,\boldsymbol{\rho}).
\eeq
We use them to define a $d \times d$ matrix
\beq
\widehat{\Psi}_{\mathrm{NP}}^{\mathrm{symbol}}(\lambda,\hbar,\boldsymbol{\rho})\coloneqq \left[\psi_{l-1,\mathrm{NP}}^{\infty^{(\alpha)},\,\mathrm{symbol}}(z^{(\beta)}(\lambda),\hbar,\boldsymbol{\rho})\right]_{1\leq l,\beta\leq d},
\eeq
where $z^{(\beta)}(\lambda)$ denotes the $\beta^{\text{th}}$ preimage by $x$ of $\lambda$.

The non-perturbative wave function
\bea
\psi_{\mathrm{NP}}(D;\hbar,\boldsymbol{\rho}) 
&\coloneqq & \mathrm{ev}. \psi_{\mathrm{NP}}^{\mathrm{symbol}}(D;\hbar,\boldsymbol{\rho}) \cr
 & = &
e^{\hbar^{-2}\omega_{0,0}+\omega_{1,0}}  e^{\hbar^{-1}\int_D \omega_{0,1}}
\frac{1}{E(D)}  \sum_{r=0}^\infty \hbar^r G^{(r)}(D; \boldsymbol{\rho}) \quad \in \hat{\mathcal W}
\eea
denotes the result of the evaluation map on the symbolic non-perturbative wave function. Similarly, one defines
\beq
\psi_{l,\mathrm{NP}}^{\infty^{(\alpha)}}(z,\hbar,\boldsymbol{\rho}) \coloneqq \mathrm{ev}.
\psi_{l,\mathrm{NP}}^{\infty^{(\alpha)},\,\mathrm{symbol}}(z,\hbar,\boldsymbol{\rho}) 
\,\,
\text{ and }
\,\,
\widehat{\Psi}_{\mathrm{NP}}(\lambda,\hbar,\boldsymbol{\rho})\coloneqq \left[\psi_{l-1,\mathrm{NP}}^{\infty^{(\alpha)}}(z^{(\beta)}(\lambda),\hbar,\boldsymbol{\rho})\right]_{1\leq l,\beta\leq d}.
\eeq

We define the \emph{non-perturbative partition function} by
\beq
Z_{\mathrm{NP}}^{\mathrm{symbol}}(\hbar,\boldsymbol{\rho})
\coloneqq e^{\hbar^{-2}\omega_{0,0}+\omega_{1,0}} \sum_{r=0}^\infty \hbar^r G^{(r)}(\emptyset;\boldsymbol{\rho}),
\eeq
as well as its evaluated version
\beq
Z_{\mathrm{NP}}(\hbar,\boldsymbol{\rho}) \coloneqq \mathrm{ev}. Z_{\mathrm{NP}}^{\mathrm{symbol}}(\hbar,\boldsymbol{\rho}).
\eeq
\hfill $\star$
\ed

\br
Heuristically, the non-perturbative partition function is the Fourier transform of the perturbative one
\beq
Z_{\mathrm{NP}}(\hbar,\boldsymbol{\epsilon},\boldsymbol{\rho}) \stackrel{\text{formal}}{\coloneqq}
\sum_{\mathbf{n} \in \mathbb{Z}^g} \exp \left( \frac{2 \pi i}{\hbar}  \underset{j=1}{\overset{g}{\sum}}  n_j \rho_j \right) Z(\hbar,\boldsymbol{\epsilon}+ \hbar \mathbf{n}).
\eeq
We shall argue below in the examples that $Z_{\mathrm{NP}}(\hbar,\boldsymbol{\rho})$ is formally a Tau-function of some integrable system. 
The inverse Fourier transform of $Z_{\mathrm{NP}}(\hbar,\boldsymbol{\rho})$, which is the perturbative partition function, can be seen as the ``Witham averaging'' of $Z_{\mathrm{NP}}(\hbar,\boldsymbol{\rho})$, formally replacing all the $\Theta$ by $1$, i.e. erasing  the oscillations.
\er

As we shall see in the examples of Section~\ref{sec-examples}, this non-perturbative partition function is expected to play the role of isomonodromic tau function of the system built by quantization of the classical spectral curve. In order to understand this point of view, let us remark that $Z_{\mathrm{NP}}(\hbar,\boldsymbol{\rho}) $ follows from the expansion of 
$\psi_{0,\mathrm{NP}}^{\infty^{(\alpha)}}(z,\hbar,\boldsymbol{\rho})$ around $z \to \infty^{(\alpha)}$.
Indeed, one has
\beq\label{eq-expansion-psi-infty}
\forall \, \alpha \in\llbracket 1,d \rrbracket\, : \; \psi_{0,\mathrm{NP}}^{\infty^{(\alpha)}}(z,\hbar,\boldsymbol{\rho}) = e^{\hbar^{-1} V_{\infty^{(\alpha)}}(z)} Z_{\mathrm{NP}}(\hbar,\boldsymbol{\rho}) \left(1 + O(x(z)^{-1}) \right),
\eeq
when $z \to \infty^{(\alpha)}$, since the divisor $D=[z]-[\infty^{(\alpha)}]$ gives vanishing integration constants when $z \to \infty^{(\alpha)}$.

\subsection{Trans-series and properties}\label{6.2}

In this section, we collect a few properties of the elements of $\hat{\mathcal{W}}$ which will be useful for the derivation of the quantum curve.

\subsubsection{Evaluation of Theta symbol} \label{sec-eval-theta}

Remark that the evaluation $\mathrm{ev}.  \tau$ is the Riemann matrix of periods of $\Sigma$ with the Torelli marking, and it is a well known theorem that it is a Siegel matrix, i.e.~symmetric $\tau_{i,j}=\tau_{j,i}$ and  $\text{Im } \tau$ is positive definite,
\beq
\tau=\tau^t\, , \qquad
\text{Im } \tau>0.
\eeq
For any Siegel matrix $\tau$, the following sum (called Riemann Theta function)
\beq
\Theta_{\text{Riemann}}(\mathbf v,\tau)
= \sum_{\mathbf n\in \mathbb Z^g} e^{2\pi\ii (\mathbf v,\mathbf n)}  \ e^{\pi\ii \ (\mathbf n,\tau \mathbf n)}
\eeq
is absolutely convergent  (uniformly in $\mathbf v$ in any ball) and is an analytic function of $\mathbf v\in \mathbb C^g $.

It has the quasi-periodicity properties:
\beq
\forall\, \mathbf n\in \mathbb Z^g \; , \,\Theta_{\text{Riemann}}(\mathbf v+\mathbf n,\tau)
=\Theta_{\text{Riemann}}(\mathbf v,\tau)
\eeq
and
\beq
\forall \,\mathbf n\in \mathbb Z^g \; , \,
\Theta_{\text{Riemann}}(\mathbf v+\tau\mathbf n,\tau)
=e^{-2\pi\ii((\mathbf n,\mathbf v) + \frac{1}{2} (\mathbf n,\tau \mathbf n))} \Theta_{\text{Riemann}}(\mathbf v,\tau).
\eeq

The evaluation map applied to the Theta symbol recovers the Riemann Theta function
\beq
\mathrm{ev}.\Theta(\mathbf v,\tau)  = \Theta_{\text{Riemann}}\left(\frac{1}{2\pi\ii}\mathbf v,\tau\right)
\qquad 
\text{and}
\qquad
\mathrm{ev}.\Theta^{(i_1,\dots,i_k)}(\mathbf v,\tau)
= \frac{1}{(2 \pi \ii)^k} \prod_{j=1}^k \frac{\partial}{\partial v_{i_j}} \ \mathrm{ev}.\Theta(\mathbf v,\tau).
\eeq
In other words
\beq\label{eq:defThetadef}
\text{ev}.\Theta^{(i_1,\dots,i_k)}(\mathbf v,\tau)
= \sum_{(n_1,\dots,n_g)\in \mathbb Z^g} 
e^{2 \pi \ii {\displaystyle \sum_{i=1}^g} n_i v_i}
e^{\;\pi\ii\!\! {\displaystyle  \sum_{(i,j) \in \llbracket 1 , g \rrbracket^2}} n_i \tau_{i,j} n_j} 
\prod_{j=1}^k n_{i_j}.
\eeq

\subsubsection{Equalities for trans-series}\label{sectiontransseries}

The non-perturbative wave functions and partition functions are elements of $\hat{\mathcal{W}}$ where the argument $\mathbf{v}$ involves $\hbar$ by taking the form $\mathbf{v} = \hbar^{-1} \mathbf{\phi} + \mathbf{\mu}$. With this particular evaluation of the argument of the function Theta, elements of $\hat{\mathcal{W}}$ take the form of trans-series in $\hbar$
\beq
\sum_{r=0}^\infty \sum_{\mathbf{n} \in \mathbb{Z}^g} \hbar^r e^{\frac{1}{\hbar} {\displaystyle \sum_{j=1}^g} n_j \phi_j} F_{r,\mathbf{n}},
\eeq
where $F_{r,\mathbf{n}} \in \check{\mathcal{W}}$. The equalities that we shall write down from now on must be understood as equalities between trans-series, namely the trans-series above is vanishing if and only if the coefficient $F_{r,\mathbf{n}}$ of each trans-monomial  is vanishing  for any $(r,\mathbf{n})$.

An important remark is that the order of trans-monomial does not affect such an equality, i.e.
\beq\label{eq-order-trans}
\sum_{r=0}^\infty \sum_{\mathbf{n} \in \mathbb{Z}^g} \hbar^r e^{\frac{1}{\hbar}  {\displaystyle \sum_{j=1}^g} n_j \phi_j} F_{r,\mathbf{n}} = 0 
\Leftrightarrow
\sum_{\mathbf{n} \in \mathbb{Z}^g} \sum_{r=0}^\infty  \hbar^r e^{\frac{1}{\hbar} {\displaystyle \sum_{j=1}^g} n_j \phi_j} F_{r,\mathbf{n}} = 0, 
\eeq
but it does not mean that ${\displaystyle \sum_{r=0}^\infty \sum_{\mathbf{n} \in \mathbb{Z}^g}} \hbar^r e^{\frac{1}{\hbar}  {\displaystyle \sum_{j=1}^g} n_j \phi_j} F_{r,\mathbf{n}} $ and ${\displaystyle \sum_{\mathbf{n} \in \mathbb{Z}^g} \sum_{r=0}^\infty } \hbar^r e^{\frac{1}{\hbar}  {\displaystyle \sum_{j=1}^g} n_j \phi_j} F_{r,\mathbf{n}}$ are equal as they are two very different types of formal trans-series.

\subsubsection{Action of the operators $\mathcal{I}_C$ and $\hbar$ grading}

As functions of $\hbar$, these non-perturbative objects are of very different nature compared to their perturbative counterparts whose logarithms admit a formal series expansion in $\hbar$.  Let us emphasize some important properties of these trans-series.

\medskip

We have  observed that the symbolic non-perturbative wave function takes the form
\beq\label{eq-form-NP}
\psi_{0,\mathrm{NP}}^{\infty^{(\alpha)},\mathrm{\,symbol}}(z,\hbar,\boldsymbol{\rho})  = e^{\hbar^{-2} \om_{0,0}  + \hbar^{-1} S_{-1}(z) + S_0(z)} \bigg(1 + {\displaystyle \sum_{m=1}^\infty} \hbar^m \, \Xi_m^{(\infty_\alpha)}(z,\hbar, \boldsymbol{\rho}) \bigg),
\eeq
where $\Xi_m^{(\infty_\alpha)}(z,\hbar, \boldsymbol{\rho})$ are combinations of derivatives of theta functions of the form
$
\left. \frac{\partial^n \Theta(\mathbf{v}, \boldsymbol{\tau})}{\partial v_{i_1} \dots \partial v_{i_n}} \right|_{\mathbf{v} = \frac{\boldsymbol{\phi} + \boldsymbol{\rho}}{\hbar} + \boldsymbol{\mu}^{(\alpha)}(z)}, 
$
whose coefficients are functions of $z$ independent of $\hbar$. Note that the dependence of such functions on $\hbar$ only appears from the evaluation of the argument of the theta function at $\mathbf{v} = \frac{\boldsymbol{\phi} + \rho}{\hbar} + \boldsymbol{\mu}^{(\alpha)}(z)$.

\medskip

In order to prepare the proof of the main theorem, let us see how differential operators act on this type of trans-series. One has two types of very different behaviors.
\begin{itemize}

\item The operator $\hbar \partial_{x(z)}$ acts naively by  increasing the order in $\hbar$ by one
\beq
\hbar \frac{\partial}{\partial x(z)}  \bigg[\sum_{m=1}^{\infty} \hbar^m \, \Xi_m^{(\infty_\alpha)}(z,\hbar, \boldsymbol{\rho}) \bigg]= 
\sum_{m=1}^{\infty} \hbar^{m+1} \, \frac{\partial \Xi_m^{(\infty_\alpha)}(z,\hbar, \boldsymbol{\rho})}{\partial x(z)} .
\eeq

\item On the other hand, the operators $\mathcal{I}_{\mathcal{C}}$ act both through the explicit dependence of the theta functions but also through its action on the elements $(\phi_j)_{j=1}^g$. One has
\bea\label{eq-action-I-trans}
\hbar  \mathcal{I}_{\mathcal{C}}  \bigg[\sum_{m=1}^\infty \hbar^m \, \Xi_m^{(\infty_\alpha)}(z,\hbar, \boldsymbol{\rho}) \bigg] & =& 
\sum_{m=1}^\infty \hbar^{m+1} \, \left. \mathcal{I}_{\mathcal{C}}  \cdot \Xi_m^{(\infty_\alpha)}(z,\hbar, \boldsymbol{\rho}) \right|_{\phi \; \text{fixed}} \cr
&&+ \sum_{j=1}^g \mathcal{I}_{\mathcal{C}} \left[\phi_j \right]
\sum_{m=1}^\infty \hbar^{m} \, \left. \frac{\partial \Xi_m^{(\infty_\alpha)}(z,\hbar, \boldsymbol{\rho})}{\partial \phi_j} \right|_{\phi_j = \frac{1}{2 \pi i } \oint_{\mathcal{B}_j} \om_{0,1}}\!\!\!\!\!\!\!\!\!\!\!\!\!\!\!\!\!\!\!\!\!\!\!\!\!\!\!\!\!\!.\qquad\qquad\qquad\quad
\eea
\end{itemize}

\subsubsection{KZ equations}

One of the two important properties of the non-perturbative wave functions is that they satisfy the same KZ equations as their perturbative partners.

\bt
The symbolic non-perturbative wave functions satisfy the KZ equations
\begin{align}\label{eq-NP-KZ}
{\hbar} \frac{d  \psi_{l,\mathrm{NP}}^{\infty^{(\alpha)},\mathrm{\,symbol}}(z,\hbar,\boldsymbol{\rho}) }{dx(z)} +\psi_{l+1,\mathrm{NP}}^{\infty^{(\alpha)},\mathrm{\,symbol}}(z,\hbar,\boldsymbol{\rho}) =  \cr
 \sum_{P \in \mathcal{P}} \sum_{k \in S_P^{(l+1)}}  \xi_P^{-k}(x(z))  \left[  \widetilde{\mathcal{L}}_{P,k,l}\, \psi_{0,\mathrm{NP}}^{\infty^{(\alpha)},\mathrm{\,symbol}}(z,\hbar,\boldsymbol{\rho}) \right] .
\end{align}
After evaluation, this implies
\beq\label{eq-NP-KZ-ev}
{\hbar} \frac{d  \psi_{l,\mathrm{NP}}^{\infty^{(\alpha)}}(z,\hbar,\boldsymbol{\rho}) }{dx(z)} +\psi_{l+1,\mathrm{NP}}^{\infty^{(\alpha)}}(z,\hbar,\boldsymbol{\rho}) =  
 \sum_{P \in \mathcal{P}} \sum_{k \in S_P^{(l+1)}}  \xi_P^{-k}(x(z))  \mathrm{ev} . \left[  \widetilde{\mathcal{L}}_{P,k,l}\, \psi_{0,\mathrm{NP}}^{\infty^{(\alpha)},\mathrm{\,symbol}}(z,\hbar,\boldsymbol{\rho}) \right] .
\eeq
\et

\begin{proof}
The proof consists in first showing that another trans-series is solution to these KZ equations.

From the previous sections, for any $\mathbf{n} \in \mathbb{Z}^g$, 
\begin{align}
&{\hbar} \frac{d}{dx(z)}  \psi_{l}^{\mathrm{reg \; symbol}}([z+ \sum_{j=1}^g n_j \mathcal{B}_j]-[\infty^{(\alpha)}])   +\psi_{l+1}^{\mathrm{reg \; symbol}}([z+ \sum_{j=1}^g n_j \mathcal{B}_j]-[\infty^{(\alpha)}])   \cr
&=   \left[ \widetilde{\mathcal{L}}_l(x(z)) \left\{\psi^{\mathrm{reg\; symbol}}([z+ \sum_{j=1}^g n_j \mathcal{B}_j]-[\infty^{(\alpha)}]) \right\}\right].
\end{align}
This implies that the Fourier transforms
\beq
\sum_{\mathbf{n} \in \mathbb{Z}^g} e^{\frac{2 \pi \ii}{\hbar} \underset{j=1}{\overset{g}{\sum}} n_j \rho_j}   \psi_{l}^{\mathrm{reg \;  symbol}}([z+ \sum_{j=1}^g n_j \mathcal{B}_j]-[\infty^{(\alpha)}]) 
\eeq
are solutions to the KZ equations as trans-series. 

Let us remark that from the definition of the non-perturbative wave functions theses Fourier transforms are the trans-series obtained by exchanging the sum over $\mathbf{n} \in \mathbb{Z}^g$ and the WKB series in the non-perturbative wave functions. Thanks to the equivalence \eqref{eq-order-trans}, this implies that the non-perturbative wave functions are solutions to the same KZ equations. 
\end{proof}

\subsubsection{Monodromies}

The second important property of the non-perturbative wave functions is their simple monodromy properties.

\bl
For $j\in \llbracket 1,g\rrbracket$, we have 
\beq\label{monodromieNPA}
\psi_{l,\mathrm{NP}}^{\infty^{(\alpha)}}(z+\mathcal{A}_j,\hbar,\boldsymbol{\rho}) = e^{\frac{2 \pi i \epsilon_j}{\hbar}} \psi_{l,\mathrm{NP}}^{\infty^{(\alpha)}}(z,\hbar,\boldsymbol{\rho}),
\eeq

\beq\label{monodromieNPB}
\psi_{l,\mathrm{NP}}^{\infty^{(\alpha)}}(z+\mathcal{B}_j,\hbar,\boldsymbol{\rho})  = e^{-\frac{2 \pi i  \rho_j}{\hbar}} \psi_{l,\mathrm{NP}}^{\infty^{(\alpha)}}(z,\hbar,\boldsymbol{\rho})
\eeq
and $\forall \ p\in x^{-1}(\mathcal P)$
\beq\label{monodromieNPCp}
\psi_{l,\mathrm{NP}}^{\infty^{(\alpha)}}(z+\mathcal{C}_p,\hbar,\boldsymbol{\rho})  = (-1)^{\delta_{p,\infty^{(\alpha)}}} \ e^{\frac{2 \pi i  t_{p,0}}{\hbar}} \psi_{l,\mathrm{NP}}^{\infty^{(\alpha)}}(z,\hbar,\boldsymbol{\rho}).
\eeq
\el

\begin{proof}
The proof follows from the same strategy as the one for proving the KZ equations. We first prove these monodromy properties for the Fourier transforms and then use \eqref{eq-order-trans} to obtain the same result for the non-perturbative wave functions. 

The properties \eqref{monodromieNPA} and \eqref{monodromieNPCp} directly follow from the same properties for the perturbative wave functions which are the coefficients of the Fourier transform. The last property \eqref{monodromieNPB} for the Fourier transform directly comes from the definition.
\end{proof}

This last property, which states that the non-perturbative wave functions have good monodromies both on the $\mathcal{A}$ and $\mathcal{B}$ cycles is the reason why we had to introduce the formal Fourier transform and had to pay the price of working with trans-series. This turns out to be fundamental for deriving a quantum curve  as we shall see in the next sections.

\br
In the following, we will consider not only series of the form \eqref{eq-form-NP} but products, derivatives, inverse and logarithms of such series. We should always keep in mind that these operations have to be understood from the point of view of series in $\hbar$ whose coefficients are oscillating terms. For example, the inverse of such a series is itself a series in $\hbar$ whose coefficients are products of oscillatory terms obtained by acting as if one were working with formal series in $\hbar$.
\er

\subsection{From linear operators to rational functions}\label{6.3}

For any $l\in \llbracket 0,d-1\rrbracket$, we wish to replace the operators $(\widetilde{\mathcal{L}}_{P,k,l})_{P \in \mathcal{P}, k \in S_P^{(l+1)}}$ in the RHS of the KZ equations by derivatives with respect to $x(z)$ in order to derive a quantum curve. First of all, let us remark that the action of   $\underset{P \in \mathcal{P}}{\sum} \underset{k \in S_P^{(l+1)}}{\sum}  \xi_P^{-k}(x(z)) \widetilde{\mathcal{L}}_{P,k,l}$ on $(\om_{h,n})_{h,n\geq 0}$ in $\log \psi_{\mathrm{NP}}^{\mathrm{symbol}}(z,\hbar,\boldsymbol{\rho})  $ gives terms of order $O(\hbar)$ except for the result of the action on $\om_{0,0}$, which gives the coefficient $P_{l+1}(x(z))$ of the classical spectral curve \eqref{def-eq-sp-curve}.

For this purpose, for each $P \in \mathcal{P}$, $l\in \llbracket 0,d-1\rrbracket$ and $k \in S_P^{(l+1)}$, one defines
\beq
{\mathcal{L}_{P,k,l}}\coloneqq \widetilde{\mathcal{L}}_{P,k,l} - P_{P,k}^{(l+1)},
\eeq
where the second term is understood as a multiplication operator by a symbol.

Let us define
\begin{align}
&K_{P,k,l}^{\mathrm{symbol}}(\lambda,\hbar,\boldsymbol{\rho})  \coloneqq \cr 
&\qquad\left[
 \begin{array}{ccc}
\psi_{\mathrm{NP}}^{\mathrm{symbol}}(z^{(1)}(\lambda),\hbar,\boldsymbol{\boldsymbol{\rho}}) & \dots &   \psi_{\mathrm{NP}}^{\mathrm{symbol}}(z^{(d)}(\lambda),\hbar,\boldsymbol{\rho})  \cr
\left[{\mathcal{L}_{P,k,l} } \psi_{\mathrm{NP}}^{\mathrm{symbol}}(z^{(1)}(\lambda),\hbar,\boldsymbol{\rho}) \right] & \dots & \left[{\mathcal{L}_{P,k,l} }  \psi_{\mathrm{NP}}^{\mathrm{symbol}}(z^{(d)}(\lambda),\hbar,\boldsymbol{\rho}) \right]\cr
  \vdots & & \vdots \cr
\left[({\mathcal{L}_{P,k,l} })^{d-1}  \psi_{\mathrm{NP}}^{\mathrm{symbol}}(z^{(1)}(\lambda),\hbar,\boldsymbol{\rho})\right]  & \dots &\left[({\mathcal{L}_{P,k,l} })^{d-1}   \psi_{\mathrm{NP}}^{\mathrm{symbol}}(z^{(d)}(\lambda),\hbar,\boldsymbol{\rho}) \right]\cr  
\end{array}
\right]
\end{align}
and
\beq
C_{P,k,l}^{\mathrm{symbol}}(\lambda,\hbar,\boldsymbol{\rho})\coloneqq  \hbar^{-1} \, K_{P,k,l}^{\mathrm{symbol}}(\lambda,\hbar,\boldsymbol{\boldsymbol{\rho}})  \cdot 
\left(\widehat{\Psi}_{\mathrm{NP}}^{\mathrm{symbol}}(\lambda,\hbar,\boldsymbol{\rho})\right)^{-1},
\eeq
where $\left(\widehat{\Psi}_{\mathrm{NP}}^{\mathrm{symbol}}(\lambda,\hbar,\boldsymbol{\rho})\right)^{-1}$ is defined as a symbol from the adjoint matrix to $\widehat{\Psi}_{\mathrm{NP}}^{\mathrm{symbol}}(\lambda,\hbar,\boldsymbol{\rho})$ and the inverse of $\det (\widehat{\Psi}_{\mathrm{NP}}^{\mathrm{symbol}}(\lambda,\hbar,\boldsymbol{\rho}))$ as a symbol,
\beq
\left(\widehat{\Psi}_{\mathrm{NP}}^{\mathrm{symbol}}(\lambda,\hbar,\boldsymbol{\rho})\right)^{-1} = \frac{1}{\det (\widehat{\Psi}_{\mathrm{NP}}^{\mathrm{symbol}}(\lambda,\hbar,\boldsymbol{\rho}))} \, \mathrm{Adj}\left[\widehat{\Psi}_{\mathrm{NP}}^{\mathrm{symbol}}(\lambda,\hbar,\boldsymbol{\rho}) \right].
\eeq

For all $\lambda$ such that $\widehat{\Psi}_{\mathrm{NP}}(\lambda,\hbar,\boldsymbol{\rho})$ is invertible, i.e.~for $\lambda$ such that the Wronskian $\det (\widehat{\Psi}_{\mathrm{NP}}(\lambda,\hbar,\boldsymbol{\rho}))$ is non vanishing, we can define the evaluated version of these objects by 
\begin{align}
&K_{P,k,l}(\lambda,\hbar,\boldsymbol{\rho})  \coloneqq \cr 
&\qquad\left[
 \begin{array}{ccc}
\psi_{\mathrm{NP}}(z^{(1)}(\lambda),\hbar,\boldsymbol{\boldsymbol{\rho}}) & \dots &   \psi_{\mathrm{NP}}(z^{(d)}(\lambda),\hbar,\boldsymbol{\rho})  \cr
\text{ev}.\left[{\mathcal{L}_{P,k,l} } \psi_{\mathrm{NP}}^{\mathrm{symbol}}(z^{(1)}(\lambda),\hbar,\boldsymbol{\rho}) \right] & \dots & \text{ev}.\left[{\mathcal{L}_{P,k,l} }  \psi_{\mathrm{NP}}^{\mathrm{symbol}}(z^{(d)}(\lambda),\hbar,\boldsymbol{\rho}) \right]\cr
  \vdots & & \vdots \cr
\text{ev}.\left[({\mathcal{L}_{P,k,l} })^{d-1}  \psi_{\mathrm{NP}}^{\mathrm{symbol}}(z^{(1)}(\lambda),\hbar,\boldsymbol{\rho})\right]  & \dots &\text{ev}.\left[({\mathcal{L}_{P,k,l} })^{d-1}   \psi_{\mathrm{NP}}^{\mathrm{symbol}}(z^{(d)}(\lambda),\hbar,\boldsymbol{\rho}) \right]\cr  
\end{array}
\right]\cr
\end{align}
and
\beq
C_{P,k,l}(\lambda,\hbar,\boldsymbol{\rho})\coloneqq  \hbar^{-1} \, K_{P,k,l}(\lambda,\hbar,\boldsymbol{\boldsymbol{\rho}})  \cdot 
\left(\widehat{\Psi}_{\mathrm{NP}}(\lambda,\hbar,\boldsymbol{\rho})\right)^{-1} .
\eeq

In the following, to shorten the notations, we shall most of the time omit the dependence of the functions considered in $\boldsymbol{\epsilon}$ and $\boldsymbol{\rho}$. However, we should keep in mind that  the elements of the matrices denoted by $C_{P,k,l}(\lambda,\hbar)$ are functions of $\hbar$. More precisely, as the non-perturbative wave functions, they are trans-series in $\hbar$ 
\beq\label{eq-trans-C}
C_{P,k,l}(\lambda,\hbar) = \sum_{n \geq 0} \hbar^n C_{P,k,l}^{(n)}(\lambda,\hbar),
\eeq
whose coefficients $C_{P,k,l}^{(n)}(\lambda,\hbar)$ are combinations of products of derivatives of theta functions. Remark that, contrary to the wave function which had an $\hbar$-dependent pre-factor $e^{\hbar^{-2} \om_{0,0}(\boldsymbol{\epsilon})  + \hbar^{-1} S_{-1}(z,\boldsymbol{\epsilon}) + S_0(z,\boldsymbol{\epsilon})}$, the elements of $C_{P,k,l}(\lambda,\hbar)$ do not involve any $\hbar$-dependent pre-factor.

One fundamental point is to understand the properties of these matrices $\left(C_{P,k,l}(\lambda,\hbar)\right)_{l\in \llbracket 0,d-1\rrbracket, P\in \mathcal{P}, k \in S_P^{(l+1)}}$ as functions of $\lambda$. This is summed up in the following theorem.

\bt \label{th-C}
For any $l\in \llbracket 0,d-1\rrbracket$, $P \in \mathcal{P}$ and any $k \in S_P^{(l+1)}$, $C_{P,k,l}(\lambda,\hbar)$ is a rational function of $\lambda$ with possible poles at $\lambda \in \mathcal{P}$, $\lambda \in x(\mathcal{R})$ and $\lambda$ in the set of zeros of the Wronskian $\det \widehat{\Psi}_{\mathrm{NP}}(\lambda,\hbar)$.
This means that, as  a function of $\hbar$, $C_{P,k,l}(\lambda,\hbar)$ is a formal trans-series of the form 
\beq
C_{P,k,l}(\lambda,\hbar) = \sum_{n \geq 0} \hbar^n C_{P,k,l}^{(n)}(\lambda,\hbar),
\eeq
where $C_{P,k,l}^{(n)}(\lambda,\hbar)$ are rational functions of $\lambda$.
\et

\begin{proof}
Let us first show that $C_{P,k,l}(\lambda,\hbar)$ is a mono-valued function of $\lambda \in \mathbb{P}^1$. One important point is to understand how the non-perturbative wave functions behave as functions on the base curve $\mathbb{P}^1$. For this purpose, let us describe how $\pi_1(\mathbb{P}^1 \setminus x(\mathcal{R}))$ acts on these wave functions. When $\lambda$ goes around $\gamma \in \pi_1(\mathbb{P}^1 \setminus x(\mathcal{R}))$, a point $z \in x^{-1}(\lambda)$ follows the composition of a non-contractible path $\Gamma \in H_1(\Sigma,\mathbb{Z})$ and a path going from $z$ to a, possibly distinct, point $z'$ in the fiber above $\lambda$. This means that, for any $j \in \llbracket 1,d\rrbracket$,
\begin{align}
&\exists\, (\mathbf{n}[\gamma,j],\mathbf{m}[\gamma,j]) \in \mathbb{Z}^{2g} \, , \; \exists\, \beta \in \llbracket 1,d  \rrbracket \; \text{ such that } \cr
& \psi_{l,\mathrm{NP}}^{\infty^{(\alpha)}}(z^{(j)}(\lambda+\gamma),\hbar,\boldsymbol{\rho})  = e^{ \frac{2 \pi i}{\hbar}   \underset{a=1}{\overset{g}{\sum}} (n[\gamma,j]_a \epsilon_a + m[\gamma,j]_a \rho_a)}
 \psi_{0,\mathrm{NP}}^{\infty^{(\alpha)}}(z^{(\beta)}(\lambda),\hbar,\boldsymbol{\rho})  .
\end{align}
This implies that there exists a monodromy matrix $M_\gamma(\hbar)$ which has the shape of  a permutation matrix with entries of the form $e^{ \frac{2 \pi i}{\hbar} \underset{a=1}{\overset{g}{\sum}} (n[\gamma,j]_a \epsilon_a + m[\gamma,j]_a \rho_a)}$ instead of $1$'s such that 
\beq
\widehat{\Psi}_{\mathrm{NP}}(\lambda+\gamma,\hbar) = \widehat{\Psi}_{\mathrm{NP}}(\lambda,\hbar) M_\gamma(\hbar)
\qquad 
\text{and}
\qquad
K_{P,k,l}(\lambda+\gamma,\hbar) = K_{P,k,l}(\lambda,\hbar) M_\gamma(\hbar).
\eeq
Since $M_\gamma(\hbar)$ is invertible, this implies that 
\beq
C_{P,k,l}(\lambda+\gamma,\hbar) = C_{P,k,l}(\lambda,\hbar) .
\eeq

It is easy to see that the essential singularities of $\widehat{\Psi}_{\mathrm{NP}}(\lambda,\hbar)$ compensate with the ones on the other side so that $C_{P,k,l}(\lambda,\hbar)$ is a rational function of $\lambda$.

By definition, it is easy to see that it might have poles at $\lambda \in \mathcal{P} \cup x(\mathcal{R})$ from the singularities of  $\widehat{\Psi}_{\mathrm{NP}}(\lambda,\hbar)$ but it can also have singularities at points $\lambda$ where $\widehat{\Psi}_{\mathrm{NP}}(\lambda,\hbar)$ is not invertible, i.e.~when the Wronskian vanishes.

The behavior in $\hbar$  follows from the definition of the operators $\mathcal{L}_{P,k,l}$, where one always has a product of $ \hbar^2 \mathcal{I}_{\mathcal{C}}$ acting on terms of the form $\hbar^{2h-2+n} \om_{h,n}$, with $2h-2+n \geq -1$. We collected the order $O(1)$ terms in $P_l(\lambda)$ so that the resulting terms are of order $O(\hbar)$. Since  $C_{P,k,l}(\lambda,\hbar)$ is of order $\hbar^{-1}$ compared to $\mathcal{L}_{P,k,l}$, it is of order $O(1)$. 
\end{proof}

\br
A priori, the coefficients $C_{P,k,l}^{(n)}(\lambda,\hbar)$ involve theta functions whose argument depends of the different pre-images of $\lambda$ in the classical spectral curve. We have seen that these are actually rational functions of $\lambda$. However, these rational functions of $\lambda$ can still depend on $\hbar$ through derivatives of theta functions of the form 
$
\left. \frac{\partial^n \Theta_{\text{Riemann}}(\mathbf{v}, \boldsymbol{\tau})}{\partial v_{i_1} \dots \partial v_{i_n}} \right|_{\mathbf{v} = \frac{\boldsymbol{\phi} + \boldsymbol{\rho}}{\hbar} } .
$
\er

One can use the elements of the matrices $\left(C_{P,k,l}(\lambda,\hbar)\right)_{l\in \llbracket 0,d-1\rrbracket, P\in \mathcal{P}, k \in S_P^{(l+1)}}$ in order to write the KZ equations simply as differential equations in $\lambda$. For any $l\in \llbracket0,d-1\rrbracket$ and $j\in  \llbracket 1 , d \rrbracket$:
\begin{align}
&{\hbar} \frac{d  \psi_{l,\mathrm{NP}}^{\infty^{(\alpha)}}(z^{(j)}(\lambda),\hbar) }{d\lambda} +\psi_{l+1,\mathrm{NP}}^{\infty^{(\alpha)}}(z^{(j)}(\lambda),\hbar)   = P_{l+1}(\lambda)\, \psi_{0,\mathrm{NP}}^{\infty^{(\alpha)}}(z^{(j)}(\lambda),\hbar) \nonumber \\
&+  \hbar {\displaystyle \sum_{P \in \mathcal{P}} \sum_{k \in S_P^{(l+1)}}}  \xi_P^{-k}(\lambda) {\displaystyle \sum_{m=0}^{d-1}} \left[C_{P,k,l}(\lambda,\hbar)\right]_{2,m+1}  \psi_{m,\mathrm{NP}}^{\infty^{(\alpha)}}(z^{(j)}(\lambda),\hbar).
\end{align}

\subsection{Lax system}\label{6.4}

Under this form, the KZ equations can be translated into a linear differential equation in $\lambda$ compatible with the actions of the linear operators $\left(\mathcal{L}_{P,k,l}\right)_{l\in \llbracket 0,d-1\rrbracket, P\in \mathcal{P}, k \in S_P^{(l+1)}}$.

\bt[ODE and Lax system]
One has
\beq\label{eq-diff-hat}
\hbar  \frac{d \widehat{ \Psi}_{\mathrm{NP}}(\lambda,\hbar) }{d\lambda} =  \bigg[-\widehat{P}(\lambda) + \hbar \sum_{P \in \mathcal{P}}  \sum_{k \in \mathbb{N}}  \xi_P^{-k}(\lambda) 
\widehat{\Delta}_{P,k}(\lambda,\hbar) \bigg] \widehat{\Psi}_{\mathrm{NP}}(\lambda,\hbar),
\eeq
where 
\beq
\widehat{P}(\lambda) \coloneqq 
\left[
 \begin{array}{ccccc}
 - P_1(\lambda) & 1&0  & \dots &  0\cr
 -P_2(\lambda) & 0&1  & \dots &  0\cr
  \vdots & \vdots &\vdots&\ddots & \vdots \cr
  -P_{d-1}(\lambda) & 0&0  & \dots &  1\cr
-P_d(\lambda) & 0 &0 & \dots &  0 \cr  
\end{array}
\right]
\eeq
and for any $P\in \mathcal{P}$ and $k\in \mathbb{N}$
\beq
\widehat{\Delta}_{P,k}(\lambda,\hbar)  \coloneqq 
 \left[
 \begin{array}{ccc}
\left[C_{P,k,0}(\lambda,\hbar)\right]_{2,1} & \dots &  \left[C_{P,k,0}(\lambda,\hbar)\right]_{2,d}  \cr
\left[C_{P,k,1}(\lambda,\hbar)\right]_{2,1}  & \dots & \left[C_{P,k,1}(\lambda,\hbar)\right]_{2,d}  \cr
  \vdots & & \vdots \cr
\left[C_{P,k,d-1}(\lambda,\hbar)\right]_{2,1} & \dots &\left[C_{P,k,d-1}(\lambda,\hbar)\right]_{2,d}  \cr  
\end{array}
\right] 
\eeq
where $C_{P,k,l}(\lambda,\hbar) = 0$ for $k \notin S_P^{(l+1)}$.

For any $P\in \mathcal{P}$, $k\in \mathbb{N}$, $l\in \llbracket 0, d-1\rrbracket$, one has the auxiliary systems
\beq
\hbar^{-1} \mathrm{ev} . \mathcal{L}_{P,k,l} \widehat{\Psi}_{\mathrm{NP}}^{\mathrm{symbol}}(\lambda,\hbar) = \widehat{A}_{P,k,l}(\lambda,\hbar) \widehat{\Psi}_{\mathrm{NP}}(\lambda,\hbar),  
\eeq
with the constraints
\beq
\left[C_{P,k,l}(\lambda,\hbar)\right]_{2,j}= \left[\widehat{A}_{P,k,l}(\lambda,\hbar)\right]_{1,j}, \; \forall\, j\in \llbracket 1, d\rrbracket , 
\eeq
so that the entries of $\widehat{\Delta}_{P,k}(\lambda,\hbar)$ are composed of elements of $\widehat{A}_{P,k,l}(\lambda,\hbar)$.  The elements of $\widehat{A}_{P,k,l}(\lambda,\hbar)$ are $\hbar$-trans-series functions that are rational functions of $\lambda$ similarly to the elements of $C_{P,k,l}(\lambda,\hbar)$. 

One has similar equations at the symbolic level before evaluation but we shall not write them down as they are obtained simply by adding an exponent symbol whenever needed by defining
\beq
\widehat{A}_{P,k,l}^{\mathrm{symbol}}(\lambda,\hbar)  \coloneqq \left(\mathcal{L}_{P,k,l} \left[ \widehat{\Psi}_{\mathrm{NP}}^{\mathrm{symbol}}(\lambda,\hbar) \right]\right) \, \left(  \widehat{\Psi}_{\mathrm{NP}}^{\mathrm{symbol}}(\lambda,\hbar) \right)^{-1} .
\eeq
\et

\subsection{A first gauge transformation to recover the classical spectral curve}

Let us now prove that the matrices $\widehat{\Delta}_{P,k}(\cdot,\hbar)$ do not have any pole on the ramification locus $x(\mathcal{R})$ so that \eqref{eq-diff-hat} produces a quantum curve. For this purpose, it is easier to work in a different basis in which the leading order in $\hbar$ of the Lax matrix is companion-like (and thus recovers the classical spectral curve) and study the associated linear systems. Let us define the matrix  
\beq
G(\lambda)\coloneqq  \left[
 \begin{array}{cccccc}
 1 & 0&0 & \dots & 0 &  0\cr
P_1(\lambda) & -1&0 &  \dots & 0 &  0\cr
P_2(\lambda) & -P_1(\lambda)&1 &  \dots & 0 &  0\cr
\vdots & \vdots &\vdots&\ddots & \vdots & \vdots  \cr
P_{d-2}(\lambda) & -P_{d-3}(\lambda)&P_{d-4}(\lambda) & \dots & (-1)^{d-2} &  0\cr
P_{d-1}(\lambda) &-P_{d-2}(\lambda)&P_{d-3}(\lambda) & \dots & (-1)^{d-2}P_1(\lambda) &  (-1)^{d-1} \cr  
\end{array}
\right] 
\eeq
whose non-zero entries are $G_{i,j}=(-1)^{j-1} P_{i-j}(\lambda)$ for $1\leq j<i\leq d$ and $G_{i,i}=(-1)^{i-1}$ for $i\in \llbracket 1,d\rrbracket$. It can easily be checked that
$\widehat{P}(\lambda) \, G(\lambda)  = -G(\lambda) \widetilde{P}(\lambda)$, where
\beq 
 \widetilde{P}(\lambda) \coloneqq 
 \left[
 \begin{array}{ccccc}
 0 & 1&0  & \dots &  0\cr
0 & 0&1 & \dots &  0\cr
  \vdots & \vdots &\vdots& \ddots & \vdots \cr
  0 & 0&0 &  \dots &  1\cr
  (-1)^{d-1}P_d(\lambda) & (-1)^{d-2}P_{d-1}(\lambda)& (-1)^{d-3}P_{d-2}(\lambda) & \dots &  P_{1}(\lambda) \cr  
\end{array}
\right] 
\eeq
so that 
\beq
 \widetilde{P}(\lambda) = \left(G(\lambda)\right)^{-1} \,  \left(-\widehat{P}(\lambda)\right) \, G(\lambda) .
 \eeq
Note that the characteristic polynomial of $\td{P}(\lambda)$ is precisely the classical spectral curve \eqref{def-eq-sp-curve}: $\det(y I_d-\td{P}(\lambda))=P(\lambda,y)$. We use this matrix $G$ to define a gauge transformation by 
\beq\label{GaugeG}
\widetilde{\Psi}(\lambda,\hbar)\coloneqq \left(G(\lambda)\right)^{-1} \widehat{\Psi}_{\mathrm{NP}}(\lambda,\hbar),
\eeq
which is solution to the differential equation
\beq\label{eq-diff-tilde}
\hbar  \frac{d \widetilde{ \Psi}(\lambda,\hbar) }{d\lambda} =  \widetilde{L}(\lambda,\hbar)  \widetilde{\Psi}(\lambda,\hbar),
\eeq
with
\beq\label{def-tilde-L}
\widetilde{L}(\lambda) \coloneqq \bigg[ \widetilde{P}(\lambda) + \hbar \sum_{P \in \mathcal{P}}  \sum_{k \in \mathbb{N}} \xi_P^{-k}(\lambda) 
\widetilde{\Delta}_{P,k}(\lambda,\hbar) \bigg],
\eeq
where 
\beq \label{eq-diff-tildee}
\widetilde{\Delta}_{P,k}(\lambda,\hbar)  \coloneqq 
 - \hbar \left(G(\lambda)\right)^{-1} \frac{\partial G(\lambda)}{\partial \lambda} + \left(G(\lambda)\right)^{-1} \widehat{\Delta}_{P,k}(\lambda,\hbar) G(\lambda).
\eeq

\br
Let us note that one can define a symbol $G^{\mathrm{symbol}}(\lambda)$ by considering the rational functions $P_j(\lambda)$  as symbols. This allows to define a symbol
\beq
\widetilde{\Psi}^{\mathrm{symbol}}(\lambda,\hbar)\coloneqq (G^{\mathrm{symbol}}(\lambda))^{-1} \widehat{\Psi}_{\mathrm{NP}}^{\mathrm{symbol}}(\lambda,\hbar),
\eeq
where $\left(G^{\mathrm{symbol}}(\lambda)\right)^{-1}$ is defined using its adjoint and the symbolic inverse of its determinant.
We can as well  define a symbolic Lax matrix 
\beq
\widetilde{L}^{\mathrm{symbol}}(\lambda,\hbar) \coloneqq \hbar \frac{\partial \widetilde{\Psi}^{\mathrm{symbol}}(\lambda,\hbar)}{\partial \lambda} \left(\widetilde{\Psi}^{\mathrm{symbol}}(\lambda,\hbar)\right)^{-1},
\eeq
where 
\beq
\left(\widetilde{\Psi}^{\mathrm{symbol}}(\lambda,\hbar)\right)^{-1} \coloneqq \mathrm{Adj} \left[\widetilde{\Psi}^{\mathrm{symbol}}(\lambda,\hbar)\right] \, \frac{1}{\det \widetilde{\Psi}^{\mathrm{symbol}}(\lambda,\hbar)}
\eeq
using the symbolic inverse.
\er

For any $P\in \mathcal{P}$, $l\in \llbracket 0, d-1\rrbracket$, $k\in S_P^{(l+1)}$, one also has the auxiliary systems
\beq
\hbar^{-1}\mathrm{ev} . \mathcal{L}_{P,k,l} \widetilde{\Psi}^{\mathrm{symbol}}(\lambda,\hbar) = \widetilde{A}_{P,k,l}(\lambda,\hbar) \widetilde{\Psi}(\lambda,\hbar),
\eeq
where
\begin{align}
\widetilde{A}_{P,k,l}(\lambda,\hbar)&\coloneqq \left(G(\lambda)\right)^{-1} \bigg( \mathrm{ev} .\Big(\left[ G^{\mathrm{symbol}}(\lambda), \hbar^{-1} \mathcal{L}_{P,k,l} \right] \widetilde{\Psi}^{\mathrm{symbol}}(\lambda,\hbar) \Big)\bigg) \widetilde{\Psi}(\lambda,\hbar)^{-1}\nonumber \\
&+ \left(G(\lambda)\right)^{-1}  \widehat{A}_{P,k,l}(\lambda,\hbar)  G(\lambda)
\end{align}
is a rational function of $\lambda$ with the same poles as $\widehat{A}_{P,k,l}(\lambda,\hbar)$.

\subsection{Behavior of the auxiliary matrices at the ramification points}\label{6.5}
Let us now state one of the main properties of the auxiliary matrices $\left(\widetilde{A}_{P,k,l}(\lambda,\hbar)\right)_{P\in \mathcal{P}, l\in \llbracket 0, d-1\rrbracket, k\in S_P^{(l+1)}}$.

\bt \label{th-no-bp}
For any $P\in \mathcal{P}$, $l\in \llbracket 0, d-1\rrbracket$, $k\in S_P^{(l+1)}$, the matrix $\widetilde{A}_{P,k,l}(\lambda,\hbar)$ does not have any pole at the critical values $u \in x\left(\mathcal{R}\right)$.  This implies that for any $P\in \mathcal{P}$ and $k\in \mathbb{N}$, the matrices $\widehat{\Delta}_{P,k}(\lambda,\hbar)$ and $\widetilde{\Delta}_{P,k}(\lambda,\hbar)$ are regular at the ramification points.
\et

\begin{proof}
This is proved by induction in Appendix~\ref{App-no-bp}. 
\end{proof}

This allows us to obtain the following important property.

\bc
The wave functions matrix $\widehat{\Psi}_{\mathrm{NP}}(\lambda,\hbar)$ is solution to a linear differential equation
\beq\label{eq-diff-hat2}
\hbar  \frac{d \widehat{ \Psi}_{\mathrm{NP}}(\lambda,\hbar) }{d\lambda} = \widehat{L}(\lambda,\hbar) \,  \widehat{\Psi}_{\mathrm{NP}}(\lambda,\hbar),
\eeq
where the rational function
\beq
\widehat{L}(\lambda,\hbar) =  \bigg[-\widehat{P}(\lambda) + \hbar \sum_{P \in \mathcal{P}}  \sum_{k \in \mathbb{N}} \xi_P^{-k}(\lambda) 
\widehat{\Delta}_{P,k}(\lambda,\hbar) \bigg]
\eeq
has poles at $\lambda \in \mathcal{P}$ and at zeros of the Wronskian $\det \widehat{\Psi}_{\mathrm{NP}}(\lambda,\hbar)$, the latter being \emph{apparent singularities} of the system.
\ec

Let us emphasize again that this differential equation has to be understood as an equality order by order in $\hbar$ in the trans-series as explained in Section~\ref{sectiontransseries}.

\medskip

We have thus built a linear differential system of size $d\times d$ whose formal fundamental solution can be computed by topological recursion. This system has poles at the poles of the leading WKB term, namely at poles of the function $y$ but it may also have poles at apparent singularities. We now wish to continue our study of this system and understand better these apparent singularities, find a gauge equivalent system without apparent singularities and write down the associated quantum curve. This is what we are doing in the next section, based on the observation that the position of the apparent singularities can be computed thanks to the KZ equations.

\section{Gauge transformations and quantum curve}\label{S7}

In this section, we deduce quantum curves from the Lax system presented in the previous section, which may feature apparent singularities. We also apply a gauge transformation in order to obtain another linear Lax system without any apparent singularities and whose only poles are at the singularities of the initial classical spectral curve. We finally study the characteristic polynomial of this system and present it as a deformation of the classical spectral curve, allowing to interpret our result as an $\hbar$-family of connections on the base curve $\mathbb{P}^1$.

\subsection{Companion-like system associated to the quantum curve}\label{SectionCompanionlike}

We have shown that the topological recursion allowed us to build a fundamental solution to the size $d$ linear differential  system (Equation \eqref{eq-diff-hat}). This implies that each element of the matrix $\widehat{\Psi}_{\mathrm{NP}}(\lambda,\hbar)$  is itself a solution to a degree $d$ differential equation. In particular, considering the elements of the first line, this implies that these elements are solution to a degree $d$ ODE which we can consider as a quantization of the classical spectral curve.

\bd[Quantum Curve]\label{DefQuantumCurve}For all $j\in \llbracket 1,d\rrbracket$, let us define 
\beq \psi^{(j)}(\lambda,\hbar) \coloneqq \psi_{0,\mathrm{NP}}^{\infty^{(\alpha)}}(z^{(j)}(\lambda),\hbar) .\eeq
Then, for all $ j\in \llbracket 1,d\rrbracket$, $\psi^{(j)}(\lambda,\hbar) $ is solution to a degree $d$ ODE of the form
\beq \label{eq-quant-curve-1}
\forall\, j\in \llbracket 1 , d\rrbracket \,  : \; \sum_{k=0}^d b_{d-k}(\lambda,\hbar) \left(\hbar \frac{\partial}{\partial \lambda}\right)^k \psi^{(j)}(\lambda,\hbar) = 0,
\eeq
for some coefficients $\left(b_l(\lambda,\hbar)\right)_{l\in \llbracket 0,d\rrbracket}$ with $b_0(\lambda,\hbar)=1$. This ODE is known as the ``quantum curve'' associated to the admissible classical spectral curve \eqref{SpectralCurves}.
\hfill $\star$
\ed

\br
Observe that the quantum curve depends on the choice of $\alpha$, but we decided not to indicate this dependence to make notations lighter.
\er

In order to understand better how this procedure is a good quantization of the classical spectral curve, we need to study the properties and singularities of the coefficients $\left(b_{l}(\lambda,\hbar)\right)_{l\in \llbracket 0,d\rrbracket}$ of the quantum curve.

In order to simplify the exposition in this section, we rewrite the quantum curve in a companion-like matrix form.

\bd[Matrix form of the quantum curve]\label{DefPsi}We define the $d\times d$ matrix
\beq
\Psi(\lambda,\hbar) \coloneqq \bigg[ \Big(\hbar \frac{\partial}{\partial \lambda} \Big)^{i-1} \psi^{(j)}(\lambda,\hbar)\bigg]_{1\leq i,j \leq d}
\eeq
and the $d\times d$ companion-like matrix
\beq\label{ExpressionL}
L(\lambda,\hbar) \coloneqq 
 \left[
 \begin{array}{ccccc}
 0 & 1&0  & \dots &  0\cr
0 & 0&1 &  &  \vdots\cr
  \vdots & \ddots &\ddots&\ddots & 0\cr
  0 & 0&0 &  0 &  1\cr
  -b_d(\lambda,\hbar) & -b_{d-1}(\lambda,\hbar)& -b_{d-2}(\lambda,\hbar) & \dots &  -b_{1}(\lambda,\hbar) \cr  
\end{array}
\right] 
\eeq
so that \eqref{eq-quant-curve-1} is equivalent to
\beq
\hbar \frac{\partial}{\partial \lambda} \Psi(\lambda,\hbar) = L(\lambda,\hbar) \Psi(\lambda,\hbar).
\eeq
\hfill $\star$
\ed

This matrix $L(\lambda,\hbar)$ can be obtained by a rational gauge transformation from $\widehat{L}(\lambda,\hbar)$ so that the coefficients $(b_j(\lambda,\hbar))_{j=0}^d$ are rational functions of $\lambda$. From Theorem \ref{th-no-bp}, $\Psi(\lambda,\hbar)$ can have singularities only at $\lambda \in \mathcal{P}$ so that the functions $(b_j(\lambda,\hbar))_{j=0}^d$ can have poles at $\lambda \in \mathcal{P}$. They may as well have simple poles at apparent singularities of the system. It is a classical result, which can be adapted in the present trans-series setup, that these apparent singularities are exactly the values of $\lambda$ which annihilate the Wronskian of this new system $\det \Psi(\lambda,\hbar)$.
Let us also emphasize that, even if it is implicit in the notations,  they are formal series in $\hbar$ of the same type as the elements of the matrices $\left(C_{P,k,l}(\lambda,\hbar)\right)_{P\in \mathcal{P}, l\in \llbracket 0,d-1\rrbracket,k \in S_P^{(l+1)}}$, namely they are series in $\hbar$ whose coefficients are rational functions of $\lambda$ with coefficients given by combinations of derivatives of theta functions.

In the next section, we will study more carefully the properties of these rational functions $\left(b_l(\lambda,\hbar)\right)_{l\in \llbracket 0,d\rrbracket}$ as well as the apparent singularities of this system and how to remove them.

\subsubsection{Wronskian and apparent singularities}\label{SectionWronskian}

Let us first study the apparent singularities of our system.

For this purpose, we wish to identify the structure of the Wronskian $W(\lambda,\hbar) \coloneqq \det \Psi(\lambda,\hbar)$. In order to achieve this aim, let us first study its singularity structure. Thanks to Theorem~\ref{th-no-bp}, 
it might have singularities only when $\lambda \to P \in \mathcal{P}$.
 Let us first remark that one has
\beq
W(\lambda,\hbar) \coloneqq \det \Psi(\lambda,\hbar) = \det \left[ \mathbf{S}(\lambda,\hbar) \right] \, \prod_{j =1}^d \psi^{(j)}(\lambda,\hbar),
\eeq
where $\mathbf{S}(\lambda,\hbar)$ is the $d\times d$ matrix defined by
\beq
\forall \,(i, j)\in \llbracket 1, d\rrbracket^2 \, : \; \left[\mathbf{S}(\lambda,\hbar)\right]_{ij} \coloneqq \left(\psi^{(j)}(\lambda,\hbar) \right)^{-1} \Big(\hbar \frac{\partial }{\partial \lambda} \Big)^{i-1} \psi^{(j)}(\lambda,\hbar) .
\eeq
Let us now recall that the non-perturbative wave function reads
\beq
\psi_{0,NP}^{\infty^{(\alpha)}}(z,\hbar)  = e^{\hbar^{-2} \om_{0,0}(\boldsymbol{\epsilon})  + \hbar^{-1} S_{-1}(z,\boldsymbol{\epsilon}) + S_0(z,\boldsymbol{\epsilon})} \bigg(1 + {\displaystyle \sum_{m=1}^\infty} \hbar^m \, \Xi_m^{(\infty_\alpha)}(z,\hbar) \bigg),
\eeq
where, for any $m\geq 0$, $\Xi_m^{(\infty^{(\alpha)})}(z)$ is holomorphic at $z \in x^{-1}(\mathcal{P})$ and
\beqq
S_{-1}(z,\boldsymbol{\epsilon}) = \int_{\infty^{(\alpha)}}^z ydx + O(1)
\eeqq
and
\beqq
S_0(z,\boldsymbol{\epsilon}) =  \int_{z_1 =\infty^{(\alpha)}}^z \int_{z_2 =\infty^{(\alpha)}}^z  \left[ \om_{0,2}(z_1,z_2) - \frac{dx(z_1) \, dx(z_2)}{(x(z_1)-x(z_2))^2} \right] + O(1)
\eeqq
for $z \to p \in x^{-1}(\mathcal{P})$.

Let us recall as well that 
\beq
\om_{0,2}(z_1,z_2) - \frac{dx(z_1) \, dx(z_2)}{(x(z_1)-x(z_2))^2} = - \sum_{z \in x^{-1}(z_2)\setminus \{z_2\}} \om_{0,2}(z_1,z)
\eeq
to emphasize the fact that this does not have any pole on the diagonal. Hence,
\beq
\hbar^{-2} \om_{0,0}(\boldsymbol{\epsilon})  + \hbar^{-1} S_{-1}(z,\boldsymbol{\epsilon}) + S_0(z,\boldsymbol{\epsilon}) = \hbar^{-1}  \int_{\infty^{(\alpha)}}^z ydx  + \; \text{holomorphic},
\eeq
as $z \to p \in x^{-1}(\mathcal{P})$.

On the other hand, one has
\beq
\det  \left[ \mathbf{S}(\lambda,\hbar) \right] =\det \left[ \left(y(z^{(j)}(\lambda)) \right)^{i-1} \right] + \; \text{sub-leading terms} .
\eeq

This finally gives the asymptotics as $\lambda \to P \in \mathcal{P}$,
\beq
\det \Psi(\lambda,\hbar) = \kappa_P \exp\left[ \hbar^{-1} \int_{\infty}^\lambda P_1(\lambda)  d\lambda \right] \, \xi_P(\lambda)^{-G_P} \, \left[1 + O\left(\xi_P(\lambda)\right)\right],
\eeq
where
\beq
\forall\,  i \in \llbracket 1, N\rrbracket \, : \; G_{\Lambda_i} \leq  {\displaystyle \max_{\boldsymbol{\beta} \underset{\frac{d(d-1)}{2}}{\subseteq}x^{-1}(\Lambda_i)}}\bigg(\sum_{p\in\boldsymbol{\beta}}  r_p\bigg),
\eeq
\beq
G_\infty \leq {\displaystyle \max_{\boldsymbol{\beta} \underset{\frac{d(d-1)}{2}}{\subseteq}x^{-1}(\Lambda_i)}}\bigg(\sum_{p\in\boldsymbol{\beta}} r_p-2 \bigg),
\eeq
and $\kappa_P$ is independent of $\lambda$ and $\hbar$.

This allows to obtain an expression for the Wronskian.

\bc[Expression of the Wronskian]\label{Corollaryqi}
The Wronskian takes the form
\beq
\det \Psi(\lambda,\hbar) = \kappa \frac{  \underset{i=1}{\overset{G}{\prod}}  (\lambda-q_i(\hbar))}{ \underset{i=1}{\overset{N}{\prod}} (\lambda- \Lambda_i)^{G_{\Lambda_i}}} \, \exp\left( \hbar^{-1} \int_{0}^\lambda P_1(\lambda) d\lambda\right),
\eeq
where $\kappa$ is independent of $\lambda$ and $G = \underset{P \in \mathcal{P}}{\sum} G_P$.
\ec

\begin{proof}
The proof is similar to the proof of Theorem \ref{th-C}.

Let us consider the ratio $\frac{\det \Psi(\lambda,\hbar)}{\exp\left( \hbar^{-1} \int_{0}^\lambda P_1(\lambda) d\lambda\right)}$. By studying its possible singularities and monodromies, one sees that it is a rational function of $\lambda$ with poles in $\mathcal{P}$.

More precisely, it implies that 
\beq
\frac{\det \Psi(\lambda,\hbar)}{\exp\left( \hbar^{-1} \int_{0}^\lambda P_1(\lambda) d\lambda\right)} = \sum_{m=0}^\infty \hbar^{m} W_m(\lambda,\hbar),
\eeq
where
\beq
W_m(\lambda,\hbar)  = \frac{\text{Pol}_m(\lambda,\hbar)}{ \underset{i=1}{\overset{N}{\prod}}(\lambda- \Lambda_i)^{G_{\Lambda_i}}},
\eeq
with $\text{Pol}_m(\lambda,\hbar)$ a polynomial in $\lambda$ of degree at most $G$ whose coefficients can depend on $\hbar$ through theta functions and their derivatives so that
\beq
\frac{\det \Psi(\lambda,\hbar)}{\exp\left( \hbar^{-1} \int_{0}^\lambda P_1(\lambda) d\lambda\right)} = \frac{ \underset{m=0}{\overset{\infty}{\sum}}  \hbar^{m} \text{Pol}_m(\lambda,\hbar)}{\underset{i=1}{\overset{N}{\prod}}(\lambda- \Lambda_i)^{G_{\Lambda_i}}},
\eeq
leading to the result with
\beq
 \kappa \,{\displaystyle \prod_{i=1}^G } (\lambda-q_i(\hbar)) = \sum_{m=0}^\infty \hbar^{m} \text{Pol}_m(\lambda,\hbar).
 \eeq
The zeros $q_i(\hbar)$ are obtained by solving the equation with $\underset{m=0}{\overset{\infty}{\sum}}\hbar^{m} \text{Pol}_m(q_i(\hbar),\hbar)$ order by order in $\hbar$ by considering solutions of the form
\beq
q_i(\hbar) = \sum_{m=0}^\infty \hbar^m q_i^{(m)}(\hbar),
\eeq
where the coefficients $q_i^{(m)}(\hbar)$ depend on $\hbar$ only through oscillatory integrals of the form $
\left. \frac{\partial^n \Theta_{\text{Riemann}}(\mathbf{v}, \boldsymbol{\tau})}{\partial v_{i_1} \dots \partial v_{i_n}} \right|_{\mathbf{v} = \frac{\boldsymbol{\phi} + \boldsymbol{\rho}}{\hbar} } .
$
This equation is solved order by order in $\hbar$ so that the number of solutions is fixed by the number of solutions of the leading order equation in $\hbar$ which has $G$ distinct solutions from the admissibility assumption (Definition~\ref{DefAdmissibleSpectralCurve}) of the classical spectral curve. Hence, $q_i(\hbar) \neq q_j(\hbar)$, for $i \neq j$.
\end{proof}

Let us remark that for generic values of the spectral times $(t_{p,k})_{p\in x^{-1}(\mathcal{P}),k\in\llbracket 0,r_p-1\rrbracket}$, the number of zeros of the Wronskian is expected to be equal to the genus $g$ of the classical spectral curve (see Examples \ref{sec-examples}). The zeros $(q_i(\hbar))_{i=1}^G$ are trans-series in $\hbar$ that can be computed through the asymptotics of $\Psi$ at its poles. The latter may be computed by expanding the KZ equations around the poles $P \in \mathcal{P}$. We shall see this procedure in the examples of Section~\ref{sec-examples}.

\subsubsection{Behavior of the quantum curve at poles}\label{Sectionbehaviorquantumpoles}

We now know that the coefficients $\left(b_j(\lambda,\hbar)\right)_{j\in\llbracket 1, d \rrbracket}$ have simple poles at $\lambda \in\{q_i\}_{i=1}^G$ and possibly higher order poles at $\lambda \in \mathcal{P}$. From a quantization perspective, one expects the leading order of $b_j(\lambda,\hbar)$ to behave as its classical limit $P_j(\lambda)$ at any pole in $\mathcal{P}$. Let us check this and be more precise in our description of $b_j(\lambda,\hbar)$.

 The coefficients $\left(b_j(\lambda,\hbar)\right)_{j=1}^d$ of the Lax matrix $L(\lambda,\hbar)$ are solutions to a system of $d$ equations
\beq
\sum_{k=0}^{d-1}  \frac{b_{d-k}(\lambda,\hbar)}{\psi^{(j)}(\lambda,\hbar)} \frac{\hbar^k\partial^k \psi^{(j)}(\lambda,\hbar)}{\partial \lambda^k} = -  \frac{1}{\psi^{(j)}(\lambda,\hbar)} \frac{\hbar^d\partial^d \psi^{(j)}(\lambda,\hbar)}{\partial \lambda^d},
\eeq
for $j\in \llbracket 1,d\rrbracket$.
Away from the zeros $\lambda = q_i(\hbar)$ of the Wronskian, this equation has a unique solution in $\left(b_j(\lambda,\hbar)\right)_{j=1}^d$. Let us study this solution as $\lambda \to P \in \mathcal{P}$. We shall first evaluate the behavior of the coefficients of this equation.

\bl
One has
\beq
\forall j \in \llbracket 1 , d \rrbracket \, , \; \forall\, k \geq 1 \, , \; \frac{1}{\psi^{(j)}(\lambda)} \frac{\hbar^k\partial^k \psi^{(j)}\left(\lambda\right)}{\partial \lambda^k} = y(z^{(j)}(\lambda))^k + O\left(\frac{\partial y(z^{(j)}(\lambda))^{k-1}}{\partial \lambda} \right),
\eeq
when $\lambda \to P \in \mathcal{P}$.
\el

\begin{proof}
We shall proceed by induction. For $k=1$, this is obviously valid. Let us assume that the proposition is valid for $k-1$, let us prove it for $k$.
A simple computation shows that, for all $k \geq 2$,
\begin{align}
& \qquad\qquad\qquad\qquad\qquad\qquad\quad \frac{1}{\psi^{(j)}(x(z))} \frac{\hbar^k\partial^k \psi^{(j)}(x(z))}{\partial x(z)^k} =  \\
& \frac{ \hbar\partial }{\partial x(z)} \left[\frac{1}{\psi^{(j)}(x(z))}\frac{\hbar^{k-1}\partial^{k-1} \psi^{(j)}(x(z))}{\partial x(z)^{k-1}}\right] + \frac{1}{\psi^{(j)}(x(z))} \frac{\hbar\psi^{(j)}(x(z))}{\partial x(z)}  \frac{1}{\psi^{(j)}(x(z))} \frac{\hbar^{k-1}\partial^{k-1} \psi^{(j)}(x(z))}{\partial x(z)^{k-1}}, \nonumber
\end{align} 
which implies the lemma.
\end{proof}

Using the behavior of the function $y$ at its poles, this implies that, around a pole $P \in \mathcal{P}$, one has, for all $j \in \llbracket 1 , d \rrbracket$  
\beq
 \forall\, k \geq 1 \, , \; \frac{1}{\psi^{(j)}(\lambda)} \frac{\hbar^k\partial^k \psi^{(j)}\left(\lambda\right)}{\partial \lambda^k} = y(z^{(j)}(\lambda))^k + O\left(\zeta_{p^{(j)}}(\lambda)^{(k-1)(-r_{p^{(j)}}-\epsilon_P d_{p^{(j)}}+1)-\epsilon_P d_{p^{(j)}}} \right)
\eeq
where $p^{(j)}$ denotes the pre-image of $P$ in the $j^{\text{th}}$ sheet of the cover.

On the other hand, one knows that the rational functions $\left(P_l(\lambda)\right)_{l=0}^d$ satisfy the classical spectral curve equation, i.e.
\beq \label{remarkrevision}
\forall\,  j\in \llbracket 1,d\rrbracket \, : \; \sum_{k=0}^{d-1} (-1)^{d-k}P_{d-k}(\lambda) \left[y(z^{(j)}(\lambda))\right]^k 
= -  \left[y(z^{(j)}(\lambda))\right]^d.
\eeq

{Hence, the leading order terms of $b_k(\lambda,\hbar)$ as $\lambda \to P\in \mathcal{P}$ match those of $(-1)^{d-k}P_{d-k}(\lambda)$ and  thus  are $\hbar$-independent. The exact number of terms independent of $\hbar$ can be computed on a case by case basis.\footnote{These terms generically correspond to the Casimirs obtained by studying the associated isospectral system.}

\subsection{Pole structure of the quantum curve}


Since coefficients $\left(b_i(\lambda,\hbar)\right)_{i\in \llbracket 1, d\rrbracket}$ of the quantum curve might only have simple poles at apparent singularities, we may regroup the observation above together with results of Sections \ref{SectionWronskian} and \ref{Sectionbehaviorquantumpoles} to obtain the following expression.

\bt\label{deformed-curve}
The coefficients of the quantum curve read
\beq
\forall\, l \in \llbracket 1,d\rrbracket \, : \; b_{l}(\lambda,\hbar) = \sum_{P \in \mathcal{P}} \sum_{k \in \mathcal{S}_{P}^{(l)}} B_{P,k}^{(l)}(\hbar) \xi_{P}(\lambda)^{-k} + \sum_{i=1}^G \frac{p_i^{(l)}(\hbar)}{\lambda-q_i(\hbar)},
\eeq
for some trans-series $p_i^{(l)}(\hbar)$ and $B_{P,k}^{(l)}(\hbar)$ of the same nature as the singularities $q_i(\hbar)$. 

Moreover,
for any $l\in \llbracket 1,d\rrbracket$, some of the leading coefficients $B_{P,\mathcal{S}_{P}^{(l)}}^{(l)}(\hbar)$ of the singular part of $b_{l}(\lambda,\hbar) $ around any $P \in \mathcal{P}$ do not depend on $\hbar$ since they are equal to the leading coefficient of $(-1)^l P_l(\lambda)$.

\et


In addition to the behavior of the coefficients of the quantum curve around its poles, it is possible to study how the later behaves as $\hbar \to 0$.
For this purpose, let us recall that the non-perturbative wave functions are obtained as evaluations of theta-symbols. From this perspective, any expression of the form $\frac{1}{\psi^{(j)}(\lambda)} \frac{\hbar^k\partial^k \psi^{(j)}\left(\lambda\right)}{\partial \lambda^k}$, for $k \in \mathbb{N}$, admits a well defined $\hbar \to 0$ limit as
\beq \label{Limithbar0specific}
\lim_{\hbar \to 0} \frac{1}{\psi^{(j)}(\lambda)} \frac{\hbar^k\partial^k \psi^{(j)}\left(\lambda\right)}{\partial \lambda^k} = y(z^{(j)})^k.
\eeq

Let us remark that the coefficients $\left(b_i(\lambda,\hbar)\right)_{i\in \llbracket 1, d\rrbracket}$ can be expressed through
$$
L(\lambda,\hbar) = \hbar \frac{\partial \Psi(\lambda,\hbar)}{\partial \lambda} \cdot \Psi(\lambda,\hbar)^{-1}.
$$
From this perspective, any $b_i(\lambda,\hbar)$ can be expressed in terms of combinations of expressions of the form $\frac{1}{\psi^{(j)}(\lambda)} \frac{\hbar^k\partial^k \psi^{(j)}\left(\lambda\right)}{\partial \lambda^k}$. For example, the coefficient $b_1(\lambda,\hbar)$ is given by the logarithmic derivative of the Wronskian
$$
b_1(\lambda,\hbar) = - \frac{1}{W(\lambda,\hbar)} \frac{\partial W(\lambda,\hbar)}{\partial \lambda}.$$
This implies that $b_i(\lambda,\hbar)$ admits a well defined limit as $\hbar \to 0$ as a trans-series built out of theta-symbols.

One can thus consider the $\hbar \to 0$ limit of the quantum curve which is nothing but the classical spectral curve thanks to \eqref{Limithbar0specific}. One thus obtains our second main result motivating the name ``quantum curve'' in Definition~\ref{DefQuantumCurve}.

\bt\label{limit-curve}
One has 
\beq
 \forall\, l\in \llbracket 1,d\rrbracket \, : \;   b_l(\lambda,\hbar) \overset{\hbar \to 0}{\to} (-1)^l P_l(\lambda).
\eeq
\et

Let us emphasize that the position of the apparent singularities as $q_i(\hbar)$, the corresponding residues $p_i^{(l)}(\hbar)$ as well as the coefficients $B_{P,k}^{(l)}(\hbar)$ can be computed explicitly in terms of the coefficients of the expansion of the non-perturbative wave functions around its singularities $p \in \mathcal{P}$. They can thus be expressed in terms of quantities computed by the topological recursion procedure. In examples presented in Section~\ref{sec-examples}, we alternatively write these coefficients in terms of the spectral times $\left(t_{p,k}\right)_{p\in x^{-1}(\mathcal{P}),k\in \llbracket 0, r_p-1\rrbracket}$ and $g$ Hamiltonians $\left(H_i\right)_{i=1}^g$ whose expressions are explicit in terms of the Darboux coordinates $\left(q_i,p_i\right)_{i=1}^g$ and spectral times. These results are similar to the ones obtained for hyper-elliptic curves in \cite{MO19_hyper} and we believe that they should hold in the general framework. We leave this conjecture for future works. 

\subsection{Gauge transformation to remove apparent singularities}\label{SectionGaugeTransfo}

We have built a differential equation (Definition~\eqref{DefQuantumCurve}) with the expected singularities as $\lambda \to \mathcal{P}$. However, equation \eqref{eq-quant-curve-1} involves singularities away from $\mathcal{P}$ as well. 
Since the solution $\Psi(\lambda,\hbar)$ is regular at these points $\left(q_i(\hbar)\right)_{i=1}^G$, these are apparent singularities, meaning that they are at most simple poles of the coefficients $(b_l(\lambda,\hbar))_{l=1}^d$. In addition, one knows that there exists a gauge transformation transforming this system into a system without apparent singularities. Let us exhibit this gauge transformation. 

\subsubsection{Expression of the gauge transformation removing apparent singularities}

\bd[Gauge transformation]\label{GaugeTransfo}
We define the following gauge transformation matrix,
\beqq
J(\lambda,\hbar) \coloneqq
 \left[
 \begin{array}{ccccc}
 1 & 0  & \dots &0&  0\cr
0 & 1 & \dots & 0& 0\cr
  \ddots & \ddots &\ddots & \ddots \cr
  0 & 0 &  \dots &  1&0 \cr
  \frac{Q_d(\lambda,\hbar)}{ \underset{i=1}{\overset{G}{\prod}} (\lambda-q_i(\hbar)) }
&  \frac{Q_{d-1}(\lambda,\hbar)}{ \underset{i=1}{\overset{G}{\prod}} (\lambda-q_i(\hbar)) }
&  \dots &  \frac{Q_2(\lambda,\hbar)}{ \underset{i=1}{\overset{G}{\prod}} (\lambda-q_i(\hbar))}
 &  \frac{Q_1(\lambda,\hbar)}{ \underset{i=1}{\overset{G}{\prod}} (\lambda-q_i(\hbar))}
    \cr  
\end{array}
\right]
\eeqq 
where the coefficients $(Q_i(\lambda,\hbar))_{i=1}^d$ are defined by interpolation by the constraints
\begin{itemize}
\item for $j\in \llbracket 1, d\rrbracket$, $Q_j(\lambda,\hbar)$ is a polynomial of degree at most  $G-1$;
\item for $j\in \llbracket 1, d\rrbracket$ we have
\beq\label{pi2}
\forall\,  i\in \llbracket 1,G\rrbracket \, : \; \frac{Q_j(q_i(\hbar),\hbar)}{\underset{k=1}{\overset{N}{\prod}} (q_i(\hbar)-\Lambda_k)^{G_{\Lambda_k}}}= - \hbar^{-1} \Res_{\lambda \to q_i(\hbar)} b_j(\lambda,\hbar) \, d\lambda = - \hbar^{-1} \,  p_i^{(j)}(\hbar).
\eeq
\end{itemize}
\hfill $\star$
\ed

In particular, since $b_1(\lambda,\hbar) = - \hbar \frac{\partial \log \det \Psi(\lambda,\hbar)}{\partial \lambda}$, one has 
\beq
\forall\, i\in \llbracket 1,G\rrbracket \, : \; \frac{Q_1(q_i(\hbar),\hbar)}{ \underset{k=1}{\overset{N}{\prod}} (q_i(\hbar)-\Lambda_k)^{G_{\Lambda_k}}} = - \hbar^{-1} \Res_{\lambda \to q_i(\hbar)} b_1(\lambda,\hbar) \, d\lambda = 1,
\eeq
so that
\beq
Q_1(\lambda,\hbar) = \underset{k=1}{\overset{N}{\prod}} (\lambda-\Lambda_k)^{G_{\Lambda_k}}
\eeq
does not depend on $\hbar$.

This implies that
\beq
\det J(\lambda,\hbar) = \frac{ \underset{k=1}{\overset{N}{\prod}} (\lambda-\Lambda_k)^{G_{\Lambda_k}}}{\underset{i=1}{\overset{G}{\prod}} (\lambda-q_i(\hbar))}.
\eeq 

This gauge transformation finally allows us to obtain our final system of ODEs.

\bt\label{mainThm}
The matrix 
\beq
\check{\Psi}(\lambda,\hbar) \coloneqq J(\lambda,\hbar) \Psi(\lambda,\hbar)
\eeq
is solution to the linear differential equation
\beq\label{DefCheckL}
\hbar \frac{\partial \check{\Psi}(\lambda,\hbar)}{\partial \lambda} = \check{L}(\lambda,\hbar) \check{\Psi}(\lambda,\hbar),
\eeq
where the Lax matrix
\beq
\check{L}(\lambda,\hbar) \coloneqq J(\lambda,\hbar) L(\lambda,\hbar) J(\lambda,\hbar)^{-1} + \hbar \frac{\partial J(\lambda,\hbar)}{\partial \lambda} J(\lambda,\hbar)^{-1}
\eeq
is a rational function of $\lambda$ with poles only at $\lambda \in \mathcal{P}$ whose expression is given in the following Proposition~\ref{ExpressionLnew}. 
\et

\begin{proof}
Let us first study the possible singularities of $\check{\Psi}(\lambda,\hbar)$. It inherits the singularities at $\lambda \in \mathcal{P}$ from $\Psi(\lambda,\hbar)$ but could now have simple poles at $\lambda = q_i(\hbar)$ for any $i\in \llbracket 1,G\rrbracket$, from the poles of $J(\lambda,\hbar)$. Let us rule out this possibility. For this purpose, we can compute the expansion of $J(\lambda,\hbar)$ around $\lambda = q_i$. It reads
\bea
J(\lambda,\hbar) &=& 
 \left[
 \begin{array}{ccccc}
 0 & 0  & \dots &0&  0\cr
0 & 0 & \dots & 0& 0\cr
  \vdots & \ddots &\ddots & \ddots&\vdots \cr
  0 & 0 &  \dots &  0&0 \cr
{\displaystyle  \Res_{\lambda \to q_i}} b_d(\lambda,\hbar) \, d\lambda& {\displaystyle  \Res_{\lambda \to q_i}} b_{d-1}(\lambda,\hbar) \, d\lambda &  \dots &  {\displaystyle  \Res_{\lambda \to q_i}}b_2(\lambda,\hbar) \, d\lambda  & {\displaystyle  \Res_{\lambda \to q_i}} b_1(\lambda,\hbar) \, d\lambda \cr  
\end{array}
\right] \cr
&&\times
 \frac{ \underset{k=1}{\overset{N}{\prod}} (q_i(\hbar)-\Lambda_k)^{G_{\Lambda_k}} }{\underset{j\in \llbracket 1,d\rrbracket \setminus \{i\}}{\prod} (q_i(\hbar)-q_j(\hbar))} \, \frac{1}{(\lambda-q_i(\hbar))} + O(1) .
\eea
The residue of $\check{\Psi}(\lambda,\hbar)$ at the simple pole $\lambda = q_i(\hbar)$ is thus proportional to 
\begin{align}
&\Res_{\lambda \to q_i(\hbar)} \check{\Psi}(\lambda,\hbar) d\lambda  = \nonumber\\
& \frac{ \underset{k=1}{\overset{N}{\prod}} (q_i(\hbar)-\Lambda_k)^{G_{\Lambda_k}} }{\underset{j\in \llbracket 1,d\rrbracket \setminus \{i\}}{\prod} (q_i(\hbar)-q_j(\hbar))} \,  \Res_{\lambda \to q_i(\hbar)}  \left[
 \begin{array}{ccccc}
 0 & 0  & \dots &0&  0\cr
0 & 0 & \dots & 0& 0\cr
  \ddots & \ddots &\ddots & \ddots \cr
  0 & 0 &  \dots &  0&0 \cr
 b_d(\lambda,\hbar) &  b_{d-1}(\lambda,\hbar)&  \dots &b_2(\lambda,\hbar) & b_1(\lambda,\hbar)  \cr  
\end{array}
\right] \Psi(\lambda,\hbar) \,d\lambda.
\end{align}
Hence
\beq
\forall\, l\in \llbracket1,d-1\rrbracket \, , \; \forall\, m\in \llbracket 1 , d\rrbracket \, : \; \Res_{\lambda \to q_i} \left[\check{\Psi}(\lambda,\hbar)\right]_{l,m} d\lambda = 0,
\eeq
while, for $j\in \llbracket 1,d\rrbracket$, we have
\beq
\Res_{\lambda \to q_i(\hbar)} \left[\check{\Psi}(\lambda,\hbar)\right]_{d,j} d\lambda  = \frac{\underset{k=1}{\overset{N}{\prod}} (q_i(\hbar)-\Lambda_k)^{G_{\Lambda_k}} }{\underset{j\in \llbracket 1,d\rrbracket \setminus \{i\}}{\prod} (q_i(\hbar)-q_j(\hbar))}  \Res_{\lambda \to q_i(\hbar)} \sum_{k=0}^{d-1} b_{d-k}(\lambda,\hbar) \left(\hbar \frac{\partial}{\partial \lambda}\right)^k \psi^{(j)}(\lambda,\hbar) d\lambda .
\eeq
From \eqref{eq-quant-curve-1}, this is equal to 
\beq
\Res_{\lambda \to q_i(\hbar)} \left[\check{\Psi}(\lambda,\hbar)\right]_{d,j} d\lambda  =  -  \frac{ \underset{k=1}{\overset{N}{\prod}} (q_i(\hbar)-\Lambda_k)^{G_{\Lambda_k}} }{\underset{j\in \llbracket 1,d\rrbracket \setminus \{i\}} {\prod} (q_i(\hbar)-q_j(\hbar))}   \Res_{\lambda \to q_i(\hbar)} \left(\hbar \frac{\partial}{\partial \lambda}\right)^d \psi^{(j)}(\lambda,\hbar) d\lambda,
\eeq
which is vanishing since $\psi^{(j)}(\lambda,\hbar)$ does not have any singularity at $\lambda = q_i(\hbar)$. Hence, $\check{\Psi}(\lambda,\hbar)$ does not have any pole at $\lambda = q_i(\hbar)$ and only has singularities at $\lambda \in \mathcal{P}$.

This means that, apart from $\lambda \in \mathcal{P}$, $\check{L}(\lambda,\hbar)$ might have only apparent singularities. Let us recall that these apparent singularities can only appear as zeros of the Wronskian of this system  $\det \check{\Psi} (\lambda,\hbar)$.

One can easily compute
\beq
\det \check{\Psi} (\lambda,\hbar) = \det(J(\lambda,\hbar) ) \, \det(\Psi(\lambda,\hbar)) = \kappa  \, \exp\left( \hbar^{-1} \int_{0}^\lambda P_1(\lambda) d\lambda\right),
\eeq
which is not vanishing away from $\lambda \in \mathcal{P}$. Thus $\check{L}(\lambda,\hbar)$ does not have any apparent singularity.
\end{proof}

Let us now compute explicitly $\check{L}(\lambda,\hbar)$. A straightforward computation gives the following proposition.

\begin{proposition}[Expression of $\check{L}$]\label{ExpressionLnew} The entries of the matrix $\check{L}(\lambda,\hbar)$ are 
\beqq
\forall\, l\in \llbracket 1,d-2\rrbracket \, , \; \forall\, m\in \llbracket 1,d\rrbracket \, :\; \left[\check{L}(\lambda,\hbar) \right]_{l,m} = \delta_{l,m-1},
\eeqq
\beaa
\forall\, m\in \llbracket 1,d-1\rrbracket \, : \left[\check{L}(\lambda,\hbar) \right]_{d-1,m} &=& - \frac{Q_{d-m+1}(\lambda,\hbar)}{\underset{k=1}{\overset{N}{\prod}} (\lambda-\Lambda_k)^{G_{\Lambda_k}}},
\cr
 \left[\check{L}(\lambda,\hbar) \right]_{d-1,d} &=& \frac{\underset{j=1}{\overset{G}{\prod}} (\lambda - q_j(\hbar)) }{\underset{k=1}{\overset{N}{\prod}} (\lambda-\Lambda_k)^{G_{\Lambda_k}}},
\eeaa
\beaa
\left[\check{L}(\lambda,\hbar) \right]_{d,1} & =&  \hbar \frac{ \partial \bigg(\frac{Q_d(\lambda)(\hbar)}{\underset{j=1}{\overset{G}{\prod}} (\lambda-q_j(\hbar))}\bigg)}{\partial \lambda} -b_d(\lambda,\hbar) \frac{\underset{k=1}{\overset{N}{\prod}} (\lambda-\Lambda_k)^{G_k}}{\underset{j=1}{\overset{G}{\prod}} (\lambda - q_j(\hbar))}\cr 
&& - P_1(\lambda) \frac{Q_d(\lambda,\hbar)}{\underset{j=1}{\overset{G}{\prod}} (\lambda-q_j(\hbar))} - \frac{Q_2(\lambda,\hbar) \, Q_d(\lambda,\hbar)}{\underset{j=1}{\overset{G}{\prod}}(\lambda-q_j(\hbar)) \, {\underset{k=1}{\overset{N}{\prod}} (\lambda-\Lambda_k)^{G_{\Lambda_k}}}},
\eeaa
and for $m\in \llbracket 1,d-1\rrbracket$:
\beaa
\left[\check{L}(\lambda,\hbar) \right]_{d,m} &=&
\hbar \frac{ \partial \bigg(\frac{Q_{d-m+1}(\lambda,\hbar)}{\underset{j=1}{\overset{G}{\prod}} (\lambda-q_j(\hbar))}\bigg)}{\partial \lambda}  - b_{d-m+1}(\lambda,\hbar) \frac{\underset{k=1}{\overset{N}{\prod}} (\lambda-\Lambda_k)^{G_k}}{\underset{j=1}{\overset{G}{\prod}} (\lambda - q_j(\hbar))}  - P_1(\lambda) \frac{Q_{d-m+1}(\lambda,\hbar)}{\underset{j=1}{\overset{G}{\prod}} (\lambda-q_j(\hbar))} \cr
&&- \frac{Q_2(\lambda,\hbar) \, Q_{d-m+1}(\lambda,\hbar)}{\underset{j=1}{\overset{G}{\prod}}(\lambda-q_j(\hbar)) \, {\underset{k=1}{\overset{N}{\prod}} (\lambda-\Lambda_k)^{G_{\Lambda_k}}}} + \frac{Q_{d-m+2}(\lambda,\hbar)}{\underset{j=1}{\overset{G}{\prod}} (\lambda-q_j(\hbar))} 
\eeaa
and finally
\beqq
\left[\check{L}(\lambda,\hbar) \right]_{d,d} = P_1(\lambda) + \frac{Q_2(\lambda,\hbar)}{\underset{k=1}{\overset{N}{\prod}} (\lambda-\Lambda_k)^{G_{\Lambda_k}}} .
\eeqq

Let us remind the reader that the polynomials $\left(Q_j(\lambda,\hbar)\right)_{j=1}^d$ are defined by interpolation in Definition \ref{GaugeTransfo}.
\end{proposition}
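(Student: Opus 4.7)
The plan is to prove Proposition \ref{ExpressionLnew} by direct computation from the definition $\check{L}=JLJ^{-1}+\hbar(\partial_\lambda J)J^{-1}$, exploiting the sparse structure of the gauge matrix $J$. Throughout, let me abbreviate $P(\lambda,\hbar):=\prod_{i=1}^G(\lambda-q_i(\hbar))$ and $R(\lambda):=\prod_{k=1}^N(\lambda-\Lambda_k)^{G_{\Lambda_k}}$, so that $Q_1(\lambda,\hbar)=R(\lambda)$ by the normalization established right after Definition \ref{GaugeTransfo}.

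First I would invert $J$ explicitly. Since $J$ coincides with the identity except for its last row, a short calculation gives $J^{-1}_{i,j}=\delta_{i,j}$ for $i\in\llbracket 1,d-1\rrbracket$, $J^{-1}_{d,j}=-Q_{d-j+1}/R$ for $j\in\llbracket 1,d-1\rrbracket$, and $J^{-1}_{d,d}=P/R$. Next I would compute $JL$ row by row using $L_{i,k}=\delta_{k,i+1}$ for $i\in\llbracket 1,d-1\rrbracket$ and $L_{d,k}=-b_{d-k+1}$; the first $d-1$ rows of $JL$ simply reproduce the shift, and the last row reads $(JL)_{d,k}=\mathbb{1}_{k\geq 2}\,Q_{d-k+2}/P\,-\,(R/P)\,b_{d-k+1}$. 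Composing with $J^{-1}$ then yields $(JLJ^{-1})_{i,m}=\delta_{m,i+1}$ for $i\in\llbracket 1,d-2\rrbracket$, and $(JLJ^{-1})_{d-1,m}=J^{-1}_{d,m}$, which already matches the claimed expression for row $d-1$ of $\check{L}$. (The gauge correction $\hbar(\partial_\lambda J)J^{-1}$ vanishes on rows $1$ through $d-1$ since $\partial_\lambda J$ is supported on the last row.)

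The core of the computation is the last row. On the one hand, $(JLJ^{-1})_{d,m}$ produces a $-(R/P)\,b_d\,\delta_{m,1}$ term from column $1$, terms $Q_{d-m+2}/P-(R/P)b_{d-m+1}$ from columns $2,\ldots,d-1$, and a contribution $(Q_2/P-(R/P)b_1)\,J^{-1}_{d,m}$ from column $d$. On the other hand, $\hbar(\partial_\lambda J\cdot J^{-1})_{d,m}$ contributes $\hbar\partial_\lambda(Q_{d-m+1}/P)$ when $m\leq d-1$, together with a universal term $\hbar\partial_\lambda(R/P)\cdot J^{-1}_{d,m}$ arising from the $j=d$ column of $\partial_\lambda J$. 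The crucial simplification is the Abel trace identity $-b_1=\hbar\partial_\lambda\log\det\Psi$, which combined with the explicit Wronskian $\det\Psi=\kappa\,(P/R)\exp(\hbar^{-1}\!\int\!P_1\,d\lambda)$ of Section \ref{SectionWronskian} gives $b_1=-P_1-\hbar\partial_\lambda\log(P/R)$. Substituting this into the $(Q_2/P-(R/P)b_1)J^{-1}_{d,m}$ piece, the logarithmic derivative of $R/P$ cancels exactly against the gauge-derivative contribution, leaving the combinations $-P_1\,Q_{d-m+1}/P$ and $-Q_2Q_{d-m+1}/(PR)$ advertised in the proposition, plus the explicit $\hbar\partial_\lambda(Q_{d-m+1}/P)$, $-b_{d-m+1}R/P$, and (for $m\leq d-1$) the shift term $Q_{d-m+2}/P$; for $m=d$ one obtains the stated diagonal entry $P_1+Q_2/R$.

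The main obstacle is purely the bookkeeping in the last row: tracking which pieces come from $JLJ^{-1}$ and which from $\hbar(\partial_\lambda J)J^{-1}$, and then regrouping via the Abel identity so that the answer splits cleanly into the four displayed terms. A useful consistency check, independent of the algebra, is that the residues at $\lambda=q_i(\hbar)$ must cancel: the pole of $\hbar\partial_\lambda(Q_{d-m+1}/P)$ exactly offsets the residue of $-b_{d-m+1}R/P$ thanks to the defining relation $Q_{d-m+1}(q_i)/R(q_i)=-\hbar^{-1}p_i^{(d-m+1)}(\hbar)$ in Definition \ref{GaugeTransfo}, confirming the pole-at-$\mathcal{P}$-only statement already guaranteed by Theorem \ref{mainThm}.
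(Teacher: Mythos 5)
Your computation is correct and is exactly the "straightforward computation" the paper invokes without writing out: direct evaluation of $JLJ^{-1}+\hbar(\partial_\lambda J)J^{-1}$ using the sparse structure of $J$, the explicit $J^{-1}$, and the trace identity $b_1=-\hbar\partial_\lambda\log\det\Psi$ combined with the Wronskian formula to produce the $-P_1 Q_{d-m+1}/P$ term. Your $\mathbb{1}_{m\geq 2}$ bookkeeping also correctly resolves the paper's slightly overlapping index ranges for the $m=1$ entry versus the general $m\in\llbracket 1,d-1\rrbracket$ formula.
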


\subsubsection{$\hbar$-Deformed spectral curve}

The $\hbar$-deformed  spectral curve associated to the differential system \eqref{DefCheckL} is defined by $\det(y- \check{L}(\lambda,\hbar)) $. It reads
\beq
\det( y- \check{L}(\lambda,\hbar))   =  y^d + \sum_{k=0}^{d-1} y^k \check{b}_{d-k}(\lambda,\hbar),
\eeq
where for all $j\in \llbracket1, d\rrbracket$, we have
\beq
\check{b}_{j}(\lambda,\hbar) =  b_j(\lambda,\hbar) + \hbar \frac{Q_j(\lambda,\hbar)}{\underset{k=1}{\overset{N}{\prod}}(\lambda-\Lambda_k)^{G_{\Lambda_k}}} \sum_{i=1}^G \frac{1}{\lambda-q_i(\hbar)} - \frac{\hbar}{\underset{k=1}{\overset{N}{\prod}} (\lambda-\Lambda_k)^{G_{\Lambda_k}}} \frac{\partial Q_j(\lambda,\hbar)}{\partial \lambda} .
\eeq

Remark in particular that
\beq
\check{b}_1(\lambda,\hbar) = -P_1(\lambda)
\eeq
is independent of $\hbar$, while the other coefficients read, for all $j\in \llbracket 2, d\rrbracket$,
\begin{align}
\check{b}_{j}(\lambda,\hbar)& = \sum_{P \in \mathcal{P}} \sum_{k \in \mathcal{S}_{P}^{(j)}} B_{P,k}^{(j)}\, \xi_{P}(\lambda)^{-k}  + \sum_{i=1}^G \frac{1}{\lambda-q_i(\hbar)} \Bigg[\hbar \frac{Q_j(\lambda,\hbar)}{\underset{k=1}{\overset{N}{\prod}} (\lambda-\Lambda_k)^{G_{\Lambda_k}}} + p_i^{(j)}(\hbar) \Bigg]\nonumber\\
& \qquad - \frac{\hbar}{\underset{k=1}{\overset{N}{\prod}} (\lambda-\Lambda_k)^{G_{\Lambda_k}}} \frac{\partial Q_j(\lambda,\hbar)}{\partial \lambda},
\end{align}
whose poles at $(q_i(\hbar))_{i=1}^d$ cancel due to the conditions on the polynomials $(Q_i)_{i=1}^d$ following Definition~\ref{GaugeTransfo}.

\medskip

In \cite{MOsl2}, following \cite{Darboux_coord93}, it was observed that the zeros of the Wronskian could be used to define Darboux coordinates describing our Lax matrix as a point in a coadjoint orbit. In order to prepare future work explaining our result from the point of view of isomonodromic systems, let us observe that we have obtained a very similar structure in this more general context. This is summarized by the following result.

\bt
For any $i\in \llbracket 1,G\rrbracket$, the pair $\left(q_i, P_1(q_i) - \hbar^{-1} p_i^{(2)}\right)$ defines a point on the $\hbar$-deformed classical spectral curve, i.e.
\beq
\det(P_1(q_i) - \hbar^{-1} p_i^{(2)}(\hbar)- \check{L}(q_i,\hbar))  = 0.
\eeq
\et

\begin{proof}
The theorem follows from the observation that the last column of the matrix $y-\check{L}(q_i,\hbar)$ has only one non-vanishing element given by 
\beq
\left[y-\check{L}(q_i,\hbar)\right]_{d,d} = y - P_1(\lambda) - \frac{Q_2(\lambda,\hbar)}{\underset{k=1}{\overset{N}{\prod}} (\lambda-\Lambda_k)^{G_{\Lambda_k}}}.
\eeq
Hence, its determinant is proportional to this element that vanishes for $y = P_1(q_i) - \hbar^{-1} p_i^{(2)}(\hbar)$.
\end{proof}

\begin{remark}We note that the quantum curve of Definition \ref{DefQuantumCurve} and its parametrization using $\left(q_i, p_i:=P_1(q_i) - \hbar^{-1} p_i^{(2)}\right)_{1\leq i\leq G}$ recover results of Dubrovin and Mazzocco \cite{DubrovinMazzocco} in the case of Fuchsian singularities. Indeed, the parametrization is the same in both setups since the $\left(q_i\right)_{1\leq i\leq G}$ are the apparent singularities (i.e. zeros of the Wronskian as observed in Corollary \ref{Corollaryqi}) while from \eqref{pi2} we have
\beq \forall \,i\in \llbracket 1,G\rrbracket \,:\, p_i^{(2)}=\underset{\lambda \to q_i}\Res\, b_2(\lambda)\eeq
as used by Dubrovin and Mazzocco (See page $5$ of \cite{DubrovinMazzocco}). The main difference is that we obtain a formal family of quantum curves parametrized by the parameter $\hbar$ that could be inserted in \cite{DubrovinMazzocco} by a simple rescaling of the spectral parameter, the position of the poles and their monodromies and of the wave matrix. Up to this trivial rescaling, the quantum curves (eq. $1.8$ of \cite{DubrovinMazzocco}) are the same because they have the same pole structure in both setups.
\end{remark}

\subsection{Summary of the different interesting gauges}

From Section~\ref{6.4} up to Section~\ref{SectionGaugeTransfo}, we used $4$ different gauges. Each of them has advantages and limitations that we detail here. Note that none of the gauge transformations modify the first line of the wave functions matrix so that it remains the same in the $4$ different gauges (and is used to define the quantum curve in Definition~\ref{DefQuantumCurve}).
\begin{itemize}\item \underline{Gauge $\widehat{\Psi}$}: This is the natural gauge coming from KZ equations and it provides compatible auxiliary systems $\left(\mathcal{L}_{P,k,l}\right)_{P\in \mathcal{P}, l\in \llbracket 0, d-1\rrbracket, k\in S_P^{(l+1)}}$. The main limitations are that the corresponding Lax matrix $\widehat{L}$ is not companion-like and that even its order $\hbar^0$ is not companion-like. This means that both the classical and quantum spectral curves are not directly readable from the Lax matrix. This gauge may also contain apparent singularities.
\item \underline{Gauge $\widetilde{\Psi}$}: This gauge shares many properties of the previous one and it is obtained from it through an $\hbar^0$ gauge transformation \eqref{GaugeG}. The main difference is that the leading order in $\hbar$ of $\widetilde{L}$ is companion-like so that the classical spectral curve is directly recovered from its last line. However, this does not hold for any order in $\hbar$ so that the quantum curve is not directly readable in this gauge. This gauge may also contains apparent singularities.
\item \underline{Gauge $\Psi$}: This is a gauge in which the corresponding Lax matrix $L$ is companion-like at all orders in $\hbar$. This means that both the quantum and classical curves are directly read from the last line of $L$ and its $\hbar\to 0$ limit. This gauge may contain apparent singularities but it provides a natural framework for Darboux coordinates and isomonodromic deformations.
\item \underline{Gauge $\check{\Psi}$}: This gauge is obtained from the previous one with an explicit matrix given in Definition~\ref{GaugeTransfo}. In this gauge, the Lax matrix $\check{L}$ has no apparent singularities. This allows to interpret $\check{L}(\lambda,\hbar) d\lambda$ as an $\hbar$-familly of Higgs fields giving rise to a flow in the corresponding Hitchin system. However in this gauge, $\check{L}$ is no longer companion-like (last two lines are non-trivial) so it is less adapted to directly read the classical and quantum curves. This gauge should also provide Lax systems generalizing Jimbo--Miwa Lax pairs for the Painlev\'{e} cases \cite{JimboMiwa}.   
\end{itemize}

\subsection{Practical computations}

In this section, we have built a quantum curve  \eqref{eq-quant-curve-1} and some associated linear systems $L(\lambda,\hbar)$ (given by equation~\eqref{ExpressionL} and Theorem~\ref{deformed-curve}) and $\check{L}(\lambda,\hbar)$ (given by Proposition~\ref{ExpressionLnew}). Both systems are expressed in terms of the apparent singularities $(q_i(\hbar))_{i=1}^d$. As we shall see in the examples, the position of these apparent singularities can be expressed in terms of the action of the linear operators $\mathcal{I}_C$ on the non-perturbative wave functions $\psi_{\mathrm{NP}}^{(\infty^{(\alpha)})}$. In practice, here is the procedure one should follow to quantize a classical spectral curve.

\begin{enumerate}
\item Write down the KZ equations \eqref{eq-NP-KZ} satisfied by the non-perturbative wave function.

\item Expand these KZ equations around each pole $\lambda \to P \in \mathcal{P}$  in order to obtain an expression of the coefficients of the asymptotic expansion of $\psi_{0,\mathrm{NP}}^{(\infty^{(\alpha)})}$ in terms of the action of the operators $\mathcal{I}_C$.

\item Use the latter expressions to compute the Wronskian of the system thanks to its expansion around its poles. This allows to compute the position of the apparent singularities $(q_i(\hbar))_{i=1}^d$.

\item Finally write down the linear system obtained, as well as the associated quantum curve, and use the compatibility of the system to recover its properties.
\end{enumerate}

\section{Examples}\label{sec-examples}

In this section we present the details of the quantization procedure applied to two particular cases, of degree $2$ and $3$ respectively.

\subsection{Painlev\'e 2 example}
In this section, we consider the simplest example of degree $2$ where the classical spectral curve has genus $1$ and a unique pole located at infinity. This type of curve was already quantized in \cite{MOsl2} in a slightly different context by imposing an additional symmetry corresponding to $P_1(\lambda) = 0$. Moreover, in \cite{MOsl2}, the authors considered a divisor of the form $D=[z]-[\sigma_a(z)]$ and not $D=[z]-[\infty^{(\alpha)}]$ as in the present paper. However, in both cases, one manages to find a quantum curve and recovers a 2-parameter solution of the Painlev\'e $2$ equation from the associated isomonodromic system.

We consider the case when $d=2$, $N=0$, $r_\infty^{(1)} = 2$ and $r_\infty^{(2)} = 4$. We consider $l_\infty = 2$ so that there are two points above infinity denoted by $\infty^{(1)}$ and $\infty^{(2)}$ respectively with $d_{\infty^{(1)}} = d_{\infty^{(2)}} = 1$.

\subsubsection{Classical spectral curve}\label{P2cla}

Let us consider a two-sheeted cover of the sphere defined by the equation
\beq\label{P2ClassicalSpectralCurve}
y^2 - P_1(\lambda) y + P_2(\lambda)= 0,
\eeq
where
\beq\label{def-P1}
P_1(\lambda) = P_{\infty,2}^{(1)} \lambda^2 + P_{\infty,1}^{(1)} \lambda + P_{\infty,0}^{(1)}
\eeq
and
\beq\label{def-P2}
P_2(\lambda) = P_{\infty,4}^{(2)} \lambda^4 + P_{\infty,3}^{(2)} \lambda^3+ P_{\infty,2}^{(2)} \lambda^2 + P_{\infty,1}^{(2)} \lambda + P_{\infty,0}^{(2)}.
\eeq
The function $y$ admits the following expansions around the two poles $\left(\infty^{(i)}\right)_{i=1}^2$,
\beq
y(z) = - t_{i,3} x(z)^2  - t_{i,2} x(z) -  t_{i,1} -  t_{i,0} x(z)^{-1} - \frac{\partial \om_{0,0}}{\partial t_{i,1}}x(z)^{-2} +  O\left(x(z)^{-3}\right),  \text{ as } z \to \infty^{(i)}, 
\eeq
where we used the general theory of topological recursion to express the order $x(z)^{-1}$ term.

\medskip

In terms of these spectral times $\left(t_{i,j}\right)_{1\leq i\leq 2, 1\leq j\leq 3}$, one has
\beq \label{P2Spectimes}
P_{\infty,2}^{(1)} = -t_{1,3} - t_{2,3}, \qquad
P_{\infty,1}^{(1)}  = - t_{1,2} - t_{2,2}, \qquad
P_{\infty,0}^{(1)} = - t_{1,1} -t_{2,1}, \qquad
0 =  - t_{1,0} -t_{2,0}, 
\eeq
as well as
\beq\label{P2Spectimes2}
P_{\infty,4}^{(2)} = t_{1,3} t_{2,3}
 , \quad
P_{\infty,3}^{(2)} = t_{1,2} t_{2,3} +  t_{1,3} t_{2,2}
 , \quad
P_{\infty,2}^{(2)} = t_{1,2} t_{2,2} + t_{1,3} t_{2,1} + t_{1,1} t_{2,3}
\eeq
and
\beq\label{P2Spectimes3}
P_{\infty,1}^{(2)} = t_{1,3} t_{2,0} + t_{1,0} t_{2,3} + t_{1,2} t_{2,1} + t_{1,1} t_{2,2}
 , \quad
P_{\infty,0}^{(2)} = t_{1,2} t_{2,0} + t_{1,0} t_{2,2} + t_{1,1} t_{2,1} + t_{1,3} \frac{\partial \om_{0,0}}{\partial t_{2,1}}  + t_{2,3} \frac{\partial \om_{0,0}}{\partial t_{1,1}},
\eeq
where $\left(\omega_{h,n}\right)_{h,n\geq 0}$ corresponds to the differential forms computed by the topological recursion applied on the classical spectral curve \eqref{P2ClassicalSpectralCurve}.

\subsubsection{KZ equations}
For this example, we choose to consider the divisor $D = [z]-[\infty^{(1)}]$. The choice of $\infty^{(2)}$ would lead to similar results. 

The non-perturbative wave functions satisfy the KZ equations
\beq
\left\{
\begin{array}{l}
\hbar \frac{\partial \psi_{0,NP}^{\infty^{(1)}}(z,\hbar)}{\partial x(z)} +  \psi_{1,NP}^{\infty^{(1)}}(z,\hbar) = P_1(x(z))  \psi_{0,NP}^{\infty^{(1)}}(z,\hbar), \cr
\hbar \frac{\partial \psi_{1,NP}^{\infty^{(1)}}(z,\hbar)}{\partial x(z)}  = P_2(x(z))  \psi_{0,NP}^{\infty^{(1)}}(z,\hbar) + \hbar \text{ev} . \mathcal{L}_{KZ}(x(z))  \left[\psi_{0,NP}^{\infty^{(1)} , \mathrm{symbol}}(z,\hbar)\right], \cr
\end{array} \right.
\eeq
where
\beq
\mathcal{L}_{KZ}(\lambda) \coloneqq \hbar t_{1,3} \mathcal{I}_{\mathcal{C}_{\infty^{(2)},1}} + \hbar t_{2,3} \mathcal{I}_{\mathcal{C}_{\infty^{(1)},1}} -  t_{2,3} \lambda -  t_{2,2}.
\eeq

\subsubsection{Asymptotics}

The asymptotics of the non-perturbative wave functions when $\lambda \to \infty$ being given by the ones of the perturbative ones, let us study the asymptotic behavior of $\psi^{\mathrm{reg}}([z^{(i)}(\lambda]-[\infty^{(1)}],\hbar)$ for $i\in\{1,2\}$ when $\lambda \to \infty$. From the general theory, one has
\beq
\psi^{\mathrm{reg}}([z^{(1)}(\lambda)]-[\infty^{(1)}],\hbar) = \exp \left[[\hbar^{-1} (V_{\infty^{(1)}}(z^{(1)}(\lambda)) + O(1))\right] (C_1+O(\lambda^{-1}))
\eeq
and
\beq
\psi^{\mathrm{reg}}([z^{(2)}(\lambda)]-[\infty^{(1)}],\hbar) = \exp \left[[\hbar^{-1} (V_{\infty^{(2)}}(z^{(2)}(\lambda)) + O(1))\right] \lambda^{-1} (C_2+O(\lambda^{-1})),
\eeq
for non-vanishing constants $(C_i)_{i=1}^2$.

This leads to a (non-perturbative) Wronskian of the form
\bea \label{WronskianP2}
W(\lambda,\hbar)&\coloneqq& \hbar \frac{\partial \psi_{NP}^{\mathrm{reg}}([z^{(2)}(\lambda)]-[\infty^{(1)}])}{\partial \lambda} \psi_{NP}^{\mathrm{reg}}([z^{(1)}(\lambda)]-[\infty^{(1)}]) \cr
&&- \hbar \frac{\partial \psi_{NP}^{\mathrm{reg}}([z^{(1)}(\lambda)]-[\infty^{(1)}])}{\partial \lambda} \psi_{NP}^{\mathrm{reg}}([z^{(2)}(\lambda)]-[\infty^{(1)}]) \cr
&=& \kappa  \exp \left[\hbar^{-1} \int_{0}^\lambda P_1(\lambda) d\lambda\right] (\lambda - q) .
\eea
Let us now compute the asymptotics of the action of the operator $\mathcal{L}$ on the perturbative wave function. From the regularized KZ equations, one has
\beq
\frac{\text{ev}. \mathcal{L}_{KZ} \psi^{\mathrm{reg \; symbol}}(z^{(1)}(\lambda))}{\psi^{reg}(z^{(1)}(\lambda))}= - 2 \hbar t_{2,3} \lambda + O(1)
\eeq
and
\beq
 \frac{\text{ev}.\mathcal{L}_{KZ} \psi^{\mathrm{reg \; symbol}}(z^{(2)}(\lambda))}{\psi^{reg}(z^{(2)}(\lambda))}=  -  \hbar (t_{2,3} + t_{1,3}) \lambda + O(1),
\eeq
as $\lambda \to \infty$.

This implies that 
\bea\label{eq-L11}
&&\frac{\text{ev}. \mathcal{L}_{KZ}[\psi_{NP}^{\mathrm{reg\; symbol}}(z^{(2)}(\lambda))] \hbar \partial_\lambda [\psi_{NP}^{\mathrm{reg}}(z^{(1)}(\lambda))] - \text{ev}.\mathcal{L}_{KZ}[\psi_{NP}^{\mathrm{reg \; symbol}}(z^{(1)}(\lambda))] \hbar \partial_\lambda [\psi_{NP}^{\mathrm{reg}}(z^{(2)}(\lambda))] }{W(\lambda)} \cr
&&= \hbar (t_{1,3}+2 t_{2,3})\lambda + H(\hbar) - \frac{p}{\lambda-q}, 
\eea
where $H$ and $p$ are independent of $\lambda$.

This imposes the form of the Lax system presented in the next section. As we shall see, the coefficient $ \hbar (t_{1,3}+2 t_{2,3})$ can be recovered from the compatibility of the Lax system. We derived it here directly from the action of the operator $\mathcal{L}_{KZ}$ only as a consistency check of our formulation.

\subsubsection{Lax pair formulation}

The general theory developed in this article implies that the wave functions satisfy a quantum curve that is equivalent to a differential system. The non-perturbative wave function
$$
\Psi(\lambda,\hbar)=\begin{pmatrix}\psi_{0,\mathrm{NP}}^{\infty^{(1)}}(z^{(1)}(\lambda),\hbar)&\psi_{0,\mathrm{NP}}^{\infty^{(1)}}(z^{(2)}(\lambda),\hbar)\\
\hbar \partial_\lambda \psi_{0,\mathrm{NP}}^{\infty^{(1)}}(z^{(1)}(\lambda),\hbar)& \hbar\partial_\lambda \psi_{0,\mathrm{NP}}^{\infty^{(1)}}(z^{(2)}(\lambda),\hbar)  \end{pmatrix}
$$
is solution to 
\beq \label{LaxPairP2}\hbar \partial_\lambda \Psi(\lambda,\hbar)=\begin{pmatrix}0&1\\ -P_2(\lambda) +\hbar P_1'(\lambda)+H-\frac{p}{\lambda-q}+\hbar \alpha \lambda & P_1(\lambda)+ \frac{\hbar}{\lambda-q} \end{pmatrix} \Psi(\lambda,\hbar)=L(\lambda,\hbar)\Psi(\lambda,\hbar),\eeq
where $\alpha=t_{1,3}+2t_{2,3}$ thanks to \eqref{eq-L11}. We shall see later that the value of $\alpha$ could alternatively be recovered by the asymptotics of the wave functions.
It is complemented by 
\beq \label{LaxPairP22-1}
\text{ev}.\mathcal{L}_{KZ}(\lambda)[\Psi^{\mathrm{symbol}}(\lambda,\hbar)]
=A_{KZ}(\lambda,\hbar)\Psi(\lambda,\hbar),\eeq
where the first line of $A_{KZ}(\lambda,\hbar)$ is constrained by the KZ equations to be
\beq
A_{KZ}(\lambda,\hbar)
=\begin{pmatrix} -\alpha\lambda-\frac{H}{\hbar}+\frac{p}{\hbar(\lambda-q)}& -\frac{1}{\lambda-q} \\ \left[A_{KZ}\right]_{2,1}(\lambda,\hbar)&\left[A_{KZ}\right]_{2,2}(\lambda,\hbar)\end{pmatrix}.
\eeq
Because $\mathcal{L}_{KZ}(\lambda)$ depends on $\lambda$, $\partial_\lambda \mathcal{L}_{KZ}(\lambda)$ is non-vanishing. In order to get rid of this complication, let us define a new operator
\beq
 \mathcal{L}\coloneqq \mathcal{L}_{KZ}(\lambda) +t_{2,3} \lambda + t_{2,2} = \hbar t_{1,3} \mathcal{I}_{\mathcal{C}_{\infty^{(2)},1}} + \hbar t_{2,3} \mathcal{I}_{\mathcal{C}_{\infty^{(1)},1}}.
 \eeq
 
 The compatible system now reads
 \bea \label{LaxPairP22-2}
\text{ev}.\mathcal{L}[\Psi^{\mathrm{symbol}}(\lambda,\hbar)]&=&\begin{pmatrix} (-\alpha+t_{2,3})\lambda + t_{2,2}  -\frac{H}{\hbar}+\frac{p}{\hbar(\lambda-q)}& -\frac{1}{\lambda-q} \\ A_{2,1}(\lambda,\hbar)&A_{2,2}(\lambda,\hbar)\end{pmatrix} \Psi(\lambda,\hbar) \cr
&=&\begin{pmatrix} P_{\infty,2}^{(1)} \lambda + t_{2,2}  -\frac{H}{\hbar}+\frac{p}{\hbar(\lambda-q)}& -\frac{1}{\lambda-q} \\ A_{2,1}(\lambda,\hbar)&A_{2,2}(\lambda,\hbar)\end{pmatrix} \Psi(\lambda,\hbar) 
 \coloneqq A(\lambda,\hbar)\Psi(\lambda,\hbar).\cr
&&
\eea

In order to ease the notations, we shall replace the operator $\text{ev}.\mathcal{L}$ acting on symbols by an operator $\mathcal{L}$ acting on $\Psi(\lambda,\hbar)$ so that \eqref{LaxPairP22-2} is replaced by 
 \beq \label{LaxPairP22}
\mathcal{L}[\Psi(\lambda,\hbar)]
 = A(\lambda,\hbar)\Psi(\lambda,\hbar).
\eeq
The operator $\mathcal{L}$ acts on any function $w$ built out of generalized integrals of differentials $\om_{h,n}$ through the map $w \to w^{\mathrm{symbol}}$ which replaces a generalized cycle by its symbolic partner. It can be easily checked that any function on which $\mathcal{L}$ acts later on has a unique symbolic partner through this map on which $\text{ev}.\mathcal{L}$ has a well defined action.

The compatibility of the system reads
\beq\label{CompatibilityP2} \mathcal{L}[L(\lambda,\hbar)]=\hbar \partial_\lambda A(\lambda,\hbar)+\left[A(\lambda,\hbar),L(\lambda,\hbar)\right]. \eeq
The first line of the previous compatibility equations (whose l.h.s.~is null) determines the last line of $A(\lambda,\hbar)$. Straightforward computations show that
\bea A_{2,1}(\lambda,\hbar)&=&A_{1,2}(\lambda,\hbar)L_{2,1}(\lambda,\hbar)+\hbar \partial_\lambda A_{1,1}(\lambda,\hbar),\cr
A_{2,2}(\lambda,\hbar)&=&A_{1,1}(\lambda,\hbar)+A_{1,2}(\lambda,\hbar)L_{2,2}(\lambda,\hbar)+\hbar \partial_\lambda A_{1,2}(\lambda,\hbar).
\eea
Using the complete knowledge of the matrix $A(\lambda,\hbar)$, one may now look at the last line of the compatibility equation \eqref{CompatibilityP2}. We get, for the entry $(2,2)$,
\beq \mathcal{L}[L_{2,2}(\lambda,\hbar)]=\hbar \partial_\lambda \Tr A(\lambda,\hbar).\eeq
The l.h.s.~of the previous equation reads
\beq \mathcal{L}[L_{2,2}(\lambda,\hbar)]=\mathcal{L}[P_1](\lambda,\hbar)+\frac{\hbar \mathcal{L}[q]}{(\lambda-q)^2},\eeq
while the r.h.s.~reads
\beq \hbar \partial_\lambda \Tr A(\lambda,\hbar)=\frac{\hbar P_1(q)-2p}{(\lambda-q)^2}-\hbar P_{\infty,2}^{(1)}+2\hbar P_{\infty,2}^{(1)}.\eeq
Thus, we end up with
\begin{align}\label{P2LP1}\mathcal{L}[P_{\infty,2}^{(1)}]=\mathcal{L}[P_{\infty,1}^{(1)}] =0\,,\, \mathcal{L}[P_{\infty,0}^{(1)}]=-\hbar P_{\infty,2}^{(1)}+2\hbar P_{\infty,2}^{(1)}=-\hbar(t_{1,3}+t_{2,3}) \cr \Leftrightarrow  \mathcal{L}[P_1](\lambda)=-\hbar(t_{1,3}+t_{2,3}) \end{align}
and
\beq \label{P2Hamilton1} \mathcal{L}[q]= P_1(q)-2\frac{p}{\hbar}.\eeq
The same computation may be applied to the entry $(2,1)$ of the compatibility equation \eqref{CompatibilityP2}, and it is equivalent to 
\bea  \label{P2Hamilton2}\mathcal{L}[P_{\infty,4}^{(2)}]&=&0\cr
\mathcal{L}[P_{\infty,3}^{(2)}]&=&0\cr
\mathcal{L}[P_{\infty,2}^{(2)}]&=&-2\hbar P_{\infty,4}^{(2)}+ \hbar \left[P_{\infty,2}^{(1)}\right]^2 \cr
\mathcal{L}[P_{\infty,1}^{(2)}]&=&-\hbar P_{\infty,3}^{(2)}+\hbar P_{\infty,1}^{(1)} P_{\infty,2}^{(1)}\cr 
\mathcal{L}[P_{\infty,0}^{(2)}]-\mathcal{L}[H]&=&2\hbar P_{\infty,4}^{(2)}q^2+\hbar P_{\infty,3}^{(2)}q- P_{\infty,2}^{(1)} p+\hbar P_{\infty,0}^{(1)} P_{\infty,2}^{(1)}\cr
\mathcal{L}[p]&=&-P_1'(q) p+\hbar P_2'(q)+\hbar^2 t_{2,3}\cr
H&=&\frac{p^2}{\hbar^2}- P_1(q)\frac{p}{\hbar}+  P_2(q)-\hbar P_1'(q)+\hbar (P_{\infty,2}^{(1)}-t_{2,3}) q\,.
\eea
In particular, combining \eqref{P2Hamilton1} and \eqref{P2Hamilton2} we get that the evolution of $(q,p)$ satisfies the Hamiltonian system
\beq \mathcal{L}[q]=-\hbar\frac{\partial H_0}{\partial p} \,,\,\;\; \mathcal{L}[p]=\hbar\frac{\partial H_0}{\partial q},\eeq
with the Hamiltonian
\begin{align}\label{HamiltonP2}H_0(p,q,\hbar)&= \frac{p^2}{\hbar^2}- P_1(q)\frac{p}{\hbar}+  P_2(q)-\hbar P_1'(q)+\hbar q(2P_{\infty,2}^{(1)}-t_{2,3})\cr
&=\frac{p^2}{\hbar^2}- \left( P_{\infty,2}^{(1)}q^2+P_{\infty,1}^{(1)}q+P_{\infty,0}^{(1)} \right)\frac{p}{\hbar}\cr &+\left(P_{\infty,4}^{(2)}q^4+P_{\infty,3}^{(2)}q^3+P_{\infty,2}^{(2)}q^2+P_{\infty,1}^{(2)}q+P_{\infty,0}^{(2)}+2\hbar P_{\infty,2}^{(1)} q \right)\,,
&&
\end{align}
so that 
\beq H=H_0(p,q,\hbar)+ \hbar (t_{1,3}+t_{2,3}) q .\eeq
We may use the connection between coefficients $\left(P_{\infty,i}^{(j)}\right)_{1\leq i\leq 4,\,1\leq j\leq 2}$ and the spectral times $(t_{i,j})$'s given by \eqref{P2Spectimes},\eqref{P2Spectimes2},\eqref{P2Spectimes3} to obtain
\beq \mathcal{L}[t_{1,3}]=\mathcal{L}[t_{2,3}]=\mathcal{L}[t_{1,2}]=\mathcal{L}[t_{1,0}]=\mathcal{L}[t_{2,0}]=0\,,\,\mathcal{L}[t_{1,1}]=\hbar t_{2,3}\,,\, \mathcal{L}[t_{2,1}]=\hbar t_{1,3},\eeq
indicating that the operator $\mathcal{L}$ corresponds to the following differential operator
\beq \label{LP2} \mathcal{L} = \hbar t_{2,3}\partial_{t_{1,1}}+\hbar t_{1,3} \partial_{t_{2,1}}.\eeq
In particular, since $\alpha=t_{1,3}+2t_{2,3}$, we obtain that $\mathcal{L}[\alpha]=0$ .We may also rewrite the action of $\mathcal{L}$  in terms of the spectral times
\bea \mathcal{L}[P_{\infty,4}^{(2)}]&=&\mathcal{L}[P_{\infty,3}^{(2)}]=0 \cr
\mathcal{L}[P_{\infty,2}^{(2)}]&=&\hbar(t_{1,3}^2+t_{2,3}^2)\,\,,\,\, \mathcal{L}[P_{\infty,1}^{(2)}]=\hbar(t_{2,3}t_{2,2}+t_{1,3}t_{1,2})\cr
\mathcal{L}[P_{\infty,2}^{(1)}]&=&\mathcal{L}[P_{\infty,1}^{(1)}] =0\,,\, \mathcal{L}[P_{\infty,0}^{(1)}]=-\hbar(t_{1,3}+t_{2,3})\cr 
\mathcal{L}[P_{\infty,0}^{(2)}]-\mathcal{L}[H]&=&2\hbar t_{1,3}t_{2,3}q^2+\hbar (t_{1,2}t_{2,3}+t_{2,2}t_{1,3})q + (t_{1,3}+t_{2,3})  p+\hbar (t_{1,1}+t_{2,1}) (t_{1,3}+t_{2,3})\cr
\mathcal{L}[q]&=&-2\frac{p}{\hbar}-(t_{1,3}+t_{2,3})q^2-(t_{1,2}+t_{2,2})q-t_{1,1}-t_{2,1}\cr
\mathcal{L}[p]&=&(2(t_{1,3}+t_{2,3})q+t_{1,2}+t_{2,2})p+4\hbar t_{1,3}t_{2,3}q^3+3\hbar(t_{1,3}t_{2,2}+t_{2,3}t_{1,2})q^2\cr
&&+2\hbar(t_{1,3}t_{2,1}+t_{2,3}t_{1,1}+t_{1,2}t_{2,2})q\cr
&&+\hbar(t_{1,2}t_{2,1}+t_{1,1}t_{2,2}-(t_{1,3}-t_{2,3})t_{1,0})-\hbar^2 t_{2,3}\,.
\eea
Combining the last two equations, we may obtain the following equation for $q$,
\bea \label{EvolutionP2}\mathcal{L}^2[q]
&=&2(t_{1,3}-t_{2,3})^2q^3+3(t_{1,3}-t_{2,3})(t_{1,2}-t_{2,2})q^2\cr
&&+\left( (t_{1,2}-t_{2,2})^2+2(t_{1,3}-t_{2,3})(t_{1,1}-t_{2,1})\right)q\cr
&&+(t_{1,2}-t_{2,2})(t_{1,1}-t_{2,1})+(2t_{1,0}-\hbar)(t_{1,3}-t_{2,3})\,.
\eea

\begin{remark}[Normalization of the non-perturbative wave functions and coefficient $\alpha$]\label{RemarkIso} The value of the coefficient $\alpha=t_{1,3}+2t_{2,3}$ comes from the choice of normalization of the wave functions. Indeed, the Lax pair \eqref{LaxPairP2} and \eqref{LaxPairP22} is consistent for any value of $\alpha$. However, this value is imposed by the choice of normalization of the wave functions at infinity,
\bea \label{P2FormalExpansion}\ln \Psi_{1,1}&=&-\frac{t_{1,3}}{3\hbar}\lambda^3-\frac{t_{1,2}}{2\hbar}\lambda^2- \frac{t_{1,1}}{\hbar}\lambda-\frac{t_{1,0}}{\hbar}\ln \lambda+S_{1,0}+\sum_{k=2}^{\infty} \frac{S_{1,k}}{(k-1)\lambda^{k-1}},\cr
 \ln \Psi_{1,2}&=&-\frac{t_{2,3}}{3\hbar}\lambda^3-\frac{t_{2,2}}{2\hbar}\lambda^2- \frac{t_{2,1}}{\hbar}\lambda-\frac{t_{2,0}}{\hbar}\ln \lambda-\ln \lambda+S_{2,0}+\sum_{k=2}^{\infty} \frac{S_{2,k}}{(k-1)\lambda^{k-1}}.
\eea
Indeed, this formal expansion and the fact that $t_{2,0}=-t_{1,0}$ and $t_{1,3}-t_{2,3}\neq0$ imply that $L_{2,1}(\lambda,\hbar)=\frac{\hbar^2}{W(\lambda,\hbar)}\left( \frac{\partial^2 \Psi_{1,1} }{\partial \lambda^2} \frac{\partial \Psi_{1,2} }{\partial \lambda}-\frac{\partial \Psi_{1,1} }{\partial \lambda} \frac{\partial^2 \Psi_{1,2} }{\partial \lambda^2}\right)$ has a formal expansion at infinity,
\bea L_{2,1}(\lambda,\hbar)&=&-t_{1,3}t_{2,3}\lambda^4-(t_{1,3}t_{2,2}+t_{2,3}t_{1,2})\lambda^3- (t_{2,1}t_{1,3}+t_{1,1}t_{2,3}+t_{2,2}t_{1,2})\lambda^2\cr
&& -(t_{1,3}t_{2,0}+t_{1,0}t_{2,3}+t_{1,2}t_{2,1}+t_{1,1}t_{2,2}+\hbar t_{1,3})\lambda+ O(1),
\eea
where the Wronskian $W(\lambda,\hbar)$ is given by \eqref{WronskianP2} with $\kappa=(t_{1,3}-t_{2,3})e^{S_{1,0}+S_{2,0}}$.
Thus $L_{2,1}(\lambda,\hbar)$ only matches the formula given by \eqref{LaxPairP2},
\beq L_{2,1}(\lambda,\hbar)=-P_{\infty,4}^{(2)}\lambda^4-P_{\infty,3}^{(2)}\lambda^3-P_{\infty,2}^{(2)}\lambda^2- (P_{\infty,1}^{(2)}-2\hbar P_{\infty,2}^{(1)}-\hbar \alpha)\lambda +O(1), \eeq
for $t_{1,3}=-2P_{\infty,2}^{(1)}- \alpha $, i.e.~$\alpha=t_{1,3}+2t_{2,3}$.

Note also that $S_{1,0}=\ln Z_{\mathrm{NP}}(\hbar,\rho)$ because of \eqref{eq-expansion-psi-infty}.
\end{remark} 

\begin{remark} In agreement with the general theory developed in this article, the formal expansion of the wave functions \eqref{P2FormalExpansion} as well as the condition on the Wronskian \eqref{WronskianP2} completely determine the Lax pair matrices using $L=\hbar (\partial_\lambda \Psi )\Psi^{-1}$ and $A=\mathcal{L}[\Psi] \,\Psi^{-1}$. 
\end{remark}

Using Remark \ref{RemarkIso}, we may look at the terms $O\left(\frac{1}{\lambda^2}\right)$ in entry $A_{1,2}(\lambda,\hbar)$ as $\lambda\to\infty$. The coefficient is given by $-q$ in the Lax pair formulation while it corresponds to $\frac{\mathcal{L}[S_{2,0}]-\mathcal{L}[S_{1,0}]+t_{1,2}-t_{2,2}}{t_{1,3}-t_{2,3}}$ using \eqref{P2FormalExpansion}. Similarly, we may look at coefficient $O(1)$ when $\lambda\to \infty$ of the entry $A_{1,1}(\lambda,\hbar)$. We get $\frac{t_{1,3}\mathcal{L}[S_{2,0}] -t_{2,3}\mathcal{L}[S_{1,0}]+t_{2,3}t_{1,2}-t_{1,3}t_{2,2}}{t_{1,3}-t_{2,3}}=-\frac{H}{\hbar}+t_{2,2}$. The last two equations are equivalent to
\beq \mathcal{L}[S_{1,0}]=t_{1,3}\,q-\frac{H}{\hbar}+t_{1,2}+t_{2,2} \,\,,\,\,  \mathcal{L}[S_{2,0}] = t_{2,3}\,q-\frac{H}{\hbar}+ 2t_{2,2}.\eeq
In particular, since $S_{1,0}=\ln Z_{\mathrm{NP}}(\hbar,\rho)$, we end up with
\bea\label{ZNPP2} \mathcal{L}[\ln Z_{\mathrm{NP}}]&=& t_{1,3}\,q-\frac{H}{\hbar}+t_{1,2}+t_{2,2}=-\frac{H_0}{\hbar}-t_{2,3}q+t_{1,2}+t_{2,2}\cr
&=&-\frac{1}{\hbar}\left[\frac{p^2}{\hbar^2}- P_1(q)\frac{p}{\hbar}+  P_2(q)-\hbar P_1'(q)\right] -2P_{\infty,2}^{(1)}q +t_{1,2}+t_{2,2}\,.
\eea
This relation connects the non-perturbative partition function with the Hamiltonian of the system. This relation shall be useful to relate the non-perturbative partition function with the isomonodromic tau-function associated to the Lax system. We let this perspective for future works where we hope to prove that such relations hold for all cases presented in this article.

\subsubsection{Gauge transformation to remove the apparent singularity}
We may also perform a gauge transformation to remove the apparent singularity of the Lax system \eqref{LaxPairP2} and \eqref{LaxPairP22}. The general theory indicates that we should define
\beq \check{\Psi}(\lambda,\hbar)= J(\lambda,\hbar) \Psi(\lambda,\hbar) \text{ with } J(\lambda,\hbar)=\begin{pmatrix}1&0\\ -\frac{p}{\hbar(\lambda-q)}& \frac{1}{\lambda-q} \end{pmatrix},\eeq
so that we obtain a new Lax system for $\check{\Psi}(\lambda,\hbar)$ given by
\beq \hbar \partial_\lambda \check{\Psi}(\lambda,\hbar)=\check{L}(\lambda,\hbar)\check{\Psi}(\lambda,\hbar) \,\,,\,\, \mathcal{L}[\check{\Psi}(\lambda,\hbar)]=\check{A}(\lambda,\hbar)\check{\Psi}(\lambda,\hbar),\eeq
with
\bea \check{L}(\lambda,\hbar)&=&\begin{pmatrix}\frac{p}{\hbar}& \lambda-q \\ -( (\lambda+q)(t_{1,3}+t_{2,3})+t_{2,2}+t_{1,2})\frac{p}{\hbar}+ Q_3(\lambda,\hbar)&
-\frac{p}{\hbar}+P_1(\lambda)
\end{pmatrix}\cr 
\check{A}(\lambda,\hbar)&=&\begin{pmatrix} -(t_{1,3}+t_{2,3})\lambda-\frac{H}{\hbar}+t_{2,2}& -1 \\
(t_{1,3}+t_{2,3})\frac{p}{\hbar}+Q_2(\lambda,\hbar)& (t_{1,3}+t_{2,3})q+t_{1,2}+2t_{2,2}-\frac{H}{\hbar}
\end{pmatrix},\quad
\eea
where
\bea Q_3(\lambda,\hbar)&=&-P_{\infty,4}^{(2)}\lambda^3- (P_{\infty,4}^{(2)}q+P_{\infty,3}^{(2)})\lambda^2-(P_{\infty,4}^{(2)}q^2+P_{\infty,3}^{(2)}q+P_{\infty,2}^{(2)})\lambda \cr
&&+(P_{\infty,4}^{(2)}q^3+P_{\infty,3}^{(2)}q^2+P_{\infty,2}^{(2)}q+P_{\infty,1}^{(2)}+\hbar t_{1,3})\,,\cr
Q_2(\lambda,\hbar)&=&P_{\infty,4}^{(2)}\lambda^2+2P_{\infty,4}^{(2)}q \lambda+P_{\infty,3}^{(2)}\lambda+(3P_{\infty,4}^{(2)}q^2+2P_{\infty,3}^{(2)}q+P_{\infty,2}^{(2)})\,.\cr
&&
\eea
The deformed classical spectral curve is given by $\det(yI_2-\check{L}(\lambda,\hbar))=0$. It reads 
\beq \label{P2DeformedSpecCurve} P_{\text{defo}}(\lambda,y)=y^2- P_1(\lambda) y+ P_2(\lambda)-\hbar P_1'(\lambda)-\hbar (t_{1,3}+2t_{2,3})\lambda-H=0\,.\eeq
In particular, we observe that
\beq P_{\text{defo}}\left(q,\frac{p}{\hbar}\right)=0\,.\eeq

\subsubsection{Recovering the Painlev\'{e} $2$ equation}
From the previous section, we know that the operator $\mathcal{L}$ corresponds to $\mathcal{L} = \hbar(  t_{2,3}\partial_{t_{1,1}}+t_{1,3} \partial_{t_{2,1}})$. This suggests that there exists a natural change of variables from $(t_{1,1},t_{2,1})$ to $(\tau,\td{\tau})$ such that $\mathcal{L}$ corresponds to $\hbar \partial_\tau$ while $\td{\tau}$ should satisfy $\partial_\tau \td{\tau}=0$. We define
\bea \tau&=&\frac{1}{t_{1,3}-t_{2,3}}\left(t_{2,1}-t_{1,1}\right)\, \,,\, \,\, \, \td{\tau}=\frac{1}{t_{1,3}-t_{2,3}}\left( t_{1,3}t_{1,1}-t_{2,3}t_{2,1}\right)  \cr
\Leftrightarrow \,\, t_{1,1}&=&t_{2,3}\tau+\td{\tau}\,, \, \,\, \, t_{2,1}=t_{1,3}\,\tau+\td{\tau}\,.
\eea
This gives
\beq \partial_{t_{1,1}}=-\frac{1}{t_{1,3}-t_{2,3}} \partial_\tau +\frac{t_{1,3}}{t_{1,3}-t_{2,3}}\partial_{\td{\tau}}\, \text{ and }\, \partial_{t_{2,1}}=\frac{1}{t_{1,3}-t_{2,3}} \partial_\tau -\frac{t_{2,3}}{t_{1,3}-t_{2,3}} \partial_{\td{\tau}}\,.\eeq
In particular $\mathcal{L}=\hbar \partial_\tau$, $\hbar \partial_\tau \td{\tau}=0$, $\partial_{\td{\tau}}\tau=0$ and $\mathcal{L}[\td{\tau}]=0$. Observe also that $t_{1,1}-t_{2,1}=-(t_{1,3}-t_{2,3})\tau$.
Equation \eqref{EvolutionP2} reads
\bea\hbar^2\partial^2_{\tau^2}q&=&2(t_{1,3}-t_{2,3})^2q^3+3(t_{1,3}-t_{2,3})(t_{1,2}-t_{2,2})q^2\cr
&&+\left( (t_{1,2}-t_{2,2})^2+2(t_{1,3}-t_{2,3})(t_{1,1}-t_{2,1})\right)q\cr
&&+(t_{1,2}-t_{2,2})(t_{1,1}-t_{2,1})+(2t_{1,0}-\hbar)(t_{1,3}-t_{2,3})\cr
&=&2(t_{1,3}-t_{2,3})^2q^3+3(t_{1,3}-t_{2,3})(t_{1,2}-t_{2,2})q^2\cr 
&&+\left( (t_{1,2}-t_{2,2})^2-2(t_{1,3}-t_{2,3})^2\tau\right)q\cr
&&-(t_{1,2}-t_{2,2})(t_{1,3}-t_{2,3})\tau+(2t_{1,0}-\hbar)(t_{1,3}-t_{2,3}).
\eea
In particular it does not depend on $\td{\tau}$. The last differential equation is equivalent to the Painlev\'{e} $2$ equation after a proper rescaling $(q,\tau)\leftrightarrow (\td{q},t)$. Indeed, let us define
\begin{align}\label{FinalChangeVariableP2}&\tau = \left(\frac{-1}{2(t_{1,3}-t_{2,3})^2}\right)^{\frac{1}{3}} t-\frac{(t_{1,2}-t_{2,2})^2}{4(t_{1,3}-t_{2,3})^2} \;\text{ and }\, q=\left(\frac{-2}{t_{1,3}-t_{2,3}}\right)^{\frac{1}{3}}\td{q}-\frac{t_{1,2}-t_{2,2}}{2(t_{1,3}-t_{2,3})}\Leftrightarrow \cr
&t = \left(-2(t_{1,3}-t_{2,3})^2\right)^{\frac{1}{3}} \left(\tau+\frac{(t_{1,2}-t_{2,2})^2}{4(t_{1,3}-t_{2,3})^2}\right) \text{ and }\, \td{q}=\left(\frac{-(t_{1,3}-t_{2,3})}{2}\right)^{\frac{1}{3}}\left(q+\frac{t_{1,2}-t_{2,2}}{2(t_{1,3}-t_{2,3})}\right).  \cr
\end{align}
Then we obtain that $\td{q}$ satisfies the following Painlev\'{e} $2$ equation
\beq\label{P2} \hbar^2 \partial^2_{t^2}\td{q}=2\td{q}^3+t\td{q}-\left(t_{1,0}-\frac{\hbar}{2}\right).\eeq
Note that the final rescaling \eqref{FinalChangeVariableP2} greatly simplifies when $t_{1,2}=t_{2,2}$ and $t_{1,3}-t_{2,3}=-1$.

\subsection{A $\text{Gl}_3$ example}\label{SectionGL3Example}

In this section, we consider a degree $3$ example where the classical spectral curve has genus $1$ and a single singularity at infinity. In particular, we show how to derive the quantum curve in this context using the general results obtained above. For this purpose, let us consider a classical spectral curve defined by an algebraic equation of the form
\beqq
y^3 - (P_{\infty,1}^{(1)} \lambda + P_{\infty,0}^{(1)} ) y^2 + ( P_{\infty,2}^{(2)} \lambda^2 + P_{\infty,1}^{(2)} \lambda + P_{\infty,0}^{(2)}) y - P_{\infty,3}^{(3)} \lambda^3 - P_{\infty,2}^{(3)} \lambda^2 - P_{\infty,1}^{(3)} \lambda - P_{\infty,0}^{(3)} = 0,
\eeqq
where the coefficients $\left(P_{\infty,i}^{(j)}\right)_{0\leq i\leq 2, 1\leq j\leq 3}$ are generic in such a way that the curve has genus $1$ and that there are three distinct points $(\infty^{(1)},\infty^{(2)},\infty^{(3)})$ in the fiber $x^{-1}(\infty)$ above infinity. This corresponds to the case $N=0$ and $r_{{\infty}^{(j)}} = 3$ for $j\in \llbracket 1,3\rrbracket $. We define
\bea P_1(\lambda)&=&P_{\infty,1}^{(1)} \lambda + P_{\infty,0}^{(1)}\,,\cr
P_2(\lambda)&=&P_{\infty,2}^{(2)} \lambda^2 + P_{\infty,1}^{(2)} \lambda + P_{\infty,0}^{(2)}\,,\cr
P_3(\lambda)&=&P_{\infty,3}^{(3)} \lambda^3 + P_{\infty,2}^{(3)} \lambda^2 + P_{\infty,1}^{(3)} \lambda + P_{\infty,0}^{(3)}\,.
\eea
The coefficients $\left(P_{\infty,i}^{(j)}\right)_{0\leq i\leq 3, 1\leq j\leq 3}$ are connected to the spectral times $(t_{i,j})_{1\leq i \leq 3 , 0\leq j\leq 2}$ by the relations
\beq 
P_{\infty,1}^{(1)} =  -\sum_{i=1}^3  t_{i,2} \, , \quad P_{\infty,0}^{(1)} =  - \sum_{i=1}^3  t_{i,1} \, , \quad 0 = \sum_{i=1}^3 t_{i,0}\,,
\eeq
as well as 
\bea
P_{\infty,2}^{(2)} &=&  \sum_{1\leq i<j\leq 3} t_{i,2} t_{j,2}=t_{1,2}t_{2,2}+t_{1,2}t_{3,2}+t_{2,2}t_{3,2}\,, \cr
P_{\infty,1}^{(2)} &=&  \sum_{i=1}^3\sum_{j\neq i} t_{i,1} t_{j,2} =t_{1,1}t_{2,2}+t_{1,1}t_{3,2}+t_{2,1}t_{1,2}+t_{2,1}t_{3,2}+t_{3,1}t_{1,2}+t_{3,1}t_{2,2}\,,\cr
P_{\infty,0}^{(2)} &=&   \sum_{i=1}^3 \sum_{j\neq i} t_{i,0} t_{j,2} + \sum_{1\leq i<j\leq 3} t_{i,1} t_{j,1} \cr
&=&t_{1,0}t_{2,2}+t_{1,0}t_{3,2}+t_{2,0}t_{1,2}+t_{2,0}t_{3,2}+t_{3,0}t_{1,2}+t_{3,0}t_{2,2} +t_{1,1}t_{2,1}+t_{1,1}t_{3,1}+t_{2,1}t_{3,1}\cr
&&
\eea
and
\bea
P_{\infty,3}^{(3)} &=& - t_{1,2} t_{2,2} t_{3,2}\,,\cr
P_{\infty,2}^{(3)} &=& - t_{1,1} t_{2,2} t_{3,2} - t_{1,2} t_{2,1} t_{3,2} - t_{1,2} t_{2,2} t_{3,1}\,,\cr
P_{\infty,1}^{(3)} &= & - \sum_{k_1+k_2+k_3 = 4} t_{1,k_1} t_{2,k_2} t_{3,k_3}\cr
&=&-t_{1,0}t_{2,2}t_{3,2}-t_{1,1}t_{2,1}t_{3,2}-t_{1,1}t_{2,2}t_{3,1} - t_{1,2}t_{2,2}t_{3,0}-t_{1,2}t_{2,1}t_{3,1}-t_{1,2}t_{2,0}t_{3,2}\,,\cr
P_{\infty,0}^{(3)} &=&  - \sum_{k_1+k_2+k_3 = 3} t_{1,k_1} t_{2,k_2} t_{3,k_3}\cr
&=&-t_{1,0}t_{2,1}t_{3,2}-t_{1,0}t_{2,2}t_{3,1}-t_{1,1}t_{2,1}t_{3,1}-t_{1,1}t_{2,2}t_{3,0}\cr
&& \quad -t_{1,1}t_{2,0}t_{3,2}-t_{1,2}t_{2,1}t_{3,0}-t_{1,2}t_{2,0}t_{3,1}\,.\cr
&&
\eea
The computations being rather long, we provide the detailed example in Appendix \ref{app-example-gl3} and we shall only present the important result there. 

\medskip

The quantum curve corresponds to the Lax pair $L(\lambda,\hbar) =$
\beq 
\begin{pmatrix} 0&1&0\\0&0&1\\
P_3(\lambda)-\hbar P_2'(\lambda)+  \hbar (P_{\infty,2}^{(2)}+t_{2,2}t_{3,2}) \lambda - H + \frac{p_1}{\lambda-q} &
-P_2(\lambda)-\hbar t_{1,2}  + \frac{p_2}{\lambda- q}&
P_1(\lambda) +\frac{\hbar}{\lambda-q} 
\end{pmatrix}\,,
\eeq
with the associated auxiliary matrix $\mathcal{L}[\Psi(\lambda,\hbar)]=A(\lambda,\hbar) \Psi(\lambda,\hbar)$ of the form
\beq
A(\lambda,\hbar) = 
\begin{pmatrix}  P_{\infty,2}^{(2)} \lambda - \frac{H}{\hbar} + \frac{p_1}{\hbar(\lambda-q)} &  \frac{p_2}{\hbar(\lambda- q)}+t_{1,2} &\frac{1}{\lambda-q}\\A_{2,1}&A_{2,2}&A_{2,3}\\
A_{3,1}&A_{3,2}&A_{3,3}
\end{pmatrix}\,.
\eeq

\medskip

Compatibility of this system provides the following equations
\bea \mathcal{L}[t_{1,0}]&=&\mathcal{L}[t_{2,0}]=\mathcal{L}[t_{3,0}]=\mathcal{L}[t_{3,2}]=\mathcal{L}[t_{2,2}]=\mathcal{L}[t_{1,2}]=0\cr
\mathcal{L}[t_{1,1}]&=&-\hbar(t_{1,2}t_{2,2}+t_{1,2}t_{3,2}+t_{2,2}t_{3,2})\cr
\mathcal{L}[t_{2,1}]&=&-\hbar(t_{1,2}t_{3,2}+t_{2,2}t_{3,2}+t_{2,2}^2)\cr
\mathcal{L}[t_{3,1}]&=&-\hbar(t_{1,2}t_{2,2}+t_{2,2}t_{3,2}+t_{3,2}^2)\cr
p_1&=&\frac{p_2^2}{\hbar}+P_1(q_2)p_2+\hbar P_2(q)+\hbar^2 t_{1,2}\cr
H&=&-H_0(q,p_2,\hbar)+(t_{1,2}+P_{\infty,1}^{(1)})p_2-\hbar P_{\infty,2}^{(2)}q-\hbar P_{\infty,1}^{(2)}-\hbar P_{\infty,0}^{(1)}t_{1,2}\cr 
\mathcal{L}[q]&=& -3\frac{p_2^2}{\hbar^2}-4P_1(q)\frac{p_2}{\hbar}-P_2(q)-P_1(q)^2-\hbar(P_{\infty,1}^{(1)}+2t_{1,2}) \cr
\mathcal{L}[p_2]&=&2P_{\infty,1}^{(1)}\frac{p_2^2}{\hbar}+(P_2'(q)+2P'_1(q)P_1(q))p_2-\hbar P_3'(q)+\hbar P_2(q)P_1'(q)+\hbar P_2'(q)P_1(q)\cr
&&+\hbar^2(P_{\infty,1}^{(1)}t_{1,2}-t_{2,2}t_{3,2})\,.
\eea
In particular, the system is Hamiltonian with $H_0(q,p_2,\hbar)$ given by
\bea H_0(q,p_2,\hbar)&=&\frac{p_2^3}{\hbar^3}+2P_1(q)\frac{p_2^2}{\hbar^2}+(P_2(q)+P_1(q)^2+\hbar(P_{\infty,1}^{(1)}+2t_{1,2}))\frac{p_2}{\hbar}-P_3(q)+P_1(q)P_2(q)\cr
&&+\hbar(P_{\infty,1}^{(1)}t_{1,2}-t_{2,2}t_{3,2})q\,,
\eea
satisfying
\beq \hbar \partial_{p_2} H_0(q,p_2,\hbar)=-\mathcal{L}[q] \text{ and } \hbar\partial_q H_0(q,p_2,\hbar)= \mathcal{L}[p_2]\,. \eeq
Combining these relations, we show that $q$ satisfies a PDE of the form
\beq \alpha_0(q;\hbar)\mathcal{L}^3[q]+ \alpha_1(q;\hbar)\mathcal{L}^2[q]\mathcal{L}[q]+\alpha_2(q;\hbar) (\mathcal{L}[q])^2+\alpha_3(q;\hbar) \mathcal{L}^2[q]+ \alpha_4(q;\hbar) \mathcal{L}[q]+\alpha_5(q;\hbar)=0\,,\eeq
with polynomial coefficients $(\alpha_i(q;\hbar))_{1\leq i\leq 5}$ in $q$ and $\hbar$ whose expressions are given in \eqref{CoeffGl33} (quantities involved in \eqref{CoeffGl33} are defined in \eqref{CoeffGl31}).

Finally, we get that
\begin{align} \mathcal{L}&=-\hbar(t_{1,2}t_{2,2}+t_{1,2}t_{3,2}+t_{2,2}t_{3,2})\partial_{t_{1,1}}-\hbar(t_{1,2}t_{3,2}+t_{2,2}t_{3,2}+t_{2,2}^2)\partial_{t_{2,1}}-\cr& \hbar(t_{1,2}t_{2,2}+t_{2,2}t_{3,2}+t_{3,2}^2)\partial_{t_{3,1}},\end{align}
and prove that the change of variables
\beq \begin{pmatrix}\tau_1\\ \tau_2\\ \tau_3\end{pmatrix} =B \begin{pmatrix} t_{1,1}\\ t_{2,1}\\ t_{3,1}\end{pmatrix} \,\Leftrightarrow \,   \begin{pmatrix} t_{1,1}\\ t_{2,1}\\ t_{3,1}\end{pmatrix} =B^{-1}\begin{pmatrix}\tau_1\\ \tau_2\\ \tau_3\end{pmatrix}= \begin{pmatrix} -(t_{1,2}t_{2,2}+t_{1,2}t_{3,2}+t_{2,2}t_{3,2})& 1&t_{1,2}\\
-(t_{1,2}t_{3,2}+t_{2,2}t_{3,2}+t_{2,2}^2)& 1&t_{2,2}\\
-(t_{1,2}t_{2,2}+t_{2,2}t_{3,2}+t_{3,2}^2)& 1&t_{3,2}
\end{pmatrix}\begin{pmatrix}\tau_1\\ \tau_2\\ \tau_3\end{pmatrix}
\eeq 
is such that $\mathcal{L}=\hbar \partial_{\tau_1}$.

\section{Conclusion and outlooks}\label{S9}

In this paper, we have been able to quantize any admissible spectral curve by building a formal solution to the associated quantum curve using the topological recursion. A first natural question which should be addressed is the generalization of this procedure to other classical spectral curves. First of all, we expect that it is possible to drop the admissibility conditions we introduced without changing the structure of the proof. It should also be possible to consider any classical spectral curve given by a Higgs field on a higher genus base curve and allow for more complicated singularities such as logarithmic ones. This step would in particular give access to curves over $\mathbb{C}^*$ appearing in the study of Gromov--Witten invariants of Toric Calabi--Yau three-folds by mirror symmetry. Considering the structure of the proofs presented in this article we expect that such a large generalization is possible.

\medskip

We have observed in the first example of Section~\ref{sec-examples} that the non-perturbative partition function plays the role of isomonodromic tau function of the system. We expect this observation to be a general phenomenon. Namely, we conjecture that the non-perturbative partition function is an isomonodromic tau function for the Lax system built by topological recursion. In order to prove it, we plan to express the coefficients of the deformed spectral curve in terms of variations of the logarithm of the partition function with respect to the spectral times. At the same time, this would allow to describe the $\hbar$ evolution of the connection built in this paper as a flow in the associated moduli space of connections, making the link with Hitchin systems.

\medskip

In this paper we have considered formal $\hbar$ trans-series solutions to the quantum curve.  However, these trans-series are in no way convergent when $\hbar$ is a small parameter. One important question thus remains: how is it related to asymptotics of actual convergent solutions of the quantum curve equation when $\hbar \to 0$? In order to answer this question, we would like to be able to trade off the formal Fourier transform considered in this paper against a proper Fourier transform. However, it is not clear in which context this can be safely performed since the two orderings of the infinite sums  involved lead to different ordering of trans-monomials considered for defining the formal trans-series. This issue already appeared in the context of trans-asymptotics developed by Costin \cite{Costin-P1,CostinCostin01,CostinCostinHuang15}, who proved that, in a particular setting, formal trans-series of the form considered in this paper give good asymptotics to proper solutions of a differential equation. We hope to be able to adapt this procedure to the present set-up and make the link with the result on resurgence analysis encountered in a similar context \cite{ASV12,SchiappaVaz}. Very recently, N. Nikolaev \cite{nikolaev_2022,nikolaev_2022_2} obtained tremendous results promoting formal WKB expansion in $\hbar$ to some analytic objects defined in some Stokes sectors in the $\mathfrak{gl}_2$ case. It would be very interesting to see how these results help in the understanding of the formal setup developed in this article. Upgrading formal quantities to analytic objects would in particular allow to study the asymptotics of bi-orthogonal polynomials arising in hermitian random matrix models and make rigorous some saddle-point approximations in that context. Since our partition functions and wave functions have $2g$ parameters $(\boldsymbol{\epsilon},\boldsymbol{\rho})$, we expect to get a general isomonodromic tau function in this way and should be able to study the Riemann--Hilbert problems they solve. We wish to map these free parameters to coordinates in the associated moduli space and study their map to Stokes data.

\medskip

Finally, our results can be interpreted from a geometric quantization perspective. Indeed, the initial data considered is an algebraic curve on the one hand and a choice of Torelli marking on the other hand. This second ingredient is similar to a choice of real polarization for the quantization of our algebraic curve. It would be very interesting to translate our procedure into the language of geometric quantization and study the effect of a choice of polarization. In particular, it would be important to derive Hitchin's equations governing a change in the choice of polarization.

\section*{Acknowledgements}

This paper is partly a result of the ERC-SyG project, Recursive and Exact New Quantum Theory (ReNewQuantum) which received funding from the European Research Council (ERC) under the European Union's Horizon 2020 research and innovation program under grant agreement No 810573. The second author was supported by the public grant ``Jacques Hadamard'' as part of the Investissement d'avenir project, reference ANR-11-LABX-0056-LMH, LabEx LMH and currently receives funding from  the  European  Research Council  (ERC)  under  the  European  Union's Horizon  2020  research and  innovation  program (grant  agreement  No.~ERC-2016-STG 716083  ``CombiTop''). The work of N.~O.~is supported in part by the NCCR SwissMAP of the Swiss National Science Foundation.

\appendix

\renewcommand{\theequation}{\thesection-\arabic{equation}}

\section{Proof of the loop equations and ramification points at poles}
\label{appendix-proof-loopeq}

In this section, we prove Theorem~\ref{loopeq} and Lemma~\ref{lemRamptsPoles}.

\subsection{Case $(h,n) = (0,0)$}
This follows directly from the expression \eqref{def-Q01}.

\subsection{Case $(h,n) = (0,1)$}

From the definition, an elementary computation gives
\beq
\frac{Q_{0,2}^{(l)}(\lambda,z_1)}{(d\lambda)^l} = (-1)^l \sum_{z \in x^{-1}(\lambda)} \frac{ \om_{0,2}(z,z_1)}{d \lambda} \left[ \sum_{k=0}^{l-1}P_{l-1-k}(\lambda) y(z)^k\right].
\eeq
It is a meromorphic function of $\lambda$ on the base, i.e.~a rational fraction of $\lambda$.
Since there is the denominator $d\lambda$ it seems that it could have a pole at $a$, a  zero of $d\lambda$, i.e.~a ramification point. 
Let $\zeta_a=\sqrt{\lambda-x(a)}$ be a local coordinate near $a$, then $d\lambda$ vanishes to the first order (from our assumption of simple ramification points), and all the other terms are finite at the ramification points.
Therefore 
$$\frac{Q_{0,2}^{(l)}(\lambda,z_1)}{(d\lambda)^l} = O\bigg(\frac{1}{\sqrt{\lambda-x(a)}}\bigg).
$$ 
Since a rational fraction can have poles only with integer powers, this implies that 
$$\frac{Q_{0,2}^{(l)}(\lambda,z_1)}{(d\lambda)^l} = O(1),
$$ 
i.e.~there is no pole at $a$.

\subsection{Stable cases}
In this section, we deal with the stable cases corresponding to $(h,n) \notin \{(0,0),(0,1)\}$.

\medskip 

{\bf{Step 1: Rewriting the topological recursion}}

The proof is more easily obtained by using the general formalism of \emph{global topological recursion} developed in \cite{BouchardEynard} where authors prove that recursion \eqref{def-tr} defining the forms $\om_{h,n}$ can be traded for a more global formulation which reads
\beq\label{eq-TRhigherorder}
\om_{h,n+1}(z_0,\mathbf{z}) = \sum_{a \in \mathcal{R}} \Res_{z \to a} \left[ \sum_{k=1}^{d-1} \sum_{\mathbf{\beta}  \underset{k}{\subseteq} x^{-1}(x(z)) \setminus \{z\}} (-1)^{k+1} \frac{\int_{\alpha}^z \om_{0,2}(z_0, \cdot)}{\underset{i=1}{\overset{k}{\prod}} \left[\om_{0,1}(z)- \om_{0,1}(\beta_i)\right]} {U}_{h,n+1}^{(k+1)}(z,\mathbf{\beta};\mathbf{z})
\right],
\eeq
with
\beq
{U}_{h,n+1}^{(k)}(\mathbf{\beta};\mathbf{z})\coloneqq  \sum_{\mu \in \mathcal{S}(\beta)} \,  \underset{ \underset{i=1}{\overset{l(\mu)}{\bigsqcup}} J_i = \mathbf{z}}{\sum}\,\, \,  \overset{\hbox{no } (0,1)}{\sum_{{ \underset{i=1}{\overset{l(\mu)}{\sum}}} g_i= h+l(\mu)-k}} 
\bigg[
\prod_{i=1}^{l(\mu)} \om_{g_i,|\mu_i|+|J_i|}(\mu_i,J_i) \bigg], 
\eeq
where the symbol $ \overset{\hbox{no } (0,1)}{\sum}$ means that one considers only terms with $(g_i,|\mu_i|+|J_i|) \notin \{(0,1)\}$ and $\alpha \in \Sigma$ is an arbitrary reference point.

In \cite{BouchardEynard_QC}, Cauchy formula on $\Sigma$ allowed to prove that this version of the topological recursion can be equivalently written as  
\beq \label{eq-new-TR-2}
0= \sum_{a \in \mathcal{R}} \Res_{z \to a} \left[\int_{\alpha}^z \om_{0,2}(z_0,\cdot)\right] \frac{dx(z)}{\frac{\partial P(x(z),y(z))}{\partial y}} Q_{h,n+1}(x(z),y(z);\mathbf{z}),
\eeq
where we abusively use the notation
\beq
\frac{\partial P(x(z),y(z'))}{\partial y} \coloneqq \left. \frac{\partial P(\lambda,y)}{\partial y}\right|_{\substack{ \lambda = x(z) \cr y = y(z') } }
 \eeq
and $Q_{h,n+1}(x(z),y(z);\mathbf{z})$ is given by Definition \ref{def-Q-x-y}.

\medskip

{\bf{Step 2: Proving that $ \frac{dx(z)}{\frac{\partial P}{\partial y}(x(z),y(z))} Q_{h,n+1}(x(z),y(z);\mathbf{z}) $ is holomorphic at the ramification points}}

To prove this second step, let us assume that 
 $ \frac{dx(z)}{\frac{\partial P}{\partial y}(x(z),y(z))} Q_{h,n+1}(x(z),y(z);\mathbf{z}) $ has a pole at $z=a \in \mathcal{R}$. Hence, in any local coordinate, it reads
\beq
  \frac{ dx(z)}{\frac{\partial P}{\partial y}(x(z),y(z))} Q_{h,n+1}(x(z),y(z);\mathbf{z})  = \frac{S_{h,n+1}(\mathbf{z}) }{(z-a)^{m+1}} \left(1+ O(z-a)\right),
\eeq
for some $m\geq 1$ and $S_{h,n+1}(\mathbf{z}) \neq 0$. On the other hand, the expansion of $\om_{0,2}$ in the same local coordinate takes the form
\beq
\int_{\alpha}^z\om_{0,2}(z_0,\cdot) =\sum_{k=0}^\infty (z-a)^k f_{a,k}(z_0),
\eeq
where $f_{a,k}(z_0)$ is a 1-form behaving as
\beq
f_{a,k}(z_0) = \frac{dz_0}{(z_0-a)^{k+1}} \left(1 +O(z_0-a)\right),
\eeq
as $z_0 \to a$. The evaluation above shows that the leading order of \eqref{eq-new-TR-2} in $(z_0-a)$ implies that $S_{h,n+1}(\mathbf{z}) = 0$, leading to a contradiction.

{\bf{Step 3: Proving that $\frac{Q_{h,n+1}^{(k)}(\lambda;\mathbf{z})}{dx(z)^k}$ is holomorphic at the ramification points.}}

Let us recall that
\beq
Q_{h,n+1}(\lambda,y;\mathbf{z})= \sum_{k=1}^d (-1)^k y^{r-k} \frac{Q_{h,n+1}^{(k)}(\lambda;\mathbf{z})}{d\lambda^k},
\eeq
so that
\beq\label{eq-rel-Q}
Q_{h,n+1}(x(z),y(z);\mathbf{z}) = \sum_{k=1}^d (-1)^k y(z)^{r-k} \frac{Q_{h,n+1}^{(k)}(x(z);\mathbf{z})}{dx(z)^k}.
\eeq
We shall now use an interpolation formula to extract the coefficients $\frac{Q_{h,n+1}^{(k)}(x(z);\mathbf{z})}{dx(z)^k}$ out of $Q_{h,n+1}(x(z),y(z);\mathbf{z})$.

For this purpose, let us write 
\beq
\frac{P(x(z),y)}{y-y(z)} = 
 \sum_{l=0}^{d-1} (-1)^l y^{d-1-l} U^{(l)}(z),
\eeq
where 
\beq
U^{(l)}(z)\coloneqq \sum_{\mathbf{\beta} \underset{l}{\subseteq} x^{-1}(x(z)) \setminus \{z\}} \prod_{i=1}^l y(\beta_i).
\eeq
The evaluation at $y = y(z)$  gives
\beq
\frac{\partial P}{\partial y}(x(z),y(z)) = 
 \sum_{l=0}^{d-1} (-1)^l  y(z)^{d-1-l} U^{(l)}(z),
\eeq
while the evaluation at $y = y(z')$ with $x(z) = x(z')$ but $z \neq z'$ implies that
\beq
\forall\, z' \in x^{-1}(x(z)) \setminus \{z\} \, , \;\;\; 0 = 
 \sum_{l=0}^{d-1} (-1)^l  y(z')^{d-1-l} U^{(l)}(z) .
\eeq
One can use these two relations to compute
\begin{align}
 Q_{h,n+1}(x(z),y(z);\mathbf{z}) &= 
\sum_{z' \in x^{-1}(x(z))} \frac{Q_{h,n+1}(x(z),y(z');\mathbf{z})}{ \frac{\partial P}{\partial y}(x(z),y(z')) } 
 \frac{\partial P}{\partial y}(x(z),y(z))\, \delta_{z,z'} \cr
 &= 
 \sum_{z' \in x^{-1}(x(z))} \frac{Q_{h,n+1}(x(z),y(z');\mathbf{z})}{ \frac{\partial P}{\partial y}(x(z),y(z')) }  
 \sum_{l=0}^{d-1} (-1)^l  y(z)^{d-1-l} U^{(l)}(z') . \cr
 \end{align}
 Exchanging the summations and shifting the index $l$ by 1, one gets
\beq
 Q_{h,n+1}(x(z),y(z);\mathbf{z})  =  -  \sum_{l=1}^{d} (-1)^l  y(z)^{d-l}  \sum_{z' \in x^{-1}(x(z))} \frac{Q_{h,n+1}(x(z),y(z');\mathbf{z})}{ \frac{\partial P}{\partial y}(x(z),y(z')) } 
  U^{(l-1)}(z') .
\eeq
Comparing with \eqref{eq-rel-Q}, one obtains
\beq
 \frac{Q_{h,n+1}^{(k)}(x(z);\mathbf{z})}{dx(z)^k}
  = -  \sum_{z' \in x^{-1}(x(z))} \frac{Q_{h,n+1}(x(z),y(z');\mathbf{z})}{ \frac{\partial P}{\partial y}(x(z),y(z')) }  
  U^{(l-1)}(z') .
  \eeq

Since we have proved that $ \frac{Q_{h,n+1}(x(z),y(z');\mathbf{z})}{ \frac{\partial P}{\partial y}(x(z),y(z')) } $ is holomorphic at the ramification points and $U^{(l-1)}(z')$ obviously is, one can conclude that $ \frac{Q_{h,n+1}^{(k)}(x(z);\mathbf{z})}{dx(z)^k}$ does not have any pole at $z = a \in \mathcal{R}$.

\subsection{Ramification points at poles}\label{Applemmaramificationpointspoles}

The points of $x^{-1}(\mathcal P)$ play an important role in this article.
A large part of our derivations rely on the fact that $\omega_{h,n}(z_1,\dots,z_n)$ for $(h,n)\neq(0,1)$ have no poles at any $z_i \in x^{-1}(\mathcal P)$.
Let us consider some subclass of spectral curves for which this holds.

Let us write the ramification profile over $P\in \mathcal P$ as
\beq
x^{-1}(P) = \{ P^{(1)},\dots,P^{(\ell_P)}\},
\eeq
\beq
d_{P^{(\alpha)}} = \deg_{P^{(\alpha)}} x.
\eeq
Let us denote $\zeta_\alpha$ the canonical local coordinate near $P^{(\alpha)}$, and let us denote
\beq
\rho_{P^{(\alpha)}} = e^{\frac{2\pi\ii }{d_{P^{(\alpha)}}}}
\eeq
the root of unity.

The meromorphic function $y$ has a Taylor--Laurent expansion given by the spectral times
\beq
ydx = \sum_{k=0}^{r_{P^{(\alpha)}}-1} t_{ P^{(\alpha)},k}\, \zeta_\alpha^{-k-1} d\zeta_\alpha + \text{analytic at } P^{(\alpha)}.
\eeq

\bl
If for all $p\in x^{-1}(\mathcal P)$ we have $r_p\geq 3$ and $t_{p,r_p-2}\neq 0$, then
$\omega_{h,n}$ for $(h,n)\neq(0,1)$ are analytic at $x^{-1}(\mathcal P)$.

\el

\begin{proof}
The proof proceeds by recursion on $2h-2+n$.
It is clearly true for $2h-2+n=0$, i.e. for $\omega_{0,2}$.

Let us assume that it holds up to $2h-2+n$ and prove it for $\omega_{h,n+1}$ using eq \eqref{eq-TRhigherorder}. Let $p\in x^{-1}(\mathcal P)$.

By recursion hypothesis, the factor $U^{(k+1)}_{h,n+1}(z,\beta;z)$ has no pole at $z=p$ nor $\beta_i=p$, but it may have poles at $z=\beta_i$ or $\beta_i=\beta_j$ if they appear in $\omega_{0,2}$.
Remark that
\beq
\omega_{0,2}(\beta_i,\beta_j) = \frac{\rho_p^{\beta_j}}{(\rho_p^{\beta_i}-\rho_p^{\beta_j})^2} \zeta_p^{-2} d\zeta_p^2,
\eeq
i.e.~it has a double pole (notice that $\rho_p^{\beta_i}-\rho_p^{\beta_j}\neq 0$ because $\beta_i$ and $\beta_j$ are in $\llbracket 0,d_p\rrbracket$).

The maximum number of factors $\omega_{0,2}$ is $\frac{k+1}{2}$, which means that the degree of the pole is at most $k+1$.

Let us now study the behavior of the denominator with $ydx$.
We have
\beq
ydx(z)-ydx(\beta_j) 
= t_{p,r_p-1} (1-\rho_p^{r_p \beta_j}) \zeta_p^{-r_p} d\zeta_p +   t_{p,r_p-2} (1-\rho_p^{(r_p-1) \beta_j}) \zeta_p^{1-r_p} d\zeta_p + O\left(\zeta_P^{2-r_p}\right)d\zeta_p.
\eeq
If $r_p\beta_j$ is not multiple of $d_p$, this quantity is of order $O\left(\zeta_p^{-r_p}\right)$, and if $r_p\beta_j$ is multiple of $d_p$, then $(r_p-1)\beta_j$ can't be multiple of $d_p$, and thus in both cases this quantity has a pole of degree at least $(r_p-1)$.

In \eqref{eq-TRhigherorder}, there are $k$ factors $\frac{1}{ydx(z)-ydx(\beta_j)}$, therefore the integrand is of order at least $O\left(\zeta_p^{-(k+1)+k(r_p-1)}\right)$.

If $r_p\geq 3$, the integrand behaves as $O(\zeta_p^{k-1})$, which has no pole because $k\geq 1$. Therefore, the residue vanishes.

This proves that residues coming from $a=p$ vanish, and hence $\omega_{h,n}$ is the same as if we had taken residues only at $a\in \mathcal R$, which have no pole at $z\in x^{-1}(\mathcal P)$.
\end{proof}

\br
This condition is sufficient, but it might not be necessary.
\er

\section{Proof of the KZ equations}

In this section we prove the KZ equations for a generic divisor $D$, then we regularize them for a particular divisor $D= [z]-[\infty^{(\alpha)}]$ and finally, we re-write them making use of cycles.

\subsection{Proof for a generic divisor}\label{proof_systemPDE}

In this section we give the proof of Theorem~\ref{systemPDE}. Let us compute the differential $\hbar d_{p_i} \psi_{l,i}(D,\hbar)$. It reads
\bea\label{eq-dPsi-1}
\hbar d_{p_i} \psi_{l,i}(D,\hbar) &=&  \hbar d_{p_i} \Bigg[\sum_{(h,n)\in \mathbb{N}^2} \frac{\hbar^{2h+n}}{n!}  \overbrace{\int_D\cdots\int_D}^n \frac{\hat{Q}_{h,n+1}^{(l)}(p_i;\cdot)}{\left(dx(p_i)\right)^l} \Bigg]  \; \psi(D,\hbar)  \cr
&& \qquad 
+  \Bigg[\sum_{(h,n)\in \mathbb{N}^2} \frac{\hbar^{2h+n}}{n!}  \overbrace{\int_D\cdots\int_D}^n \frac{\hat{Q}_{h,n+1}^{(l)}(p_i;\cdot)}{\left(dx(p_i)\right)^l} \Bigg]  \; \hbar d_{p_i}  \left(\psi(D,\hbar)  \right) \cr
&=&  \Bigg[\sum_{(h,n)\in \mathbb{N}^2} \frac{\hbar^{2h+n+1}}{n!}  \overbrace{\int_D\cdots\int_D}^{n} d_{p_i} \bigg(\frac{ \hat{Q}_{h,n+1}^{(l)}(p_i;\cdot)}{\left(dx(p_i)\right)^l}\bigg) \Bigg]  \; \psi(D,\hbar) \cr
&& + \alpha_i \Bigg[\sum_{(h,n)\in \mathbb{N}^2} \frac{\hbar^{2h+n+1}}{(n-1)!}  \overbrace{\int_D\cdots\int_D}^{n-1} \frac{\hat{Q}_{h,n+1}^{(l)}(p_i;p_i,\cdot)}{\left(dx(p_i)\right)^l} \Bigg]  \; \psi(D,\hbar) \cr
&& + \Bigg[\sum_{(h,n)\in \mathbb{N}^2} \frac{\hbar^{2h+n}}{n!}  \overbrace{\int_D\cdots\int_D}^n \frac{\hat{Q}_{h,n+1}^{(l)}(p_i;\cdot) }{\left(dx(p_i)\right)^l}\Bigg]  \; \hbar d_{p_i}  \left(\psi(D,\hbar)  \right).
\eea

The first of the last three terms is obtained by action of the differential operator inside the integrals while the second one comes from the action on any of the $n$ integrals themselves.

\medskip

Let us now compute the different terms.  One has
\beq
\hbar d_{p_i}  \left(\psi(D,\hbar)  \right)  = \alpha_i \Bigg[\sum_{h \geq 0} \sum_{n\geq 1} \frac{\hbar^{2h-1+n}}{(n-1)!} \overbrace{\int_D\cdots\int_D}^{n-1} \bigg[\omega_{h,n}(p_i, \cdot) - \delta_{h,0} \delta_{n,2} \frac{dx(p_i) dx(\cdot)}{(x(p_i)-x(\cdot))^2} \bigg] \Bigg] \psi(D,\hbar),
\eeq
so that the sum of the second and third terms reads
\beq
\begin{array}{l}
\alpha_i \sum_{h\geq 0} \sum_{n \geq 1}   \frac{\hbar^{2h+n+1}}{(n-1)!}  \overbrace{\int_D\cdots\int_D}^{n-1} \frac{\hat{Q}_{h,n+1}^{(l)}(p_i;p_i,\cdot)}{\left(dx(p_i)\right)^l} \; \psi(D,\hbar) \cr
+  \alpha_i {\displaystyle \sum_{h_1,h_2 \geq 0} \sum_{n_1,n_2 \geq 0} } \frac{\hbar^{2h_1+2h_2+n_1+n_2}}{n_1! n_2!} \overbrace{\int_D\cdots\int_D}^{n_1} \frac{\hat{Q}_{h_1,n_1+1}^{(l)}(p_i;\cdot)}{\left(dx(p_i)\right)^l}\cr 
\quad \overbrace{\int_D\cdots\int_D}^{n_2} \left[\omega_{h_2,n_2+1}(p_i, \cdot) - \delta_{h_2,0} \delta_{n_2,1} \frac{dx(p_i) dx(\cdot)}{(x(p_i)-x(\cdot))^2} \right] 
 \; \psi(D,\hbar),  \cr
\end{array}
\eeq
which can be re-organized as 
\beq
\begin{array}{l}
\alpha_i {\displaystyle \sum_{h\geq 0} \sum_{n \geq 0}} \hbar^{2h+n} \Bigg[\frac{1}{n!}  \overbrace{\int_D\cdots\int_D}^{n} \frac{\hat{Q}_{h-1,n+2}^{(l)}(p_i;p_i,\cdot)}{\left(dx(p_i)\right)^l} \cr
  +  {\displaystyle \sum_{h_1+h_2 =h} \sum_{n_1+n_2 =n} } \frac{1}{n_1! n_2!} \overbrace{\int_D\cdots\int_D}^{n_1} \frac{\hat{Q}_{h_1,n_1+1}^{(l)}(p_i;\cdot) }{\left(dx(p_i)\right)^l}\cr
  \overbrace{\int_D\cdots\int_D}^{n_2} \left[\omega_{h_2,n_2+1}(p_i, \cdot) - \delta_{h_2,0} \delta_{n_2,1} \frac{dx(p_i) dx(\cdot)}{(x(p_i)-x(\cdot))^2}  \right] \Bigg] \; \psi(D,\hbar) 
 .\cr
\end{array}
\eeq

Using Lemma~\ref{lemma-Q-omega} and Definition~\ref{DefPerturbativeWaveFunctions}, this reads 
\begin{align}
&- \alpha_i \psi_{l+1,i}(D,\hbar) dx(p_i)+ \alpha_i {\displaystyle \sum_{h\geq 0} \sum_{n \geq 0}}\frac{ \hbar^{2h+n}}{n!} \nonumber \\
& \quad \int_{z_1 \in D} \dots \int_{z_n \in D} \bigg[\frac{Q_{h,n+1}^{(l+1)}(x(p_i);\mathbf{z}) }{\left(dx(p_i)\right)^l} - \sum_{j=1}^n \frac{\hat{Q}_{h,n}^{(l)}(p_i;\mathbf{z} \setminus \{z_j\}) }{\left(dx(p_i)\right)^l} \frac{dx(p_i) dx(z_j)}{(x(p_i)-x(z_j))^2}\bigg] \psi(D,\hbar) .
\end{align}
Plugging the definition given by \eqref{def-Q-tilde}, this reads
\begin{align}
& - \alpha_i \psi_{l+1,i}(D,\hbar) dx(p_i)
+ \alpha_i {\displaystyle \sum_{h\geq 0} \sum_{n \geq 0}}\frac{ \hbar^{2h+n} dx(p_i) }{n!} \int_{z_1 \in D} \dots \int_{z_n \in D} \widetilde{Q}_{h,n+1}^{(l+1)}(x(p_i);\mathbf{z}) \psi(D,\hbar) \cr
& + \alpha_i {\displaystyle \sum_{h\geq 0} \sum_{n \geq 0}}\frac{ \hbar^{2h+n}  }{n!}  \int_{z_1 \in D} \dots \int_{z_n \in D} {\displaystyle \sum_{j=1}^n} d_{z_j} \Bigg[\frac{dx(p_i) \Big(\frac{\hat{Q}_{h,n}^{(l)}(z_j;\mathbf{z} \setminus \{z_j\})}{dx(z_j)^l}- \frac{\hat{Q}_{h,n}^{(l)}(p_i;\mathbf{z} \setminus \{z_j\})}{dx(p_i)^l} \Big)}{x(p_i)-x(z_j)} \Bigg] \; \psi(D,\hbar). \cr
\end{align}
One can evaluate the last term by integration along the divisor $D$ and, taking into account that it has degree 0, one gets
\begin{align}
&- \alpha_i \psi_{l+1,i}(D,\hbar) dx(p_i) - \hbar \alpha_i dx(p_i) {\displaystyle \sum_{j \in \llbracket 1 , s \rrbracket \setminus  \{i\}}} \alpha_j \frac{\psi_{l,i}(D,\hbar)-\psi_{l,j}(D,\hbar)}{x(p_i)-x(p_j)} \cr
&- \alpha_i^2 \Bigg[{\displaystyle \sum_{(h,n) \in \mathbb{N}^2} }\frac{\hbar^{2h+n+1}}{n!}  \overbrace{\int_D\cdots\int_D}^{n} d_{p_i} \bigg(\frac{\hat{Q}_{h,n+1}^{(l)}(p_i;\cdot) }{\left(dx(p_i)\right)^l}\bigg) \Bigg]  \; \psi(D,\hbar) \cr
&+ \alpha_i {\displaystyle \sum_{h\geq 0} \sum_{n \geq 0}}\frac{ \hbar^{2h+n} dx(p_i) }{n!} \int_{z_1 \in D} \dots \int_{z_n \in D} \widetilde{Q}_{h,n+1}^{(l+1)}(x(p_i);\mathbf{z}) \psi(D,\hbar) .\cr
\end{align}
Plugging this into \eqref{eq-dPsi-1}, this gives
\begin{align}
\hbar d_{p_i} \psi_{l,i}(D,\hbar) &= - \alpha_i \psi_{l+1,i}(D,\hbar) dx(p_i) - \hbar \alpha_i dx(p_i) \sum_{j \in \llbracket 1 , s \rrbracket \setminus \{i\}} \alpha_j \frac{\psi_{l,i}(D,\hbar)-\psi_{l,j}(D,\hbar)}{x(p_i)-x(p_j)} \cr
&+ (1- \alpha_i^2) \Bigg[\sum_{(h,n) \in \mathbb{N}^2} \frac{\hbar^{2h+n+1}}{n!}  \overbrace{\int_D\cdots\int_D}^{n} d_{p_i} \bigg( \frac{\hat{Q}_{h,n+1}^{(l)}(p_i;\cdot)}{\left(dx(p_i)\right)^l}\bigg) \Bigg]  \; \psi(D,\hbar) \cr
&+ \alpha_i {\displaystyle \sum_{h\geq 0} \sum_{n \geq 0}}\frac{ \hbar^{2h+n} dx(p_i)}{n!} \int_{z_1 \in D} \dots \int_{z_n \in D} \widetilde{Q}_{h,n+1}^{(l+1)}(x(p_i);\mathbf{z}) \psi(D,\hbar) .\cr
\end{align}

\subsection{Proof for the special divisor}\label{proof_reg-KZ-eq}
In this section we prove Theorem~\ref{reg-KZ-eq} corresponding to the special choice of divisor: $D = [z]-[\infty^{(\alpha)}]$, with $z\notin x^{-1}(\mathcal{P}) \cup x^{-1}(x(\mathcal{R}))$ a generic point in a small neighborhood of $\infty^{(\alpha)}$. 

\bigskip

For a two point divisor $D = [z]-[p_2]$, the first KZ equation of Theorem~\ref{systemPDE} reads
\beqq
\begin{array}{rcl}
{\hbar} \frac{d \psi_{l,1}(D = [z]-[p_2],\hbar)}{dx(z)} & = & - \psi_{l+1,1}(D = [z]-[p_2],\hbar) +  \hbar \frac{\psi_{l,1}(D= [z]-[p_2],\hbar)-\psi_{l,2}(D= [z]-[p_2],\hbar)}{x(z)-x(p_2)} \cr
&& \quad  + \bigg[{\displaystyle \sum_{h\geq 0} \sum_{n \geq 0}}\frac{ \hbar^{2h+n} }{n!} \int_{z_1 \in D} \dots \int_{z_n \in D} \widetilde{Q}_{h,n+1}^{(l+1)}(x(z);\mathbf{z}) \bigg]\; \psi(D = [z]-[p_2],\hbar). \cr
\end{array}
\eeqq
Multiplying this equation by $ e^{ \hbar^{-1} V_{\infty^{(\alpha)}}(p_2) } \frac{1}{x(p_2)}\sqrt{\frac{dx(p_2)}{d\zeta_{\infty^{(\alpha)}}(p_2)}}$,
 one gets
\beqq
 {\hbar} \frac{d  \psi_{l,1}^{(p_2)} }{dx(z)}  = - \psi_{l+1,1}^{(p_2)} +  \hbar \frac{\psi_{l,1}^{(p_2)}-\psi_{l,2}^{(p_2)}}{x(z)-x(p_2)} + \bigg[{\displaystyle \sum_{h\geq 0} \sum_{n \geq 0}}\frac{ \hbar^{2h+n} }{n!} \int_{z_1 \in D} \dots \int_{z_n \in D} \widetilde{Q}_{h,n+1}^{(l+1)}(x(z);\mathbf{z}) \bigg]  \psi^{(p_2)}_{0,1}, 
\eeqq
 where we denote for simplicity:
\beqq
\psi_{l,i}^{(p_2)}\coloneqq e^{ \hbar^{-1} V_{\infty^{(\alpha)}}(p_2) }\frac{1}{x(p_2)} \sqrt{\frac{dx(p_2)}{d\zeta_{\infty^{(\alpha)}}(p_2)}}  \psi_{l,i}\left(D=[z]-[p_2],\hbar\right),
\eeqq
for $i\in\{1,2\}$ and  $l\in \llbracket 0, d-1\rrbracket$.
They are defined in such a way that
\beqq
\psi^{\mathrm{reg}}_l(D=[z]-[\infty^{(\alpha)}],\hbar)\coloneqq \lim_{p_2 \to \infty^{(\alpha)}}  \psi_{l,1}^{(p_2)},\; \text{ for all }  l\in \llbracket 0, d-1\rrbracket.
\eeqq
 
By definition, $ \psi_{l,1}^{(p_2)}$ is holomorphic as $p_2 \to \infty^{(\alpha)}$, so that one can separate the singular and regular terms in this limit by writing
  \begin{align*}
& {\hbar} \frac{d  \psi_{l,1}^{(p_2)} }{dx(z)} + \psi_{l+1,1}^{(p_2)}  - \hbar \frac{\psi_{l,1}^{(p_2)}}{x(z)-x(p_2)} \cr
& =   -  \hbar \frac{\psi_{l,2}^{(p_2)}}{x(z)-x(p_2)} + \bigg[{\displaystyle \sum_{h\geq 0} \sum_{n \geq 0}}\frac{ \hbar^{2h+n} }{n!} \int_{z_1 \in D} \dots \int_{z_n \in D} \widetilde{Q}_{h,n+1}^{(l+1)}(x(z);\mathbf{z}) \bigg]  \psi^{(p_2)}_{0,1}  .
 \end{align*}
 
 The LHS admits a limit as $p_2 \to \infty^{(\alpha)}$ and reads
\beqq
 \hbar \frac{d  \psi^{\mathrm{reg}}_l(D=[z]-[\infty^{(\alpha)}],\hbar) }{dx(z)} + \psi^{\mathrm{reg}}_{l+1}(D=[z]-[\infty^{(\alpha)}],\hbar) .
\eeqq
This implies that the RHS admits a limit as $p_2 \to \infty^{(\alpha)}$ as well. In order to study this limit, let us write this RHS as
\beq\label{before_limit}
\Bigg[{\displaystyle \sum_{h\geq 0} \sum_{n \geq 0}}\frac{ \hbar^{2h+n} }{n!} \int_{z_1 \in D} \dots \int_{z_n \in D} \bigg(\widetilde{Q}_{h,n+1}^{(l+1)}(x(z);\mathbf{z}) 
- \hbar \frac{\hat{Q}_{h,n+1}^{(l)}(p_2;\mathbf{z}) }{(x(z)-x(p_2)) \, (dx(p_2))^l} \bigg)
\Bigg]  \psi^{(p_2)}_{0,1},
\eeq
where we have used that $\psi^{(p_2)}_{0,2}=\psi^{(p_2)}_{0,1}$ because  $\psi_{0,2}(D,\hbar)=\psi_{0,1}(D,\hbar)=\psi(D,\hbar)$ for any divisor $D$. The factor $\psi^{(p_2)}_{0,1}$ satisfies
\beqq
\psi^{(p_2)}_{0,1} \to \psi^{\mathrm{reg}}(D=[z]-[\infty^{(\alpha)}],\hbar), \qquad \hbox{as }  p_2 \to \infty^{(\alpha)},
\eeqq
so that the RHS tends to 
\begin{align*}
\psi^{\mathrm{reg}}([z]-[\infty^{(\alpha)}],\hbar) \, \cdot  \lim_{p_2 \to \infty^{(\alpha)}}  \Bigg[{\displaystyle \sum_{h\geq 0} \sum_{n \geq 0}}\frac{ \hbar^{2h+n} }{n!} \int_{z_1 \in D}  \dots \int_{z_n \in D} \bigg(\widetilde{Q}_{h,n+1}^{(l+1)}(x(z);\mathbf{z}) \quad\quad\quad\quad \\
- \hbar \frac{\hat{Q}_{h,n+1}^{(l)}(p_2;\mathbf{z}) }{(x(z)-x(p_2)) \, (dx(p_2))^l} \bigg)
\Bigg]  .
\end{align*}
Let $\lambda\in \mathbb P^1\setminus \mathcal P$, and $\lambda\neq x(z)$, we have
\bea
 \int_{z_1 \in D}  \dots \int_{z_n \in D} \widetilde{Q}_{h,n+1}^{(l+1)}(\lambda;\mathbf{z}) 
&=&
\int_{z_1 \in D}  \dots \int_{z_n \in D} {Q}_{h,n+1}^{(l+1)}(\lambda;\mathbf{z})/d\lambda^{l+1} \cr
&& -\frac{n}{\lambda-x(z)}\int_{z_1 \in D}  \dots \int_{z_{n-1} \in D} {\hat Q}_{h,n}^{(l)}(z;\mathbf{z})/dx(z)^l \cr
&& +\frac{n}{\lambda-x(p_2)}\int_{z_1 \in D}  \dots \int_{z_{n-1} \in D} {\hat Q}_{h,n}^{(l)}(p_2;\mathbf{z})/dx(p_2)^l \ . \cr &&
\eea
The third term exactly cancels when we sum over $n$, and the first 2 terms do have a limit as $p_2\to\infty^{(\alpha)}$.
Indeed if $l\geq 1$, in the definition of $Q_{h,n+1}^{(l+1)}(\lambda;z_1,\dots,z_n)$, there is no $\omega_{0,1}(z_i)$ for any $i\in \llbracket1,n\rrbracket$, and there is no $\omega_{0,2}(z_i,z_j)$. Thus the integrals $z_i\in D$ are convergent in the limit $p_2\to\infty^{(\alpha)}$.

This implies that the RHS tends to 
\begin{align*}
\psi^{\mathrm{reg}}([z]-[\infty^{(\alpha)}],\hbar) \, \cdot  \lim_{\lambda \to x(z)}  \Bigg[{\displaystyle \sum_{h\geq 0} \sum_{n \geq 0}}\frac{ \hbar^{2h+n} }{n!} \int_{z_1 \in [z]-[\infty^{(\alpha)}]}  \dots \int_{z_n \in [z]-[\infty^{(\alpha)}]} \bigg({Q}_{h,n+1}^{(l+1)}(\lambda;\mathbf{z})/d\lambda^{l+1} \quad \\
- \hbar \frac{\hat{Q}_{h,n+1}^{(l)}(z;\mathbf{z}) }{(\lambda-x(z)) \, (dx(z))^l} \bigg)
\Bigg]  .
\end{align*}
Let us write the limit as a residue at $\lambda\to x(z)$ as
\begin{align*}
\Res_{\lambda \to x(z)} \frac{d\lambda}{\lambda-x(z)} \Bigg[{\displaystyle \sum_{h\geq 0} \sum_{n \geq 0}}\frac{ \hbar^{2h+n} }{n!} \int_{z_1 \in [z]-[\infty^{(\alpha)}]}  \dots \int_{z_n \in [z]-[\infty^{(\alpha)}]} \bigg(\frac{{Q}_{h,n+1}^{(l+1)}(\lambda;\mathbf{z})}{d\lambda^{l+1}} \quad\quad\\ - \hbar \frac{\hat{Q}_{h,n+1}^{(l)}(z;\mathbf{z}) }{(\lambda-x(z)) \, (dx(z))^l} \bigg)
\Bigg]  .
\end{align*}
The integrand is a rational fraction of $\lambda$, with singularities at $x(z)$ and at $\lambda\in \mathcal P$, therefore, moving the integration contour we get
\begin{align*}
-\sum_{P\in \mathcal P} \Res_{\lambda \to P} \frac{d\lambda}{\lambda-x(z)} \Bigg[{\displaystyle \sum_{h\geq 0} \sum_{n \geq 0}}\frac{ \hbar^{2h+n} }{n!} \int_{z_1 \in [z]-[\infty^{(\alpha)}]}  \dots \int_{z_n \in [z]-[\infty^{(\alpha)}]} \bigg(\frac{{Q}_{h,n+1}^{(l+1)}(\lambda;\mathbf{z})}{d\lambda^{l+1}}\quad\quad\\- \hbar \frac{\hat{Q}_{h,n+1}^{(l)}(z;\mathbf{z}) }{(\lambda-x(z)) \, (dx(z))^l} \bigg)
\Bigg]  .
\end{align*}
The last term in $\frac{1}{(\lambda-x(z))^2}$  yields no residues at $\mathcal P$.
For the other term, let us Taylor expand the $\frac{1}{\lambda-x(z)}$ at $\lambda\to P$ as
\beq
\frac{d\lambda}{\lambda-x(z)} = 
-\sum_{k=0}^\infty \xi_P(x(z))^{-k} \xi_P(\lambda)^{k-1}d\xi_P(\lambda) .
\eeq
This gives
\bea
&&\sum_{P\in \mathcal P} \sum_{k=0}^\infty \xi_P(x(z))^{-k} \Res_{\lambda \to P} \xi_P(\lambda)^{k-1} d\xi_P(\lambda)\cr 
&&{\displaystyle \sum_{h\geq 0} \sum_{n \geq 0}}\frac{ \hbar^{2h+n} }{n!} \int_{z_1 \in [z]-[\infty^{(\alpha)}]}  \dots \int_{z_n \in [z]-[\infty^{(\alpha)}]} {Q}_{h,n+1}^{(l+1)}(\lambda;\mathbf{z}) /d\lambda^{l+1}  .
\eea
This leads to the KZ equation
\bea\label{KZappB}
 &&{\hbar} \frac{d  \psi^{\mathrm{reg}}_l([z]-[\infty^{(\alpha)}],\hbar) }{dx(z)}  + \psi^{\mathrm{reg}}_{l+1}([z]-[\infty^{(\alpha)}],\hbar) \cr
&&= \bigg[ {\displaystyle \sum_{h\geq 0} \sum_{n \geq 0}}\frac{ \hbar^{2h+n} }{n!} {\displaystyle \sum_{P \in \mathcal{P}} \sum_{k\in S_P^{(l+1)}}} \xi_P(x(z))^{-k}  {\displaystyle \Res_{\lambda \to P}} \xi_P(\lambda)^{k-1} \, d  \xi_P(\lambda) \cr  
&&\quad \int_{z_1 =\infty^{(\alpha)}}^z \dots \int_{z_n  = \infty^{(\alpha)}}^z \frac{{Q}_{h,n+1}^{(l+1)}(\lambda;\mathbf{z})}{(d\lambda)^{l+1}} \bigg]  \psi^{\mathrm{reg}}([z]-[\infty^{(\alpha)}],\hbar).
\eea

\subsection{Rewriting the KZ equations with cycles}\label{appB3}

For every pole $P\in \mathcal P$, let us write its preimages $x^{-1}(P)=\{p^{(1)},\ldots,p^{(\ell_P)}\}$ keeping track of their multiplicities as follows
\beq
P=(P^{(1)},\dots,P^{(d)}) = (\overbrace{p^{(1)},\dots,p^{(1)}}^{d_{p^{(1)}}},\overbrace{p^{(2)},\dots,p^{(2)}}^{d_{p^{(2)}}},\dots,\overbrace{p^{(\ell_P)},\dots,p^{(\ell_P)}}^{d_{p^{(\ell_P)}}}).
\eeq
For $j\in \llbracket 1,d\rrbracket$, we will say that $P^{(j)}$ corresponds to $p^{(k)}$ if $1+\underset{r=1}{\overset{k-1}{\sum}}d_{p^{(r)}}\leq j \leq \underset{r=1}{\overset{k}{\sum}}d_{p^{(r)}}$. We also denote $d_{p^{(k)}}$ by $d_{P^{(j)}}$, for every $P^{(j)}$ that corresponds to $p^{(k)}$. Note that there are $d_{p^{(k)}}$ different $j$'s that give $P^{(j)}$'s corresponding to the same $p^{(k)}$. 

Let us introduce roots of unity
\beq
\rho_{P^{(j)}}=e^{\frac{2\pi\ii}{d_{P^{(j)}}}}.
\eeq
In a neighborhood of $\lambda\to P$, we denote the preimages 
$$
x^{-1}(\lambda) = \{q_1,\dots,q_d\},
$$
labeled in such a way that when $\lambda\to P$, we have $q_j\to P^{(j)}$, and the local coordinates are
\beq
\zeta_{P^{(j)}}(q_j) = (\rho_{P^{(j)}})^j \ \left(\xi_{P}(\lambda)\right)^{\frac{1}{d_{P^{(j)}}}}.
\eeq
In particular, if ${P^{(j)}}={P^{(j')}}$, the local coordinates are proportional by a root of unity
\beq
\zeta_{P^{(j)}}=\left(\rho_{P^{(j)}}\right)^{j-j'}  \ \zeta_{P^{(j')}}.
\eeq
Then let us Taylor expand $Q^{(l+1)}_{h,n+1}(\lambda;\mathbf z)$ in its first variable in the limit $\lambda\to P$.
For $(h,n+l+1)\neq (0,2)$, the Taylor series of $\omega_{h,n+l+1}(q_{i_1},\dots,q_{i_{l+1}},z_1,\dots,z_n)$, has its coefficients obtained by residues, i.e.~by generalized cycle integrals
\bea\label{B3-eq-taylorexpomega}
\omega_{h,n+l+1}(q_{i_1},\dots,q_{i_{l+1}},z_1,\dots,z_n)
&=&\!\!\!\!\!\! \sum_{k_1,\dots,k_{l+1}=-r_P}^\infty 
\prod_{j=1}^{l+1}
\zeta_{P^{(i_j)}}(q_{i_j})^{k_j-1}d\zeta_{P^{(i_{j})}}(q_{i_{j}})
 \cr
&& \int_{\mathcal C_{P^{(i_1)},k_1}}\!\!\!\dots \int_{\mathcal C_{P^{(i_{l+1})},k_{l+1}}} \!\!\!\!\!\!\omega_{h,n+l+1}(\cdot,\dots,\cdot,\mathbf z) .
\eea
Non strictly positive values of $k_i$s can occur only for $\omega_{0,1}$.
We denote the lower bound $r_P=\max_{j=1,\dots,d}\ (r_{P^{(j)}}-1) $.

We can also write the Taylor expansion for the integrals with $z_1,\dots,z_n$ on $D$:
\bea\label{B3-eq-taylorexpintD}
&& \int_{z_1\in D}\dots \int_{z_n\in D}\omega_{h,n+l+1}(q_{i_1},\dots,q_{i_{l+1}},z_1,\dots,z_n) \cr
&=& \sum_{k_1,\dots,k_{l+1}=-r_P}^\infty 
\prod_{j=1}^{l+1}
\zeta_{P^{(i_j)}}(q_{i_j})^{k_j-1}d\zeta_{P^{(i_{j})}}(q_{i_{j}})
 \cr
&& \int_{\mathcal C_{P^{(i_1)},k_1}}\dots \int_{\mathcal C_{P^{(i_{l+1})},k_{l+1}}}
\int_D\dots \int_D \omega_{h,n+l+1}. 
\eea
As in the last expression, for simplicity, we omit the variables that we integrate over when there is no possible confusion.
Notice that the $\int_D$ integrals are rightmost, they are performed before the $\mathcal{C}_{p,k}$ integrals, i.e.~before taking the Taylor expansion coefficients, and \eqref{B3-eq-taylorexpintD} is not the integral of \eqref{B3-eq-taylorexpomega} in general.
However the only case where the order of integration does not commute is for $\omega_{0,2}$, as we saw in Proposition~\ref{prop-commutator}, and in that case
\bea
\int_{z_1\in D}\omega_{0,2}(q_{i_1},z_1) 
&=& \sum_{k_1=0}^\infty 
\zeta_{P^{(i_1)}}(q_{i_1})^{k_1-1}d\zeta_{P^{(i_{1})}}(q_{i_{1}})
 \int_{\mathcal C_{P^{(i_1)},k_1}}
\int_D \omega_{0,2}  \cr
&=& -\delta_{P^{(i_1)},\infty^{(\alpha)}} \zeta_{P^{(i_1)}}(q_{i_1})^{-1}d\zeta_{P^{(i_{1})}}(q_{i_{1}}) \cr
&& + \sum_{k_1=1}^\infty 
\zeta_{P^{(i_1)}}(q_{i_1})^{k_1-1}d\zeta_{P^{(i_{1})}}(q_{i_{1}})
\int_D  \int_{\mathcal C_{P^{(i_1)},k_1}}
\omega_{0,2} .
\eea
From the definition of $Q^{(l+1)}_{h,n+1}(\lambda;\mathbf z)$, the $\omega_{0,2}(q_{i_1},q_{i_2})$ that can appear there will always be evaluated at $q_{i_1}\neq q_{i_2}$. However, we can expand them around $q_{i_j}\rightarrow P^{(i_j)}$, $j\in\{1,2\}$, with $P^{(i_1)}=P^{(i_2)}$.
The Taylor expansion of $\omega_{0,2}$ is
\bea
&& \omega_{0,2}(q_{i_1},q_{i_2}) -
\delta_{P^{(i_1)},P^{(i_2)}} \  \frac{d\zeta_{P^{(i_1)}}(q_{i_1}) d\zeta_{P^{(i_2)}}(q_{i_2})}{(\zeta_{P^{(i_1)}}(q_{i_1})-\zeta_{P^{(i_2)}}(q_{i_2}) )^2} \cr
&= & 
\sum_{k_1,k_2=1}^\infty \zeta_{P^{(i_1)}}(q_{i_1})^{k_1-1}d\zeta_{P^{(i_1)}}(q_{i_1}) \zeta_{P^{(i_2)}}(q_{i_2})^{k_2-1} d\zeta_{P^{(i_2)}}(q_{i_2}) \cr
&& 
\oint_{z'_1\in\mathcal C_{P^{(i_1)},k_1}}\oint_{z'_2\in\mathcal C_{P^{(i_2)},k_2}} \left( \omega_{0,2}(z'_1,z'_2)
-\delta_{P^{(i_1)},P^{(i_2)}} \ \frac{d\zeta_{P^{(i_1)}}(z'_{1}) d\zeta_{P^{(i_2)}}(z'_{2})}{(\zeta_{P^{(i_1)}}(z'_{1})-\zeta_{P^{(i_2)}}(z'_{2}) )^2} \right)  \cr
&= & 
\sum_{k_1,k_2=1}^\infty \zeta_{P^{(i_1)}}(q_{i_1})^{k_1-1}d\zeta_{P^{(i_1)}}(q_{i_1}) \zeta_{P^{(i_2)}}(q_{i_2})^{k_2-1} d\zeta_{P^{(i_2)}}(q_{i_2}) \cr
&& 
\oint_{z'_1\in\mathcal C_{P^{(i_1)},k_1}} 
\Big( - \delta_{P^{(i_1)},P^{(i_2)}}  \ k_2 \zeta_{P^{(i_1)}}(z'_{1})^{-k_2-1} d \zeta_{P^{(i_1)}}(z'_{1})+
\oint_{z'_2\in\mathcal C_{P^{(i_2)},k_2}} \omega_{0,2}(z'_1,z'_2) \Big)  \cr
&= & 
\sum_{k_1,k_2=1}^\infty \zeta_{P^{(i_1)}}(q_{i_1})^{k_1-1}d\zeta_{P^{(i_1)}}(q_{i_1}) \zeta_{P^{(i_2)}}(q_{i_2})^{k_2-1} d\zeta_{P^{(i_2)}}(q_{i_2}) \cr
&& 
\oint_{z'_1\in\mathcal C_{P^{(i_1)},k_1}} 
\oint_{z'_2\in\mathcal C_{P^{(i_2)},k_2}} \omega_{0,2}(z'_1,z'_2)  .
\eea
Then, notice that if $P^{(i_1)}=P^{(i_2)}$, $\frac{\zeta_{P^{(i_1)}}}{\zeta_{P^{(i_2)}}} = \rho_{P^{(i_1)}}^{i_1-i_2}$ is a root of unity.
This implies
\beq
\frac{d\zeta_{P^{(i_1)}}(q_{i_1})) d\zeta_{P^{(i_2)}}(q_{i_2}))}{(\zeta_{P^{(i_1)}}(q_{i_1}))-\zeta_{P^{(i_2)}}(q_{i_2})) )^2}
= \frac{\rho_{P^{(i_1)}}^{i_1-i_2}}{(1-\rho_{P^{(i_1)}}^{i_1-i_2})^2} \ \zeta_{P^{(i_1)}}(q_{i_1})^{-1}d\zeta_{P^{(i_1)}}(q_{i_1})  \zeta_{P^{(i_2)}}(q_{i_2})^{-1} d\zeta_{P^{(i_2)}}(q_{i_2}).
\eeq
In other words
\bea
\omega_{0,2}(q_{i_1},q_{i_2}) 
&=& \delta_{P^{(i_1)},P^{(i_2)}} \ \frac{\rho_{P^{(i_1)}}^{i_1-i_2}}{(1-\rho_{P^{(i_1)}}^{i_1-i_2})^2} \  \zeta_{P^{(i_1)}}(q_{i_1})^{-1}d\zeta_{P^{(i_1)}}(q_{i_1})  \zeta_{P^{(i_2)}}(q_{i_2})^{-1} d\zeta_{P^{(i_2)}}(q_{i_2})
\cr
&& +
\sum_{k_1,k_2=1}^\infty \zeta_{P^{(i_1)}}(q_{i_1}))^{k_1-1}d\zeta_{P^{(i_1)}}(q_{i_1})) \zeta_{P^{(i_2)}}(q_{i_2}))^{k_2-1} d\zeta_{P^{(i_2)}}(q_{i_2})) \cr
&& 
\oint_{\mathcal C_{P^{(i_1)},k_1}} 
\oint_{\mathcal C_{P^{(i_2)},k_2}} \omega_{0,2} .
\eea
In some sense this is as if we extended the sum in the second line to include the cases $k_1=0$ and $k_2=0$ by defining:
\beq
\oint_{\mathcal C_{P^{(i_1)},0}} 
\oint_{\mathcal C_{P^{(i_2)},0}} \omega_{0,2}  
\coloneqq R(P)_{i_1,i_2},  
\eeq
where we defined the $d\times d$ matrix  $R(P)$ to be 
\beq
\forall \, i\in\llbracket 1,d\rrbracket \,\, 
R(P)_{i,i} \coloneqq 0 
\, , \quad 
\forall \ i_1\neq i_2 \ \ \ 
R(P)_{i_1,i_2} \coloneqq \delta_{P^{(i_1)},P^{(i_2)}} \ \frac{\rho_{P^{(i_1)}}^{i_1-i_2}}{(1-\rho_{P^{(i_1)}}^{i_1-i_2})^2}.
\eeq

\medskip

With this definition at hand, we can now write
the RHS of the KZ equation \eqref{KZappB}, after expanding the $Q^{(l+1)}_{h,n+1}(\lambda;\mathbf z)$ around $\lambda\to P$, as
\bea
&& \Bigg[ {\displaystyle \sum_{h, n \geq 0}}\frac{ \hbar^{2h+n} }{n!} {\displaystyle \sum_{P \in \mathcal{P}} \sum_{k\in S_P^{(l+1)}}} \xi_P(x(z))^{-k}  \cr
&& \sum_{\beta \underset{l+1}{\subseteq} x^{-1}(\lambda)} \sum_{\nu \in \mathcal{S}(\beta)} \, {\displaystyle \sum_{J_1\sqcup\dots\sqcup J_{l(\nu)}=\mathbf z}}\,  \sum_{ \underset{i=1}{\overset{l(\nu)}{\sum}} h_i= h+l(\nu)-l-1}
\sum_{k_{i,j}=1 -r_{P^{(\nu_{i,j})}}}^\infty  \cr
&&
\Res_{\lambda=P} \xi_P(\lambda)^{k-1}d\xi_P(\lambda) 
\prod_{i=1}^{l(\nu)} \prod_{j=1}^{|\nu_i|}
\zeta_{P^{(\nu_{i,j})}}^{k_{i,j}-1}d\zeta_{P^{(\nu_{i,j})}}/d\lambda
\cr
&& 
\prod_{i=1}^{l(\nu)} \left( \left( \prod_{j=1}^{|\nu_i|}
 \int_{\mathcal C_{P^{(\nu_{i,j})},k_{i,j}}}  
 \prod_{j=1}^{|J_i|}  \int_{[z]-[\infty^{(\alpha)}]} 
 \right) \omega_{h_{i},|\nu_i|+|J_i|} \right)\Bigg]  \psi^{\mathrm{reg}}([z]-[\infty^{(\alpha)}],\hbar).
\eea
Notice that only the elements of $J_i$ are integrated from $\infty^{(\alpha)}$ to $z$, and since $|\nu_i|>0$, it can never be an integral of $\omega_{0,1}$, i.e.~the integrand never has poles at $\infty^{(\alpha)}$ and thus these integrals are well defined.

Note that only $\omega_{0,1}$ factors can bring some $k_{i,j}<0$ and only $\omega_{0,1}$ and $\omega_{0,2}$ factors can bring some $k_{i,j}=0$.

Let us now compute the residues at $\lambda\to P$.
We have,

\begin{itemize}

\item If $P=\Lambda_i$ is a finite pole, we denote $\epsilon_P=1$, and  its multiplicity $d_{P^{(j)}}=\text{order}_{P^{(j)}} x >0$. The canonical base local coordinate is $\xi_P = \lambda-\Lambda_i$, $d\xi_P=d\lambda$, and writing the preimages of $P$ with multiplicities as $\{P^{(1)},\dots,P^{(d)}\}$, we have
\beq
\zeta_{P^{(j)}} = \xi_{P}^{\frac{1}{d_{P^{(j)}}}}\; \text{ and }\;\; d\zeta_{P^{(j)}} = \frac{1}{d_{P^{(j)}}}\xi_{P}^{\frac{1}{d_{P^{(j)}}}-1} d \xi_{P}.
\eeq
The residue at $\lambda=P$ thus selects
\beq
-k = -l-1+\sum_{i=1}^{l(\nu)} \sum_{j=1}^{|\nu_i|} \frac{k_{i,j}}{d_{P^{(\nu_{i,j})}}}.
\eeq

\item If $P=\infty$ is an infinite pole, we denote $\epsilon_P=-1$, and its multiplicity $d_{P^{(j)}}=-\text{order}_{P^{(j)}} x >0$.
We have $\xi_P = \frac{1}{\lambda}$, $d\lambda=-\xi_P^{-2}d\xi_P$, and writing the preimages of $P$ with multiplicities as $\{P^{(1)},\dots,P^{(d)}\}$, we have
\beq
\zeta_{P^{(j)}} = \xi_{P}^{\frac{1}{d_{P^{(j)}}}}\; \text{ and }\;\; d\zeta_{P^{(j)}} = \frac{1}{d_{P^{(j)}}}\xi_{P}^{\frac{1}{d_{P^{(j)}}}-1} d \xi_{P}.
\eeq
The residue at $\lambda=P$ thus selects
\beq
-k = l+1+\sum_{i=1}^{l(\nu)} \sum_{j=1}^{|\nu_i|} \frac{k_{i,j}}{d_{P^{(\nu_{i,j})}}}.
\eeq
\end{itemize}

In both cases
\beq
 \sum_{i=1}^{l(\nu)} \sum_{j=1}^{|\nu_i|} \frac{k_{i,j}}{d_{P^{(\nu_{i,j})}}} = (l+1)\epsilon_P -k.
\eeq
Some of the $k_{i,j}$s can be negative and with lower bound $k_{i,j}>-r_{P^{(\nu_{i,j})}}$.
This implies that the sum over the $k_{i,j}$ must in fact be a finite sum, with an upper bound.

We have that the RHS of \eqref{KZappB} reads
\bea
&& \Bigg[ {\displaystyle \sum_{h, n \geq 0}}\frac{ \hbar^{2h+n} }{n!} {\displaystyle \sum_{P \in \mathcal{P}} \sum_{k\in S_P^{(l+1)}}} \xi_P(x(z))^{-k}  \cr
&& \sum_{\beta \underset{l+1}{\subseteq} x^{-1}(\lambda)} \sum_{\nu \in \mathcal{S}(\beta)} \, {\displaystyle \sum_{J_1\sqcup\dots\sqcup J_{l(\nu)}=\mathbf z}}\,  \sum_{ \underset{i=1}{\overset{l(\nu)}{\sum}} h_i= h+l(\nu)-l-1}
\sum_{k_{i,j}=1 -r_{P^{(\nu_{i,j})}}}^\infty  \cr
&&
\delta\left( \sum_{i=1}^{l(\nu)} \sum_{j=1}^{|\nu_i|} \frac{k_{i,j}}{d_{P^{(\nu_{i,j})}}} = (l+1)\epsilon_P -k \right)
\prod_{i=1}^{l(\nu)} \prod_{j=1}^{\nu_i} \frac{\epsilon_P}{d_{P^{(\nu_{i,j})}}}
\cr
&& 
\prod_{i=1}^{l(\nu)} \left( \left( \prod_{j=1}^{|\nu_i|}
 \int_{\mathcal C_{P^{(\nu_{i,j})},k_{i,j}}}  
 \prod_{j=1}^{|J_i|}  \int_{[z]-[\infty^{(\alpha)}]} 
 \right) \omega_{h_{i},|\nu_i|+|J_i|} \right)\Bigg]  \psi^{\mathrm{reg}}([z]-[\infty^{(\alpha)}],\hbar),
\eea
where we use the Kr\"onecker $\delta$ to encode the constraint on the $k_{i,j}$s.
In this sum there are terms with $k_{i,j}<0$  that may only come from $\omega_{0,1}$,
there are terms with $k_{i,j}=0$  that may only come from $\omega_{0,1}$
 and $\omega_{0,2}$, and there are the strictly positive $k_{i,j}>0$.

Let us separate the negative and vanishing $k_{i,j}s$ from the strictly positive.

We shall decompose the partition $\nu$ into 3 pieces:
\begin{itemize}
\item A piece that we re-call $\nu$ for the strictly positive $k_{i,j}s$:
\beq
\nu=\bigsqcup_{i\in\llbracket 1, \ell\rrbracket }\{\nu_{i,1},\dots,\nu_{i,|\nu_i|}\},
\eeq
with all indices $(i,j)$, $j\in\llbracket 1, |\nu_i|\rrbracket$, distinct, corresponding to elements in $\llbracket 1,d\rrbracket$, and giving an order to the sets.
To it we associate the sequence of exponents
\beq
((k_{1,1},\dots,k_{1,|\nu_1|}),(k_{2,1},\dots,k_{2,|\nu_2|}),\dots,(k_{\ell,1},\dots,k_{\ell,|\nu_\ell|})),
\eeq
with all $k_{i,j}>0$.

\item A piece that we call $\nu'$ for the negative or vanishing $k_{i,j}s$ corresponding to integrals of $\omega_{0,1}$. All its parts must be of size 1, and instead of encoding it as a set partition, we can just denote it as a set of indices $\nu'=\{\nu'_1,\dots,\nu'_{\ell'}\}$, with $\nu'_j\in \llbracket 1,d\rrbracket\setminus\nu$.
To it we associate the exponents
\beq
k'_{1},\dots,k'_{\ell'},
\qquad \text{with} \quad
k'_i\in \llbracket 0,r_{P^{(\nu'_i)}}-1\rrbracket .
\eeq
The integrals associated to it give the spectral times
\beq
\oint_{\mathcal C_{P^{(\nu'_i)},-k'_i}} \omega_{0,1} =t_{P^{(\nu'_i)},k'_i}.
\eeq

\item A piece that we call $\nu''$ for the zero values corresponding to the pseudo-integrals of $\omega_{0,2}$ which are worth $R(p)_{i_1,i_2}$. 
All its parts must be of size $2$ and its indices are taken $\nu''_{i,\pm}\in\llbracket 1,d\rrbracket\setminus (\nu\cup \nu')$: 
$$
\nu'' = \{\{\nu''_{1,+},\nu''_{1,-}\},\{\nu''_{2,+},\nu''_{2,-}\},\dots,\{\nu''_{\ell'',+},\nu''_{\ell'',-}\}\}\in \mathcal{S}^{(2)}( \llbracket 1,d \rrbracket\setminus (\nu\cup \nu')),
$$ 
where we denoted $\mathcal{S}^{(2)}( \llbracket 1,d \rrbracket\setminus (\nu\cup \nu'))$ the set of set partitions where all the parts have size $2$. 
To it we shall associate zero exponents $k''_{i,\pm}=0$, and the factor
\beq
C_{\nu''_i} \coloneqq \oint_{\mathcal C_{P^{(\nu''_{i,+})},0}}\oint_{\mathcal C_{P^{(\nu''_{i,-})},0}} \omega_{0,2} = R(p)_{\nu''_{i,+},\nu''_{i,-}} = R(p)_{\nu''_i}.
\eeq

\end{itemize}

Finally, one could have a factor of the form 
$$ \int_{\mathcal C_{P^{(\nu_{i,j})},k_{i,j}}}    \int_{[z]-[\infty^{(\alpha)}]} 
 \omega_{0,2} ,
$$
with $P^{(\nu_{i,j})} = \infty^{(\alpha)}$ and $k_{i,j} = 0$ giving a contribution equal to -1. 
In this case, one has the same contributions as above for $\nu \cup \nu' \cup \nu''$ not containing $\alpha$ and $|\nu \cup \nu' \cup \nu''|=l$.

In the end we get the RHS of the KZ equation divided by $\psi^{\mathrm{reg}}([z]-[\infty^{(\alpha)}],\hbar)$ as
\bea
&& {\displaystyle \sum_{h, n \geq 0}}\frac{ \hbar^{2h+n} }{n!} {\displaystyle \sum_{P \in \mathcal{P}} \sum_{k\in S_P^{(l+1)}}} \xi_P(x(z))^{-k}  \sum_{\ell'=0}^{l+1} \; \sum_{\nu'\subset_{\ell'} \llbracket 1,d \rrbracket}
\sum_{k'_{i}=0}^{r_{P^{(\nu'_{i})}}-1}    
\sum_{0\leq \ell'' \leq \frac{l+1-\ell'}{2}}\;\sum_{\substack{\nu''\in\mathcal{S}^{(2)}( \llbracket 1,d \rrbracket\setminus \nu') \\ l(\nu'')=\ell''}}   \cr
&& \sum_{\beta \underset{l+1-\ell'-2\ell''}{\subseteq} \llbracket 1,d \rrbracket\setminus (\nu'\cup\nu'')}\;\; \sum_{\nu \in \mathcal{S}(\beta)}   \;\;\sum_{ \underset{i=1}{\overset{l(\nu)}{\sum}} h_i= h+l(\nu)-|\nu|-\ell''}
\;\;\sum_{\substack{k_{i,j}=1\\ i=1,\ldots,l(\nu) \\j=1,\ldots,|\nu_i|}}^\infty  
\;\sum_{n_1+\dots + n_{l(\nu)}=n} \frac{n!}{n_1! \dots n_{l(\nu)}!} 
\cr
&&
\delta\left( \sum_{i=1}^{l(\nu)} \sum_{j=1}^{|\nu_i|} \frac{k_{i,j}}{d_{P^{(\nu_{i,j})}}} = (l+1)\epsilon_P -k +\sum_{i=1}^{\ell'}\frac{k'_i}{d_{P^{(\nu'_i)}}} \right)   
\prod_{i=1}^{\ell'}  \frac{\epsilon_P}{d_{P^{(\nu'_{i})}}}
\prod_{i=1}^{\ell''} \frac{1}{d_{P^{(\nu''_{i,+})}}d_{P^{(\nu''_{i,-})}}}
\cr
&& 
\prod_{i=1}^{l(\nu)}\prod_{j=1}^{\nu_i} \frac{\epsilon_P}{d_{P^{(\nu_{i,j})}}}\prod_{i=1}^{\ell'}  t_{P^{(\nu'_i)},k'_{i}}
\ \ 
\prod_{i=1}^{\ell''}  R(P)_{\nu''_i}
\ \ 
\prod_{i=1}^{l(\nu)} \left( \left(  \prod_{j=1}^{|\nu_i|}
 \int_{\mathcal C_{P^{(\nu_{i,j})},k_{i,j}}}  \prod_{j=1}^{n_i}  \int_{[z]-[\infty^{(\alpha)}]} \right) \omega_{h_{i},|\nu_i|+n_i} \right)\cr
&&- \sum_{h,n \geq 0} \frac{\hbar^{2h+n}}{n!} 
\sum_{k\in S_{\infty}^{(l+1)}} \xi_\infty(x(z))^{-k}  
\sum_{\ell'=0}^{l} \; \sum_{\nu'\subset_{\ell'} \left( \llbracket 1,d \rrbracket \setminus \{\alpha\}\right)}
\sum_{k'_{i}=0}^{r_{\infty^{(\nu'_{i})}}-1}    
\sum_{0\leq \ell'' \leq \frac{l-\ell'}{2}}\;\sum_{\substack{\nu''\in\mathcal{S}^{(2)}( \llbracket 1,d \rrbracket\setminus \nu'\cup \{\alpha\}) \\ l(\nu'')=\ell''}}  \cr
&& 
\sum_{\beta \underset{l-\ell'-2\ell''}{\subseteq} \llbracket 1,d \rrbracket\setminus (\nu'\cup\nu''\cup \{\alpha\})}\;\; \sum_{\nu \in \mathcal{S}(\beta)}   \;\;\sum_{ \underset{i=1}{\overset{l(\nu)}{\sum}} h_i= h+l(\nu)-|\nu|-\ell''}
\;\;\sum_{\substack{k_{i,j}=1\\ i=1,\ldots,l(\nu) \\j=1,\ldots,|\nu_i|}}^\infty  
 \cr
&&
\sum_{n_1+\dots + n_{l(\nu)}+1=n} \frac{n!}{n_1! \dots n_{l(\nu)}!}\;\;\delta\left( \sum_{i=1}^{l(\nu)} \sum_{j=1}^{|\nu_i|} \frac{k_{i,j}}{d_{\infty^{(\nu_{i,j})}}} = (l+1)\epsilon_\infty -k +\sum_{i=1}^{\ell'}\frac{k'_i}{d_{\infty^{(\nu'_i)}}} \right)  \cr
&& \frac{\epsilon_{\infty}}{d_{\infty^{(\alpha)}}} \prod_{i=1}^{l(\nu)} \prod_{j=1}^{\nu_i} \frac{\epsilon_\infty}{d_{\infty^{(\nu_{i,j})}}}
\prod_{i=1}^{\ell'}  \frac{\epsilon_\infty}{d_{\infty^{(\nu'_{i})}}}
\prod_{i=1}^{\ell''} \frac{1}{d_{\infty^{(\nu''_{i,+})}}d_{\infty^{(\nu''_{i,-})}}}
\cr
&& 
\prod_{i=1}^{\ell'}  t_{P^{(\nu'_i)},k'_{i}}
\ \ 
\prod_{i=1}^{\ell''}  R(\infty)_{\nu''_i}
\ \ 
\prod_{i=1}^{l(\nu)} \left( \left(  \prod_{j=1}^{|\nu_i|}
 \int_{\mathcal C_{\infty^{(\nu_{i,j})},k_{i,j}}}  \prod_{j=1}^{n_i}  \int_{[z]-[\infty^{(\alpha)}]} \right) \omega_{h_{i},|\nu_i|+n_i} \right).
\eea

Now all cycle integrals are with generalized cycles with $k_{i,j}>0$, and these commute with all other integrals present in this expression; in particular with integrals $\int_{[z]-[\infty^{(\alpha)}]}$, so the order of integrations doesn't matter anymore.
Moreover, we have a formal power series of $\hbar$, whose coefficients are rational fractions of $x(z)$ with poles at $\mathcal P$, and whose coefficients are algebraic combinations of generalized cycle integrals of the $\omega_{h_i,n_i+|\nu_i|}$, therefore it can be written as the evaluation of an element of our symbolic algebra $\mathcal W$.

So let us rewrite the last expression in $\mathcal W$ as
\bea
\!\!\!\!\!\!\!\!\!&\!\!\! & \!\!\! \hbar^{l+1}  \sum_{P\in \mathcal P} {\epsilon_P^{l+1}}
\sum_{k\in S_P^{(l+1)}} \xi_P(x(z))^{-k}  
\sum_{\ell'=0}^{l+1}  \; \sum_{\nu'\subset_{\ell'} \llbracket 1,d \rrbracket}
\sum_{k'_{i}=0}^{r_{P^{(\nu'_{i})}}-1} \sum_{0\leq \ell'' \leq \frac{l+1-\ell'}{2}}\;\sum_{\substack{\nu''\in\mathcal{S}^{(2)}( \llbracket 1,d \rrbracket\setminus \nu') \\ l(\nu'')=\ell''}}   \cr
\!\!\!\!\!\!\!\!\!&\!\!\! & \!\!\!\sum_{\beta \underset{l+1-\ell'-2\ell''}{\subseteq} \llbracket 1,d \rrbracket\setminus (\nu'\cup\nu'')}\;\; \sum_{\nu \in \mathcal{S}(\beta)}   \;\;
\;\;  \frac{1}{
 {\displaystyle \prod_{i=1}^{l(\nu)} \prod_{j=1}^{\nu_i} d_{P^{(\nu_{i,j})}}
\prod_{i=1}^{\ell'}  d_{P^{(\nu'_{i})}}
\prod_{i=1}^{\ell''} d_{P^{(\nu''_{i,+})}}d_{P^{(\nu''_{i,-})}}}
}
\cr
\!\!\!\!\!\!\!\!\!&\!\!\! &
\!\!\!\sum_{\substack{k_{i,j}=1\\ i=1,\ldots,l(\nu) \\j=1,\ldots,|\nu_i|}}^\infty  \delta\left( \sum_{i=1}^{l(\nu)} \sum_{j=1}^{|\nu_i|} \frac{k_{i,j}}{d_{P^{(\nu_{i,j})}}} = (l+1)\epsilon_P -k +\sum_{i=1}^{\ell'}\frac{k'_i}{d_{P^{(\nu'_i)}}} \right)  \;
\prod_{i=1}^{\ell'} \hbar^{-1}  t_{P^{(\nu'_i)},k'_{i}}\;
\prod_{i=1}^{\ell''}  R(P)_{\nu''_i} \cr
\!\!\!\!\!\!\!\!\!&\!\!\! &\!\!\!\sum_{h_1,\dots,h_{l(\nu)}} \;
 \sum_{n_1+\dots + n_{l(\nu)}=n} \; 
\prod_{i=1}^{l(\nu)}\left(\frac{\hbar^{2h_i-2+n_i+|\nu_i|}}{n_i!}   \prod_{j=1}^{n_i}  \int_{[z]-[\infty^{(\alpha)}]}  
\prod_{j=1}^{|\nu_i|}  \int_{\mathcal C_{P^{(\nu_{i,j})},k_{i,j}}}   \omega_{h_{i},|\nu_i|+n_i} \right) \cr
\!\!\!\!\!\!\!\!\!&\!\!\! &\!\!\! - \hbar^{l+1} \epsilon_\infty^{l+1}
\sum_{k\in S_{\infty}^{(l+1)}}\xi_\infty(x(z))^{-k}  
\sum_{\ell'=0}^{l}  \; \sum_{\nu'\subset_{\ell'} \llbracket 1,d \rrbracket \setminus \{\alpha\}}
\sum_{k'_{i}=0}^{r_{\infty^{(\nu'_{i})}}-1}\sum_{0\leq \ell'' \leq \frac{l-\ell'}{2}}\;\sum_{\substack{\nu''\in\mathcal{S}^{(2)}( \llbracket 1,d \rrbracket\setminus \nu'\cup \{\alpha\}) \\ l(\nu'')=\ell''}}  \cr  
\!\!\!\!\!\!\!\!\!&\!\!\! & 
\!\!\!\sum_{\beta \underset{l-\ell'-2\ell''}{\subseteq} \llbracket 1,d \rrbracket\setminus (\nu'\cup\nu''\cup \{\alpha\})}\;\; \sum_{\nu \in \mathcal{S}(\beta)}
\;\;\frac{ 1}{ d_{\infty^{(\alpha)}}
 {\displaystyle \prod_{i=1}^{l(\nu)} \prod_{j=1}^{\nu_i} d_{\infty^{(\nu_{i,j})}}
\prod_{i=1}^{\ell'}  d_{\infty^{(\nu'_{i})}}
\prod_{i=1}^{\ell''} d_{\infty^{(\nu''_{i,+})}}d_{\infty^{(\nu''_{i,-})}}}
}   
\cr
\!\!\!\!\!\!\!\!\!&\!\!\! &
\!\!\!\sum_{\substack{k_{i,j}=1\\ i=1,\ldots,l(\nu) \\j=1,\ldots,|\nu_i|}}^\infty \delta\left( \sum_{i=1}^{l(\nu)} \sum_{j=1}^{|\nu_i|} \frac{k_{i,j}}{d_{\infty^{(\nu_{i,j})}}} = (l+1)\epsilon_\infty -k +\sum_{i=1}^{\ell'}\frac{k'_i}{d_{\infty^{(\nu'_i)}}} \right)  \;
\prod_{i=1}^{\ell'}   \hbar^{-1}t_{\infty^{(\nu'_i)},k'_{i}}\;
\prod_{i=1}^{\ell''}  R(\infty)_{\nu''_i} \cr
\!\!\!\!\!\!\!\!\!&\!\!\! &  
\!\!\!\sum_{h_1,\dots,h_{l(\nu)}} \;
 \sum_{n_1+\dots + n_{l(\nu)}+1=n}\;
\prod_{i=1}^{l(\nu)}\left(\frac{\hbar^{2h_i-2+n_i+|\nu_i|}}{n_i!}    \prod_{j=1}^{n_i}  \int_{[z]-[\infty^{(\alpha)}]}  
\prod_{j=1}^{|\nu_i|}  \int_{\mathcal C_{\infty^{(\nu_{i,j})},k_{i,j}}}   \omega_{h_{i},|\nu_i|+n_i} \right) . \quad
\eea

The following lemma will help recognize this expression as an operator acting on $\psi$.
\bl
Let $l>0$ and let $C_1,\dots,C_l \in \mathcal E$ be some generalized cycles of strictly positive type (i.e.~$C_i=\mathcal C_{p_i,k_i}$ with $k_i>0$).
Let $D$ be a divisor of degree 0.
We have

\bea\label{appB3lemmaeq1}
\hbar^{2l} \mathcal I_{C_{1}}\dots \mathcal I_{C_l}
\psi^{\mathrm{reg\; symbol}}(D,\hbar)
\!\!\!\!\!\!&=&\!\!\!\!\!\! \bigg[ \sum_{h, n \geq 0} \frac{\hbar^{2h+n}}{n!}
\sum_{\nu\in\mathcal{S}(\llbracket 1,l\rrbracket)} \;
\sum_{n_1+\dots+n_{l(\nu)}=n}\;\sum_{h_1+\dots+h_{l(\nu)}=h-l+l(\nu)} \cr
\!\!\!\!\!\!&&\!\!\!\!\!\!
\frac{n!}{\prod_{i=1}^{l(\nu)} n_i!}  \prod_{i=1}^{l(\nu)} \Big(\prod_{j=1}^{n_i}\int_D \prod_{j\in \nu_i}\int_{C_j} \omega_{h_i,|\nu_i|+n_i}\Big)
\bigg]  \psi^{\mathrm{reg\; symbol}}(D,\hbar)\cr&&
\eea
and
\bea\label{appB3lemmaeq2}
\text{ev}.\mathcal I_{C_1}\dots \mathcal I_{C_l}
\psi^{\mathrm{reg\; symbol}}(D,\hbar)
\!\!\!\!\!\!&=&\!\!\!\!\!\!  \bigg[ \sum_{h, n \geq 0} \sum_{\beta \underset{l}{\subseteq} x^{-1}(\lambda)} \sum_{\nu \in \mathcal{S}(\beta)} \frac{\hbar^{2h+n-2l}}{n!} \int_{z_1\in D} \dots \int_{z_n\in D} \cr
&& \sum_{J_1\sqcup\dots\sqcup J_{l(\nu)}=\{z_1,\dots,z_n\}}
\;\sum_{h_1+\dots+h_{l(\nu)}=h-l+l(\nu)} \cr
&&
\prod_{i=1}^{l(\nu)} \Big(\prod_{j\in \nu_i}\int_{C_j}  \omega_{h_i,|\nu_i|+|J_i|}(\nu_i,J_i)\Big)
\bigg] \ \psi^{\mathrm{reg}}(D,\hbar). \cr&&
\eea

\el

\begin{proof}
Let us start by proving \eqref{appB3lemmaeq1} by recursion on $l$.
It is true for $l=1$ by our definition of applying $\mathcal I_{C_1}$ to an exponential \eqref{ActionOnPsireg}:
\beq\label{l1}
\hbar^2\mathcal I_{C_{1}}\psi^{\mathrm{reg\; symbol}}(D,\hbar)= \bigg[ \sum_{h, n \geq 0} \frac{\hbar^{2h+n}}{n!}
 \Big(\int_D\cdots\int_D \int_{C_1} \omega_{h,1+n}\Big)
\bigg]  \psi^{\mathrm{symbol}}(D,\hbar).
\eeq
Then, assuming it is true up to $l-1$, let us prove it for $l$. We have by induction hypothesis
\begin{align}
\hbar^{2l}\mathcal I_{C_{1}}\dots \mathcal I_{C_l}
\psi^{\mathrm{reg\; symbol}}(D,\hbar)
&=\hbar^2\mathcal I_{C_l}\Bigg( \bigg[ \sum_{h, n\geq 0} \frac{\hbar^{2h+n}}{n!}
\sum_{\nu\in\mathcal{S}(\llbracket 1,l-1\rrbracket)} \,
\sum_{n_1+\dots+n_{l(\nu)}=n}\,\sum_{\substack{h_1+\dots+h_{l(\nu)}\\=h-l+1+l(\nu)}} \cr
&
\frac{n!}{\prod_{i=1}^{l(\nu)} n_i!}  \prod_{i=1}^{l(\nu)} \Big(\prod_{j=1}^{n_i}\int_D \prod_{j\in \nu_i}\int_{C_j} \omega_{h_i,|\nu_i|+n_i}\Big)
\bigg]  \psi^{\mathrm{reg\; symbol}}(D,\hbar)\Bigg).
\end{align}
Then we apply Leibniz rule
\begin{align}
\hbar^{2l} \mathcal I_{C_{1}}\dots \mathcal I_{C_l}
\psi^{\mathrm{reg\; symbol}}(D,\hbar) 
&=   \bigg[ \sum_{h, n\geq 0} \frac{\hbar^{2h+n}}{n!}
\sum_{\nu\in\mathcal{S}(\llbracket 1,l-1\rrbracket)} \,
\sum_{n_1+\dots+n_{l(\nu)}=n}\,\sum_{h_1+\dots+h_{l(\nu)}=h-l+1+l(\nu)} \cr
&
\frac{n!}{\prod_{i=1}^{l(\nu)} n_i!}  \prod_{i=1}^{l(\nu)} \Big(\prod_{j=1}^{n_i}\int_D \prod_{j\in \nu_i}\int_{C_j} \omega_{h_i,|\nu_i|+n_i}\Big)
\bigg]  \hbar^2\mathcal I_{C_l}  \psi^{\mathrm{reg\; symbol}}(D,\hbar)\cr
& + \Big[ \sum_{h, n\geq 0} \frac{\hbar^{2h+2+n}}{n!}
\sum_{\nu\in\mathcal{S}(\llbracket 1,l-1\rrbracket)} \,
\sum_{n_1+\dots+n_{l(\nu)}=n}\,\sum_{h_1+\dots+h_{l(\nu)}=h-l+l(\nu)} \cr
&
\frac{n!}{\prod_{i=1}^{l(\nu)} n_i!}  \mathcal I_{C_l}\bigg(\prod_{i=1}^{l(\nu)} \Big(\prod_{j=1}^{n_i}\int_D \prod_{j\in \nu_i}\int_{C_j}  \omega_{h_i,|\nu_i|+n_i}\Big)\bigg)
\bigg] \psi^{\mathrm{reg\; symbol}}(D,\hbar).
\end{align}
The first term can be written using \eqref{l1} as sum over partitions of $\llbracket 1, l\rrbracket$ the form $\td\nu=(\nu_1,\dots,\nu_{l(\nu)},1)$, i.e.~a partition obtained from $\nu$ by adding one more block of size $1$, hence $l(\tilde{\nu})=l(\nu)+1$ and $|\tilde{\nu}|=|\nu|+1$. Then we have that the new genus is $\tilde{h}=h+h_{l(\tilde{\nu})}$, giving $h_1+\dots+h_{l(\tilde{\nu})}=\tilde{h}-l+1+l(\nu)=\tilde{h}-l+l(\tilde{\nu})$. We also have $\tilde{n}=n+n_{l(\tilde{\nu})}$.
In the second term, $\mathcal I_{C_l}$ acts by Leibniz rule on the product, by acting on one of the factors, which amounts to adding the index $l$ to a block of $\nu$ in all possible ways. This gives $\hbar^{2h+2+n}$, with a new $\tilde{h}=h+1$ and $\tilde{\nu}\in\mathcal{S}(\llbracket 1, l\rrbracket)$, so $h_1+\dots+h_{l(\tilde{\nu})}=\tilde{h}-l+l(\tilde{\nu})$, with $l(\tilde{\nu})=l(\nu)$ and $\tilde{h}-l=h-l+1$, as before.
Eventually these two terms together give the sum over all partitions of the set $\llbracket 1,l \rrbracket$, i.e.~the formula \eqref{appB3lemmaeq1} at rank $l$.
\smallskip

To prove \eqref{appB3lemmaeq2}, we just recognize that the factor
\beq
\frac{n!}{\prod_{i=1}^{l(\nu)} n_i!}
\eeq
is the number of ways of writing
\beq
\{ z_1,\dots,z_n \} = J_1 \sqcup J_2 \sqcup \dots \sqcup J_{l(\nu)}.
\eeq
This ends the proof of the lemma.
\end{proof}

We define the operator
\bea
\td{\mathcal L}_l(x(z))
&\coloneqq &  \sum_{P\in \mathcal P} {\epsilon_P^{l+1}}
\Bigg[
 \xi_P(x(z))^{-(l+1)\epsilon_P}   
\sum_{\ell'=0}^{l+1}  \ \sum_{\nu'\subset_{\ell'} \llbracket 1,d \rrbracket} \ \ 
\prod_{j\in \nu'}
\bigg(
\sum_{k=0}^{r_{P^{(j)}}-1}  
\frac{t_{P^{(j)},k}}{d_{P^{(j)}}} \ \xi_P^{-k/d_{P^{(j)}}}   \bigg)
\quad \cr
&& 
\sum_{0\leq \ell'' \leq \frac{l+1-\ell'}{2}}\;\sum_{\substack{\nu''\in\mathcal{S}^{(2)}( \llbracket 1,d \rrbracket\setminus \nu') \\ l(\nu'')=\ell''}}
\ \ 
\prod_{i=1}^{\ell''}  \frac{\hbar^2R(P)_{\nu''_i}}{d_{P^{(\nu_{i,+}'')}} d_{P^{(\nu_{i,-}'')}} } \cr
&&
\sum_{\nu \underset{l+1-\ell'-2\ell''}{\subseteq} \llbracket 1,d \rrbracket\setminus (\nu'\cup\nu'')}
\;\;\prod_{j\in \nu} 
\bigg(\hbar^2\sum_{k=1}^\infty \frac{ \xi_P^{{k}/{d_{P^{(j)}}}}}{d_{P^{(j)}}}   \mathcal I_{\mathcal C_{P^{(j)},k}} \bigg)  \Bigg]_{\leq 0}  \cr
&&
- \hbar \frac{{\epsilon_\infty^{l+1}}}{d_{\infty^{(\alpha)}}}
\Bigg[
 \xi_\infty(x(z))^{-(l+1)\epsilon_\infty}    
\!\sum_{\ell'=0}^{l} \, \sum_{\nu'\subset_{\ell'} \llbracket 1,d \rrbracket \setminus \{\alpha\}} \,
\prod_{j\in \nu'}
\bigg(
\sum_{k=0}^{r_{\infty^{(j)}}-1}  
\frac{t_{\infty^{(j)},k}}{d_{\infty^{(j)}}} \ \xi_\infty^{-k/d_{\infty^{(j)}}}   \bigg)
\quad \cr
&& 
\sum_{0\leq \ell'' \leq \frac{l+1-\ell'}{2}}\;\sum_{\substack{\nu''\in\mathcal{S}^{(2)}( \llbracket 1,d \rrbracket\setminus (\nu' \cup \{\alpha\}))\\ l(\nu'')=\ell''}}
\ \ 
\prod_{i=1}^{\ell''}  \frac{\hbar^2R(\infty)_{\nu''_i}}{d_{\infty^{(\nu_{i,+}'')}} d_{\infty^{(\nu_{i,-}'')}} } \cr
&&
\sum_{\nu \underset{l-\ell'-2\ell''}{\subseteq} \llbracket 1,d \rrbracket\setminus (\nu'\cup\nu''\cup \{\alpha\})}\;\;
\prod_{j\in \nu} 
\bigg(\hbar^2\sum_{k=1}^\infty \frac{ \xi_\infty^{{k}/{d_{\infty^{(j)}}}}}{d_{\infty^{(j)}}}   \mathcal I_{\mathcal C_{\infty^{(j)},k}} \bigg)  \Bigg]_{\leq 0}, 
\eea
where the notation $[\cdot]_{\leq 0}$ means we keep only the terms that give $\xi_P(x(z))^{-m}$, with $m\in S_P^{(l)}$.
Then \eqref{appB3lemmaeq1} implies that the RHS of the KZ equation is
\beq 
{\hbar} \frac{d}{dx(z)}  \psi_{l}^{\mathrm{reg}}([z]-[\infty^{(\alpha)}])   +\psi_{l+1}^{\mathrm{reg}}([z]-[\infty^{(\alpha)}])   
= \text{ev}. \widetilde{\mathcal{L}}_l(x(z)) \left[\psi^{\text{reg  symbol}}([z]-[\infty^{(\alpha)}]) \right].
\eeq
This ends the proof of Theorem \ref{final-KZ}.

\section{Proof of Theorem \ref{th-no-bp}: No pole at ramification points}\label{App-no-bp}

\subsection{General idea of the proof}

For any triple $J=(P,k,l)_{P\in \mathcal{P}, k\in \mathbb{N}, l\in \llbracket 0,d-1\rrbracket}$, $ \mathcal{L}_{P,k,l}$ is a linear combination of operators of the form ${\displaystyle \prod_{j=1}^m} \left(\hbar^{2} \mathcal{I}_{\mathcal C_j}\right)$
with generalized cycles of the form
$\mathcal C_j = {\mathcal{C}_{p_j,k_j}} $ with $x(p_j) = P$, $k_j \geq 1$ and $m \leq l$. Let us note that the coefficients of this linear combination are independent of $\lambda$.

For any set of generalized cycles $(\mathcal C_j)_{j=1}^m$ of this form, one can define the symbolic matrix 
\beq
\widetilde{A}_{\mathcal C_1,\dots,\mathcal C_m}^{\mathrm{symbol}} \coloneqq \hbar^{-1} \left(\prod_{j=1}^m \left(\hbar^{2} \mathcal{I}_{\mathcal C_j} \right) \tilde{\Psi}^{\mathrm{symbol}}\right)   \left(\tilde{\Psi}^{\mathrm{symbol}}\right)^{-1}
\eeq
and its evaluation
\beq
\widetilde{A}_{\mathcal C_1,\dots,\mathcal C_m} \coloneqq \text{ev}. \widetilde{A}_{\mathcal C_1,\dots,\mathcal C_m}^{\mathrm{symbol}} .
\eeq

It is defined in such a way that the systems
\beq\label{eq-comp-syst-symbol}
\left\{
\begin{array}{lll}
\hbar  \frac{d \widetilde{ \Psi}^{\mathrm{symbol}}(\lambda) }{d\lambda} & = & \widetilde{L}^{\mathrm{symbol}}(\lambda) \widetilde{ \Psi}^{\mathrm{symbol}}(\lambda), \\[0.3em]
 \hbar^{-1} \left({\displaystyle \prod_{j=1}^m} \left(\hbar^{2} \mathcal{I}_{\mathcal C_j} \right)\right) \widetilde{\Psi}^{\mathrm{symbol}}(\lambda) & = & \widetilde{A}_{\mathcal C_1,\dots,\mathcal C_m}^{\mathrm{symbol}}(\lambda) \widetilde{\Psi}^{\mathrm{symbol}}(\lambda) \cr
\end{array}
\right.
\eeq
and
\beq\label{eq-comp-syst}
\left\{
\begin{array}{lll}
\hbar  \frac{d \widetilde{ \Psi}(\lambda) }{d\lambda} & = & \widetilde{L}(\lambda) \widetilde{ \Psi}(\lambda), \\[0.3em]
 \text{ev}. \left[\hbar^{-1} \left({\displaystyle \prod_{j=1}^m} \left(\hbar^{2} \mathcal{I}_{\mathcal C_j} \right)\right) \widetilde{\Psi}^{\mathrm{symbol}}(\lambda) \right] & = & \widetilde{A}_{\mathcal C_1,\dots,\mathcal C_m}(\lambda) \widetilde{\Psi}(\lambda) \cr
\end{array}
\right.
\eeq
are compatible. All these matrices are trans-series of the form
\beq
\widetilde{L}(\lambda) = \sum_{p\geq 0} \hbar^p L^{(p)}(\lambda,\hbar)
 , \qquad
\widetilde{L}^{\mathrm{symbol}}(\lambda) = \sum_{p\geq 0} \hbar^p L^{(p),\mathrm{\,symbol}}(\lambda,\hbar)
 \eeq
 and
 \beq
\widetilde{A}_{\mathcal C_1,\dots,\mathcal C_m} (\lambda) = \sum_{p \geq 0} \hbar^p \widetilde{A}_{\mathcal C_1,\dots,\mathcal C_m} ^{(p)}(\lambda,\hbar)
, \qquad
\widetilde{A}_{\mathcal C_1,\dots,\mathcal C_m}^{\mathrm{symbol}} (\lambda) = \sum_{p \geq 0} \hbar^p \widetilde{A}_{\mathcal C_1,\dots,\mathcal C_m} ^{(p),\mathrm{\,symbol}}(\lambda,\hbar).
\eeq

\medskip

\emph{The proof is done by induction on $h\geq 0$ regarding the following proposition.
\beq
 \mathcal{P}_h\,:\, \begin{array}{c}
`` \,\forall\, m \geq 1 \, , \; \forall\, P \in \mathcal{P} \, , \; \forall\, (p_1,\dots,p_m) \in (x^{-1}(P))^m \, , \; \forall\, (k_1,\dots,k_m) \in \left(\mathbb{N}^*\right)^m
  \,   : \cr
  \,\widetilde{A}_{\mathcal{C}_{p_1,k_1},\dots,\mathcal{C}_{p_m,k_m}}^{(h)}(\lambda) = O(1) \, , \; \text{ when }\, \lambda \to u \,\text{ for all }\, u \in x\left(\mathcal{R}\right).\text{''}
  \end{array}
  \eeq}

\noindent{For} every $h\geq 0$, we thus assume by induction that for all $a\in \llbracket 0, h-1\rrbracket$, $\mathcal{P}_a$ is satisfied.
Note that this assumption is empty for the initialization case $h=0$. We now want to prove that $\mathcal{P}_h$ holds.

First of all, by the definition \eqref{def-tilde-L} of $\widetilde{L}(\lambda,\hbar)$, the possible poles of $ L^{(a)}(\lambda)$ at $\lambda = u$ come from the poles of some $\widetilde{A}_{P,K,l}^{(m)}$ with $m<a$.
Hence, the previous assumption implies that
\beq
\forall\, a \in \llbracket 0, h\rrbracket \,,\, L^{(a)}(\lambda) = O(1), \, \text{ when } \, \lambda \to u , \,\text{ for all }\, u\in x\left(\mathcal{R}\right) .
\eeq
Hence proving $ \mathcal{P}_h$ for any $h \in \mathbb{N}$ proves the theorem.

For proving $ \mathcal{P}_h$, we shall begin by proving it for $m\geq 2$ by using an induction formula for $\widetilde{A}_{\mathcal{C}_{p_1,k_1},\dots,\mathcal{C}_{p_m,k_m}}^{(h)}(\lambda)$.

We then prove it for $m=1$ by using the compatibility of the system \eqref{eq-comp-syst-symbol}.

\subsection{Proof of $\mathcal{P}_h$ for $m \geq 2$}

One can compute the matrices $\widetilde{A}_{\mathcal C_1,\dots,\mathcal C_m}(\lambda)$ by induction on $m$ through the formula
\beq
\forall\, m \geq 2 \, , \; \widetilde{A}_{\mathcal C_1,\dots,\mathcal C_m}^{\mathrm{symbol}} (\lambda) = \hbar^2 \mathcal{I}_{\mathcal{C}_{m}} \widetilde{A}_{\mathcal C_1,\dots,\mathcal C_{m-1}}^{\mathrm{symbol}} (\lambda) +
\hbar \widetilde{A}_{\mathcal C_1,\dots,\mathcal C_{m-1}}^{\mathrm{symbol}} (\lambda)  \widetilde{A}_{\mathcal C_{m}}^{\mathrm{symbol}} (\lambda) .
\eeq
After evaluation, this reads
\beq
\forall\, m \geq 2 \, , \;  \widetilde{A}_{\mathcal C_1,\dots,\mathcal C_m} (\lambda) = \hbar^2 \text{ev} . \mathcal{I}_{\mathcal{C}_{m}} \widetilde{A}_{\mathcal C_1,\dots,\mathcal C_{m-1}}^{\mathrm{symbol}} (\lambda) +
\hbar \widetilde{A}_{\mathcal C_1,\dots,\mathcal C_{m-1}} (\lambda)  \widetilde{A}_{\mathcal C_{m}} (\lambda) .
\eeq
To order $h$ in the $\hbar$ expansion as trans-series, the RHS involves only terms of the form $\widetilde{A}_{\mathcal{C}_1,\dots,\mathcal{C}_n}^{(a)}$ with $a<h$ which do not have  not  have any pole as $\lambda \to u \in \mathcal{R}$ thanks to the induction property $\mathcal{P}_a$, $a \leq h-1$.  Hence it is regular and one has
\beq
\forall\, m \geq 2 \, , \; \widetilde{A}_{\mathcal C_1,\dots,\mathcal C_m}^{(h)} (\lambda) = O(1)
\eeq
when $\lambda \to u$.

\subsection{Preparation of the proof of $\mathcal{P}_h$ for $m =1$}

In this section, we collect a few properties which will prove to be useful for the proof in the case $m=1$.

\subsubsection{Leading order of the Lax matrix $\widetilde{L}(\lambda)$}

One of the main ingredients of our proof is the form of the leading order $L^{(0)}(\lambda)$ of the Lax matrix $\widetilde{L}(\lambda)$.

For $\lambda \notin x\left(\mathcal{R}\right)\cup \mathcal{P}$, since $L^{(0)}(\lambda)$ is a companion matrix with distinct eigenvalues,
\beq
L^{(0)}(\lambda) = 
 \left[
 \begin{array}{cccccc}
 0 & 1&0 &  \dots &  0\cr
0 & 0&1 &  \dots &  0\cr
  \vdots & \vdots &\vdots&\ddots & \vdots \cr
  0 & 0&0  & \dots &  1\cr
  (-1)^{d-1}P_d(\lambda) & (-1)^{d-2}P_{d-1}(\lambda)& (-1)^{d-3}P_{d-2}(\lambda) & \dots &  P_{1}(\lambda) \cr  
\end{array}
\right].
\eeq
It is diagonalized by a Vandermonde matrix $V(\lambda)$,
\beq
V^{-1}(\lambda) L^{(0)}(\lambda) V(\lambda) =  Y(\lambda),
\eeq
where
\beq
Y(\lambda) = \diag (y_1(\lambda),y_2(\lambda),\dots,y_d(\lambda)),
\eeq
with $y_i(\lambda)$ being the value of $y(z^{(i)}(\lambda))$, with $\{z^{(i)}(\lambda)\}_{i=1}^d = x^{-1}(\lambda)$ chosen such that 
\beq
y_1(u) = y_2(u)
\eeq
and $V(\lambda)$ is the Vandermonde matrix with entries $\left[V(\lambda)\right]_{k,l} = y_l(\lambda)^{k-1}$, for all $(k,l)\in \llbracket 1,d\rrbracket^2$.

\br\label{RemarkRegularBranchpoints}Since the spectral curve is admissible, we have that $\forall\, (k,l) \in\llbracket 3,d\rrbracket^2$ : $y_k(u)\neq y_1(u)$ and $y_k(u)\neq y_l(u)$. Moreover, $y'_2(u)=- y_1'(u)$, with $y'_1(u)\neq y'_2(u)$. In particular, we also have $y'_1(u)\neq 0$ and $y_2'(u)\neq 0$.
\er

\subsubsection{Action of $\mathcal{I}_{C}$ on trans-series}

\label{subsection-action-trans}

Let us remind the reader how our linear operators act on the trans-series considered. From \eqref{eq-action-I-trans}, which we recall here for convenience,
\begin{align}
\hbar  \mathcal{I}_{\mathcal{C}}  \left[\sum_{m=1}^\infty \hbar^m \, \Xi_m^{(\infty_\alpha)}(z,\hbar, \boldsymbol{\epsilon}, \boldsymbol{\rho}) \right] = 
\sum_{m=1}^\infty \hbar^{m+1} \, \left. \mathcal{I}_{\mathcal{C}}  \cdot \Xi_m^{(\infty_\alpha)}(z,\hbar, \boldsymbol{\epsilon}, \boldsymbol{\rho}) \right|_{\phi \; \text{fixed}}\cr
+ \sum_{j=1}^g \mathcal{I}_{\mathcal{C}} \left[\phi_j \right]
\sum_{m=1}^\infty \hbar^{m} \, \left. \frac{\partial\, \Xi_m^{(\infty_\alpha)}(z,\hbar, \boldsymbol{\epsilon}, \boldsymbol{\rho})}{\partial \phi_j} \right|_{\phi_j = \frac{1}{2 \pi i } \oint_{\mathcal{B}_j} \om_{0,1}}\!\!\!\!\!\!\!\!\!\!\!\!\!\!\!\!\!\!,
\end{align}
$\hbar  \mathcal{I}_{\mathcal{C}}$ raises the $\hbar$ order by one except when acting on the $(\phi_i)_{i=1}^g$ dependence of our trans-series. In particular, when acting on the Vandermonde matrix $V(\lambda)$, its inverse or the rational functions $\xi_P(\lambda)$, $\hbar  \mathcal{I}_{\mathcal{C}}$ raises the order of $\hbar$ by one. 

\subsubsection{Relation between the Lax and auxiliary matrices}

Let $k\in \mathbb{N}$ and $P\in \mathcal{P}$. The entries of $\widetilde{\Delta}_{P,k}(\lambda)$ defined from \eqref{eq-diff-tilde} and \eqref{eq-diff-tildee} are expressed in terms of the entries of the auxiliary matrices $\widehat{A}_{P,k,l}(\lambda)$. For later use, let us remind it here. From the definition, one has
\beq\label{EqDelta}
\widetilde{\Delta}_{P,k}(\lambda)= -\hbar (G(\lambda))^{-1} \frac{\partial G(\lambda)}{\partial \lambda} + (G(\lambda))^{-1}  \widehat{\Delta}_{P,k} (\lambda) G(\lambda),
\eeq
\beq
\left[ \widehat{\Delta}_{P,k} (\lambda)\right]_{r,s} = \left[\widehat{A}_{P,k,r-1}\right]_{1,s},
\eeq
and
\bea
\widetilde{A}_{P,k,l}(\lambda,\hbar)&\coloneqq& \left(G(\lambda)\right)^{-1} \mathrm{ev.}\left( \left[ G^{\mathrm{symbol}}(\lambda), \hbar^{-1} \mathcal{L}_{P,k,l} \right] \widetilde{\Psi}^{\mathrm{symbol}}(\lambda,\hbar) \right) \widetilde{\Psi}(\lambda,\hbar)^{-1} \cr
&&+ \left(G(\lambda)\right)^{-1}  \widehat{A}_{P,k,l}(\lambda,\hbar)  G(\lambda) .
\eea
One can combine these two equations in order to get an expression of $\widetilde{\Delta}_{P,k}(\lambda)$ in terms of the $\widetilde{A}_{P',k',l'}$.
At the end of the proof, we shall use only the leading order expansion of this equation around the branch points.

\subsection{Proof of $\mathcal{P}_h$ for $m =1$}

In order to prove that $ \mathcal{P}_h$ is true for $m=1$, let us consider an arbitrary generalized cycle $\mathcal{C}\coloneqq\mathcal{C}_{p,k}$ for $p \in x^{-1}(\mathcal{P})$ and $k \geq 1$. We shall prove that $\widetilde{A}_{\mathcal{C}}^{(h)}(\lambda) = O(1)$ as $\lambda \to u \in \mathcal{R}$.

The compatibility of the system \eqref{eq-comp-syst-symbol} reads
\beq
\hbar \frac{\partial \widetilde{A}_{\mathcal{C}}^{\mathrm{symbol}}(\lambda)}{\partial \lambda} = \hbar \mathcal{I}_{\mathcal{C}} \widetilde{L}^{\mathrm{symbol}}(\lambda) + \left[ \widetilde{L}^{\mathrm{symbol}}(\lambda),\widetilde{A}_{\mathcal{C}}^{\mathrm{symbol}}(\lambda)\right].
\eeq

Let us now consider a given $u\in x\left(\mathcal{R}\right)$ for the rest of the proof.
We shall denote $s_\mathcal{C} \in \mathbb{N}$ the order of the pole of $\widetilde{A}_{\mathcal{C}}^{(h)}(\lambda) $ at $\lambda\to u$. Note that we may have $s_\mathcal{C}=0$ if $\widetilde{A}_{\mathcal{C}}^{(h)}(\lambda)$ is regular at $\lambda\to u$. We then define $s=\underset{(p,k)\in  x^{-1}(\mathcal{P})\times \mathbb{N}^*}{\max}(s_{\mathcal{C}_{p,k}})$. If $s=0$, then $\mathcal{P}_h$ holds. Let us thus assume by reductio ad absurdum that $s>0$ and consider $\mathcal{T}=\{(p,k)\in  x^{-1}(\mathcal{P})\times \mathbb{N}^*\,\text{ such that }\, s_{\mathcal{C}_{p,k}}=s\}$ the subset of indices for which the poles are of maximum order. By assumption $\mathcal{T}\neq \emptyset$.
\medskip

For any $\mathcal{C}_0  = \mathcal{C}_{p,k}$ with $(p,k)\in \mathcal{T}$, the expansion of $\widetilde{A}_{\mathcal C_0}^{(h)}(\lambda)$ around $\lambda=u$ reads
\beq\label{poledef}
\widetilde{A}_{\mathcal C_0}^{(h)}(\lambda) =  \frac{a_{{\mathcal{C}_0}}}{(\lambda-u)^s} + \frac{b_{\mathcal C_0}}{(\lambda-u)^{s-1}} +O\left((\lambda-u)^{-s+2}\right), \, \text{ with }\, a_{\mathcal C_0}\neq 0,
\eeq
for some $d\times d$ matrices $a_{\mathcal C_0}$ and $b_{\mathcal C_0}$.

\medskip

At order $\hbar^{h}$ and $\hbar^{h+1}$, the compatibility condition takes the form, for any $\mathcal{C}_0\in \mathcal{T}$,
\beq\label{compatibility-rec}
\left\{
\begin{array}{l}
\left[\widetilde{A}_{\mathcal C_0}^{(h)}(\lambda),L^{(0)}(\lambda)\right]  = - \underset{0\leq m \leq h-1}{\sum} \left[\widetilde{A}_{\mathcal C_0}^{(m)}(\lambda),L^{(h-m)}(\lambda)\right] + \frac{\partial \widetilde{A}_{\mathcal C_0}^{(h-1)}(\lambda)}{\partial \lambda}  - \left(\hbar \, \text{ev} . \mathcal{I}_{\mathcal C_0} \widetilde{L}^{\mathrm{symbol}}(\lambda) \right)^{(h)},\cr
 \underset{0\leq m\leq h+1}{\sum} \left[\widetilde{A}_{\mathcal C_0}^{(m)}(\lambda),L^{(h-m+1)}(\lambda)\right]  =  \frac{\partial \widetilde{A}_{\mathcal C_0}^{(h)}}{\partial \lambda}  - \left(\hbar \,  \text{ev}.\mathcal{I}_{\mathcal C_0} \widetilde{L}^{\mathrm{symbol}}(\lambda) \right)^{(h+1)},\cr
\end{array}
\right.
\eeq
where we remind the reader that  the notation $(F(\lambda))^{(p)}$ means extracting the coefficient of order $\hbar^p$ in the trans-series $F(\lambda)$.

\subsubsection{Expansion of the Vandermonde matrix and its inverse when $\lambda\to u$}

The expansions of the Vandermonde matrix $V(\lambda)$ and its inverse  $V^{-1}(\lambda)$  at $\lambda\to u$ read
\bea
V(\lambda)&=&V_0+V_1(\lambda-u)^{\frac{1}{2}}+V_2(\lambda-u)+O\left((\lambda-u)^{\frac{3}{2}}\right) ,\cr
V^{-1}(\lambda)&=&\frac{B_0}{(\lambda-u)^{\frac{1}{2}}}+B_1+B_2(\lambda-u)^{\frac{1}{2}} +O\left((\lambda-u)\right) .
\eea
It is straightforward computations from the identity $V^{-1}(\lambda)V(\lambda)=I_d$ around $\lambda=u$ (providing $B_0V_0=0$, $B_1V_0+B_0V_1=I_d$, $B_2V_0+B_1V_1+B_0V_2=0$ and $B_3V_0+B_2V_1+B_1V_2+B_0V_3=0$ and so on) and our knowledge of $V(\lambda)$ to show that the matrices $(B_k)_{k\geq 0}$ are given by
\beq M (B_k)^t=F_k \,\,\Leftrightarrow \,\, B_k=(F_k)^t (M^{-1})^t  \,\,,\,\,\forall \, k\geq 0,\eeq
where the matrix $M$ does not depend on $k$ and is given by
\beq M\coloneqq \left(\begin{array}{cccc}1&y_2(u)&\dots &y_2(u)^{d-1}\\
1&y_3(u)&\dots &y_3(u)^{d-1}\\
\vdots&\vdots&  &\vdots\\
1&y_d(u)&\dots&y_d(u)^{d-1}\\
0 & (y'_1(u)-y_2'(u))&\dots &  (y'_1(u)-y_2'(u))(d-1)y_2(u)^{d-2}
\end{array}\right),
\eeq
while matrices $(F_k)_{k\geq 0}$ are given by
\beq F_0=\left(\begin{array}{rrrrr}0&0 &0&\dots& 0\\
\vdots & \vdots & \vdots & &\vdots\\
0&0 &0&\dots &0\\
1&-1&0&\dots&0
\end{array}\right),
\eeq
\beq \forall\ (i,j)\in\llbracket 1,d\rrbracket^2\, , \, (F_1)_{i,j}= \left\{
    \begin{array}{ll}
       \delta_{i+1,j}-(B_0V_1)_{j,i+1}, & \mbox{if } i\leq d-1,\\
        (B_0V_2)_{j,2}-(B_0V_2)_{j,1}, & \mbox{if } i=d,
    \end{array}
\right.
\eeq
and for all $k\geq 2$,
\beq  \forall\ (i,j)\in\llbracket 1,d\rrbracket^2\, , \, (F_k)_{i,j}= \left\{
    \begin{array}{ll}
       -\underset{m=0}{\overset{k-1}{\sum}}(B_mV_{k-m})_{j,i+1}, & \mbox{if } i\leq d-1,\\
        \left(\underset{m=0}{\overset{k-1}{\sum}}B_mV_{k+1-m}\right)_{j,2}-\left(\underset{m=0}{\overset{k-1}{\sum}}B_mV_{k+1-m}\right)_{j,1}, & \mbox{if } i=d.
    \end{array}
\right.
\eeq
Note that $M$ is invertible since its determinant is 
\beq \label{detM} \det M=(-1)^{d}(y'_1(u)-y_2'(u))\left(\underset{j=3}{\overset{d}{\prod}}(y_2(u)-y_j(u))^2\right)\prod_{3\leq i<j\leq d}(y_i(u)-y_j(u)),
\eeq
which is non-vanishing due to Remark \ref{RemarkRegularBranchpoints}.

Remark also that $B_0$ and $V_0$ have a peculiar form and read
\bea
 B_0&=&\begin{pmatrix}(B_0)_{1,1}&\dots &(B_0)_{1,d}\\
-(B_0)_{1,1}&\dots &-(B_0)_{1,d}\\
0&\dots &0\\
\vdots& &\vdots\\
0&\dots&0
\end{pmatrix} ,\cr
V_0&=&\begin{pmatrix} 1&1& 1&\dots &1\\
y_2(u)& y_2(u)&y_3(u)&\dots & y_d(u)\\
\vdots&\vdots& \vdots& &\vdots\\
y_2(u)^{d-1}&y_2(u)^{d-1}&y_3(u)^{d-1}&\dots&y_d(u)^{d-1}
\end{pmatrix} .
\eea
In particular, we get that, for any $d \times d$ matrix $K$, we always have
\bea \label{properties-B0-V0}
0&=&(B_0KV_0)_{i,j}\,,\, \,\forall\, i\geq 3 \text{ and } j\geq 1,\cr
0&=&(B_0KV_0)_{1,j}+(B_0KV_0)_{2,j}\,,\, \,\forall \,j\geq 1,\cr
0&=&(B_0KV_0)_{i,1}-(B_0KV_0)_{i,2}\,,\,\,\forall \,i\geq 1 .
\eea

\subsubsection{Expansion of the compatibility conditions}

For any $\mathcal C_0 = \mathcal C_{p,k}$ with $(p,k)\in \mathcal{T}$, let us now write down the expansion of the compatibility conditions around $\lambda = u$. Remark that, by the induction hypothesis, the RHS of \eqref{compatibility-rec} behave at worst as $(\lambda-u)^{-s}$  while the LHS behaves as $(\lambda-u)^{-s-1}$.
Hence, after conjugation for diagonalizing the leading order of $\widetilde{L}(\lambda)$, the coefficients of $(\lambda - u)^{-s-\frac{1}{2}}$, $(\lambda - u)^{-s}$ and $(\lambda - u)^{-s+\frac{1}{2}}$ of the first line of \eqref{compatibility-rec} respectively read, for any $(k,l)\in \llbracket1,d\rrbracket^2$,
\beq\label{compatibility-rec-1}
\left\{
\begin{array}{l}
\left[B_0 a_{\mathcal{C}_0} V_0\right]_{k,l} (y_k(u)-y_l(u)) = 0, \cr
\left[B_0 a_{\mathcal{C}_0}  V_0\right]_{k,l} \left[y_k'(u)-y_l'(u)\right] + \left[B_0 a_{\mathcal{C}_0}  V_1 + B_1 a_{\mathcal{C}_0} V_0 \right]_{k,l} (y_k(u)-y_l(u)) = 0, \cr
\left[B_0 a_{\mathcal{C}_0}  V_0\right]_{k,l} \frac{y_k''(u)-y_l''(u)}{2} + \left[B_0 a_{\mathcal{C}_0}  V_1 + B_1 a_{\mathcal{C}_0}  V_0 \right]_{k,l} (y_k'(u)-y_l'(u)) + \cr
\quad + 
 \left[B_0 a_{\mathcal{C}_0}  V_2 + B_1 a_{\mathcal{C}_0}  V_1 + B_2 a_{\mathcal{C}_0} V_0+  B_0 b_{\mathcal{C}_0}  V_0 \right]_{k,l} (y_k(u)-y_l(u)) = \left[B_0 R_{\mathcal{C}_0}  V_0\right]_{k,l} \delta_{s,1}, \cr
\end{array}
\right.
\eeq
where
\beq \label{DefRi0}
R_{\mathcal{C}_0}  \coloneqq \lim_{\lambda \to u} (\lambda - u)^{\frac{1}{2}}  \left[
\underset{0\leq m\leq h-1}{\sum} \left[- \widetilde{A}_{\mathcal{C}_0} ^{(h)}(\lambda),L^{(h-m)}(\lambda)\right] + \frac{\partial \widetilde{A}_{\mathcal{C}_0} ^{(h-1)}(\lambda)}{\partial \lambda}  - \left(\hbar \, \text{ev}. \mathcal{I}_{\mathcal{C}_0}  \widetilde{L}^{\mathrm{symbol}}(\lambda) \right)^{(h)}
\right] .
\eeq
Remark that the RHS of the last line of \eqref{compatibility-rec-1} is vanishing except if we have a simple pole (i.e. if $s=1$) since we have proved that the induction hypothesis implies that $L^{(h)}(\lambda)=O(1)$ as $\lambda \to u$.

\medskip

Let us now write the second line of \eqref{compatibility-rec} as
\bea
 \frac{\partial \widetilde{A}_{\mathcal{C}_0}^{(h)}(\lambda)}{\partial \lambda}  &=& \left(\hbar \, \text{ev} . \mathcal{I}_{\mathcal{C}_0} \widetilde{L}^{\mathrm{symbol}}(\lambda) \right)^{(h+1)} + \underset{0\leq m\leq h-1}{\sum} \left[\widetilde{A}_{\mathcal{C}_0}^{(m)}(\lambda),L^{(h-m+1)}(\lambda)\right]  \cr
&&+ [\widetilde{A}_{\mathcal{C}_0}^{(h)}(\lambda),L^{(1)}(\lambda)] + [\widetilde{A}_{\mathcal{C}_0}^{(h+1)}(\lambda),L^{(0)}(\lambda)] .
\eea

After conjugation with the Vandermonde matrix, $L^{(0)}(\lambda)$ is diagonalized and the commutator involving $\widetilde{A}_{\mathcal{C}_0}^{(h+1)}$ has a vanishing diagonal. The LHS behaves as $(\lambda-u)^{-s-\frac{3}{2}}$, while the RHS behaves at most as $(\lambda-u)^{-s-\frac{1}{2}}$ on the diagonal. Thus, on the diagonal, the expansion around $\lambda =u$ reads, for any $k\in \llbracket 1, d \rrbracket$,
\beq\label{compatibility-rec-2}
\left\{
\begin{array}{l}
\left[B_0 a_{\mathcal{C}_0} V_0\right]_{k,k}  = 0, \cr
\left[B_0 a_{\mathcal{C}_0} V_1 + B_1 a_{\mathcal{C}_0} V_0 \right]_{k,k} = 0, \cr
-s \left[B_0 a_{\mathcal{C}_0} V_2 + B_2 a_{\mathcal{C}_0} V_0 + B_1 a_{\mathcal{C}_0} V_1 \right]_{k,k} -(s-1) \left[B_0 b_{\mathcal{C}_0} V_0 \right]_{k,k} = \left[B_0\, K_{\mathcal{C}_0}\, V_0\right]_{k,k}, \cr
\end{array}
\right.
\eeq
where 
\beq\label{def-Ji}
K_{\mathcal{C}_0}= \lim_{\lambda \to u} (\lambda - u)^{s}  \left( \left(\hbar \, \text{ev} .\mathcal{I}_{\mathcal{C}_0} \widetilde{L}^{\mathrm{symbol}}(\lambda) \right)^{(h+1)} + [\widetilde{A}_{\mathcal{C}_0}^{(h)}(\lambda),L^{(1)}(\lambda)] 
\right) .
\eeq

We shall come back to a more precise expression for this quantity later in the proof.

\subsubsection{Computation of $a_{\mathcal{C}_0}$ in terms of its $(1,d)$ entry}
Let $\mathcal{C}_0 = \mathcal{C}_{p,k}$ with $(p,k) \in \mathcal{T}$. The first line of \eqref{compatibility-rec-1}, together with the first line of \eqref{compatibility-rec-2}, implies that $\left[B_0 a_{\mathcal{C}_0} V_0\right]_{k,l}$ is vanishing for any $(k,l) \notin \{(1,2),(2,1)\}$. The evaluation of the second line of \eqref{compatibility-rec-1} for $(k,l) \in \{(1,2),(2,1)\}$ implies that $\left[B_0 a_{\mathcal{C}_0} V_0\right]_{1,2} =\left[B_0 a_{\mathcal{C}_0} V_0\right]_{2,1} = 0$. One thus has
\beq\label{ai0Eq1}
B_0 a_{\mathcal{C}_0}V_0 = 0.
\eeq

Using the second equation of \eqref{properties-B0-V0}, the same considerations applied to the second and third line of  \eqref{compatibility-rec-1} and the second line of  \eqref{compatibility-rec-2} lead to
\beq\label{ai0Eq2}
B_0 a_{\mathcal{C}_0} V_1 + B_1 a_{\mathcal{C}_0} V_0 = 0.
\eeq

Equations \eqref{ai0Eq1} and \eqref{ai0Eq2} imply the following lemma.

\begin{lemma}\label{Lemmaai0} For any $\mathcal{C}_0 = \mathcal{C}_{p,k}$ with $(p,k) \in \mathcal{T}$, the matrix $a_{\mathcal{C}_0}$ satisfies
\beq B_0a_{\mathcal{C}_0}=0 \,\, \text{ and  }\,\; a_{\mathcal{C}_0}V_0=0 .\eeq
\end{lemma}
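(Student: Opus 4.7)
The plan is to exploit the two identities $V^{-1}(\lambda)V(\lambda)=I_d$ and $V(\lambda)V^{-1}(\lambda)=I_d$, expanded as formal power series in $(\lambda-u)^{1/2}$, in order to promote the two already-obtained relations \eqref{ai0Eq1} and \eqref{ai0Eq2} into the stronger statements $B_0 a_{\mathcal{C}_0}=0$ and $a_{\mathcal{C}_0}V_0=0$. Matching the lowest two orders of these two products yields
\beqq
B_0V_0=0,\qquad V_0B_0=0,\qquad B_0V_1+B_1V_0=I_d,\qquad V_0B_1+V_1B_0=I_d,
\eeqq
which I would rearrange as $V_1B_0=I_d-V_0B_1$ and $V_0B_1=I_d-V_1B_0$. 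These are the only ingredients about the coefficients $B_i,V_i$ that should be needed; in particular I do not expect to have to re-use the explicit block description \eqref{properties-B0-V0} of $B_0$ and $V_0$ beyond what is already captured in $B_0V_0=V_0B_0=0$ (which is the algebraic encoding of the coincidence $y_1(u)=y_2(u)$ at the simple ramification point).

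The first key step is to right-multiply \eqref{ai0Eq2} by $B_0$. The second term $B_1 a_{\mathcal{C}_0}V_0B_0$ vanishes immediately thanks to $V_0B_0=0$, leaving $B_0 a_{\mathcal{C}_0}V_1B_0=0$. Substituting $V_1B_0=I_d-V_0B_1$ then gives $B_0 a_{\mathcal{C}_0}=B_0 a_{\mathcal{C}_0}V_0B_1$, and the right-hand side vanishes by \eqref{ai0Eq1}, yielding $B_0 a_{\mathcal{C}_0}=0$. The mirror argument --- left-multiply \eqref{ai0Eq2} by $V_0$, use $V_0B_0=0$ to kill the first term, then replace $V_0B_1$ by $I_d-V_1B_0$ and invoke \eqref{ai0Eq1} one more time --- produces $a_{\mathcal{C}_0}V_0=0$.

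There is no genuine obstacle in this step: the conclusion reduces to two purely formal manipulations. The one conceptual point worth emphasizing is that the rank-$1$ structure of $B_0$ and the rank-$(d-1)$ structure of $V_0$, both inherited from the simple ramification at $u$, are fully captured by the orthogonality relations $B_0V_0=V_0B_0=0$ extracted from the two orderings of $VV^{-1}=I_d$. This is what allows the two terms of \eqref{ai0Eq2} to be neutralized one at a time by sandwiching it between $B_0$ on one side and $V_0$ on the other, without ever having to look again at the internal structure of these matrices.
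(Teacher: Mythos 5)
Your proof is correct: the two sandwich manipulations go through exactly as you describe, and the only inputs are \eqref{ai0Eq1}, \eqref{ai0Eq2}, and the order-$(\lambda-u)^{-1/2}$ and order-$(\lambda-u)^{0}$ coefficients of \emph{both} orderings $V^{-1}(\lambda)V(\lambda)=I_d$ and $V(\lambda)V^{-1}(\lambda)=I_d$ (the latter being legitimate since $V(\lambda)$ is invertible on a punctured neighbourhood of $u$, even though the paper only records the relations coming from the first ordering). This is, however, a genuinely different route from the paper's. The paper observes that $C\coloneqq B_0a_{\mathcal{C}_0}$ satisfies the homogeneous version of the non-overdetermined linear system \eqref{B0} that uniquely determines $B_0$, and that $\td C\coloneqq a_{\mathcal{C}_0}V_0$ satisfies $B_0\td C=B_1\td C=0$; it then concludes $C=\td C=0$ by inverting the explicit matrix $M$, whose non-vanishing determinant \eqref{detM} encodes the admissibility conditions $y_1'(u)\neq y_2'(u)$ and the distinctness of the $y_j(u)$. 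Your argument bypasses all of that explicit structure: the genericity is already consumed in establishing the form of the expansions of $V$ and $V^{-1}$, and afterwards only the formal orthogonality $B_0V_0=V_0B_0=0$ together with $B_0V_1+B_1V_0=V_0B_1+V_1B_0=I_d$ is needed. What you lose is negligible here, since the $M$, $F_0$, $F_1$ machinery is set up anyway for the subsequent Proposition~\ref{PropositionMatrixai0} and the final elimination of $(a_{\mathcal{C}_0})_{1,d}$, where a purely formal argument no longer suffices; what you gain is a shorter, self-contained derivation of the Lemma that does not re-invoke the invertibility of $M$.
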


\begin{proof}
Let us recall the non-overdetermined linear system determining $B_0$ in terms of $V_0$ and $V_1$. It reads
\beq \label{B0}
\left\{
    \begin{array}{ll}
      (B_0V_0)_{1,l}&=0\, \,,\, \forall \, l\geq 1 ,\\
			 (B_0V_0)_{k,l}&=0\,\,,\,\, \forall \, k\geq 3 \text{ and } l\geq 1 ,\\
			(B_0V_1)_{k,1}-(B_0V_1)_{k,2}&=\delta_{k,1}-\delta_{k,2} \,\,,\,\, \forall \, k\geq 1 .
    \end{array}
\right.
\eeq

Equations \eqref{ai0Eq1} and \eqref{ai0Eq2} show that the matrix $C \coloneqq B_0a_{\mathcal{C}_0}$ satisfies the system $CV_0=0$ and $(CV_1)_{k,1}-(CV_1)_{k,2}=0$, for all $k\geq 1$. This corresponds to the same linear system as the one determining $B_0$, except that the RHS is null. Hence, we get that $C=0 (M^{-1})^t=0$. Therefore, we get $C=0$, i.e. $B_0a_{\mathcal{C}_0}=0$. \\
In the same way, let us define $\td{C} \coloneqq a_{\mathcal{C}_0}V_0$. It satisfies from \eqref{ai0Eq1} and \eqref{ai0Eq2}
\beq B_0\td{C}=0 \text{ and } B_1 \td{C}=0 .\eeq
Let us introduce $\hat{C}=\td{C}^t(M^{-1})$. Since $B_0=F_0^{\,t}(M^{-1})^t$ and $B_1=F_1^{\,t}(M^{-1})^t$, we get that the previous equations are equivalent to
\beq \hat{C} F_0=0\text{ and }\hat{C} F_1=0 .\eeq
Since 
\beq F_0=\left(\begin{array}{rrrrr}0&0 &0&\dots& 0\\
\vdots & \vdots & \vdots & &\vdots\\
0&0 &0&\dots &0\\
1&-1&0&\dots&0
\end{array}\right) \text{ ,  }\;\;
 F_1=\left(\begin{array}{cccccc}
(F_1)_{1,1} &1-(F_1)_{1,1}&0&\dots&\dots& 0\\ 
(F_1)_{2,1}&-(F_1)_{2,1}&1&0&\dots&0\\
\vdots&\vdots&0&\ddots&&\vdots\\
&&&&&0\\
\vdots&\vdots&&\ddots&\ddots&1\\
(F_1)_{d,1}&-(F_1)_{d,1}&0&\dots&\dots&0\\
\end{array}\right) ,
\eeq
we get that $\hat{C}F_1=0$ implies that, for all $(i,j)\in\llbracket 1,d\rrbracket\times \llbracket 1,d-1\rrbracket$, $\hat{C}_{i,j}=0$. Indeed,
\bea \forall \,i\in\llbracket 1,d\rrbracket\,,\, j\in\llbracket 3,d\rrbracket\,:\, 0&=&(\hat{C}F_1)_{i,j}=\sum_{k=1}^d\hat{C}_{i,k}(F_1)_{k,j}=\hat{C}_{i,j-1},\cr
\forall \,i \in \llbracket 1,d\rrbracket\,:\,0&=&(\hat{C}F_1)_{i,1}+(\hat{C}F_1)_{i,2}=\sum_{k=1}^d\hat{C}_{i,k}\left( (F_1)_{k,1}+(F_1)_{k,2}\right)=\hat{C}_{i,1} .\cr
&&
\eea

Eventually, $\hat{C}F_0=0$ leads to 
\beq \forall\, i\in\llbracket 1,d\rrbracket\,:\, 0=(\hat{C}F_0)_{i,1}=\sum_{k=1}^d \hat{C}_{i,k}(F_0)_{k,1}=\hat{C}_{i,d},\eeq
so that $\hat{C}=0$, i.e.~$a_{\mathcal{C}_0} V_0=0$.
\end{proof}

\bigskip

Lemma~\ref{Lemmaai0} gives us the first two equations for $a_{\mathcal{C}_0}$. The coefficient of $(\lambda -u)^{-s+1}$ of the first line of \eqref{compatibility-rec} (which would be the fourth line of \eqref{compatibility-rec-1}, but we did not write it down), together with the last equation of \eqref{properties-B0-V0}, gives the third equation for $a_{\mathcal{C}_0}$. 
\beq
\left\{
    \begin{array}{ll}
      B_0 a_{\mathcal{C}_0}&=0 ,\\
			a_{\mathcal{C}_0}V_0&=0 ,\\
			(B_1a_{\mathcal{C}_0}V_1)_{i,1}-(B_1a_{\mathcal{C}_0}V_1)_{i,2}&=0 ,\,\, \forall\, i\geq 3 .
    \end{array}
\right.
\eeq

We define $\td{a}_{\mathcal{C}_0} \coloneqq (a_{\mathcal{C}_0})^{\, t} M^{-1}$ (i.e.~$a_{\mathcal{C}_0}=M^t (\td{a}_{\mathcal{C}_0})^t$). The last set of equations is equivalent to
\beq
\left\{
    \begin{array}{ll}
      \td{a}_{\mathcal{C}_0} \,F_0&=0 ,\\
			(\td{a}_{\mathcal{C}_0})^t V_0&=0 ,\\
			\big((V_1)^t\,\td{a}_{\mathcal{C}_0}\,F_1\big)_{1,i}-\big((V_1)^t\,\td{a}_{\mathcal{C}_0}\,F_1\big)_{2,i}&=0\,\,\,\,,\,\,  \forall\, i\geq 3 .
    \end{array}
\right.
\eeq

Since $(F_0)_{i,j}=\delta_{(i,j)=(d,1)}- \delta_{(i,j)=(d,2)}$, equation $\td{a}_{\mathcal{C}_0}\,F_0=0$ is equivalent to (only entries $(\td{a}_{\mathcal{C}_0}\,F_0)_{i,1}$, with $i\geq 1$, are sufficient)
\beq \big(\td{a}_{\mathcal{C}_0}\big)_{i,d}=0,\; \forall \,i\in \llbracket 1,d\rrbracket, \eeq
i.e.~the last column of $\td{a}_{\mathcal{C}_0}$ is vanishing. For $j\in\llbracket 1,d\rrbracket$, let us define the vector $\td{\mathbf{a}}_{{\mathcal{C}_0},j}\coloneqq \big((\td{a}_{\mathcal{C}_0})_{1,j},\dots,(\td{a}_{\mathcal{C}_0})_{d,j}\big)^t$ corresponding to the $j^{\text{th}}$ column of $\td{a}_{\mathcal{C}_0}$. For $j, k\geq 2$, equations $\big((\td{a}_{\mathcal{C}_0})^t V_0\big)_{j,k}=0$ and  $\big((V_1)^t\,\td{a}_{\mathcal{C}_0}\,F_1\big)_{1,j+1}-\big((V_1)^t\,\td{a}_{\mathcal{C}_0}\,F_1\big)_{2,j+1}=0$ are equivalent to
\bea \label{SysAj}
\left\{
    \begin{array}{ll}
     0& = \underset{r=1}{\overset{d}{\sum}} y_k(u)^{r-1}\big(\mathbf{\td{a}}_{\mathcal{C}_0,j}\big)_r  \,\,,\,\, \forall \,j, k\in\llbracket 2,d\rrbracket, \\
		 0&=(y'_1(u)-y'_2(u))\underset{r=2}{\overset{d}{\sum}}(r-1)y_2^{r-2}(u)\big(\mathbf{\td{a}}_{\mathcal{C}_0,j}\big)_r \,\,,\,\, \forall \, j\in\llbracket 2,d\rrbracket.
    \end{array}
\right.
\eea
These equations are equivalent to saying that, for all $j\geq 2$, $M\td{\mathbf{a}}_{\mathcal{C}_0,j}=0$, i.e.~$\td{\mathbf{a}}_{\mathcal{C}_0,j}=0$, since $M$ is invertible. Note that in order to obtain the last equation of \eqref{SysAj}, we have used $(F_1)_{s,j+1}=\delta_{s,j}$ for $j\geq 2$ and $s\leq d-1$, and the fact that $(\td{a}_{\mathcal{C}_0})_{r,d}=0$, for all $r\geq 1$, so that $(F_1)_{d,j+1}$ does not contribute.

In terms of the initial matrix $a_{\mathcal{C}_0}=M^t (\td{a}_{\mathcal{C}_0})^t$, this is equivalent to saying that, for all $(k,l)\in\llbracket 1,d\rrbracket^2$,
\beq \left(a_{\mathcal{C}_0}\right)_{k,l}=\sum_{n=1}^d M_{n,k}(\td{a}_{\mathcal{C}_0})_{l,n} =M_{1,k}(\td{a}_{\mathcal{C}_0})_{l,1}=y_2(u)^{k-1}(\td{a}_{\mathcal{C}_0})_{l,1} .
\eeq
In other words, it means that the $l^{\text{th}}$ column of $a_{\mathcal{C}_0}$ is $(\td{a}_{\mathcal{C}_0})_{l,1}\left(1,y_2(u),\dots,y_2(u)^{d-1}\right)^t$.

So far, only the first column of $\td{a}_{\mathcal{C}_0}$ remains. Using $\big((\td{a}_{\mathcal{C}_0})^t V_0\big)_{1,k}=0$ for all $k\geq 2$, we get that 
\beq \label{eqA1} \begin{pmatrix}1&y_2(u)&\dots &y_2(u)^{d-1}\\
\vdots&\vdots&&\vdots\\
1&y_d(u)&\dots&y_d(u)^{d-1}\end{pmatrix}\begin{pmatrix}(\td{a}_{\mathcal{C}_0})_{1,1}\\ \vdots\\ (\td{a}_{\mathcal{C}_0})_{d,1}\end{pmatrix}=0 .
\eeq
We may eventually use \eqref{eqA1} to get the following proposition.

\begin{proposition}[Expression of $a_{\mathcal{C}_0}$ in terms of its $(1,d)$ entry]\label{PropositionMatrixai0} For any $\mathcal{C}_0 \coloneqq \mathcal{C}_{p,k}$ with $(p,k) \in \mathcal{T}$, the matrix $a_{\mathcal{C}_0}$ is equal to
\beqq \left(a_{\mathcal{C}_0}\right)_{1,d}\! \begin{pmatrix}
(-1)^{d-1}E_{d-1}(y_2(u),\dots,y_d(u))&\!\!\!\!\!\!\!\! \cdots\!\!\!\!\!\!\!\! & -E_1(y_2(u),\dots,y_d(u))&\!\!\!\!   1\\
y_2(u)(-1)^{d-1}E_{d-1}(y_2(u),\dots,y_d(u))&\!\!\!\!\!\!\!\! \cdots\!\!\!\!\!\!\!\!  &-y_2(u) E_1(y_2(u),\dots,y_d(u))&\!\!\!\!   y_2(u)\\
\vdots&\!\!  \!\! &\vdots&\!\!\!\! \vdots\\
y_2(u)^{d-1}(-1)^{d-1}E_{d-1}(y_2(u),\dots,y_d(u))& \!\!\!\!\!\!\!\! \cdots\!\!\!\!\!\!\!\!  & -y_2^{d-1}(u) E_1(y_2(u),\dots,y_d(u))& \!\! y_2(u)^{d-1}\! 
\end{pmatrix},
\eeqq
where $\forall\, j\in \llbracket 1,d\rrbracket\,: \,E_j(\lambda_2,\dots,\lambda_d)=\underset{2\leq i_1<\dots<i_j\leq d}{\sum} \lambda_{i_1}\dots \lambda_{i_j}$. (In particular, $E_1(\lambda_2,\dots,\lambda_d)=\lambda_2+\dots+\lambda_d$ and $E_{d-1}(\lambda_2,\dots,\lambda_d)=\lambda_2\dots \lambda_d$).
\end{proposition}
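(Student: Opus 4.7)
The proof will consist of identifying the remaining first column of $\tilde{a}_{\mathcal{C}_0}$ via a polynomial interpolation argument, and then unwinding the change of variables $\tilde{a}_{\mathcal{C}_0}=(a_{\mathcal{C}_0})^tM^{-1}$ to recover $a_{\mathcal{C}_0}$.

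The starting point is Equation \eqref{eqA1} together with the structural facts already established just before the proposition: all columns of $\tilde{a}_{\mathcal{C}_0}$ except the first vanish, and the $l$-th column of $a_{\mathcal{C}_0}$ equals $(\tilde{a}_{\mathcal{C}_0})_{l,1}(1,y_2(u),\ldots,y_2(u)^{d-1})^t$. The first step is to reinterpret \eqref{eqA1} as a polynomial vanishing condition: introducing the polynomial
\[
P(X)\coloneqq\sum_{l=1}^d (\tilde{a}_{\mathcal{C}_0})_{l,1}\,X^{l-1}\in\mathbb{C}[X]
\]
of degree at most $d-1$, Equation \eqref{eqA1} reads $P(y_k(u))=0$ for every $k\in\llbracket 2,d\rrbracket$. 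By Remark~\ref{RemarkRegularBranchpoints}, the $d-1$ values $y_2(u),\ldots,y_d(u)$ are pairwise distinct, so these are exactly $d-1$ distinct roots of a polynomial of degree at most $d-1$. Hence there exists a unique scalar $C\in\mathbb{C}$ with
\[
P(X)=C\prod_{k=2}^{d}(X-y_k(u))=C\sum_{j=0}^{d-1}(-1)^{d-1-j}E_{d-1-j}(y_2(u),\ldots,y_d(u))\,X^{j}.
\]

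The second step is to match coefficients. Reading off the coefficient of $X^{l-1}$ gives
\[
(\tilde{a}_{\mathcal{C}_0})_{l,1}=C\,(-1)^{d-l}E_{d-l}(y_2(u),\ldots,y_d(u)),\qquad l\in\llbracket 1,d\rrbracket,
\]
and specializing to $l=d$ yields $C=(\tilde{a}_{\mathcal{C}_0})_{d,1}$, using $E_0=1$. The final step is to identify this constant with $(a_{\mathcal{C}_0})_{1,d}$: indeed, since the $l$-th column of $a_{\mathcal{C}_0}$ is the scalar $(\tilde{a}_{\mathcal{C}_0})_{l,1}$ times the vector $(1,y_2(u),\ldots,y_2(u)^{d-1})^t$, its first entry is $(a_{\mathcal{C}_0})_{1,l}=(\tilde{a}_{\mathcal{C}_0})_{l,1}$; in particular $C=(a_{\mathcal{C}_0})_{1,d}$. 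Substituting back into the column formula gives precisely the matrix displayed in the statement.

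There is no real obstacle here: the only non-trivial input is the distinctness of $y_2(u),\ldots,y_d(u)$, which is guaranteed by admissibility (Definition~\ref{DefAdmissibleSpectralCurve}) via Remark~\ref{RemarkRegularBranchpoints}, and a degree-count on $P$. The rest is bookkeeping of the change of basis performed through the matrix $M$ of \eqref{detM}. This shows that once the kernel equations of Lemma~\ref{Lemmaai0} and \eqref{eqA1} are in place, the matrix $a_{\mathcal{C}_0}$ is rigidly determined up to the single scalar $(a_{\mathcal{C}_0})_{1,d}$, which is the content we need to then derive a contradiction in the next step of the induction.
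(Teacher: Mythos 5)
Your proof is correct and follows exactly the route the paper intends: the paper states the proposition as an immediate consequence of \eqref{eqA1} and the preceding column structure of $a_{\mathcal{C}_0}$, and your polynomial-factorization argument (a degree-$(d-1)$ polynomial vanishing at the $d-1$ distinct values $y_2(u),\dots,y_d(u)$ must be a scalar multiple of $\prod_{k=2}^d(X-y_k(u))$, with the scalar pinned down as $(\tilde{a}_{\mathcal{C}_0})_{d,1}=(a_{\mathcal{C}_0})_{1,d}$) is precisely the omitted bookkeeping. No gaps.
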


Note that Proposition \ref{PropositionMatrixai0} implies that, for any $\mathcal{C}_0 \coloneqq \mathcal{C}_{p,k}$ with $(p,k) \in \mathcal{T}$,
\beq \left[a_{\mathcal{C}_0},L^{(0)}(\lambda)\right]=0\eeq
and that we only need one remaining independent equation to get the complete expression of $a_{\mathcal{C}_0}$. Moreover, we have, for all $(k,l)\in \llbracket 1,d\rrbracket^2$,
\begin{align}\label{eq-B1AV1} 
[B_1a_{\mathcal{C}_0}V_1]_{k,l}&=((F_1)^t(M^{-1})^ta_{\mathcal{C}_0}V_1)_{k,l}=((F_1)^t\,(\td{a}_{\mathcal{C}_0})^t\,V_1)_{k,l}\nonumber\\
&=\sum_{1\leq m,n\leq d}(F_1)_{m,k}(\td{a}_{\mathcal{C}_0})_{n,m}(V_1)_{n,l}\nonumber\\
&=\sum_{n=1}^d(F_1)_{1,k}(\td{a}_{\mathcal{C}_0})_{n,1}(V_1)_{n,l}\nonumber\\
&=\left(-\frac{y'_2(u)}{y'_1(u)-y'_2(u)}\delta_{k,1}+\frac{y'_1(u)}{y'_1(u)-y'_2(u)}\delta_{k,2}\right)\left(\sum_{n=1}^d (\td{a}_{\mathcal{C}_0})_{n,1}(V_1)_{n,l}\right)\nonumber\\
&=\left(-\frac{y'_2(u)}{y'_1(u)-y'_2(u)}\delta_{k,1}+\frac{y'_1(u)}{y'_1(u)-y'_2(u)}\delta_{k,2}\right)\left(\sum_{n=2}^d(n-1)y'_l(u) \, y_l(u)^{n-2}(\td{a}_{\mathcal{C}_0})_{n,1}\right).
\end{align}

\subsubsection{Proving that $\left(a_{\mathcal{C}_0}\right)_{1,d} = 0$.}

Let us finally prove that the last remaining unknown coefficient of $a_{\mathcal{C}_0}$ is vanishing for any $\mathcal{C}_0\in \mathcal{T}$. For this purpose, we shall consider two different cases depending on the degree $s$ of the pole at $\lambda \to u$ of $\widetilde{A}_{\mathcal{C}_0}^{(h)}(\lambda)$.

\bigskip

\noindent{\bf{Higher order pole: $s \geq 2$}}

\medskip

If $s \geq 2$, the RHS of the last line of \eqref{compatibility-rec-1}, as well as the RHS of the following order in the expansion around $\lambda = u$, vanish. This allows to conclude that,
\beq\label{condition-sum}
\forall\, k \in \llbracket 2,d \rrbracket, \, \forall \, l\in \llbracket 1,d\rrbracket \, : \; \left[B_1 a_{\mathcal{C}_0} V_1 + B_0 b_{\mathcal{C}_0}V_0 \right]_{k,l} = 0.
\eeq
Using the last line of \eqref{properties-B0-V0} and subtracting cases $l=1$ and $l=2$ in the previous equation we get that
\beq \forall\, k \in \llbracket 2,d \rrbracket : \; \left[B_1 a_{\mathcal{C}_0} V_1\right]_{k,1}-\left[B_1 a_{\mathcal{C}_0} V_1\right]_{k,2}=0.\eeq
In particular for $k=2$, we end up with
\beq \left[B_1 a_{\mathcal{C}_0} V_1\right]_{2,1}-\left[B_1 a_{\mathcal{C}_0} V_1\right]_{2,2}=0,\eeq
which, using \eqref{eq-B1AV1}, and the fact that $y_1(u)=y_2(u)$ implies
\beq y_1'(u)\left(\sum_{n=2}^d(n-1)\,y_2(u)^{n-2}(\td{a}_{\mathcal{C}_0})_{n,1}\right)=0.\eeq
Together with \eqref{eqA1}, we get the linear system
\beq  \begin{pmatrix}1&y_2(u)&y_2(u)^2 &\dots&y_2(u)^{d-1}\\
\vdots&\vdots&\vdots&&\vdots\\
1&y_d(u)&y_d(u)^2&\dots&y_d(u)^{d-1}\\
0& y'_1(u)& 2y'_1(u) y_2(u)&\dots & (d-1)y'_1(u)  y_2(u)^{d-2}
\end{pmatrix}\begin{pmatrix}(\td{a}_{\mathcal{C}_0})_{1,1}\\ \vdots\\\vdots\\ (\td{a}_{\mathcal{C}_0})_{d,1}\end{pmatrix}=0,
\eeq
whose determinant is similar to the one of $M$ and reads
\beq
(-1)^{d-1} y_1'(u) \left(\underset{j=3}{\overset{d}{\prod}}(y_2(u)-y_j(u))^2\right)\prod_{3\leq i<j\leq d}(y_i(u)-y_j(u)) .
\eeq
From Remark \ref{RemarkRegularBranchpoints}, this determinant is not vanishing so that the linear system has a unique trivial solution
\beq
 \forall \, k\in \llbracket 1,d\rrbracket\,:\, (\td{a}_{\mathcal{C}_0})_{k,1} =0.
 \eeq
Thus, from Proposition~\ref{PropositionMatrixai0}, we end up with $\td{a}_{\mathcal{C}_0}=0$, i.e.~$a_{\mathcal{C}_0}=0$ for all $\mathcal{C}_0 = \mathcal{C}_{p,k}$ with $(p,k) \in \mathcal{T}$, which is a contradiction with \eqref{poledef}.

\bigskip

\noindent{\bf{Simple pole $s=1$}}

\medskip

The derivation of the same result of $s = 1$ requires to be more precise since the RHS of the last line of \eqref{compatibility-rec-1} does not vanish.

\medskip

The last line of \eqref{compatibility-rec-2} reads
\beq \label{Eqai01d}
 - \left[B_1 a_{\mathcal{C}_0}V_1 \right]_{k,k} 
 = \left[B_0 \,K_{\mathcal{C}_0}\, V_0\right]_{k,k},\; \forall\, k\in \llbracket 1,d\rrbracket.
\eeq
Let us now prove that the RHS is vanishing using our knowledge on $K_{\mathcal{C}_0}$. We remind the reader of its definition from \eqref{def-Ji}:
\beq K_{\mathcal{C}_0}= \lim_{\lambda \to u} (\lambda - u) \left[ \left(\hbar \, \text{ev} .\mathcal{I}_{\mathcal{C}_0} \widetilde{L}^{\mathrm{symbol}}(\lambda) \right)^{(h+1)} + [\widetilde{A}_{\mathcal{C}_0}^{(h)}(\lambda),L^{(1)}(\lambda)] 
\right].\eeq

Since, for any $\mathcal{C}_0$, $B_0 a_{\mathcal{C}_0} = a_{\mathcal{C}_0} V_0 = 0$, the commutator $[\widetilde{A}_{\mathcal{C}_0}^{(h)},L^{(1)}]$ satisfies 
 \beq
V^{-1}(\lambda) \left[\widetilde{A}_{\mathcal{C}_0}^{(h)},L^{(1)}\right] V(\lambda) = \frac{B_0 a_{\mathcal{C}_0} R_1 V_0 - B_0 R_1 a_{\mathcal{C}_0} V_0}{(\lambda - u)^{\frac{3}{2}}} + O\left((\lambda-u)^{-1}\right) =  O\left((\lambda-u)^{-1}\right),
\eeq
where we have denoted $L^{(1)}(\lambda)\overset{\lambda\to u}{=}R_1+ o(1)$. 

Thus, for any $\mathcal{C}_0 \coloneqq \mathcal{C}_{p,k}$ with $(p.k) \in \mathcal{T}$, $\left[\widetilde{A}_{\mathcal{C}_0}^{(h)},L^{(1)}\right]$  does not provide any contribution to $K_{\mathcal{C}_0}$. 

Following the discussion of Section~\ref{subsection-action-trans} , the only contribution from $[\hbar \text{ev} .\mathcal{I}_{\mathcal{C}_0} \widetilde{L}^{\mathrm{symbol}}(\lambda)]_{l,m}$ at order $\hbar^{h+1}$ in the trans-series expansion is
\beq
\sum_{P \in \mathcal{P}} \sum_{K\in \mathbb{N}} \xi_P^{-K}(\lambda) \sum_{j=1}^g \, \text{ev} . \mathcal{I}_{\mathcal{C}_0}\left[\phi_j\right] \, \left. \frac{\partial \left[\widetilde{\Delta}_{P,K}^{(h+1)}(\lambda)\right]_{l,m}}{\partial \phi_j} \right|_{\phi_j = \frac{1}{2 \pi i } \oint_{\mathcal{B}_j} \om_{0,1}}\!\!\!\!\!\!\!\!\!\!\!\!\!\!\!\!\!\!.
\eeq

Note that the sum over $K\in \mathbb{N}$ is finite since $\{\widetilde{\Delta}_{P,K}(\lambda)\}_{K\geq 0}$ contains only a finite number of non-zero elements. 

Using  \eqref{EqDelta}, the leading order of $[\hbar \text{ev} . \mathcal{I}_{\mathcal{C}_0} \widetilde{L}^{\mathrm{symbol}}(\lambda)]_{i,m}^{(h+1)}$ in the expansion around $\lambda = u$ reads
\begin{align}&
[\hbar \text{ev} . \mathcal{I}_{\mathcal{C}_0} \widetilde{L}^{\mathrm{symbol}}(\lambda)]_{i,m}^{(h+1)}=\cr
&\sum_{P \in \mathcal{P}} \sum_{K\in \mathbb{N}} \xi_P^{-K}(u) \sum_{j=1}^g \mathcal{I}_{\mathcal{C}_0}^{(0)}\left[\phi_j\right] \,\sum_{r=1}^d \left[G^{-1}(u)\right]_{i,r} \sum_{l=1}^d \frac{[G]_{1,l}}{(\lambda - u)}\left. \frac{\partial \left[a_{P,K,r-1}\right]_{l,m}}{\partial \phi_j} \right|_{\phi_j = \frac{1}{2 \pi i } \oint_{\mathcal{B}_j} \om_{0,1}} + O\left(1\right),\cr
\end{align}
where $a_{P,k,l}$ is defined as the leading order of $\widetilde{A}_{P,K,l}^{(h)}(\lambda)$ when $\lambda \to u$, i.e. 
\beq
\widetilde{A}_{P,K,l}^{(h)}(\lambda) = \frac{a_{P,K,l}}{\lambda-u} + O(1),
\eeq
as $\lambda \to u$. Since we proved the part of the proposition $\mathcal{P}_h$ corresponding to $m\geq 2$, the matrices $a_{P,K,l}$ are linear combinations of matrices $a_{\mathcal{C}}$ with coefficients independent of $\phi_j$,
\beq
\forall (P,K,l) \in \mathcal{P} \times \mathbb{N} \times \llbracket 1, d \rrbracket \, , \; a_{P,K,l} = \sum_{(p_0,k_0) \in \mathcal{T}} \alpha_{P,K,l}^{p_0,k_0} a_{\mathcal{C}_{p_0.k_0}},
\eeq
where $\alpha_{P,K,l}^{p_0,k_0} \in \mathbb{C}$ is independent of $\phi_j$.

Hence, for any $\mathcal{C}_0 \coloneqq \mathcal{C}_{p_0,k_0}$ with $(p_0,k_0) \in \mathcal{T}$ and $k\in \llbracket 1,d\rrbracket$, the RHS of equation \eqref{Eqai01d} reduces to
\begin{align}\label{B0KV0}
\left[B_0 \,K_{\mathcal{C}_0} \,V_0\right]_{k,k}
& = \sum_{P \in \mathcal{P}} \sum_{K\in \mathbb{N}} \xi_P^{-K}(u) \sum_{j=1}^g \mathcal{I}_{\mathcal{C}_0}^{(0)}\left[\phi_j\right] \,\sum_{r=1}^d \left[B_0 G^{-1}(u)\right]_{k,r}\nonumber \\
& \sum_{l=1}^d \frac{[G]_{1,l}}{(\lambda - u)}\left. \frac{\partial \left[a_{P,K,r-1}\right]_{l,m} \left[V_0\right]_{m,k}}{\partial \phi_j} \right|_{\phi_j = \frac{1}{2 \pi i } \oint_{\mathcal{B}_j} \om_{0,1}}
\nonumber \\
& = \sum_{P \in \mathcal{P}} \sum_{K\in \mathbb{N}} \xi_P^{-K}(u) \sum_{j=1}^g \mathcal{I}_{\mathcal{C}_0}^{(0)}\left[\phi_j\right] \,\sum_{r=1}^d \left[B_0 G^{-1}(u)\right]_{k,r}\nonumber \\
& \sum_{l=1}^d \frac{[G]_{1,l}}{(\lambda - u)}\left. \frac{\partial \left( \left[a_{P,K,r-1}\right]_{l,m} y_k(u)^{m-1}\right)}{\partial \phi_j} \right|_{\phi_j = \frac{1}{2 \pi i } \oint_{\mathcal{B}_j} \om_{0,1}}
\!\!\!\!\!\!  .
\end{align}

Since $a_{P,K,l}$ is a linear combination of matrices $a_{\mathcal{C}_{p_0,k_0}}$ for $(P,K,l) \in \mathcal{P} \times \mathbb{N} \times \llbracket 1, d \rrbracket$ the expression of 
 of $\left[a_{P,K,l}\right]_{1\leq l,m\leq d}$ in terms of $\left[a_{P,K,l}\right]_{1,d}$ is the same as the expression of  
 $\left((a_{\mathcal{C}_0})_{l,m}\right)_{1\leq l,m\leq d}$ in terms of $(a_{\mathcal{C}_0})_{1,d}$ given by Proposition~\ref{PropositionMatrixai0} and one has
\beq
\forall\, (P,K,r) \in \mathcal{P}\times\mathbb{N}\times \llbracket 1,d\rrbracket\,, \forall \,(m, n) \in  \llbracket 1,d\rrbracket^2\, :\, \underset{m=1}{\overset{d}{\sum}} (a_{P,K,r-1})_{l,m}  y_k(u)^{m-1} = 0.
\eeq
Thus, equation \eqref{B0KV0} simplifies into
\beq
\forall \,k\in \llbracket1 , d\rrbracket \, : \;  \left[B_0 \,K_{\mathcal{C}_0}\,V_0\right]_{k,k}  = 0,
 \eeq
implying from \eqref{Eqai01d} the following lemma.

\begin{lemma}\label{condition-s} For all $\mathcal{C}_0 = \mathcal{C}_{p_0,k_0}$ with $(p_0,k_0) \in \mathcal{T}$, the matrix $a_{\mathcal{C}_0}$ satisfies
\beq\label{cond-simple-pole}
\forall\, k\in \llbracket 1,d\rrbracket\,:\, \, \left[B_1 a_{\mathcal{C}_0} V_1 \right]_{k,k} = 0.
\eeq
\end{lemma}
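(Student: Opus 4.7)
My plan is to deduce the lemma from the third identity in \eqref{compatibility-rec-2} specialized to $s=1$, which reads
\beq
-\left[B_1 a_{\mathcal{C}_0} V_1\right]_{k,k} = \left[B_0 K_{\mathcal{C}_0} V_0\right]_{k,k}.
\eeq
So the task reduces to showing that the right-hand side vanishes for every $k\in\llbracket 1,d\rrbracket$. I would split $K_{\mathcal{C}_0}$ along its defining formula \eqref{def-Ji} into the commutator piece $\big[\widetilde{A}_{\mathcal{C}_0}^{(h)},L^{(1)}\big]$ and the operator-action piece $\big(\hbar\,\mathrm{ev}.\mathcal{I}_{\mathcal{C}_0}\widetilde{L}^{\mathrm{symbol}}\big)^{(h+1)}$, and handle them separately.

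For the commutator piece, I would conjugate by the Vandermonde $V(\lambda)$ to diagonalize $L^{(0)}$ and expand around $\lambda\to u$; the most singular contribution of order $(\lambda-u)^{-3/2}$ is proportional to $B_0 a_{\mathcal{C}_0} R_1 V_0 - B_0 R_1 a_{\mathcal{C}_0} V_0$, which vanishes by Lemma~\ref{Lemmaai0}. Hence no $(\lambda-u)^{-1}$ term survives, so this piece does not feed $K_{\mathcal{C}_0}$.

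For the operator-action piece, I would invoke the trans-series decomposition recalled in Section~\ref{subsection-action-trans}: only the $\phi_j$-derivative branch of $\hbar\mathcal{I}_{\mathcal{C}_0}$ can produce an order-$\hbar^{h+1}$ contribution that is simultaneously singular at $\lambda=u$ (the other branch raises the $\hbar$-grading by one and would require a singular term in $\widetilde{L}^{(h)}$, which the induction hypothesis $\mathcal{P}_{h-1}$ forbids). Combining this with \eqref{EqDelta}, which expresses $\widetilde{\Delta}_{P,K}$ in terms of the first rows of $\widehat{A}_{P,K,l}$, one obtains an explicit formula for $[B_0 K_{\mathcal{C}_0} V_0]_{k,k}$ as a linear combination, indexed by $(P,K,r,j)$, of $\phi_j$-derivatives of $\sum_m (a_{P,K,r-1})_{l,m}\,y_k(u)^{m-1}$ convolved with $[B_0 G^{-1}(u)]_{k,r}[G(u)]_{1,l}$.

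The final step uses that the $m\ge 2$ case of $\mathcal{P}_h$ has already been proved, so each $a_{P,K,r-1}$ is a $\phi_j$-independent linear combination of matrices $a_{\mathcal{C}}$ and therefore inherits the column structure of Proposition~\ref{PropositionMatrixai0}: the $m$-th column has coefficients $(-1)^{d-m} E_{d-m}(y_2(u),\dots,y_d(u))$. The algebraic identity
\beq
\sum_{m=1}^{d} (-1)^{d-m} E_{d-m}(y_2(u),\dots,y_d(u))\, y^{m-1} = \prod_{j=2}^{d}(y-y_j(u))
\eeq
evaluated at $y=y_k(u)$ vanishes for every $k\in\llbracket 1,d\rrbracket$: trivially for $k\ge 2$, and for $k=1$ via the ramification identity $y_1(u)=y_2(u)$. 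This forces the inner sum to vanish, whence $[B_0 K_{\mathcal{C}_0} V_0]_{k,k}=0$ and the lemma follows. The main obstacle in the plan is step three, in particular tracking how $\mathcal{I}_{\mathcal{C}_0}$ interchanges with the gauge $G(\lambda)$ and with the trans-series expansion, so that only the $\phi_j$-derivative term survives and the $a_{P,K,r-1}$ matrices appear cleanly with $\phi_j$-independent prefactors.
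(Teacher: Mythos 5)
Your proposal follows essentially the same route as the paper's proof: the same reduction to $\left[B_0 K_{\mathcal{C}_0} V_0\right]_{k,k}=0$ via the third line of \eqref{compatibility-rec-2}, the same splitting of $K_{\mathcal{C}_0}$ into the commutator piece (killed by $B_0 a_{\mathcal{C}_0}=a_{\mathcal{C}_0}V_0=0$ from Lemma~\ref{Lemmaai0}) and the $\mathcal{I}_{\mathcal{C}_0}$-action piece, and the same use of the $m\geq 2$ case of $\mathcal{P}_h$ together with the column structure of Proposition~\ref{PropositionMatrixai0} to make the inner sum $\sum_m (a_{P,K,r-1})_{l,m}\,y_k(u)^{m-1}$ vanish. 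Your explicit elementary-symmetric-polynomial identity is just an unpacked form of the relation $a_{P,K,r-1}V_0=0$ that the paper invokes, so the argument is correct and matches the paper's.
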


Plugging \eqref{eq-B1AV1} into \eqref{cond-simple-pole} and using the fact (see Remark~\ref{RemarkRegularBranchpoints}) that $y'_1(u) \neq y'_2(u)$ and $y'_1(u) y'_2(u) \neq 0$, one gets, for $k=2$,
\beq
\left(\sum_{n=2}^d(n-1)y'_2(u) \, y_2(u)^{n-2}(\td{a}_{\mathcal{C}_0})_{n,1}\right) = 0.
\eeq
As in the previous case, together with \eqref{eqA1}, this leads to the linear system
\beq  \begin{pmatrix}1&y_2(u)&y_2(u)^2 &\dots&y_2(u)^{d-1}\\
\vdots&\vdots&\vdots&&\vdots\\
1&y_d(u)&y_d(u)^2&\dots&y_d(u)^{d-1}\\
0& y'_2(u)& 2y'_2(u) y_2(u)&\dots & (d-1)y'_2(u)  y_2(u)^{d-2}
\end{pmatrix}\begin{pmatrix}(\td{a}_{\mathcal{C}_0})_{1,1}\\ \vdots\\\vdots\\ (\td{a}_{\mathcal{C}_0})_{d,1}\end{pmatrix}=0,
\eeq
whose determinant reads
\beq
(-1)^{d-1} y_2'(u) \left(\underset{j=3}{\overset{d}{\prod}}(y_2(u)-y_j(u))^2\right)\prod_{3\leq i<j\leq d}(y_i(u)-y_j(u)) .
\eeq
From Remark \ref{RemarkRegularBranchpoints}, this determinant is not vanishing so that the linear system has a unique trivial solution
\beq
 \forall \, k\in \llbracket 1,d\rrbracket\,:\, (\td{a}_{\mathcal{C}_0})_{k,1} =0.
 \eeq
Thus, from Proposition~\ref{PropositionMatrixai0}, we end up with $\td{a}_{\mathcal{C}_0}=0$, i.e.~$a_{\mathcal{C}_0}=0$ for any  $\mathcal{C}_0 = \mathcal{C}_{p_0,k_0}$ with $(p_0,k_0) \in \mathcal{T}$, which is a contradiction with \eqref{poledef}.

\section{Detailed computations for the $\text{Gl}_3$ example}
\label{app-example-gl3}

In this appendix, we present the detailed computation of the example presented in Section~\ref{SectionGL3Example}.
 
\subsection{Classical spectral curve}

Let us consider  a three-sheeted cover of the sphere defined by an equation of the form
\beq
y^3 - (P_{\infty,1}^{(1)} \lambda + P_{\infty,0}^{(1)} ) y^2 + ( P_{\infty,2}^{(2)} \lambda^2 + P_{\infty,1}^{(2)} \lambda + P_{\infty,0}^{(2)}) y - P_{\infty,3}^{(3)} \lambda^3 - P_{\infty,2}^{(3)} \lambda^2 - P_{\infty,1}^{(3)} \lambda - P_{\infty,0}^{(3)} = 0.
\eeq
where the coefficients $P_{\infty,i}^{(j)}$ are generic in such a way that the curve has genus 1 and that there are three distinct points $(\infty^{(1)},\infty^{(2)},\infty^{(3)})$ in the fiber $x^{-1}(\infty)$ above infinity. This corresponds to the case $N=0$ and $r_{\infty^{(j)}} = 3$ for $j=1,2,3$. We recall that the general notations of the article correspond to:
\bea P_1(\lambda)&=&P_{\infty,1}^{(1)} \lambda + P_{\infty,0}^{(1)}\,,\cr
P_2(\lambda)&=&P_{\infty,2}^{(2)} \lambda^2 + P_{\infty,1}^{(2)} \lambda + P_{\infty,0}^{(2)}\,,\cr
P_3(\lambda)&=&P_{\infty,3}^{(3)} \lambda^3 + P_{\infty,2}^{(3)} \lambda^2 + P_{\infty,1}^{(3)} \lambda + P_{\infty,0}^{(3)}\,.
\eea

Around the three points $(\infty^{(1)},\infty^{(2)},\infty^{(3)})$, the function $y(z)$ admits the expansion
\beq
\forall\, i \in \{1,2,3\} \, , \; y(z) = - t_{i,2} x(z) -  t_{i,1} -  t_{i,0} x(z)^{-1} + O\left(x(z)^{-2}\right),  \text{ as }  z \to \infty^{(i)},
\eeq
where we used the notation $t_{i,j}\coloneqq t_{\infty^{(i)},j}$ in order to simplify the reading.
These coefficients satisfy
\beq \label{Pinfty1j}
P_{\infty,1}^{(1)} =  -\sum_{i=1}^3  t_{i,2} \quad , \quad P_{\infty,0}^{(1)} =  - \sum_{i=1}^3  t_{i,1} \quad , \quad 0 = \sum_{i=1}^3 t_{i,0},
\eeq
as well as 
\bea\label{Pinfty2j}
P_{\infty,2}^{(2)} &=&  \sum_{1\leq i<j\leq 3} t_{i,2} t_{j,2}=t_{1,2}t_{2,2}+t_{1,2}t_{3,2}+t_{2,2}t_{3,2} \cr
P_{\infty,1}^{(2)} &=&  \sum_{i=1}^3\sum_{j\neq i} t_{i,1} t_{j,2} =t_{1,1}t_{2,2}+t_{1,1}t_{3,2}+t_{2,1}t_{1,2}+t_{2,1}t_{3,2}+t_{3,1}t_{1,2}+t_{3,1}t_{2,2}\cr
P_{\infty,0}^{(2)} &=&   \sum_{i=1}^3 \sum_{j\neq i} t_{i,0} t_{j,2} + \sum_{1\leq i<j\leq 3} t_{i,1} t_{j,1} \cr
&=&t_{1,0}t_{2,2}+t_{1,0}t_{3,2}+t_{2,0}t_{1,2}+t_{2,0}t_{3,2}+t_{3,0}t_{1,2}+t_{3,0}t_{2,2} +t_{1,1}t_{2,1}+t_{1,1}t_{3,1}+t_{2,1}t_{3,1}\cr
&&
\eea
\bea\label{Pinfty3j}
P_{\infty,3}^{(3)} &=& - t_{1,2} t_{2,2} t_{3,2}\cr
P_{\infty,2}^{(3)} &=& - t_{1,1} t_{2,2} t_{3,2} - t_{1,2} t_{2,1} t_{3,2} - t_{1,2} t_{2,2} t_{3,1}\cr
P_{\infty,1}^{(3)} &= & - \sum_{k_1+k_2+k_3 = 4} t_{1,k_1} t_{2,k_2} t_{3,k_3}\cr
&=&-t_{1,0}t_{2,2}t_{3,2}-t_{1,1}t_{2,1}t_{3,2}-t_{1,1}t_{2,2}t_{3,1} - t_{1,2}t_{2,2}t_{3,0}-t_{1,2}t_{2,1}t_{3,1}-t_{1,2}t_{2,0}t_{3,2}\cr
P_{\infty,0}^{(3)} &=&  - \sum_{k_1+k_2+k_3 = 3} t_{1,k_1} t_{2,k_2} t_{3,k_3}\cr
&=&-t_{1,0}t_{2,1}t_{3,2}-t_{1,0}t_{2,2}t_{3,1}-t_{1,1}t_{2,1}t_{3,1}-t_{1,1}t_{2,2}t_{3,0}\cr
&-&t_{1,1}t_{2,0}t_{3,2}-t_{1,2}t_{2,1}t_{3,0}-t_{1,2}t_{2,0}t_{3,1}.\cr
&&
\eea

\subsection{KZ equations}
Let us now write the KZ equations satisfied by the wave functions. They read
\beq
\left\{
\begin{array}{l}
\hbar \frac{\partial \psi(z)}{\partial x(z)} + \psi_1(z) =  x(z) \mathcal{L}_{\infty,1}^{(0)} \psi(z) + \mathcal{L}_{\infty,0}^{(0)} \psi(z) \cr
\hbar \frac{\partial \psi_1(z)}{\partial x(z)} + \psi_2(z) = x(z)^2 \mathcal{L}_{\infty,2}^{(1)} \psi(z) + x(z) \mathcal{L}_{\infty,1}^{(1)} \psi(z) + \mathcal{L}_{\infty,0}^{(1)} \psi(z)\cr
\hbar \frac{\partial \psi_2(z)}{\partial x(z)}  = x(z)^3 \mathcal{L}_{\infty,3}^{(2)} \psi(z)+  x(z)^2 \mathcal{L}_{\infty,2}^{(2)} \psi(z) + x(z) \mathcal{L}_{\infty,1}^{(2)} \psi(z) + \mathcal{L}_{\infty,0}^{(2)} \psi(z). \cr
\end{array}
\right.
\eeq

Hence, the KZ equations read
\beq
\left\{
\begin{array}{l}
\hbar \frac{\partial \psi(z)}{\partial x(z)} + \psi_1(z) =  P_1(\lambda) \psi(z)\cr
\hbar \frac{\partial \psi_1(z)}{\partial x(z)} + \psi_2(z) = \left[P_2(\lambda) - \hbar t_{2,2} - \hbar t_{3,2} \right] \psi(z)\cr
\hbar \frac{\partial \psi_2(z)}{\partial x(z)}  =  P_3(\lambda) \psi(z) +  \hbar \mathcal{L}_{KZ} \psi(z), \cr
\end{array}
\right. 
\eeq
where
\beq
\mathcal{L}_{KZ}(\lambda) \coloneqq  \hbar \left[t_{1,2} t_{2,2} \mathcal{I}_{\mathcal{C}_{\infty^{(3)},2}} + t_{1,2} t_{3,2} \mathcal{I}_{\mathcal{C}_{\infty^{(2)},2}} + t_{3,2} t_{2,2} \mathcal{I}_{\mathcal{C}_{\infty^{(1)},2}} \right] + \hbar t_{2,2} t_{3,2} \lambda + t_{2,1} t_{3,2} + t_{2,2} t_{3,1}.
\eeq

\subsection{Expansion around $\infty$ and apparent singularities}

In this example, we consider the divisor $D = [z]-[\infty^{(1)}]$. The perturbative wave functions have the following asymptotics as $\lambda \to \infty$.
\beq
\psi^{reg}\left([z^{(i)}(\lambda)]-[\infty^{(1)}],\hbar\right) = 
\left\{
\begin{array}{lcl}
\exp \left[\hbar^{-1}(V_{\infty^{(1)}}(z^{(1)}(\lambda) +O(1))\right] (C_1 + O(\lambda^{-1})), & \text{if} & i=1, \cr
\lambda^{-1} \, \exp \left[\hbar^{-1}(V_{\infty^{(i)}}(z^{(i)}(\lambda) +O(1))\right] (C_i + O(\lambda^{-1})), & \text{if} & i\neq1. \cr
\end{array}
\right. 
\eeq

This allows to compute the Wronskian of the system 
\bea
W(\lambda,\hbar) &\coloneqq& \psi_{\mathrm{NP}}(z^{(1)}(\lambda)) \hbar \frac{\partial  \psi_{\mathrm{NP}}(z^{(2)}(\lambda)) }{\partial \lambda} \hbar^2 \frac{\partial^2  \psi_{\mathrm{NP}}(z^{(3)}(\lambda)) }{\partial \lambda^2} \cr
&& + \psi_{\mathrm{NP}}(z^{(2)}(\lambda)) \hbar \frac{\partial  \psi_{\mathrm{NP}}(z^{(3)}(\lambda)) }{\partial \lambda} \hbar^2 \frac{\partial^2  \psi_{\mathrm{NP}}(z^{(1)}(\lambda)) }{\partial \lambda^2} \cr
&& + \psi_{\mathrm{NP}}(z^{(3)}(\lambda)) \hbar \frac{\partial  \psi_{\mathrm{NP}}(z^{(1)}(\lambda)) }{\partial \lambda} \hbar^2 \frac{\partial^2  \psi_{\mathrm{NP}}(z^{(2)}(\lambda)) }{\partial \lambda^2} \cr
&& - \psi_{\mathrm{NP}}(z^{(1)}(\lambda)) \hbar \frac{\partial  \psi_{\mathrm{NP}}(z^{(3)}(\lambda)) }{\partial \lambda} \hbar^2 \frac{\partial^2  \psi_{\mathrm{NP}}(z^{(2)}(\lambda)) }{\partial \lambda^2} \cr
&& - \psi_{\mathrm{NP}}(z^{(2)}(\lambda)) \hbar \frac{\partial  \psi_{\mathrm{NP}}(z^{(1)}(\lambda)) }{\partial \lambda} \hbar^2 \frac{\partial^2  \psi_{\mathrm{NP}}(z^{(3)}(\lambda)) }{\partial \lambda^2} \cr
&& - \psi_{\mathrm{NP}}(z^{(3)}(\lambda)) \hbar \frac{\partial  \psi_{\mathrm{NP}}(z^{(2)}(\lambda)) }{\partial \lambda} \hbar^2 \frac{\partial^2  \psi_{\mathrm{NP}}(z^{(1)}(\lambda)) }{\partial \lambda^2}, \cr
&&
\eea
which takes the form
\beq\label{WronskianGL3}
W(\lambda,\hbar) = 
\kappa \exp\left( \hbar^{-1} \int_{0}^\lambda P_1(\lambda) d\lambda\right)  \left(\lambda- q(\hbar)\right).
\eeq

\subsection{Quantum curve}

Let us now compute the quantum curve of this system. For this purpose, we should compute the entries $\left[C_{\infty}(\lambda,\hbar)\right]_{2,i}$ with $i\in\{1,2,3\}$, of the matrix
\small{\beq
\begin{array}{l}
C_\infty(\lambda,\hbar) \coloneqq \cr \!\!\frac{1}{\hbar} \! 
\left[
 \begin{array}{ccc}
\!\!\psi_{\mathrm{NP}}(z^{(1)}(\lambda),\hbar) & \!\!\psi_{\mathrm{NP}}(z^{(2)}(\lambda),\hbar) &  \!\! \psi_{\mathrm{NP}}(z^{(3)}(\lambda),\hbar)\!\cr
\!\!\text{ev} . \left[{\mathcal{L}_{KZ} }\psi_{\mathrm{NP}}^{\mathrm{symbol}}(z^{(1)}(\lambda),\hbar) \right] & \!\!\text{ev}.{\mathcal{L}_{KZ} }\left[\psi_{\mathrm{NP}}(z^{(2)}(\lambda),\hbar)\right] & \!\!\text{ev}.\left[{\mathcal{L}_{KZ} }\psi_{\mathrm{NP}}(z^{(3)}(\lambda),\hbar)\right]\! \cr
\!\!\text{ev}. \left[{\mathcal{L}^2 }\psi_{\mathrm{NP}}^{\mathrm{symbol}}(z^{(1)}(\lambda),\hbar)  \right]& \!\!\text{ev}. \left[{\mathcal{L}^2 }\psi_{\mathrm{NP}}^{\mathrm{symbol}}(z^{(2)}(\lambda),\hbar) \right]& \!\!\text{ev}. \left[{\mathcal{L}^2 }\psi_{\mathrm{NP}}^{\mathrm{symbol}}(z^{(3)}(\lambda),\hbar)\right]\! \cr
\end{array}
\right] 
\!\!\cdot 
\!\left({\Psi}_{\mathrm{NP}}(\lambda,\hbar)\right)^{-1} \cr
\end{array}
\eeq}\normalsize{as} a rational function of $\lambda$ with poles at $\infty$ and at the zero $q(\hbar)$ of $W(\lambda,\hbar)$. Since it is a rational function in $\lambda$, one only needs its expansion around its poles and it takes the form of $\frac{\text{Pol}(\lambda)}{(\lambda-q)}$ where $\text{Pol}(\lambda)$ is a polynomial which we can compute through the asymptotic expansion of 
$C_\infty(\lambda,\hbar) $ around infinity. To compute these polynomials, we shall use the expression of $\left({\Psi}_{\mathrm{NP}}(\lambda,\hbar)\right)^{-1}$ in terms of the matrix of ${\Psi}_{\mathrm{NP}}(\lambda,\hbar)$ and its determinant $W(\lambda,\hbar)$. 

One obtains expressions of the form 
\beq
\left[C_\infty(\lambda,\hbar)\right]_{2,1} = \hbar \left(P_{\infty,2}^{(2)}+ t_{2,2} t_{3,2}\right) \lambda - H + \frac{p_1}{\lambda-q} ,
\eeq
\beq
\left[C_\infty(\lambda,\hbar)\right]_{2,2} = \hbar \left(P_{\infty,1}^{(1)}+t_{2,2}+t_{3,2}\right) + \frac{p_2}{\lambda- q}
\eeq
and
\beq
\left[C_\infty(\lambda,\hbar)\right]_{2,3} = \frac{\hbar}{\lambda - q} .
\eeq

This leads to a Lax matrix of the form given in the next section.

\subsection{Lax pair and Hamiltonian system}
From the previous results, the Lax pair corresponding to this example is given by $L(\lambda,\hbar) =$

\beq 
\begin{pmatrix} 0&1&0\\0&0&1\\
P_3(\lambda)-\hbar P_2'(\lambda)+  \hbar (P_{\infty,2}^{(2)}+t_{2,2}t_{3,2}) \lambda - H + \frac{p_1}{\lambda-q} &
-P_2(\lambda)-\hbar t_{1,2}  + \frac{p_2}{\lambda- q}&
P_1(\lambda) +\frac{\hbar}{\lambda-q} 
\end{pmatrix},
\eeq
with the associated auxiliary matrix
\beq
A_{KZ}(\lambda,\hbar) = 
\begin{pmatrix}  (P_{\infty,2}^{(2)}+t_{2,2}t_{3,2})\lambda - \frac{H}{\hbar} + \frac{p_1}{\hbar(\lambda-q)} &  \frac{p_2}{\hbar(\lambda- q)}+t_{1,2}&\frac{1}{\lambda-q}\\A_{2,1}&A_{2,2}&A_{2,3}\\
A_{3,1}&A_{3,2}&A_{3,3}
\end{pmatrix} .
\eeq

The coefficients of the Lax pair correspond to the normalization of the non-perturbative wave functions at $\lambda=\infty$,
\bea \label{PSIGL3}\Psi_{1,1}(\lambda,\hbar)&=&\exp\Bigg(-\frac{t_{1,2}}{2\hbar}\lambda^2-\frac{t_{1,1}}{\hbar}\lambda -\frac{t_{1,0}}{\hbar}\ln  \lambda +S_{1,0}+\sum_{k=2}^{\infty} \frac{S_{1,k}}{(k-1)\lambda^{k-1}}
\Bigg), \cr
\Psi_{1,2}(\lambda,\hbar)&=&\exp\Bigg(-\frac{t_{2,2}}{2\hbar}\lambda^2-\frac{t_{2,1}}{\hbar}\lambda -\frac{t_{2,0}}{\hbar}\ln \lambda-\ln\lambda  +S_{2,0}+\sum_{k=2}^{\infty} \frac{S_{2,k}}{(k-1)\lambda^{k-1}}
\Bigg), \cr
\Psi_{1,3}(\lambda,\hbar)&=&\exp\Bigg(-\frac{t_{3,2}}{2\hbar}\lambda^2-\frac{t_{3,1}}{\hbar}\lambda -\frac{t_{3,0}}{\hbar}\ln \lambda-\ln\lambda  +S_{3,0}+\sum_{k=2}^{\infty} \frac{S_{3,k}}{(k-1)\lambda^{k-1}}\Bigg),\cr
&&
\eea
from which one may recover the general form of $L(\lambda,\hbar)$ and $A(\lambda,\hbar)$ at infinity using $L(\lambda,\hbar)=\hbar (\partial_\lambda \Psi ) \Psi^{-1}$, $A(\lambda,\hbar)=(\mathcal{L}[\Psi]) \Psi^{-1}$ and the form of the Wronskian \eqref{WronskianGL3} with $\kappa= -(t_{3,2}-t_{1,2})(t_{3,2}-t_{2,2})(t_{2,2}-t_{1,2})e^{S_{1,0}+S_{2,0}+S_{3,0}}$. Note in particular that the first entry does not have the $-\ln \lambda$ term in its formal expansion around $\lambda=\infty$  contrary to all other entries.

In order to obtain an operator commuting with $\partial_\lambda$ we shall define $\mathcal{L}$ acting directly on $\Psi(\lambda,\hbar)$ out of the operator $\mathcal{L}_{KZ} -  t_{2,2} t_{3,2}$ as in the degree 2 example so that $\mathcal{L} \Psi(\lambda,\hbar) = \text{ev} . \left[\mathcal{L}_{KZ} \Psi^{\mathrm{symbol}}(\lambda,\hbar)\right]$.
 Hence, the compatible system is 
\beq
\mathcal{L} \Psi(\lambda,\hbar) = A(\lambda,\hbar) \Psi(\lambda,\hbar),
\eeq
with
\beq
A(\lambda,\hbar) = 
\begin{pmatrix}  P_{\infty,2}^{(2)} \lambda - \frac{H}{\hbar} + \frac{p_1}{\hbar(\lambda-q)} &  \frac{p_2}{\hbar(\lambda- q)}+t_{1,2} &\frac{1}{\lambda-q}\\A_{2,1}&A_{2,2}&A_{2,3}\\
A_{3,1}&A_{3,2}&A_{3,3}
\end{pmatrix} .
\eeq

\medskip

We may now use the compatibility relation
\beq \mathcal{L}[L(\lambda,\hbar)]=\hbar \partial_\lambda A(\lambda,\hbar)+ [A(\lambda,\hbar),L(\lambda,\hbar)] .\eeq
The l.h.s.~of the first two lines is null so that we obtain the expression of the last two lines of $A(\lambda,\hbar)$. The expressions being rather long, we omit them here. Eventually, the third line of the compatibility relation provides the following equations by identification of the singular parts at $\lambda=q$ and at $\lambda=\infty$,
\bea \label{GL3Evolution1}
\mathcal{L}[P_{\infty,1}^{(1)}]&=&0\cr
\mathcal{L}[P_{\infty,0}^{(1)}]&=& \hbar\left( (P_{\infty,1}^{(1)})^2+P_{\infty,2}^{(2)}+ t_{1,2}P_{\infty,1}^{(1)}\right)\cr
\mathcal{L}[P_{\infty,2}^{(2)}]&=&0\cr
\mathcal{L}[P_{\infty,1}^{(2)}]&=&\hbar( -3P_{\infty,3}^{(3)}+3P_{\infty,1}^{(1)}P_{\infty,2}^{(2)} +2t_{1,2}P_{\infty,2}^{(2)})\cr
\mathcal{L}[P_{\infty,0}^{(2)}]&=&\hbar(P_{\infty,1}^{(2)}P_{\infty,1}^{(1)}-P_{\infty,2}^{(3)}+P_{\infty,0}^{(1)}P_{\infty,2}^{(2)}+P_{\infty,1}^{(2)}t_{1,2}-\mathcal{L}[t_{1,2}])\cr
\mathcal{L}[P_{\infty,3}^{(3)}]&=&0\cr
\mathcal{L}[P_{\infty,2}^{(3)}]&=&\hbar(P_{\infty,1}^{(1)}P_{\infty,3}^{(3)}+(P_{\infty,2}^{(2)})^2+3t_{1,2}P_{\infty,3}^{(3)})\cr
\mathcal{L}[P_{\infty,1}^{(3)}]&=&\hbar(-P_{\infty,0}^{(1)}P_{\infty,3}^{(3)}+P_{\infty,1}^{(1)}P_{\infty,2}^{(3)}+P_{\infty,2}^{(2)}P_{\infty,1}^{(2)}+2t_{1,2}P_{\infty,2}^{(3)}\cr
&&-t_{3,2}\mathcal{L}[t_{2,2}]-t_{2,2}\mathcal{L}[t_{3,2}])\cr
p_1&=&\frac{p_2^2}{\hbar}+P_1(q_2)p_2+\hbar P_2(q)+\hbar^2 t_{1,2}\cr
\mathcal{L}[q]&=& -3\frac{p_2^2}{\hbar^2}-4P_1(q)\frac{p_2}{\hbar}-P_2(q)-P_1(q)^2-\hbar(P_{\infty,1}^{(1)}+2t_{1,2}) \cr
\mathcal{L}[p_2]&=&2P_{\infty,1}^{(1)}\frac{p_2^2}{\hbar}+(P_2'(q)+2P'_1(q)P_1(q))p_2-\hbar P_3'(q)+\hbar P_2(q)P_1'(q)+\hbar P_2'(q)P_1(q)\cr
&&+\hbar^2(P_{\infty,1}^{(1)}t_{1,2}-t_{2,2}t_{3,2})\cr
H&=&-H_0(q,p_2,\hbar)+(t_{1,2}+P_{\infty,1}^{(1)})p_2-\hbar P_{\infty,2}^{(2)}q-\hbar P_{\infty,1}^{(2)}-\hbar P_{\infty,0}^{(1)}t_{1,2}\,,\cr
&& 
\eea
where the Hamiltonian $H_0(q,p_2,\hbar)$ is given by
\bea H_0(q,p_2,\hbar)&=&\frac{p_2^3}{\hbar^3}+2P_1(q)\frac{p_2^2}{\hbar^2}+(P_2(q)+P_1(q)^2+\hbar(P_{\infty,1}^{(1)}+2t_{1,2}))\frac{p_2}{\hbar}-P_3(q)+P_1(q)P_2(q)\cr
&&+\hbar(P_{\infty,1}^{(1)}t_{1,2}-t_{2,2}t_{3,2})q
\eea
and satisfies
\beq \hbar \partial_{p_2} H_0(q,p_2,\hbar)=-\mathcal{L}[q] \text{ and } \hbar\partial_q H_0(q,p_2,\hbar)= \mathcal{L}[p_2]\,. \eeq

\medskip

One may use the relations between spectral times and coefficients $(P_{\infty,i}^{(j)})_{i\leq 2, j\leq 3}$ given by \eqref{Pinfty1j}, \eqref{Pinfty2j} and \eqref{Pinfty3j} to rewrite \eqref{GL3Evolution1} as
\bea \label{GL3Evolution2}\mathcal{L}[t_{1,0}]&=&\mathcal{L}[t_{2,0}]=\mathcal{L}[t_{3,0}]=\mathcal{L}[t_{3,2}]=\mathcal{L}[t_{2,2}]=\mathcal{L}[t_{1,2}]=0\cr
\mathcal{L}[t_{1,1}]&=&-\hbar(t_{1,2}t_{2,2}+t_{1,2}t_{3,2}+t_{2,2}t_{3,2})\cr
\mathcal{L}[t_{2,1}]&=&-\hbar(t_{1,2}t_{3,2}+t_{2,2}t_{3,2}+t_{2,2}^2)\cr
\mathcal{L}[t_{3,1}]&=&-\hbar(t_{1,2}t_{2,2}+t_{2,2}t_{3,2}+t_{3,2}^2)\,.
\eea
Thus, we get that
\bea \mathcal{L}&=&-\hbar(t_{1,2}t_{2,2}+t_{1,2}t_{3,2}+t_{2,2}t_{3,2})\partial_{t_{1,1}}-\hbar(t_{1,2}t_{3,2}+t_{2,2}t_{3,2}+t_{2,2}^2)\partial_{t_{2,1}}\cr
&&- \hbar(t_{1,2}t_{2,2}+t_{2,2}t_{3,2}+t_{3,2}^2)\partial_{t_{3,1}}\,.\label{GL3L}\eea
Note that we may now obtain $\mathcal{L}[P_{\infty,i}^{(j)}]$ for $(i,j)\in\llbracket 0,2\rrbracket \times \llbracket 1,3\rrbracket$ in terms of of the spectral times $(t_{i,j})_{i\leq 2, j\leq 3}$ using \eqref{Pinfty1j}, \eqref{Pinfty2j}, \eqref{Pinfty3j} and \eqref{GL3Evolution2}. The results being long and not particularly enlightening, we omit them here. 

\medskip

We may now write the evolution equation for $q$. We first observe that
\bea \label{EqL2q}\mathcal{L}^2[q]&=&\left(2P_1'(q)P_1(q)-3P_2'(q)\right)\frac{p_2^2}{\hbar^2}\cr
&&+\big(6P_3'(q)-2P_1'(q)P_2(q)-6P_2'(q)P_1(q)+4P_1'(q)P_1(q)^2-\hbar(4P_{\infty,2}^{(2)}\cr
&&+2P_{\infty,1}^{(1)}t_{1,2}-6t_{2,2}t_{3,2})\big)\frac{p_2}{\hbar}+(P_2(q)+P_1(q)^2)(P_2'(q)+2P_1'(q)P_1(q))\cr
&&+4P_1(q)(P_3'(q)-P_1'(q)P_2(q)-P_1(q)P_2'(q))-\hbar\big((t_{1,2}\cr
&&-t_{3,2})(t_{1,2}-t_{2,2})(2t_{1,2}+t_{3,2}+t_{2,2})q \cr
&&+(t_{1,2}-t_{3,2})(t_{1,2}-t_{2,2})(t_{3,1}+2t_{1,1}+t_{2,1}) \big)\,.\cr
&&
\eea
In particular, we get that
\beq\label{Relation1} \mathcal{L}^2[q]+\frac{1}{3}(2P_1'(q)P_1(q)-3P_2'(q))\mathcal{L}[q]=R_0(q;h)\frac{p_2}{\hbar}+R_3(q)+\hbar R_1(q),
\eeq
where 
\bea  \label{CoeffGl31}
R_0(\lambda;h)&=&6P_3'(\lambda)-2P_2(\lambda)P_1'(\lambda)-2P_2'(\lambda)P_1(\lambda)+\frac{4}{3}P_1(\lambda)^2P_1'(\lambda) \cr
&&+2\hbar(t_{1,2}-t_{2,2})(t_{1,2}-t_{3,2})\,,\cr
R_3(\lambda)&=&(P_2(\lambda)+P_1(\lambda)^2)(P_2'(\lambda)+2P_1'(\lambda)P_1(\lambda))\cr
&&+4P_1(\lambda)(P_3'(\lambda)-P_1'(\lambda)P_2(\lambda)-P_1(\lambda)P_2'(\lambda))\cr
&&-\frac{1}{3}(2P_1'(\lambda)P_1(\lambda)+3P_2'(\lambda))(P_2(\lambda)+P_1(\lambda)^2)\cr
&=&\left(2P_2'(\lambda)+\frac{4}{3}P_1'(\lambda)P_1(\lambda)\right)(P_2(\lambda)+P_1(\lambda)^2)\cr
&&+4P_1(\lambda)(P_3'(\lambda)-P_1'(\lambda)P_2(\lambda)-P_1(\lambda)P_2'(\lambda))\,,\cr
R_1(\lambda)&=&-\big((t_{1,2}-t_{3,2})(t_{1,2}-t_{2,2})(2t_{1,2}+t_{3,2}+t_{2,2})\lambda\cr
&& +(t_{1,2}-t_{3,2})(t_{1,2}-t_{2,2})(t_{3,1}+2t_{1,1}+t_{2,1}) \big)\cr
&&-\frac{1}{3}(2P_1'(\lambda)P_1(\lambda)-3P_2'(\lambda))(P_{\infty,1}^{(1)}+2t_{1,2})\,.
\eea

We also have 
\bea \label{Relation2} 3\frac{\mathcal{L}[p_2]}{\hbar}+2P_1'(q) \mathcal{L}[q]&=&-(2P_1'(q)P_1(q)-3P_2'(q))\frac{p_2}{\hbar}-3P_3'(q)+P_2(q)P_1'(q)+3P_2'(q)P_1(q)\cr
&&-2P_1'(q)P_1(q)^2-\hbar(2(P_{\infty,1}^{(1)})^2+P_{\infty,1}^{(1)}t_{1,2}+3t_{2,2}t_{3,2})\,.
\eea

Applying $\mathcal{L}$ to \eqref{Relation1} leads to
\bea &&\mathcal{L}^3[q]+\frac{1}{3}(2P_1'(q)P_1(q)-3P_2'(q))\mathcal{L}^2[q] \cr
&&+ \frac{1}{3}(2P_1'(q)^2-3P_2''(q))(\mathcal{L}[q])^2+\frac{1}{3}(\mathcal{L}[2P_1P_1'-3P_2'](q) )\mathcal{L}[q] \cr
&&=(R_0'(q;\hbar)\mathcal{L}[q]+\mathcal{L}[R_0](q;\hbar))\frac{p_2}{\hbar}+R_0(q;\hbar)\frac{\mathcal{L}[p_2]}{\hbar}\cr
&&+(R_3'(q)+\hbar R_1'(q))\mathcal{L}[q]+(\mathcal{L}[R_3](q)+\hbar\mathcal{L}[R_1](q)).
\eea
We now use \eqref{Relation2} to remove $\mathcal{L}[p_2]$,
\begin{align} &3\mathcal{L}^3[q]+(2P_1'(q)P_1(q)-3P_2'(q))\mathcal{L}^2[q] + (2P_1'(q)^2-3P_2''(q))(\mathcal{L}[q])^2+(\mathcal{L}[2P_1P_1'-3P_2'](q) )\mathcal{L}[q] \cr
&=\left(3R_0'(q;\hbar)\mathcal{L}[q]+3\mathcal{L}[R_0](q;\hbar)-R_0(q;\hbar)(2P_1'(q)P_1(q)-3P_2'(q))\right)\frac{p_2}{\hbar}\cr
&-2R_0(q;\hbar)P_1'(q)\mathcal{L}[q]-R_0(q;\hbar)(3P_3'(q)-P_2(q)P_1'(q)-3P_2'(q)P_1(q)+2P_1'(q)P_1(q)^2)\cr 
&-\hbar R_0(q;\hbar)(2(P_{\infty,1}^{(1)})^2+P_{\infty,1}^{(1)}t_{1,2}+3t_{2,2}t_{3,2})\cr
&+3(R_3'(q)+\hbar R_1'(q))\mathcal{L}[q]+3(\mathcal{L}[R_3](q)+\hbar\mathcal{L}[R_1](q))\,.
\end{align}
Then we use Equation \eqref{Relation1} to remove $p_2$. Thus we get
\begin{align} &3R_0(q;\hbar)\mathcal{L}^3[q]+R_0(q;\hbar)(2P_1'(q)P_1(q)-3P_2'(q))\mathcal{L}^2[q] + R_0(q;\hbar)(2P_1'(q)^2-3P_2''(q))(\mathcal{L}[q])^2\cr
&+R_0(q;\hbar)(\mathcal{L}[2P_1P_1'-3P_2](q) )\mathcal{L}[q] \cr
&=\left(3R_0'(q;\hbar)\mathcal{L}[q]+3\mathcal{L}[R_0](q;\hbar)-R_0(q;\hbar)(2P_1'(q)P_1(q)-3P_2'(q))\right)\cr
&\,\left(\mathcal{L}^2[q]+\frac{1}{3}(2P_1'(q)P_1(q)-3P_2'(q))\mathcal{L}[q]-R_3(q)-\hbar R_1(q)\right) \cr
&-2R_0(q;\hbar)^2P_1'(q)\mathcal{L}[q]-R_0(q;\hbar)^2(3P_3'(q)-P_2(q)P_1'(q)-3P_2'(q)P_1(q)+2P_1'(q)P_1(q)^2)\cr 
&-\hbar R_0(q;\hbar)^2(2(P_{\infty,1}^{(1)})^2+P_{\infty,1}^{(1)}t_{1,2}+3t_{2,2}t_{3,2})\cr
&+3R_0(q;\hbar)(R_3'(q)+\hbar R_1'(q))\mathcal{L}[q]+3R_0(q;\hbar)(\mathcal{L}[R_3](q)+\hbar\mathcal{L}[R_1](q))\,.
\end{align}

It corresponds to an evolution equation of the form
\beq \label{EvolutionEquationGl3}\alpha_0(q;\hbar)\mathcal{L}^3[q]+ \alpha_1(q;\hbar)\mathcal{L}^2[q]\mathcal{L}[q]+\alpha_2(q;\hbar) (\mathcal{L}[q])^2+\alpha_3(q;\hbar) \mathcal{L}^2[q]+ \alpha_4(q;\hbar) \mathcal{L}[q]+\alpha_5(q;\hbar)=0\,,\eeq
 with
\bea \label{CoeffGl33}
\alpha_0(q;\hbar)&=&3R_0(q;\hbar)\cr
\alpha_1(q;\hbar)&=& -3R_0'(q;\hbar)\cr
\alpha_2(q;\hbar)&=&2R_0(q;\hbar)((P_{\infty,1}^{(1)})^2-3P_{\infty,0}^{(2)})-R_0'(q;\hbar)(2P_1'(q)P_1(q)-3P_2'(q))  \cr
\alpha_3(q;\hbar)&=&2R_0(q;\hbar)(2P_1'(q)P_1(q)-3P_2'(q))-3\mathcal{L}[R_0](q;\hbar)\cr
\alpha_4(q;\hbar)&=&R_0(q;\hbar)(\mathcal{L}[2P_1P_1'-3P_2](q) )+3R_0'(q;\hbar)(R_3(q)+\hbar R_1(q))\cr
&&-\frac{1}{3}(2P_1'(q)P_1(q)-3P_2'(q))(3\mathcal{L}[R_0](q;\hbar)-R_0(q;\hbar)(2P_1'(q)P_1(q)-3P_2'(q)))\cr
&&+2R_0(q;\hbar)^2P_1'(q)-3R_0(q;\hbar)(R_3'(q)+\hbar R_1'(q))\cr
\alpha_5(q;\hbar)&=&(3\mathcal{L}[R_0](q;\hbar)-R_0(q;\hbar)(2P_1'(q)P_1(q)-3P_2'(q)))(R_3(q)+\hbar R_1(q))\cr
&&+R_0(q;\hbar)^2(3P_3'(q)-P_2(q)P_1'(q)-3P_2'(q)P_1(q)+2P_1'(q)P_1(q)^2)\cr
&&+\hbar R_0(q;\hbar)^2(2(P_{\infty,1}^{(1)})^2+P_{\infty,1}^{(1)}t_{1,2}+3t_{2,2}t_{3,2})\cr
&&-3R_0(q;\hbar)(\mathcal{L}[R_3](q)+\hbar\mathcal{L}[R_1](q))\,.
\eea
One may easily obtain $\mathcal{L}[R_0]$, $\mathcal{L}[R_1]$, $\mathcal{L}[R_3]$ and $\mathcal{L}[2P_1P_1'-3P_2]$ from \eqref{GL3Evolution1} and \eqref{GL3Evolution2}. 

\medskip

The Lax pair is compatible with formal series expansion given by \eqref{PSIGL3}. In particular, we have $S_{1,0}=\ln Z_{\mathrm{NP}}$. Inserting these expansions into the last line of $L(\lambda,\hbar)$ and the first line of $A(\lambda,\hbar)$ provides
\bea \mathcal{L}[S_{1,0}]&=&t_{1,2}^2 q-t_{1,2}\frac{p_2}{\hbar}-\frac{H}{\hbar}+t_{1,1}t_{1,2}\cr
\mathcal{L}[S_{2,0}]&=&t_{2,2}^2q -t_{2,2}\frac{p_2}{\hbar}-\frac{H}{\hbar}+t_{2,1}(2t_{2,2}-t_{1,2})\cr
\mathcal{L}[S_{3,0}]&=&t_{3,2}^2 q-t_{3,2}\frac{p_2}{\hbar}-\frac{H}{\hbar}+t_{3,1}(2t_{3,2}-t_{1,2}).
\eea
Thus we get
\beq \mathcal{L}[\ln Z_{\mathrm{NP}}]=t_{1,2}^2 q-t_{1,2}\frac{p_2}{\hbar}-\frac{H}{\hbar}+t_{1,1}t_{1,2},\eeq
giving an explicit relation between the partition function and the Darboux coordinates.

\medskip

Finally, we may perform a linear change of variables $(t_{1,1},t_{2,1},t_{3,1}) \leftrightarrow (\tau_1,\tau_2,\tau_3)$ so that $\mathcal{L}$ identifies to $\hbar \partial_{\tau_1}$. Let us define
\beq \begin{pmatrix}\tau_1\\ \tau_2\\ \tau_3\end{pmatrix} =B \begin{pmatrix} t_{1,1}\\ t_{2,1}\\ t_{3,1}\end{pmatrix} \,\Leftrightarrow \,   \begin{pmatrix} t_{1,1}\\ t_{2,1}\\ t_{3,1}\end{pmatrix} =B^{-1}\begin{pmatrix}\tau_1\\ \tau_2\\ \tau_3\end{pmatrix}\,, \eeq
where $B$ is a $3\times 3$ matrix with coefficients expressed in terms of spectral times different from $(t_{i,1})_{1\leq i\leq 3}$. The chain rule implies that
\beq \forall \, i\in \llbracket 1,3\rrbracket\,:\, \partial_{t_{i,1}}=\frac{\partial \tau_1}{\partial t_{i,1}}\partial_{\tau_1}+\frac{\partial \tau_2}{\partial t_{i,1}}\partial_{\tau_2}+\frac{\partial \tau_3}{\partial t_{i,1}}\partial_{\tau_3}
= b_{1,i}\partial_{\tau_1}+b_{2,i}\partial_{\tau_2}+b_{3,i}\partial_{\tau_3}  .
\eeq
In other words,
\beq \begin{pmatrix} \partial_{t_{1,1}}\\ \partial_{t_{2,1}}\\  \partial_{t_{3,1}}\end{pmatrix}=B^t \begin{pmatrix}\partial_{\tau_1}\\ \partial_{\tau_2}\\ \partial_{\tau_3}\end{pmatrix}.\eeq
Since 
\beq \mathcal{L}=(\mathcal{L}[t_{1,1}],\mathcal{L}[t_{2,1}],\mathcal{L}[t_{3,1}]) \begin{pmatrix} \partial_{t_{1,1}}\\ \partial_{t_{2,1}}\\  \partial_{t_{3,1}}\end{pmatrix}=(\mathcal{L}[t_{1,1}],\mathcal{L}[t_{2,1}],\mathcal{L}[t_{3,1}])  B^t \begin{pmatrix}\partial_{\tau_1}\\ \partial_{\tau_2}\\ \partial_{\tau_3}\end{pmatrix},\eeq
$\mathcal{L}=\hbar \partial_{\tau_1}$ if and only if 
\beq ( \mathcal{L}[t_{1,1}],\mathcal{L}[t_{2,1}],\mathcal{L}[t_{3,1}])  B^t=(\hbar,0,0) \,\Leftrightarrow \, B \begin{pmatrix} \mathcal{L}[t_{1,1}]\\\mathcal{L}[t_{2,1}]\\\mathcal{L}[t_{3,1}]\end{pmatrix}= \begin{pmatrix} \hbar\\0\\0\end{pmatrix}\,\Leftrightarrow \begin{pmatrix} \mathcal{L}[t_{1,1}]\\\mathcal{L}[t_{2,1}]\\\mathcal{L}[t_{3,1}]\end{pmatrix}=B^{-1} \begin{pmatrix} \hbar\\0\\0\end{pmatrix}\, .\eeq
In other words
\beq B^{-1}=\begin{pmatrix} \frac{\mathcal{L}[t_{1,1}]}{\hbar}& c_{1,2}&c_{1,3}\\  \frac{\mathcal{L}[t_{2,1}]}{\hbar}&c_{2,2}&c_{2,3}\\  \frac{\mathcal{L}[t_{3,1}]}{\hbar}& c_{3,2}& c_{3,3}\end{pmatrix}.\eeq
We choose coefficients $(c_{i,j})_{1\leq i,\leq 3, 2\leq j\leq 3}$ so that $\det (B^{-1})=-(t_{3,2}-t_{1,2})(t_{3,2}-t_{2,2})(t_{2,2}-t_{1,2})$ which corresponds to the prefactor of the Wronskian. Indeed, we want the change of variables to remain invertible except when the Wronskian vanishes. We find that
\beq B^{-1}=\begin{pmatrix} -(t_{1,2}t_{2,2}+t_{1,2}t_{3,2}+t_{2,2}t_{3,2})& 1&t_{1,2}\\
-(t_{1,2}t_{3,2}+t_{2,2}t_{3,2}+t_{2,2}^2)& 1&t_{2,2}\\
-(t_{1,2}t_{2,2}+t_{2,2}t_{3,2}+t_{3,2}^2)& 1&t_{3,2}
\end{pmatrix}
\eeq
is such that $\det (B^{-1})=-(t_{3,2}-t_{1,2})(t_{3,2}-t_{2,2})(t_{2,2}-t_{1,2})$. It is equivalent to
\begin{align} &-(t_{3,2}-t_{1,2})(t_{3,2}-t_{2,2})(t_{2,2}-t_{1,2})B= \nonumber \\
&\begin{pmatrix}t_{3,2}-t_{2,2}&-(t_{3,2}-t_{1,2})& t_{2,2}-t_{1,2}\\
t_{1,2}(t_{3,2}^2-t_{2,2}^2)& -t_{2,2}(t_{3,2}^2-t_{1,2}^2)& t_{3,2}(t_{2,2}^2-t_{1,2}^2)\\
(t_{3,2}-t_{2,2})(t_{3,2}+t_{2,2}-t_{1,2})& -t_{3,2}(t_{3,2}-t_{1,2})& t_{2,2}(t_{2,2}-t_{1,2})
\end{pmatrix}\,,
\end{align}
which provides a suitable change of variables for which $\mathcal{L}=\hbar \partial_{\tau_1}$.

\bibliographystyle{plain}
\bibliography{BibliQC}

\end{document}